\newcommand{\footremember}[2]{%
    \footnote{#2}
    \newcounter{#1}
    \setcounter{#1}{\value{footnote}}%
}
\newcommand{\R}{\mathbf{R}}
\newcommand{\Z}{\mathbf{Z}}
\DeclareMathOperator{\E}{\mathbf{E}}
\DeclareMathOperator{\Prob}{\mathbf{P}}
\newtheorem{claim}{Claim}
\newtheorem*{claim*}{Claim}
\newtheorem{theorem}{Theorem}
\newtheorem{corollary}[theorem]{Corollary}
\newtheorem{lemma}[theorem]{Lemma}
\newtheorem{proposition}[theorem]{Proposition}
\newtheorem{remark}{Remark}
\newcommand{\modif}[1]{\textcolor{black}{#1}}
\title{The IID Prophet Inequality with Limited Flexibility}
\author{
	Sebastian Perez-Salazar\footremember{1}{Rice University, \texttt{sperez@rice.edu}}
	\and Mohit Singh\footremember{2}{Georgia Institute of Technology, \texttt{mohit.singh@isye.gatech.edu}}
	\and Alejandro Toriello\footremember{3}{Georgia Institute of Technology, \texttt{atoriello@isye.gatech.edu}}
}
\begin{document}

\maketitle

\begin{abstract}
In online sales, sellers usually offer each potential buyer a posted price in a take-it-or-leave fashion. Buyers can sometimes see posted prices faced by other buyers, and changing the price frequently could be considered unfair. The literature on posted price mechanisms and prophet inequality problems has studied the two extremes of pricing policies, the fixed price policy and fully dynamic pricing. The former is suboptimal in revenue but is perceived as fairer than the latter. This work examines the middle situation, where there are at most $k$ distinct prices over the selling horizon. Using the framework of prophet inequalities with independent and identically distributed random variables, we propose a new prophet inequality for strategies that use at most $k$ thresholds. We present asymptotic results in $k$ and results for small values of $k$. For $k=2$ prices, we show an improvement of at least $11\%$ over the best fixed-price solution. Moreover, $k=5$ prices suffice to guarantee almost $99\%$ of the approximation factor obtained by a fully dynamic policy that uses an arbitrary number of prices. From a technical standpoint, we use an infinite-dimensional linear program in our analysis; this formulation could be of independent interest to other online selection problems.
\end{abstract}

\section{Introduction}

Pricing is one of the elements of a business operation with the highest impact on profitability
\cite{phillips2012prices}. A recent survey by McKinsey~\cite{McKinsey} shows that a $1\%$ improvement in pricing can yield a $6\%$ increase in profits, while in contrast a $1\%$ reduction in variable costs can produce up to a $3.8\%$ increase in profits. Strategic and dynamic pricing is as old as business and widely studied. Nevertheless, the deregulation of airline prices in the US in 1978 and the development of revenue management generated significant interest in pricing from the research community~\cite{talluri2004theory}. Despite the success of dynamic pricing from a profitability viewpoint, in many businesses changing prices too often could be considered unfair from the customers' standpoint, particularly when customers can learn the prices faced by others \cite{garbarino2003dynamic,pk2001dynamic,rotemberg2011fair}. Therefore, fixed-price policies are often preferred over dynamic pricing, particularly in retail transactions~\cite{phillips2012prices}. Although a fixed price is considered fairer than its dynamic counterpart, it can lead to suboptimal revenues; this generates a natural tension between revenue and customer goodwill. Particularly for products that will perish over time, such as food, airplane seats~\cite{adelman2007dynamic} and sponsored ads~\cite{agarwal2009spatio}, it is natural to consider price variation. In this work we explore a middle ground, dynamic pricing with limited prices, using the context of Bayesian online selection and, more precisely, the prophet inequality problem.

We use the prophet inequality problem as an underlying model for online selection. The prophet inequality problem is one of the classical problems in \emph{stopping theory}, and has recently gained increasing attention for its deep connections with pricing problems~\cite{chawla2010multi,correa2019pricing,hajiaghayi2007automated}. In the classical formulation of the prophet inequality problem, there are $n$ nonnegative independent random variables (r.v.) $X_1,\ldots,X_n$, and a decision-maker (DM) observes their values sequentially. Upon observing a value, the DM must immediately decide whether to accept or reject the value. Accepting stops the process, and the DM gets the observed value as a reward, while rejecting the value is irrevocable and allows the DM to observe the next random variable's value. The DM's goal is to maximize their expected value. To benchmark the DM's strategy, the value obtained by the DM is compared against the maximum value in hindsight, $\E[\max_i X_i]$ -- the so-called \emph{prophet value} -- and the goal is to obtain a lower bound on the ratio between the DM's value and the prophet value. It is known that a simple \emph{threshold} strategy that computes a value $x$ using the distributions and accepts the first observed value that surpasses $x$ attains a ratio of $1/2$, and this result is tight; see e.g.~\cite{krengel1977semiamarts,samuel1984comparison}. This simple threshold solution is appealing for pricing mechanism design in online sales, as we can interpret the threshold as a price~\cite{chawla2010multi}.
A posted-price mechanism (PPM) is a method to sell items where a seller offers a menu of prices to each buyer in a take-it-or-leave-it fashion. \modif{The solution to the classical prophet inequality problem exhibits this posted pricing form: $X_i$ represents the valuation of the $i$-th buyer, the prophet's value $\E[\max_i X_i]$ is the social welfare and $x$ denotes the posted price}. Recent works have established the equivalence between PPMs and prophet inequalities in several settings~\cite{chawla2010multi,correa2019pricing,hajiaghayi2007automated}.

In massive markets, assuming identical valuations is reasonable when granular information about buyers' valuations is hard to come by. In this case, the prophet inequality problem corresponds to the observation of independent and identically distributed (i.i.d.) random variables $X_1,\ldots,X_n$. Hill \& Kertz~\cite{hill1982comparisons} were the first to show that a prophet inequality with an approximation ratio better than $1/2$ can be obtained in the i.i.d.\ setting, with an impossibility result showing that no ratio better than $\bar{\gamma}\approx 0.745$ can be attained, where $\bar{\beta}=1/\bar{\gamma}$ is the (unique) solution of the equation $\int_0^1 (\beta -1 + y(\log y -1))^{-1}\mathrm{d}y=1$. Recently, Correa et al.~\cite{correa2021posted} showed an algorithm that attains a prophet inequality of $\bar{\gamma}$ for any number of random variables $n$.

As opposed to the solution of the prophet inequality problem for general independent random variables, which we can solve by computing a single threshold (for instance, the median of $\max \{X_i: {i\in [n]}\}$), the optimal solution for the i.i.d.\ prophet inequality problem consists of $n$ decreasing thresholds, computed via dynamic programming~\cite{hill1982comparisons} or by sampling $n$ thresholds from a tailored distribution \cite{correa2021posted}. These solutions are fully dynamic, as they change the thresholds of acceptance as time progresses. In the pricing language, this is equivalent to every buyer observing a new posted price. On the other hand, the best fixed threshold policy for the i.i.d.\ prophet inequality guarantees a fraction $(1-1/e)$ of the prophet value~\cite{correa2021posted}. In this work, we fill the gap between these two extremes. Using the lens of optimization, we provide near-optimal solutions for the problem of selecting at most $k$ thresholds over a time horizon of length $n$.

\subsection{Problem Formulation and Summary of Contributions}

The inputs of the problem are (1) a distribution $\mathcal{D}$ with nonnegative support, (2) $n$ i.i.d.\ random variables $X_1,\ldots,X_n\sim \mathcal{D}$ drawn from this distribution, and (3) a fixed integer $k\geq 1$. A decision-maker (DM) observes the realizations of $X_1,\ldots,X_n$ sequentially and needs to decide whether to accept the observed value and stop the process, or irrevocably move on to the next value; the goal is to maximize the expected accepted value. If the DM implements an algorithm or policy $A$, we denote the value collected by $A$ as $\E[X_A]$. The algorithm $A$ is said to attain a \emph{$\gamma$-approximation} (of the prophet value) if $\E[X_A] \geq \gamma \E[\max_{i\in [n]} X_i]$ for any input distribution $\mathcal{D}$. The prophet inequality literature has focused on finding an algorithm $A^*$ that maximizes $\gamma$.

In this work, we are interested in \emph{$k$-dynamic} algorithms (or policies), which choose at most $k$ time intervals $[1,i_1], [i_1+1,i_2],\ldots,[i_{k-1}+1,n]$ and compute a static threshold in each window $\bar{x}_1,\ldots,\bar{x}_k$ such that the first value $X_i$ observed in $[i_{t}+1,i_{t+1}]$ that exceeds $\bar{x}_t$ is accepted; see Figure~\ref{fig:windows}.

\begin{figure}
	\centering
	\scriptsize
	\begin{tikzpicture}[xscale=0.7, yscale=1.7]
		\draw[->] (-0.5, 0) -- (16.5, 0) node[right] {Time horizon};
		\draw[->] (0, -0.1) -- (0, 3) node[above,text width=2cm,text centered] {Threshold};
		
		\draw[-,thick] (3,{+0.02}) -- (3,{-0.05}) node[below] {$i_1$};
		\draw[-,thick] (5,{+0.02}) -- (5,{-0.05}) node[below] {$i_2$};
		\draw[-,thick] (8,{+0.02}) -- (8,{-0.05}) node[below] {$i_3$};
		\draw[-,thick] (14,{+0.02}) -- (14,{-0.05}) node[below] {$i_{k-1}$};
		\draw[-,thick] (16,{+0.02}) -- (16,{-0.05}) node[below] {$n$};
		
		\draw (11,-0.05) node[below] {$\cdots$};
		
		\draw (0,2.5) node[left] {$\bar{x}_1$};
		\draw (0,1.5) node[left] {$\bar{x}_2$};
		\draw (0,1) node[left] {$\bar{x}_3$};
		\draw (0,0.2) node[left] {$\bar{x}_k$};
		
		\draw[-,very thick] (0.2,2.5)-- (3,2.5);
		\draw[-,very thick] (3.2,1.5)-- (5,1.5);
		\draw[-,very thick] (5.2,1)-- (8,1);
		\draw[-,very thick] (14.2, 0.2)-- (16,0.2);
		
		\draw[-,thin] (3,0) -- (3,2.5);
		\draw[-,thin] (5,0) -- (5,1.5);
		\draw[-,thin] (8,0) -- (8,1);
		\draw[-,thin] (16,0) -- (16,0.2);
		
		\draw[-,dashed] (0,1.5) -- (3,1.5);
		\draw[-,dashed] (0,1) -- (5,1);
		\draw[-,dashed] (0,0.2) -- (14,0.2);
		
		\draw[-] (1,0) -- (1,1.6) node[circle,fill=black,scale=0.4] {};
		\draw[-] (2,0) -- (2,0.9) node[circle,fill=black,scale=0.4] {};
		\draw[-] (3,0) -- (3,0.7) node[circle,fill=black,scale=0.4] {};
		
		\draw[-] (4,0) -- (4,1.3) node[circle,fill=black,scale=0.4] {};
		\draw[-] (5,0) -- (5,1.1) node[circle,fill=black,scale=0.4] {};
		
		\draw[-] (6,0) -- (6,0.3) node[circle,fill=black,scale=0.4] {};
		\draw[-] (7,0) -- (7,1.4) node[circle,fill=black,scale=0.4] {};
		
		\draw[stealth-,] (7.4,1.4) -- (9,1.4) node[right,text width=1cm,text centered] {Accepted value};


	\end{tikzpicture}
	\caption{Illustration of a $k$-dynamic policy, with thresholds plotted over the time horizon. For each interval $I_t=[i_{t-1}+1,i_t]$, we have an associated threshold $\bar{x}_t$. The first outcome of the random variables with index in $I_t$ that surpasses $\bar{x}_t$ is accepted. In the picture, this occurs at the second value observed in the third interval $I_3$.}\label{fig:windows}
\end{figure}
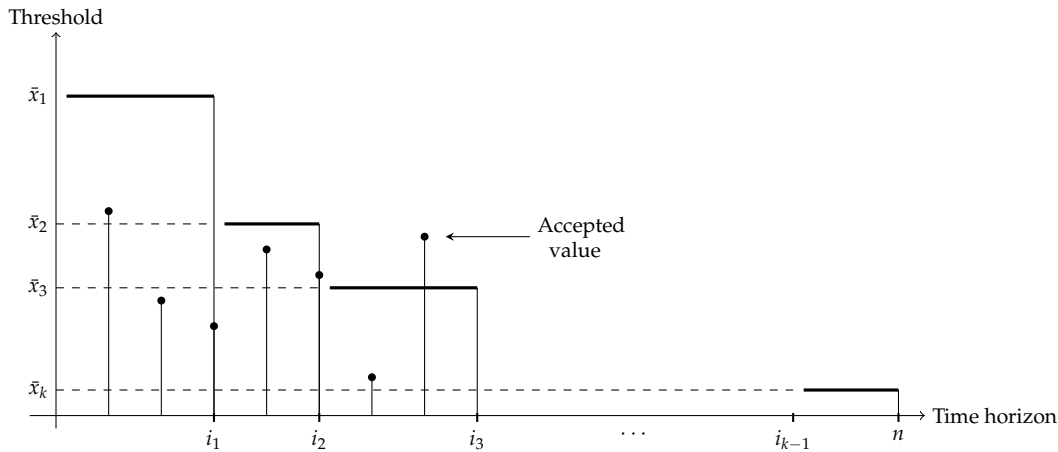

Formally, we are interested in solving
\[
\gamma_{n,k}^*= \sup_{\substack{A \\\text{ $k$-dynamic}}} \inf_{\substack{\mathcal{D} \text{ distr.}\\X_1,\ldots,X_n \sim \mathcal{D}}}  \frac{\E[X_A]}{\E[\max_{i\in [n]} X_i]}.
\]
For $k=1$, we obtain the \emph{static solution} that guarantees $\gamma_{n,1}^* \approx 1-1/e$ for $n$ large. For $k=n$ we obtain the \emph{fully dynamic} solution that guarantees $\gamma_{n,n}^* = \bar{\gamma} \approx 0.745$. Our goal is to understand $\gamma_{n,k}^*$ for intermediate values of $k$ and provide new prophet inequalities for the i.i.d.\ problem under $k$-dynamic algorithms.

\paragraph{Summary of Contributions} We present an extensive study of $k$-dynamic algorithms and new prophet inequalities. \modif{We propose an infinite-dimensional linear program that encodes optimal $k$-dynamic algorithms and their approximation ratio $\gamma_{n,k}^*$. We use this formulation to analytically compute $\gamma_{n,1}^*=1-(1-1/n)^n\approx 1-1/e$, and to numerically calculate $\gamma_{n,2}^* \approx 0.708 $. In other words, allowing one additional threshold over the time horizon improves the approximation ratio by $11\%$. For larger values of $k$, analyzing this infinite-dimensional formulation becomes increasingly difficult; hence, we employ a relaxation.} Moreover, we restrict our attention to solutions where the sizes of the intervals $[1,i_1], [i_1+1,i_2],\ldots,[i_{k-1}+1,n]$ are fixed up front. \modif{The relaxation of our infinite linear program provides a universal lower bound for $\gamma_{n,k}^*$;} the dual of this new formulation is oblivious to the input distribution and its solution yields distributions for each interval, from which we can reconstruct $k$-dynamic algorithms as follows. For each distribution obtained from the program, we compute a quantile value $q\in [0,1]$, from which we obtain a threshold $x_t$ via the equation $q=1- F(x_t)$, where $F$ is the CDF of the input distribution $\mathcal{D}$. Furthermore, we fully characterize the optimal primal and dual solutions as functions of $n$; hence, we can avoid solving a complex formulation. Since the optimal solution is difficult to analyze for finite $n$, we study its asymptotic behavior when $n\to \infty$ and $k$ is fixed. We show that the asymptotic behavior of the objective of the infinite linear program is $\bar{\gamma} (1 - C \log k / k)$, for $k$ large but constant. This shows that as $k$ grows, $k$-dynamic algorithms obtain the optimal ratio of a fully dynamic solution. We also provide numerical results for $k \leq 10$, which empirically show that $\gamma_{n,k}^*$ is already close to optimal for $k \geq 5$. In Figure~\ref{fig:approximations}, we present some of the approximations computed for small $k$ using our methodologies.

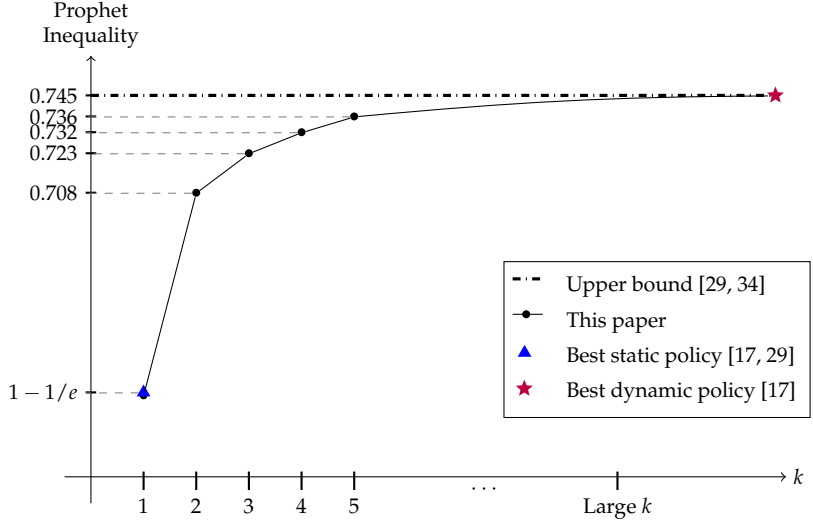
\begin{figure}
	\centering
	\scriptsize
	\begin{tikzpicture}[xscale=0.7, yscale=35]
		\draw[->] (-0.5, 0.6) -- (13.2, 0.6) node[right] {$k$};
		\draw[->] (0, 0.59) -- (0, 0.76) node[above,text width=2cm,text centered] {Prophet Inequality};
		
		\draw[-,thick] (1,{0.6+0.002}) -- (1,{0.6-0.005}) node[below] {$1$};
		\draw[-,thick] (2,{0.6+0.002}) -- (2,{0.6-0.005}) node[below] {$2$};
		\draw[-,thick] (3,{0.6+0.002}) -- (3,{0.6-0.005}) node[below] {$3$};
		\draw[-,thick] (4,{0.6+0.002}) -- (4,{0.6-0.005}) node[below] {$4$};
		\draw[-,thick] (5,{0.6+0.002}) -- (5,{0.6-0.005}) node[below] {$5$};
		
		\draw (7.5,0.6) node[below] {$\cdots$};
		
		\draw[-,thick] (10,{0.6+0.002}) -- (10,{0.6-0.005}) node[below] {Large $k$};
		
		\draw[-,thick] (0.1,{0.6321}) -- (-0.1,{0.6321}) node(y) [left] {$1-1/e$};
		\draw[-,thick] (0.1,{0.708}) -- (-0.1,{0.708}) node(z) [left] {$0.708$};
		\draw[-,thick] (0.1,{0.723}) -- (-0.1,{0.723}) node(u) [left] {$0.723$};
		\draw[-,thick] (0.1,{0.731}) -- (-0.1,{0.731}) node(v) [left] {$0.732$};
		\draw[-,thick] (0.1,{0.737}) -- (-0.1,{0.737}) node(w) [left] {$0.736$};
		
		\draw[-,thick] (0.1,{0.745}) -- (-0.1,{0.745}) node[left] {$0.745$};
		
		\draw[-,dash dot,very thick] (0,0.745)-- (13,0.745);
		
		\node(a) at (1,0.631) [circle,fill=black,scale=0.4] {};
		\node(b) at (2,0.708) [circle,fill=black,scale=0.4] {};
		\node(c) at (3,0.723) [circle,fill=black,scale=0.4] {};
		\node(d) at (4,0.731) [circle,fill=black,scale=0.4] {};
		\node(e) at (5,0.737) [circle,fill=black,scale=0.4] {};
		
		\node(correa) at (1,0.6321) [regular polygon,regular polygon sides=3,fill=blue,scale=0.4] {};
		
		\draw[-,dashed,gray] (correa) -- (y);

		\node(f) at (13,0.745) {};
		
		\draw[-] (a) -- (b) -- (c) -- (d) -- (e) ;
		
		\draw[-,dashed,gray] (b) -- (z);
		\draw[-,dashed,gray] (c) -- (u);
		\draw[-,dashed,gray] (d) -- (v);
		\draw[-,dashed,gray] (e) -- (w);
		
		\node(correa2) at (13,0.745) [star, fill=purple, scale=0.4, star points=5,star point ratio=2.25] {};
		
		\draw[-]  (e) to [bend left=0.05] (f) ;
		
		\matrix [draw,below left] at (current bounding box.east) {
			\draw[-,dash dot,very thick] (-0.2,0)-- (0.2,0); &  \node [label=right:Upper bound~\cite{hill1982comparisons,kertz1986stop}] {}; \\
			\draw (-0.2,0) -- (0.2,0) ; \node at (0.02,0.04) [circle,fill=black,scale=0.4] {};& \draw node [label=right:This paper] {}; \\
			\node [regular polygon,regular polygon sides=3,fill=blue,scale=0.4] {}; &\node [label=right:Best static policy~\cite{correa2021posted,hill1982comparisons}] {}; \\
			\node [star, fill=purple, scale=0.4, star points=5,star point ratio=2.25] {}; &\node [label=right:Best dynamic policy~\cite{correa2021posted}] {}; \\
		};
		
	\end{tikzpicture}
	\caption{\modif{Asymptotic approximations obtained by our method. The plot represents the values computed via $\gamma_{n,k}^*$ for $k=1,2$ in Theorems~\ref{thm:optimal_cr} and~\ref{prop:dual_P_nk} and $v_{\infty,k}^*$ for $k\in \{3,4,5\}$ in Theorem~\ref{thm:main1}} below.}\label{fig:approximations}
\end{figure}

\subsection{Technical Results}

\modif{Our first technical result characterizes the optimal approximation factor for $k$-dynamic algorithms. We utilize a quantile-based approach similar to the one proposed in~\cite{correa2021posted}. The idea is to make decisions based on the quantile $q$ given by the value $x$ such that $\Prob(X\geq x) = q$, instead of directly computing the optimal policy in the $x$ values.
\begin{theorem}[Optimal Approximation Factor]\label{thm:optimal_cr}
	The optimal approximation ratio for $k$-dynamic algorithms is
	\[
	\gamma_{n,k}^* = \max_{\substack{\tau_1,\ldots,\tau_k\in \Z_+\\ \tau_1+\cdots+\tau_k=n}} \gamma_{n,k}^*(\tau_1,\ldots,\tau_k) ,
	\]
	where $\gamma_{n,k}^*(\tau_1,\ldots,\tau_k)$ is the optimal value of
	\begin{subequations}\small\label{form:}
		\begin{align}
			\inf_{\substack{\mathbf{d}\in \R_+^k \\ f:[0,1]\to \R_+}} \quad & d_1 \notag\\
			(D)_{n,k}\qquad\qquad \text{s.t.} \quad & d_t \geq \left(\frac{1-(1-q)^{\tau_t}}{q}\right) \int_0^q f_u \, \mathrm{d}u + (1-q)^{\tau_t} d_{t+1}, & \forall t\in[k],\forall q\in [0,1] \label{const:dynamic_constr} \\
			\quad &  \int_0^1 n (1-u)^{n-1} f_u \, \mathrm{d}u = 1,  & \label{const:max_value_const} \\
			\quad & f_u \geq f_{u'},& \forall u\leq u' . \label{const:nonincreasing}
		\end{align}
	\end{subequations}
\end{theorem}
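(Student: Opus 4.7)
The plan is to fix a partition $(\tau_1,\ldots,\tau_k)$ with $\tau_1+\cdots+\tau_k = n$ and show that the best approximation ratio achievable by $k$-dynamic policies restricted to this partition equals the optimal value of $(D)_{n,k}$; then $\gamma_{n,k}^*$ is recovered as the maximum of these values over all feasible partitions. Following the quantile-based approach of~\cite{correa2021posted}, I would reparametrize the DM's thresholds as quantiles to decouple the combinatorial decisions of the policy from the scale and shape of the distribution.

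First, for any nonnegative distribution $\mathcal{D}$, let $G(u)$ denote its right-continuous quantile function, so that a threshold $\bar{x}_t = G(q_t)$ in window $t$ corresponds to an acceptance probability $q_t = \Prob_{X\sim\mathcal{D}}(X \geq \bar{x}_t)$. Using the identity $\E[X\mathbf{1}\{X \geq G(q)\}] = \int_0^q G(u)\,\mathrm{d}u$ and the fact that the first draw above $G(q_t)$ in $\tau_t$ i.i.d.\ draws is geometrically distributed, backward induction yields that the expected reward from window $t$ onward, conditional on reaching it, is
\[
V_t = \frac{1-(1-q_t)^{\tau_t}}{q_t}\int_0^{q_t} G(u)\,\mathrm{d}u + (1-q_t)^{\tau_t} V_{t+1},
\]
with $V_{k+1}:=0$.

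Second, I would normalize by $v := \E[\max_i X_i] = \int_0^1 n(1-u)^{n-1} G(u)\,\mathrm{d}u$, obtained by a change of variables in $\int_0^\infty (1-F(x)^n)\,\mathrm{d}x$. Setting $f_u := G(u)/v$ and $d_t := V_t/v$, the function $f$ is automatically nonnegative, nonincreasing (giving~\eqref{const:nonincreasing}), and satisfies $\int_0^1 n(1-u)^{n-1} f_u\,\mathrm{d}u = 1$ (giving~\eqref{const:max_value_const}). The DM's optimal normalized value for $\mathcal{D}$ is exactly the smallest $d_1$ compatible with~\eqref{const:dynamic_constr} for all $q \in [0,1]$ and $t \in [k]$, since taking $d_t = \sup_q[\text{RHS of~\eqref{const:dynamic_constr}}]$ realizes the DP optimum. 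To close the loop, I would establish a bijection (modulo positive scaling, which preserves ratios) between distributions $\mathcal{D}$ and feasible $f$'s in $(D)_{n,k}$: every nonincreasing nonnegative $f$ obeying~\eqref{const:max_value_const} is the normalized quantile function of the distribution whose right-continuous decreasing inverse CDF equals $f$. Thus $\inf_\mathcal{D}\sup_q[\text{ratio}] = \gamma_{n,k}^*(\tau_1,\ldots,\tau_k)$, the LP value.

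Finally, $\gamma_{n,k}^* \geq \max_\tau \gamma_{n,k}^*(\tau)$ is immediate since the DM can commit in advance to any partition $\tau$ and thereby guarantee the corresponding LP value; the reverse identity, which implicitly requires swapping $\sup_q$ and $\inf_\mathcal{D}$, follows from a standard minimax argument using linearity of the ratio in $f$ for fixed quantiles, continuity in $q$ for fixed $f$, and the finite range of partitions. The main obstacle I anticipate is the careful handling of distributions with atoms or gaps in support, where the quantile correspondence and the $\sup_q$ over $[0,1]$ may require tie-breaking or approximation arguments to align the LP exactly with the DM's optimization; verifying the reverse direction of the bijection (that every feasible $f$ gives a \emph{bona fide} probability distribution with $\E[\max_i X_i] = 1$ matching~\eqref{const:max_value_const}) is a secondary technical point to handle with care.
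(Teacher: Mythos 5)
Your Steps 1--3 follow the paper's Appendix~\ref{app:proof_thm_1} proof quite closely: normalize to $\E[\max_i X_i]=1$, observe that the backward DP recursion places $(\mathbf{d},f)$ into the feasible set of $(D)_{n,k}$, and conversely build a distribution from any feasible $f$ via $X = f_Q$ with $Q \sim \mathrm{Unif}[0,1]$ (paying an $O(nk\varepsilon)$ price for smoothing when $f$ is not continuous and strictly decreasing, which resolves your anticipated atoms/gaps concern). Your remark that the smallest feasible $d_1$ for fixed $f$ equals the DP optimum is exactly the paper's backward induction establishing $d_t \geq \widehat{d}_t$ in its second claim.

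The final paragraph is where you deviate and where a genuine gap appears. You assert that the inclusion $\gamma_{n,k}^* \leq \max_\tau\gamma^*_{n,k}(\tau)$ ``implicitly requires swapping $\sup_q$ and $\inf_\mathcal{D}$'' and defend this by a vague minimax appeal. No such swap is needed for Theorem~\ref{thm:optimal_cr}. A $k$-dynamic algorithm commits to its partition $\tau$ as part of its specification but computes thresholds (equivalently, quantiles) from the known distribution $\mathcal{D}$; consequently, the worst-case ratio of the best window-$\tau$ policy is $\inf_\mathcal{D}\sup_q r(\tau,q,\mathcal{D})$ with no order-of-quantifiers issue --- the quantile rule is allowed to be the pointwise argmax in $\mathcal{D}$ --- and you have already identified this quantity with the LP value $\gamma^*_{n,k}(\tau)$ in Steps 1--3. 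The outer $\max_\tau$ over the finitely many partitions is then immediate in both directions. The distribution-oblivious quantile policy achieving the same ratio, i.e.\ the genuine exchange $\sup_q\inf_\mathcal{D} = \inf_\mathcal{D}\sup_q$, is the content of Theorem~\ref{prop:dual_P_nk} proved separately via LP duality, and should not be imported into this proof. Moreover, the minimax argument you sketch would not hold up as stated: the competitive ratio is not linear in $f$ (it is a ratio of two linear functionals, and the normalization constrains $f$ to an affine slice of the cone of nonincreasing functions, not a linear space), and the set of partitions is a discrete nonconvex set, so Sion-type theorems do not apply without passing to mixed strategies and supplying a compactness argument you have not given.
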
}
\modif{The theorem follows directly once we show that $\gamma_{n,k}^*(\tau_1,\ldots,\tau_k)$ is the best approximation ratio that a $k$-dynamic algorithm can attain using fixed windows of size $\tau_1,\ldots,\tau_k$. To show this, first, we note that for $f:[0,1]\to \R_+$ satisfying Constraints~\eqref{const:max_value_const} and~\eqref{const:nonincreasing}, we can construct a nonnegative random variable $X$ with CDF $F$ given by $F^{-1}(1-u)=f_u$. A calculation shows that $\E[\max_{i\in [n]} X_i]=\int_0^1 n (1-u)^{n-1} f_u \, \mathrm{d}u=1$. From here, we prove that for any feasible solution $(\mathbf{d},f)$ to $(D)_{n,k}$ and any $k$-dynamic algorithm $A$ using windows of size $\tau_1,\ldots,\tau_k$, we get $\E[X_A]\leq d_1$, implying that the approximation ratio of $A$ is at most $d_1$. Moreover, from the optimal $k$-dynamic algorithm $A$ for windows of size $\tau_1,\ldots,\tau_k$, we can generate a feasible solution $(\mathbf{d},f)$ to $(D)_{n,k}$ such that $d_1\leq \E[X_A]$. From here, the proof follows. We present the details in Appendix~\ref{app:proof_thm_1}.}


\modif{Our next result shows that the optimal value of $(D)_{n,k}$, $\gamma_{n,k}^*(\tau_1,\ldots,\tau_k)$, equals the value of a maximization problem that we interpret as the dual of $(D)_{n,k}$. This technical result is crucial for the design and analysis of optimal algorithms, since from this dual problem we can extract optimal $k$-dynamic algorithms.}
\modif{\begin{theorem}[Optimal Policy]\label{prop:dual_P_nk}
		For any integers $\tau_1,\ldots,\tau_k\geq 0$ such that $\tau_1+\cdots + \tau_k=n$, the value $\gamma_{n,k}^*(\tau_1,\ldots,\tau_k)$ equals 
		\begin{subequations}\label{form:P_nk}\small
			\begin{align}
				\sup_{\substack{  \alpha:[k]\times [0,1]\to \R_+ \\ \eta:[0,1]\to \R_+, \eta\in \mathcal{C}} }\quad & v \notag\\
				(P)_{n,k}\qquad s.t. \quad & \int_0^1 \alpha_{1,q}\, \mathrm{d}q \leq 1, & \label{const:constr_1_P_nk} \\
				\quad & \int_0^1 \alpha_{t+1,q} \, \mathrm{d}q\leq \int_0^1 (1-q)^{\tau_t} \alpha_{t,q} \, \mathrm{d}q,  & \forall t\in [k-1] ,\label{const:constr_t_P_nk}\\
				\quad &   v n (1-u)^{n-1} + \frac{\mathrm{d}\eta_u}{\mathrm{d}u} \leq \sum_{t=1}^k\int_u^1 \left( \frac{1-(1-q)^{\tau_t}}{q}   \right) \alpha_{t,q} \, \mathrm{d}q, & [0,1]-\text{a.e.}  , \label{const:P_nk_esp_max_value_const} \\
				\quad & \eta_0 = 0, \eta_1=0, \label{const:P_nk_initial_condition}
			\end{align}
		\end{subequations}
		where a.e.\ stands for ``almost everywhere with respect to the Lebesgue measure in $[0,1]$.''
\end{theorem}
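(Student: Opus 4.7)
The plan is to identify $(P)_{n,k}$ as the Lagrangian dual of $(D)_{n,k}$---with multipliers $\alpha_{t,q}$ for the family \eqref{const:dynamic_constr}, $v$ for the normalization \eqref{const:max_value_const}, and an auxiliary antiderivative $\eta$ for the monotonicity constraint \eqref{const:nonincreasing}---and then to verify weak and strong duality separately. For weak duality, I would take any primal-feasible $(\mathbf{d},f)$ and dual-feasible $(\alpha,\eta,v)$, multiply \eqref{const:dynamic_constr} at $(t,q)$ by $\alpha_{t,q}\geq 0$, integrate in $q$, and apply Fubini to the double integral $\int_0^1 \alpha_{t,q}\tfrac{1-(1-q)^{\tau_t}}{q}\int_0^q f_u\,du\,dq=\int_0^1 f_u\int_u^1 \tfrac{1-(1-q)^{\tau_t}}{q}\alpha_{t,q}\,dq\,du$, then sum over $t\in[k]$ using the convention $d_{k+1}:=0$. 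Writing $A_t:=\int \alpha_{t,q}dq$ and $B_t:=\int (1-q)^{\tau_t}\alpha_{t,q}dq$, the $\mathbf{d}$-terms telescope to $d_1 A_1+\sum_{t\geq 2} d_t(A_t-B_{t-1})$, which is bounded by $d_1$ thanks to \eqref{const:constr_1_P_nk}--\eqref{const:constr_t_P_nk} and $\mathbf{d}\geq 0$. Substituting \eqref{const:P_nk_esp_max_value_const} on the other side and integrating by parts using \eqref{const:P_nk_initial_condition} via the identity $\int_0^1 f_u\eta'_u\,du=-\int_0^1 \eta_u\,df_u\geq 0$ (which uses $\eta\geq 0$ and $f$ non-increasing in the Stieltjes sense) then gives $d_1\geq v$.

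For strong duality my plan is to construct an explicit primal-dual pair achieving equality, which also dovetails with the authors' stated goal of reading an optimal $k$-dynamic algorithm off the dual. The proof of Theorem~\ref{thm:optimal_cr} supplies an optimal primal $(\mathbf{d}^*,f^*)$ in which $f^*$ is piecewise-structured and there exist critical quantiles $q_1^*,\ldots,q_k^*$ at which \eqref{const:dynamic_constr} is tight; $d_t^*$ is then defined by backward recursion through those tight constraints. Guided by complementary slackness, I would place $\alpha^*$-mass at $q_t^*$ in interval $t$ and propagate the mass forward by equality in \eqref{const:constr_t_P_nk} starting from $A_1^*=1$, so that $A_t^*$ equals the probability that the optimal policy reaches interval $t$. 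The multiplier $\eta^*_u$ is then defined as the running integral of the slack in \eqref{const:P_nk_esp_max_value_const}, with $v^*$ tuned so that $\eta^*_1=0$; nonnegativity of $\eta^*$ should follow from the order structure of $f^*$---the slack in \eqref{const:P_nk_esp_max_value_const} is supported precisely where $f^*$ is flat---so the integration-by-parts inequality in weak duality collapses to equality and $v^*=d_1^*$.

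The main obstacle I anticipate is that the candidate $\alpha^*$ naturally sits as a sum of Dirac masses at the critical quantiles rather than as a density as written in \eqref{form:P_nk}. I would handle this either by arguing $(P)_{n,k}$ is equivalently posed with $\alpha$ ranging over non-negative measures---smoothing Diracs by bump densities preserves the supremum since the objective is weak-$\ast$ continuous in $\alpha$---or by discretizing $[0,1]$ on a fine grid, invoking finite-dimensional LP duality on the resulting program, and passing to the limit via Helly's selection theorem on non-increasing $f$-iterates together with a weak-$\ast$ compactness argument on the dual measures. Verifying attainment of the primal infimum inside $(D)_{n,k}$ and showing that the discretization error vanishes are the technical steps I would allocate the most effort to, since neither Slater-type regularity nor general infinite-dimensional LP duality theorems apply out of the box in the presence of the monotonicity constraint.
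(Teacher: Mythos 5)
Your weak-duality argument is correct and is essentially the paper's computation (Appendix~\ref{app:strong_dual}): multiply Constraint~\eqref{const:dynamic_constr} by $\alpha_{t,q}$, Fubini, telescope the $d_t$'s using \eqref{const:constr_1_P_nk}--\eqref{const:constr_t_P_nk}, and close with the integration-by-parts inequality $\int_0^1 f_u\,\eta'_u\,\mathrm{d}u\geq 0$ driven by $\eta_0=\eta_1=0$, $\eta\geq 0$, and $f$ non-increasing. The paper writes the chain starting from \eqref{const:P_nk_esp_max_value_const} multiplied by $f_u$, but the two presentations are the same computation.

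Your primary route for strong duality---constructing an explicit primal--dual pair at the critical quantiles guided by complementary slackness---has a genuine gap for the theorem as stated, because it is asserted for \emph{all} $k$ and all window sizes $\tau_1,\ldots,\tau_k$. You take as given that the optimal primal $(\mathbf{d}^*,f^*)$ is ``piecewise-structured with critical quantiles $q_1^*,\ldots,q_k^*$ at which \eqref{const:dynamic_constr} is tight,'' but no such characterization is available for general $k$ in $(D)_{n,k}$: the paper itself only carries out this program for $k=1$ and $k=2$, explicitly remarking that the $k=2$ dual already occupies a page and that the strategy ``is difficult to generalize for larger values of $k$.'' That is precisely why they introduce the relaxation $(CLP)_{n,k}$ for $k\geq 3$. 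You also cannot appeal to existence of a primal minimizer a priori---the infimum in $(D)_{n,k}$ may not be attained by a nice $f$, and you correctly flag attainment as a serious technical obstacle without resolving it. So the explicit-construction plan, as written, cannot be made to work in full generality.

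Your fallback plan, however, is exactly what the paper does: discretize $[0,1]$ on an $m$-point grid to obtain finite LPs $(D)_{n,k,m}$ and $(P)_{n,k,m}$, invoke finite-dimensional LP strong duality $\gamma^*_{n,k,m}=\lambda^*_{n,k,m}$, and pass to the limit in both directions with quantitative error bounds. The difference is that the paper avoids abstract compactness (Helly / weak-$\ast$ extraction) by constructing explicit near-feasible continuum solutions from the grid solutions and vice versa: e.g., a step-function $g$ built from $\mathbf{f}$ that is feasible for $(D)_{n,k}$ after scaling $d_t$ by $(1+O(n/\sqrt{m}))$, and a piecewise-linear $\hat\eta$ and Dirac-approximating $\hat\alpha$ that are feasible for $(P)_{n,k}$ with a multiplicative $(1-O(n/\sqrt{m}))$ loss in $v$. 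Both sides yield $\gamma^*_{n,k}=\lambda^*_{n,k}$ as $m\to\infty$. Your compactness route would likely work too, but the paper's constructive approximations give explicit rates and sidestep having to prove closure/semicontinuity of the feasible sets under weak-$\ast$ limits, which is not automatic in the presence of the pointwise monotonicity constraint. I would drop the explicit primal--dual plan and lead with the discretization argument, as the paper does.
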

Functions $\alpha=\{\alpha_{t}\}_t$ are dual variables associated with Constraints~\eqref{const:dynamic_constr}, variable $v$ is associated with Constraint~\eqref{const:max_value_const}, and function $\eta$ is the dual variable associated with Constraints~\eqref{const:nonincreasing}. As usual in proving duality, we split the proof into weak and strong duality. The infinite dimensions of the primal and dual make the proof more technical; we defer it to Appendix~\ref{app:strong_dual}.}

From a solution of $(P)_{n,k}$ we can obtain a $k$-dynamic algorithm as follows. Given a feasible solution $(\alpha=\{ \alpha_{t} \}_{t},v,\eta)$ of $(P)_{n,k}$, we sample $q_t$ according to the probability mass distribution $\alpha_{t,q}/\int_0^1 \alpha_{t,q} \, \mathrm{d}q$ and set the threshold $x_t$ in the $t$-th window to satisfy $q_t=\Prob(X \geq x_t)$. We can show that this algorithm guarantees an approximation of at least $v$ fraction of the value of the prophet $\E[\max_{i} X_i]$ for any continuous distribution. Notice that $(P)_{n,k}$ is \emph{independent} of the input distribution $\mathcal{D}$, yet it provides a prophet inequality. We present the formal algorithm and analysis in Section~\ref{sec:prophet_bounded_threshold}. \modif{This construction shows that the optimal value of $(P)_{n,k}$ is at most the best approximation factor for $k$-dynamic algorithms that use windows of size $\tau_1,\ldots,\tau_k$. Equality holds due to the strong duality in Theorem~\ref{prop:dual_P_nk}.} 


\modif{Using the formulation in Theorem~\ref{thm:optimal_cr} (and Theorem~\ref{prop:dual_P_nk}), we compute $\gamma_{n,1}^*=1-(1-1/n)^n$, which recovers the known asymptotic bound of $1-1/e$ (see, e.g.,~\cite{correa2021posted}), and we also compute the new value $\gamma_{n,2}^*\approx 0.708$ by finding matching primal and dual solutions of problem $(D)_{n,2}$ and $(P)_{n,2}$, respectively, and numerically approximating the objective values; see Section~\ref{sec:small_thresholds}. However, for $k\geq 3$, characterizing these solutions via duality becomes a non-trivial task. Instead, we consider a relaxation of $(D)_{n,k}$ that drops constraints \eqref{const:nonincreasing} and fixes the sizes of the windows $\tau_1,\ldots,\tau_k$. This is the same as restricting $(P)_{n,k}$ to feasible solutions with $\eta=0$. In this case, we can explicitly characterize the optimal solution for any $k$ as a function of $n$. In the following theorem, we present the resulting formulation obtained by dropping constraints~\eqref{const:nonincreasing}.}

%
%
\begin{theorem}[Near-Optimal Policy for Fixed Time Windows] \label{thm:main1}
Let $\tau_1,\tau_2,\ldots,\tau_k$ be nonnegative integers such that $n=\tau_1+\tau_2+\cdots + \tau_k$. \modif{Then $\gamma_{n,k}^* \geq v_{n,k}^*$, where $v_{n,k}^*$ is the optimal value of}
\begin{subequations}\small
	\begin{align*}
		\sup_{ \alpha:[k]\times [0,1]\to \R_+ }\quad & v  \notag\\
		(CLP)_{n,k} \qquad\qquad\qquad  s.t. \quad & \int_0^1 \alpha_{1,q}\, \mathrm{d}q \leq 1, &\\
		\quad & \int_0^1 \alpha_{t+1,q} \, \mathrm{d}q\leq \int_0^1 (1-q)^{\tau_t} \alpha_{t,q} \, \mathrm{d}q,  & \forall t\in [k-1]\\
		\quad &   v n (1-u)^{n-1} \leq \sum_{t=1}^k\int_u^1 \left( \frac{1-(1-q)^{\tau_t}}{q}   \right) \alpha_{t,q} \, \mathrm{d}q, & \forall u\in [0,1].
	\end{align*}
\end{subequations}

Furthermore, we fully characterize the optimal solutions of $(CLP)_{n,k}$ when $\tau_1=\cdots=\tau_{k-1}\geq \tau_k \geq 0$.
\end{theorem}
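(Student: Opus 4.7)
The plan is to handle the two claims in order. The first inequality $\gamma_{n,k}^* \geq v_{n,k}^*$ is a direct consequence of Theorem~\ref{prop:dual_P_nk}: every feasible solution $(\alpha,v)$ of $(CLP)_{n,k}$ extends to a feasible solution $(\alpha,v,\eta\equiv 0)$ of $(P)_{n,k}$ with the same objective value, since the zero function trivially satisfies $\eta_0=\eta_1=0$ and has $\mathrm{d}\eta_u/\mathrm{d}u=0$, making constraint~\eqref{const:P_nk_esp_max_value_const} coincide with the corresponding $u$-constraint of $(CLP)_{n,k}$. Hence $v_{n,k}^*$ is at most the optimal value of $(P)_{n,k}$, which equals $\gamma_{n,k}^*(\tau_1,\ldots,\tau_k)$ by Theorem~\ref{prop:dual_P_nk}. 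Since $\gamma_{n,k}^* = \max_{\tau_1+\cdots+\tau_k=n} \gamma_{n,k}^*(\tau_1,\ldots,\tau_k) \geq \gamma_{n,k}^*(\tau_1,\ldots,\tau_k)$ for the chosen windows, the inequality follows.

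For the characterization under $\tau_1=\cdots=\tau_{k-1}=\tau \geq \tau_k$, the approach is Lagrangian/KKT analysis on $(CLP)_{n,k}$ together with its formal dual, which coincides with $(D)_{n,k}$ after dropping constraint~\eqref{const:nonincreasing}. The KKT conditions yield primal/dual feasibility and complementary slackness: $\alpha_{t,q}$ is supported where the dual inequality~\eqref{const:dynamic_constr} is tight, and the primal $f_u$ is positive only where $v n (1-u)^{n-1}=\sum_t \int_u^1 ((1-(1-q)^{\tau_t})/q)\alpha_{t,q}\,\mathrm{d}q$. Because $\tau_1=\cdots=\tau_{k-1}$, the recursion~\eqref{const:constr_t_P_nk} is symmetric in the first $k-1$ windows, suggesting an ansatz in which each $\alpha_t$ has a simple parametric shape (for instance, concentrated on a single quantile $q_t$ or proportional to a common kernel) with a window-dependent total mass $a_t=\int_0^1 \alpha_{t,q}\,\mathrm{d}q$; taking~\eqref{const:constr_1_P_nk}--\eqref{const:constr_t_P_nk} with equality lets this mass propagate telescopically.

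Inserting this ansatz into the continuum of $u$-constraints reduces the problem to a finite system in a handful of scalar parameters (the quantiles $q_t$ or shape parameters and the objective $v$). I would identify the binding $u$'s by studying the piecewise structure of $\sum_t \int_u^1 ((1-(1-q)^{\tau_t})/q)\alpha_{t,q}\,\mathrm{d}q$ as a function of $u$: the assumption $\tau \geq \tau_k$ guarantees that the window-$k$ contribution is the slowest to decay, fixing the ordering of breakpoints. Solving this system produces explicit expressions for the $\alpha_t$, for the optimal $v_{n,k}^*$ as a function of $(n,\tau,\tau_k)$, and for a matching primal $f$ via complementary slackness, certifying optimality.

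The main obstacle is the usual difficulty for infinite-dimensional linear programs: justifying that the guessed solution is extremal, and that it satisfies the \emph{continuum} of constraints in $u\in[0,1]$ rather than a finite subset. For the dual side, I would argue via an extreme-point/convexity analysis that the objective and the finitely many scalar constraints~\eqref{const:constr_1_P_nk}--\eqref{const:constr_t_P_nk} depend on each $\alpha_t$ only through a few linear functionals, so that no mass-spreading inside a window can improve the objective beyond what a minimal-support solution achieves. For the primal side, I would verify feasibility by showing that the right-hand side above is a piecewise expression with $O(k)$ breakpoints ordered by $q_1,\ldots,q_k$, so that the inequality need only be checked at those breakpoints, where equality holds by construction. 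Once both sides are verified, the matching objective values deliver the full characterization.
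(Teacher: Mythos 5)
Your argument for the inequality $\gamma_{n,k}^* \geq v_{n,k}^*$ is correct and is the same as the paper's: $(CLP)_{n,k}$ is the restriction of $(P)_{n,k}$ to $\eta\equiv 0$, so $v_{n,k}^*\leq \gamma_{n,k}^*(\tau_1,\ldots,\tau_k)\leq \gamma_{n,k}^*$, appealing to Theorem~\ref{prop:dual_P_nk} (equivalently Proposition~\ref{prop:approximation_relaxed_LP}).

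For the characterization claim, however, your sketch has a structural gap. You suggest an ansatz in which each $\alpha_t$ is concentrated on a single quantile (or a ``simple parametric shape''), leading to a right-hand side with $O(k)$ breakpoints to check. That picture is wrong for $(CLP)_{n,k}$. Because $(CLP)_{n,k}$ has no $\eta$ (i.e.\ no monotonicity constraint on the dual $f$), the optimal $\alpha_{t,\cdot}$ is an absolutely continuous density supported on an entire interval $[\varepsilon_{t-1},\varepsilon_t]$, explicitly
\[
\alpha_{t,q}=v\,\frac{q}{1-(1-q)^{\tau_t}}\,n(n-1)(1-q)^{n-2}\,\mathbf{1}_{[\varepsilon_{t-1},\varepsilon_t]}(q),
\]
obtained by forcing the $u$-constraint $vn(1-u)^{n-1}=\sum_t\int_u^1 \bigl(\tfrac{1-(1-q)^{\tau_t}}{q}\bigr)\alpha_{t,q}\,\mathrm{d}q$ to hold \emph{for all} $u$, not just at $O(k)$ breakpoints (differentiating in $u$ gives the shape). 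Correspondingly, on the dual side the constraint~\eqref{const:d_infinity_dynamic} is tight on the entire interval $[\varepsilon_{t-1},\varepsilon_t]$, not at finitely many points, which is precisely what complementary slackness forces. Your proposal also omits the nontrivial existence step: one must show there is a value $v^*>0$ making $\varepsilon_k(v^*)=1$, which the paper establishes by proving each $\varepsilon_t(v)$ is differentiable and strictly decreasing with derivative bounded away from zero (Proposition~\ref{prop:epsilon_decreasing_gamma}), and then by exhibiting a matching dual measure $\mu_F$ built from the piecewise function $F$ and proving $d_t=a_t$ by backward induction (Proposition~\ref{prop:backward_induction_vt_at}). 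A Dirac-based or minimal-support heuristic is the right intuition for the full problem $(D)_{n,k}$/$(P)_{n,k}$ (as the paper uses for $k=1,2$), but it does not apply to the relaxed program $(CLP)_{n,k}$, and the ``check only the breakpoints'' argument you propose would certify feasibility of the wrong object.
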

Notice that the feasible region of $(CLP)_{n,k}$ is contained in the feasible region of $(P)_{n,k}$. Hence, from any solution $(\alpha,v)$ of $(CLP)_{n,k}$, we can obtain algorithms using the framework used for solutions in Theorem~\ref{prop:dual_P_nk}. To characterize the optimal solution of $(CLP)_{n,k}$, we utilize duality again; however, this time we provide explicit primal and dual solutions with matching objective values. The optimal solution $\alpha=\{ \alpha_{t} \}_{t}$ exhibits disjoint consecutive supports in $[0,1]$ for different $t$. As a consequence, the sampled values $q_1,\ldots,q_k$ are increasing, and the resulting thresholds are decreasing. \modif{For the case $k=n$ and $\tau_1=\cdots=\tau_n=1$, the optimal solution of $(CLP)_{n,k}$ recovers the distributions from \cite{correa2021posted}, which attains the optimal approximation for the classic i.i.d. prophet inequality problem.}

We further the study of $v_{n,k}^*$, the optimal value of $(CLP)_{n,k}$ and approximation for the $k$-dynamic algorithms, in the particular case of $\tau_1=\cdots=\tau_{k-1} =\lceil n/k \rceil$, $\tau_k= n-(k-1)\lceil n/k \rceil$. Our next result gives an asymptotic lower bound on $v_{n,k}^*$ and hence, also an asymptotic lower bound over $\gamma_{n,k}^*$.
\begin{theorem}[Asymptotic Lower Bound]\label{thm:asymptotic_informal}
	There is a constant $c_1$ such that for $n$ large and $k \ll n$,
	$$\gamma_{n,k}^* \geq v_{n,k}^* \geq \bar{\gamma} \left( 1 - c_1  \left(  \frac{\log k}{k} + \frac{k^2}{n} \right)  \right).$$
	Here, $\bar{\beta}=1/\bar{\gamma}$ is the unique solution of $\int_0^1 ( \beta -1 + y (\log y -1) )^{-1} \, \mathrm{d}y = 1$.
\end{theorem}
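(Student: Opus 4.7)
The first inequality $\gamma_{n,k}^* \geq v_{n,k}^*$ is already established by Theorem~\ref{thm:main1}, so the proof reduces to bounding $v_{n,k}^*$ from below. My plan is to work with the equal-window choice $\tau_1=\cdots=\tau_{k-1}=\lceil n/k \rceil$ and $\tau_k=n-(k-1)\lceil n/k\rceil$ and exploit the explicit primal characterization of $(CLP)_{n,k}$ promised at the end of Theorem~\ref{thm:main1}. In that characterization the optimal $\alpha_t$'s have consecutive, disjoint supports $[q_{t-1},q_t]$ with $0=q_0<q_1<\cdots<q_k=1$, and their magnitudes are pinned by the mass-transfer constraints holding with equality. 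Substituting this structure into the objective yields a closed-form expression for $v_{n,k}^*$ as a function of the breakpoints $q_1,\ldots,q_{k-1}$, which are themselves determined by a recursion involving $(1-q_t)^{\tau_t}$ and integrals of $(1-(1-q)^{\tau_t})/q$.

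The first step of the asymptotic analysis is to take $n\to\infty$ with $k$ fixed, so that $\tau_t\sim n/k$. In this limit $(1-q)^{\tau_t}\to e^{-qn/k}$ uniformly on compact subsets of $(0,1]$, and the discrete recursion for the breakpoints and for $v_{n,k}^*$ converges to a well-defined continuous limit $v_{\infty,k}^*$. The error is controlled by the standard estimate $(1-q)^{\tau}=e^{-q\tau}(1+O(q^2\tau))$; summing per-window errors across the $k$ windows, with $q$ bounded below by the left breakpoint and $\tau=\Theta(n/k)$, produces an aggregate error of order $k^2/n$, which gives $v_{n,k}^* \geq v_{\infty,k}^* - O(k^2/n)$.

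The second and harder step is to prove $\bar\gamma - v_{\infty,k}^* = O(\log k/k)$. The key observation is that the limiting program over $k$ breakpoints is a Riemann-type discretization of the continuous optimization problem solved in~\cite{correa2021posted}, whose fixed-point equation is $\int_0^1(\bar\beta-1+y(\log y-1))^{-1}\,\mathrm{d}y=1$. The continuous optimum has an integrand that develops a logarithmic near-singularity as $y\to 0$ (equivalently $q\to 1$), behaving like $(\bar\beta-1+y(\log y -1))^{-1}$. A piecewise-constant approximation on $k$ intervals therefore yields an $O(1/k)$ error on generic pieces, and the boundary pieces contribute the extra $\log k$ factor. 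I would make this rigorous by comparing the discrete recursion that defines $v_{\infty,k}^*$ to the continuous ODE satisfied by the Correa et al.\ fixed-point solution, and integrating the difference between the two via a Gr\"{o}nwall-type inequality.

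The main obstacle will be establishing the $\log k / k$ rate cleanly. One must show that the optimal discrete breakpoints $q_1,\ldots,q_{k-1}$ track the quantiles of the continuous Hill--Kertz distribution, handle the logarithmic singularity near $q=1$ by bounding the contribution of the last few windows separately, and combine these two sources of error, namely the discretization of a singular density and the finite-$n$ replacement of $(1-q)^{\tau}$ by $e^{-q\tau}$, into a single unified estimate. The $k\ll n$ hypothesis is needed exactly so that the $k^2/n$ term does not dominate the $\log k /k$ term, allowing both to be absorbed into a single constant $c_1$.
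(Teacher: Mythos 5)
Your high-level plan matches the paper's: the first inequality $\gamma_{n,k}^* \geq v_{n,k}^*$ is handed to you by Theorem~\ref{thm:main1}; you then split the remaining work into (i) a finite-to-infinite comparison $v_{n,k}^* \geq v_{\infty,k}^*\bigl(1-O(k^2/n)\bigr)$ (the paper's Theorem~\ref{thm:approximation_large_n}, which indeed uses the same style of estimate, replacing $(1-q)^{\tau}$ by $e^{-q\tau}$ with controlled per-window error and a preliminary bound $y_{k-1}\geq 1/(32k)$), and (ii) the harder estimate $v_{\infty,k}^* \geq \bar\gamma\bigl(1-O(\log k/k)\bigr)$ via comparison with the Hill--Kertz ODE (Theorem~\ref{thm:bound_infnite_model}). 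So the skeleton is the same.

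Where your picture of the $\log k/k$ rate is imprecise is the mechanism. You attribute the extra $\log k$ to a Riemann-style discretization of a logarithmically singular integrand; but the integrand in $I(\beta)=\int_0^1(\beta-1+w(1-\log w))^{-1}\,\mathrm{d}w$ is actually bounded on $[0,1]$, so there is no singularity of that kind. The paper's $\log k$ instead comes from the following chain: the breakpoints $y_t$ of the optimal solution of $(CLP)_{\infty,k}$ satisfy the exact recursion $\int_{y_{t+1}}^{y_t}\frac{-\log y}{1-y^{1/k}}\,\mathrm{d}y = \beta^*-1+y_t(1-\log y_t)$, and the factor $\frac{-\log y}{1-y^{1/k}}$ lies between $k$ and $k\,y^{-1/k}$. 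Because one proves $y_{k-1}\geq 1/(32k)$, on the relevant range the upper bound is $k(32k)^{1/k}$. The paper then sandwiches $y_t$ between two Euler-type iterates $x_t$ (step $1/k$) and $z_t$ (step $1/((32k)^{1/k}k)$), both driven by the ODE's right-hand side, and the discrepancy $(32k)^{1/k}=1+\Theta(\log k/k)$ between the two step sizes is what produces the $\log k/k$ loss after comparing both to the ODE solution $w$ (via convexity and Taylor arguments on $w$, rather than a generic Gr\"onwall bound). A Gr\"onwall-type estimate could likely be made to work, but you would first need the explicit lower bound on $y_{k-1}$ and the two-sided Euler sandwich; without those, tracking ``the last few windows separately'' is underspecified and the $\log k$ factor will not fall out of a standard discretization error analysis alone.
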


Thus, if $n$ goes to infinity, the optimal prophet inequality for $k$-dynamic algorithms approaches $\bar{\gamma}$ at least as fast as $(1-c_1 \log k /k)$. To show Theorem~\ref{thm:asymptotic_informal}, we let the number of random variables $n$ grow with an appropriate scaling of the time horizon $[n]$, and we analyze a version of $(CLP)_{n,k}$ that is free from $n$. We call this model the \emph{infinite model}, while for finite $n$, we refer to the problem as the \emph{finite model}. We also characterize the optimal solutions of this infinite model, and show that the optimal value $v_{n,k}^*$ in the finite case converges to $v_{\infty,k}^*$, the optimal value of the infinite model. We also show that the solution in the infinite model approaches the points of an ordinary differential equation (ODE) that has appeared before in the literature; see e.g., \cite{correa2021posted,kertz1986stop}. This allows us to show that $v_{\infty,k}^* \geq \bar{\gamma}(1-c'' \log k /k )$, where $c''$ is a universal constant independent of $k$. By backtracking from the infinite model, we can recover solutions for the finite model with the guarantee exhibited in Theorem~\ref{thm:asymptotic_informal}.

From the infinite model, we can deduce a limit to our approach with intervals of the same length. This partially complements Theorem~\ref{thm:asymptotic_informal}, by showing an upper bound over $v_{\infty,k}^*$ that converges to $\bar{\gamma}$.
\begin{theorem}[Asymptotic Upper Bound]\label{thm:asymptotic_informal_upper}
	There is a constant $c_2$ such that for large enough $k$,
	\[
	\lim_{n\to \infty} v_{n,k}^* = v_{\infty,k}^* \leq \bar{\gamma} \left( 1 - c_2 \frac{\log k}{k}  \right).
	\]
\end{theorem}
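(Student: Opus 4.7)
The plan is to upper bound $v_{\infty,k}^*$ via weak duality against a relaxed dual of $(CLP)_{n,k}$, and then pass to the $n\to\infty$ limit. The relevant dual is exactly $(D)_{n,k}$ with the monotonicity constraint~\eqref{const:nonincreasing} removed: any feasible pair $(\mathbf{d},f)$ yields $v_{n,k}^* \le d_1$. Since the identity $\lim_{n\to\infty}v_{n,k}^* = v_{\infty,k}^*$ is already part of the development leading to Theorem~\ref{thm:asymptotic_informal}, I would work directly in the infinite model, where under the scaling $q = \lambda/\tau_t$ with $\tau_t\approx n/k$ the survival factor $(1-q)^{\tau_t}$ becomes $e^{-\lambda}$, and the dual's recursion collapses into a clean Bellman recursion over $k$ scalar intensities $\lambda_1,\dots,\lambda_k$.

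The natural choice of witness is the Hill--Kertz extremal instance, whose fully dynamic optimum equals $\bar{\gamma}$ exactly and whose optimal intensity curve $\lambda^*(s)$, $s\in[0,1]$, satisfies the ODE implicitly defining $\bar{\beta}$ through $\int_0^1(\beta-1+y(\log y-1))^{-1}\,\mathrm{d}y=1$. Taking $f$ to be the quantile profile of this instance (incidentally nonincreasing, though we do not need this), the Bellman recursion becomes $d_t=\max_{\lambda\ge 0}\bigl[e^{-\lambda}d_{t+1}+G_t(\lambda)\bigr]$, solved backward from $d_{k+1}=0$, where $G_t(\lambda)$ is the reward contribution from a single threshold in window $t$ under the Hill--Kertz profile. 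In the fully dynamic limit $k=n$ this recursion is the discrete Hill--Kertz dynamic program and yields $d_1\to\bar{\gamma}$; for $k$ equal windows the coarser grid produces a strictly smaller value, and $\bar{\gamma}-d_1$ is precisely the discretization loss to be lower bounded.

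The main obstacle is to show $\bar{\gamma}-d_1 \ge c_2\bar{\gamma}\log k/k$. The $\log k$ factor should arise from the singular growth $\lambda^*(s)\sim 1/(1-s)$ as $s\to 1$: each window $t$ incurs an error proportional to the variation of $\lambda^*$ across $[(t-1)/k,t/k]$, which scales like $1/(k-t+1)$, so summing over $t$ produces a harmonic tail of size $\log k$. To formalize this I would (i) bound the curvature of $\lambda\mapsto e^{-\lambda}d_{t+1}+G_t(\lambda)$ near its optimum, yielding a quadratic-loss inequality $d_t \le d_t^{\text{cont}} - c/(k(k-t+1))$ relating the $k$-window value to the continuous-time optimum at the same point, (ii) iterate this inequality backward and sum, and (iii) handle the final $O(1)$ windows---where $\lambda^*(s)$ diverges---by truncating at $s = 1-\Theta(1/k)$ and bounding the tail contribution directly. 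Step (iii) is the most delicate: the Hill--Kertz ODE admits no closed form, so the analysis near the singularity must rely on asymptotic expansions sharp enough to prevent picking up more than the claimed $\log k/k$ loss.
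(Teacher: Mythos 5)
Your plan is genuinely different from the paper's route, and the high-level idea is sound, but as written it has substantial unresolved gaps in precisely the steps where the $\log k/k$ rate must be extracted.

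The paper works \emph{primally}: it characterizes the optimal $(CLP)_{\infty,k}$ solution via the breakpoints $y_0=1\geq y_1\geq\dots\geq y_k=0$ satisfying the implicit recursion~\eqref{eq:implicit_recursion}, compares $y_t$ to the Euler iterates $x_t$ and $z_t$ of the ODE via Lemmas~\ref{lem:y_t_bounded_by_x_t} and~\ref{lem:x_t_bounded_by_w} (which were already established for the lower bound), and then converts the resulting estimate $\rho^*\leq 1-32\log(32k)/k$ into $\beta^*\geq\bar{\beta}(1+\Theta(\log k/k))$ via monotonicity of $I(\beta)$. You instead propose a \emph{dual-witness} argument: feed the Hill--Kertz quantile profile into $(DLP)_{\infty,k}$, run the resulting Bellman recursion, and bound $d_1$ directly. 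That would indeed give an upper bound on $v_{\infty,k}^*$ by weak duality, and your intuition that the $\log k$ arises from a harmonic tail near the singularity of $\lambda^*(s)$ matches the mechanism in the paper (there it shows up through $y_{k-\ell}\geq\ell/(32k)$).

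However, three of your steps are asserted rather than argued, and each is where the real work lies. (i) The ``quadratic-loss inequality'' $d_t\leq d_t^{\text{cont}}-c/(k(k-t+1))$ is not self-evident: you need a uniform lower bound on the curvature of $\lambda\mapsto e^{-\lambda}d_{t+1}+G_t(\lambda)$ at the optimizer, together with a lower bound on the spread of the continuous optimum across each window, neither of which is immediate from the ODE. (ii) The backward iteration needs to control the propagation of the per-window loss through the factor $e^{-\lambda_t}$; this is not a simple summation, since the losses compound multiplicatively through the recursion. (iii) Near $s=1$ the Hill--Kertz profile diverges, and you acknowledge the expansions there must be ``sharp enough to prevent picking up more than the claimed $\log k/k$ loss'' --- this is the step the paper specifically engineers around by proving the quantitative bound $y_{k-\ell}\geq\ell/(32k)$ (Proposition~\ref{prop:lower_bound_y_k_ell}) and then comparing to $w$ via the already-constructed Euler sequences. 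You would need a genuinely new estimate to replace that, and you have not supplied one. The paper's route is shorter precisely because the Euler-method machinery from the lower-bound proof carries over intact; your route would require rebuilding an analogous discretization analysis on the dual side from scratch.
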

%
%
\modif{We can use Theorem~\ref{thm:optimal_cr} to compute the optimal value for $k=1$ and numerically approximate the optimal value for $k=2$. For small values of $k\geq 3$, computing the optimal solution using Theorem~\ref{thm:optimal_cr} becomes difficult. Nevertheless, we can use the analysis for Theorem~\ref{thm:asymptotic_informal} to provide lower bounds for the approximation factor;} we can also obtain algorithms for the finite model from its infinite counterpart. Moreover, a simple greedy procedure allows us to compute optimal solutions of the infinite model, provided that we can compute the integral of $-\log y/(1-y^{1/k})$.

\modif{For any $k$, there are essentially two gaps between $v_{n,k}^*$ and $\gamma_{n,k}^*$. The first and most significant one comes from dropping constraint~\eqref{const:nonincreasing}. For $k=1$, for instance, $v_{n,1}^* \approx 6/\pi^2 \approx 0.607$ (see Corollary~\ref{cor:value_k_1}), while $\gamma_{n,1}^*\approx 1-1/e\approx 0.632$. The second gap comes from fixing the sizes of the windows. Our theoretical results show that these gaps disappear as $k$ grows (Theorems~\ref{thm:asymptotic_informal} and~\ref{thm:asymptotic_informal_upper}). Moreover, our experimental results show that for $k\geq 2$, dropping Constraints~\eqref{const:nonincreasing} does not significantly impact the approximation guarantees. For $k=2$, for instance, $\gamma_{n,2}^*\approx 0.708$, while $v_{n,2}^*\approx 0.704$, roughly a $0.5\%$ difference.}


\subsection{Organization}

We follow this section with a summary of related work. In Section~\ref{sec:prophet_bounded_threshold}, we present our algorithm, and we show that the solutions of $(P)_{n,k}$ offer an approximation that implies a prophet inequality for our problem. We also recover the guarantee for $k=1$. Towards the end of the section, we provide $(CLP)_{n,k}$ as a relaxation of $(D)_{n,k}$ (restriction of $(P)_{n,k}$). In Section~\ref{sec:existence_characterization_solutions}, we characterize the optimal solutions of $(CLP)_{n,k}$ for a particular choice of size of intervals. Since the optimal value of $(CLP)_{n,k}$ is hard to analyze for finite $n$, in Section~\ref{sec:asymptotic_analysis}, we present an asymptotic analysis of $v_{n,k}^*$ by studying the infinite model, when $n\to \infty$. In Section~\ref{sec:small_thresholds}, we present analysis for small number of thresholds $k$ using $(D)_{n,k}$ and $(CLP)_{n,k}$.

\section{Related Work}


The prophet inequality problem was introduced by Krengel \& Sucheston~\cite{krengel1977semiamarts}. The initial solution was fully dynamic and obtained via dynamic programming, with a tight approximation factor guarantee of $1/2$ of the prophet value; see also~\cite{hill1981ratio}. Samuel-Cahn~\cite{samuel1984comparison} showed that a simple threshold strategy is enough to achieve the same guarantee; a byproduct is that the order in which the random variables are observed is immaterial. The survey~\cite{hill1992survey} offers a classical overview of prophet inequalities, while \cite{correa2019recent} presents recent advancements.

The study of the i.i.d.\ prophet inequality problem started with Hill \& Kertz~\cite{hill1982comparisons}. For every $n$, they provided constants $1.1<a_n<1.6$ so that
\[
\textstyle\E\left[\max_{i\in [n]} X_i\right] \leq a_n \sup \left\{\E[X_\tau]: \tau \in \mathcal{T}_n\right\},
\]
where $\mathcal{T}_n$ is the set of stopping times for $X_1,\ldots, X_n$; the bound $a_n$ is best possible. Later, Kertz~\cite{kertz1986stop} showed that $a_n$ has a limit $\bar{\beta}\approx 1.341$, which is the unique solution to the equation $\int_0^1 (y(1-\log y)+(\beta-1))^{-1} \, \mathrm{d}y=1$. Nevertheless, for finite $n$, the best known prophet inequality was $(1-1/e)$ until \cite{abolhassani2017beating} showed that a prophet inequality of $0.738$ was possible. Recently, \cite{correa2021posted} showed that the approximation $1/\bar{\beta} \approx 0.745$ of the prophet value is achievable for any finite $n$. This approach is based on \modif{quantile} stopping rules, computing $n$ probabilities of acceptance $q_1<\cdots < q_n$ (independent of the input random variables) and converting these into thresholds via $q_t = \Prob(X_t \geq x_t)$. The values $q_t$ are obtained from distributions in a similar fashion to our method, and for $k=n$ thresholds, we recover their distribution. Nevertheless, unlike their approach, we obtain the distributions as a byproduct of an optimization problem. The idea of \modif{quantile} stopping rules can be found in \cite{samuel1984comparison} and has been used to construct strategies for other problems, such as the prophet secretary problem \cite{correa2021prophet}.

The theory of prophet inequalities has resurfaced in recent years because of its connections with auctions and, in particular, with posted priced mechanisms (PPM), which are now used in online sales (see, e.g.,~\cite{alaei2014bayesian,chawla2010multi,dutting2020prophet,hajiaghayi2007automated,kleinberg2012matroid,rubinstein2017combinatorial}). The first works showcasing the applicability of prophet inequalities in PPMs were~\cite{chawla2010multi,hajiaghayi2007automated}. They show that any prophet-type inequality implies a PPM with the same approximation guarantee; the converse was recently shown in~\cite{correa2019pricing}. In particular, our results for $k$-dynamic algorithms imply PPMs with few prices. The connection with pricing problems has motivated extensions of prophet inequalities with knapsack constraints~\cite{jiang2022tight,dutting2020prophet}, matroid constraints~\cite{hajiaghayi2007automated,kleinberg2012matroid} and other combinatorial constraints~\cite{rubinstein2017combinatorial,ehsani2018prophet}. For an overview of prophet inequalities and pricing, see~\cite{lucier2017economic}. Pricing problems with limited prices have recently been studied in~\cite{alaei2022descending}, in the context of multi-unit prophet inequalities where at most $m$ values are selected. This model selects $k$ prices and runs at most $k$ passes over the values, selecting at most $m$. In contrast, we make only one pass over the values; see also~\cite{jiang2022tight,dutting2016revenue,chawla2020static}.

Recent works have also focused on models without complete knowledge of the underlying distribution of the random variables. For the general case where random variables do not necessarily share the same distribution, it is known that one sample from each distribution is enough to guarantee (asymptotically in $n$) an approximation of $1/2$ of the prophet value; see, e.g.,~\cite{azar2014prophet}. For the i.i.d.\ prophet inequality problem, more samples are required to improve the guarantee. In~\cite{correa2019prophet}, it was shown that $\mathcal{O}(n^2/\varepsilon)$ samples from the distribution are enough to learn it and guarantee a factor $\bar{\gamma}-\mathcal{O}(\varepsilon)$. This result was improved in~\cite{rubinstein2019optimal} by requiring only $\mathcal{O}(n/\varepsilon^6)$ samples. Even though in this article we assume knowledge about the common distribution of the random variables, our results extend to the sampling setting by adapting the methods from \cite{correa2019prophet,rubinstein2019optimal}.

Crucial to iur analysis are linear programs. Various linear programs have appeared in the design of algorithms for online/sequential problems. Examples include online/stochastic matching~\cite{mehta2007adwords,goyal2019online} and online knapsack~\cite{babaioff2007knapsack,kesselheim2014primal}. Closer to our problem is the design and analysis of algorithms via linear programs for secretary problems~\cite{buchbinder2014secretary,chan2014revealing}.
In prophet inequality problems, the literature has several factor-revealing linear programs, e.g., see~\cite{feldman2016online,lee2018optimal}. Recently, \cite{jiang2022tight} and~\cite{jiang2022tightness} presented tight guarantees for the multi-unit prophet inequality problem using a linear program. In contrast to this and other works, our (infinite-dimensional) linear program is indexed by \modif{quantiles} $q\in [0,1]$, where $q$ is the tail probability of obtaining a value of at least $x$. This differs from formulations in the literature based on the support of the random variables, with most of them assuming finite supports. The closest that we can find to our work is the work~\cite{li2022query} where they used a quantile-based formulation to provide guarantees over their algorithms; however, they do not use their formulations to deduce policies as we do in our work. In our work, we obtain policies naturally from the infinite linear program. To our knowledge, linear programs similar to ours have not appeared in the literature.

\modif{We use an infinite model, where we let $n$ tend to infinity to analyze the behavior of our approximations. The infinite model has connection with time-based arrival models in the $[0,1]$ interval~\cite{bubna2022prophet,peng2022order}; other continuous models include Poisson arrivals~\cite{allaart2007prophet}. In our work, the use of the infinite model allows us to deduce approximate solutions to the ODE that encodes the optimal approximation.}

\section{Prophet Inequality with a Bounded Number of Thresholds}\label{sec:prophet_bounded_threshold}


\modif{In this section, we introduce our algorithmic methodology and analysis. Later we show how to recover the known result of $\gamma_{n,1}^* = 1-(1-1/n)^n$ for a single threshold, $ k = 1 $. We briefly discuss the challenges for higher values of $k$ and provide the relaxation of $(D)_{n,k}$ that we use later to obtain near-optimal solutions.}

\modif{Given $n$ and $k$, we split the $n$ arriving random variables into $k$ consecutive intervals of sizes $\tau_1,\ldots,\tau_k$, where $\tau_1+\cdots + \tau_k=n$. Namely, $I_t=[i_{t-1}+1,i_t]$ for $t\in [k]$, where $i_0=0$ and $i_t=i_{t-1}+\tau_t$ for $t\in [k]$. We solve an infinite-dimensional linear program that receives $n,k,\tau_1,\ldots,\tau_k$ as parameters and use its solution to compute distributions from which we sample $k$ numbers $q_1,\ldots,q_k\in [0,1]$. We compute $k$ thresholds ${x}_1,\ldots,{x}_k$ from these numbers via $q_t=\Prob(X \geq {x}_t)$, and the algorithm accepts the first value that exceeds its corresponding threshold. We assume the input distribution $\mathcal{D}$ is continuous; this is a typical assumption in the literature (e.g., \cite{liu2021variable}), since we can smooth a discrete distribution by adding random noise at the expense of a loss in value that can be made arbitrarily small. 
	%
	We present the formal meta-algorithm in Algorithm~\ref{alg:meta_alg}.
	\begin{center}
		\begin{algorithm}[H]
			\KwIn{Integers $\tau_1,\ldots,\tau_k$. Continuous distribution $\mathcal{D}$ with nonnegative support. Functions $\alpha_1,\ldots,\alpha_k:[0,1]\to \R_+$.}
			Let $i_0=0$.\\
			\For{$t=1,\ldots,k$}
			{
				Sample $q_t \sim \mathcal{Q}_t$, where $\mathcal{Q}_t$ is the distribution with CDF $\int_0^q \alpha_{t,q} \, \mathrm{d}q/\int_0^1 \alpha_{t,q}\,\mathrm{d}q$.\\
				Compute largest $x_t$ such that $q_t= \Prob(X\geq {x}_t)$. \\
				Upon scanning $X_{i_{t-1}+1},\ldots, X_{i_{t-1}+\tau_t}$, accept the first value that is at least $x_t$, if any.\\
				Update $i_t=i_{t-1}+\tau_t$.
			}
			\caption{Meta-Algorithm} \label{alg:meta_alg}
		\end{algorithm}
	\end{center}
	The meta-algorithm (Algorithm~\ref{alg:meta_alg}) accepts \emph{any} set of non-negative integrable functions $\{\alpha_{t,q}\}_{t}$ with $\int_0^1 \alpha_{t,q}\,\mathrm{d}q > 0$. We obtain the function $\{\alpha_{t,q}\}_{t}$ from the dual of $(P)_{n,k}$ in Theorem~\ref{prop:dual_P_nk}. In the next proposition, we show that providing a solution of $(P)_{n,k}$ as input to Algorithm~\ref{alg:meta_alg} yields the approximation guarantee given by the objective value of that solution.
	\begin{proposition}\label{prop:approximation_relaxed_LP}
		Suppose that $(\alpha=\{\alpha_{t,q} \}_{t\in [k]},v)$ is a feasible solution for $(P)_{n,k}$. Then Algorithm~\ref{alg:meta_alg}, run with the input $\tau_1,\ldots,\tau_k$ and $\alpha_1,\ldots,\alpha_k$, guarantees $\E[X_A]\geq v \E[\max_i X_i]$ for any continuous distribution $\mathcal{D}$ with nonnegative support and $X_1,\ldots,X_n\sim \mathcal{D}$.
	\end{proposition}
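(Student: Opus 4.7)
The plan is to bound the expected value $\E[X_A]$ collected by Algorithm~\ref{alg:meta_alg} from below by $v\,\E[\max_i X_i]$ by matching the constraints of $(P)_{n,k}$ with quantities arising from the algorithm's execution. Let $g:[0,1]\to\R_+$ be the quantile function $g(u)=F^{-1}(1-u)$; by continuity of $\mathcal{D}$, $g$ is nonincreasing and satisfies $\Prob(X\geq g(u))=u$. Set $A_t=\int_0^1\alpha_{t,q}\,\mathrm{d}q$ and let $p_t$ denote the probability that Algorithm~\ref{alg:meta_alg} reaches window $t$ without having accepted earlier, so $p_1=1$.

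The first step is to compute $\E[V_t]$, where $V_t$ is the value accepted in window $t$ (and $0$ if no value is accepted there). Conditioning on the sampled quantile $q_t$ and on the event of reaching window $t$, the $j$-th value of that window is the first to clear $x_t$ with probability $(1-q_t)^{j-1}q_t$, and its conditional expectation given $X\geq x_t$ equals $q_t^{-1}\int_0^{q_t} g(u)\,\mathrm{d}u$. Summing over $j\leq\tau_t$ and integrating $q_t$ against the density $\alpha_{t,q}/A_t$ gives
\begin{equation*}
\E[V_t]=\frac{p_t}{A_t}\int_0^1\frac{1-(1-q)^{\tau_t}}{q}\,\alpha_{t,q}\left(\int_0^q g(u)\,\mathrm{d}u\right)\mathrm{d}q.
\end{equation*}
An easy induction using Constraints~\eqref{const:constr_1_P_nk} and~\eqref{const:constr_t_P_nk} shows $p_t\geq A_t$ for every $t$: the base case $p_1=1\geq A_1$ is the first constraint, and the inductive step uses $p_{t+1}=p_t\int_0^1(1-q)^{\tau_t}\alpha_{t,q}/A_t\,\mathrm{d}q\geq A_{t+1}$ by Constraint~\eqref{const:constr_t_P_nk}. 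Thus $\E[X_A]=\sum_t\E[V_t]$ is at least the expression above with $p_t/A_t$ replaced by $1$.

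Next, I would exploit Constraint~\eqref{const:P_nk_esp_max_value_const}. Multiplying both sides by the nonnegative function $g(u)$ and integrating $u$ over $[0,1]$, Fubini's theorem rewrites the right-hand side as $\sum_t\int_0^1\frac{1-(1-q)^{\tau_t}}{q}\alpha_{t,q}\left(\int_0^q g(u)\,\mathrm{d}u\right)\mathrm{d}q$, which is precisely the lower bound on $\E[X_A]$ from the previous paragraph. The classical quantile representation $\E[\max_i X_i]=\int_0^1 g(u)\,n(1-u)^{n-1}\,\mathrm{d}u$ (from the fact that $\min_i(1-F(X_i))$ has density $n(1-u)^{n-1}$) identifies the first term on the left-hand side with $v\,\E[\max_i X_i]$, yielding
\begin{equation*}
v\,\E[\max_i X_i]+\int_0^1 g(u)\,\mathrm{d}\eta_u\ \leq\ \E[X_A].
\end{equation*}

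The final step, which I expect to be the main technical obstacle, is showing $\int_0^1 g(u)\,\mathrm{d}\eta_u\geq 0$. Stieltjes integration by parts gives $\int_0^1 g(u)\,\mathrm{d}\eta_u=[g(u)\eta_u]_0^1-\int_0^1\eta_u\,\mathrm{d}g(u)$. The boundary term vanishes by the initial conditions $\eta_0=\eta_1=0$ from Constraint~\eqref{const:P_nk_initial_condition} (the regularity class $\mathcal{C}$ is what licenses this step and handles any unboundedness of $g$ near $0$ via truncation, using that $\E[\max_i X_i]<\infty$), while the remaining integral is nonnegative because $\eta_u\geq 0$ and $g$ nonincreasing makes $\mathrm{d}g$ a nonpositive measure. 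Combining with the displayed inequality yields $\E[X_A]\geq v\,\E[\max_i X_i]$, as claimed.
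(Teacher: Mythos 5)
Your proof is correct and follows essentially the same route as the paper's: the two probability/expectation claims about reaching and collecting in window $I_t$, the quantile representation of $\E[\max_i X_i]$, and the final weak-duality step of multiplying Constraint~\eqref{const:P_nk_esp_max_value_const} by the quantile function and integrating, with Stieltjes integration by parts and $\eta_0=\eta_1=0$ killing the boundary term and monotonicity of $g$ handling the $\eta$-term. The one small (and arguably cleaner) departure is that you prove $p_t\geq A_t$ by induction directly rather than invoking the paper's WLOG assumption that \eqref{const:constr_1_P_nk}--\eqref{const:constr_t_P_nk} are tight, which sidesteps having to justify that tightening those constraints is harmless.
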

	\begin{proof}
		Without loss of generality, we assume that $\alpha$ satisfies constraints~\eqref{const:constr_1_P_nk}-\eqref{const:constr_t_P_nk} at equality. Let $F$ be the cumulative distribution function (CDF) of $\mathcal{D}$ (and thus $X_1,\ldots,X_n$). For simplicity, we refer to Algorithm~\ref{alg:meta_alg} by $A$ when run with $\alpha$ and $\tau_1,\ldots,\tau_k$. We also denote by $M_t=\int_0^1 \alpha_{t,q}\, \mathrm{d}q$ the integral of $\alpha_{t,q}$ in $[0,1]$. Let $Q_t$ be the random value $q_t$ selected during the scanning of interval $I_t$; hence, the density of $Q_t$, the quantile of the $t$-th interval in Algorithm~\ref{alg:meta_alg}, is $\alpha_{t,q}/M_t$.
		We say that $A$ \emph{reaches interval $I_t$} if $A$ has not chosen a value in the intervals $I_1,\ldots,I_{t-1}$.
		\begin{claim*}
			The probability that $A$ reaches interval $I_t$ is $\Prob(A\text{ reaches }I_t)= \int_0^1 (1-q)^{\tau_{t-1}} \alpha_{t-1,q}\,\mathrm{d}q=M_t$, with $\alpha_{0,q}=1$ for all $q\in [0,1]$ and $\tau_0=0$.
		\end{claim*}
		\begin{proof}[Proof of claim]
			We prove this by induction in $t$. For $t=1$, we have,
			\[
			\Prob(A \text{ reaches }I_1)= 1 = \int_0^1 \alpha_{0,q}\, \mathrm{d}q=\int_0^1 \alpha_{1,q} \, \mathrm{d}q=M_1.
			\]
			Assume that the result is true for $t\geq 1$ and let us show it for $t+1$. Note that
			\[
			\Prob(A\text{ reaches }I_{t})  = \int_0^1 (1-q)^{\tau_{t-1}} \alpha_{t-1,q}\, \mathrm{d}q = \int_0^1 \alpha_{t,q} \, \mathrm{d}q=M_t,
			\]
			using \eqref{const:constr_1_P_nk}-\eqref{const:constr_t_P_nk}. Then, using $q=\Prob(X \geq \overline{x})$ if and only if $\overline{x}=F^{-1}(1-q)$, we have
			\begin{align*}
				\Prob(A\text{ reaches }I_{t+1}) & = \Prob(A\text{ reaches }I_t)\times \Prob(\text{No value }\geq x_t \text{ in }I_t)\\
				& = M_t \times \int_0^1 \frac{\alpha_{t,q}}{M_t} \Prob(\text{No value }\geq x_t \text{ in }I_t\mid Q_t=q) \, \mathrm{d}q \\
				& = \int_0^1 \alpha_{t,q}\Prob(\forall i=1,\ldots ,\tau_t : X_i < F^{-1}(1-q) ) \, \mathrm{d}q \tag{using $x_t=F^{-1}(1-Q_t)$} \\
				& = \int_0^1 \alpha_{t,q} \Prob(X_t < F^{-1}(1-q))^{\tau_t} \, \mathrm{d}q \\
				& = \int_0^1 \alpha_{t,q} (1-q)^{\tau_t} \, \mathrm{d}q.
			\end{align*}
			In the first equality, we used the independence of values in $I_t$ with respect to values in $I_1,\ldots,I_{t-1}$. From the first to second equality, we conditioned on the event $Q_t=q$ and the inductive hypothesis. The equality $\Prob(A\text{ reaches }I_{t+1})=M_{t+1}$ follows from constraint~\eqref{const:constr_t_P_nk} and the assumption that it is tight.
		\end{proof}
		\begin{claim*}
			The expected value obtained in interval $I_t$ by $A$, conditioned on being reached, is
			\[
			\int_0^1 F^{-1}(1-u) \int_u^1 \frac{\alpha_{t,q}}{M_t} \left( \frac{1-(1-q)^{\tau_t}}{q} \right) \, \mathrm{d}q\, \mathrm{d}u .
			\]
		\end{claim*}
		\begin{proof}[Proof of claim]
			Suppose we sample $Q_t=q$ for interval $I_t$; then, if we ever observe an $X_i$ with value at least $x_t=F^{-1}(1-Q_t)$ in $I_t$, the expected value obtained equals $\E[X\mid X\geq x_t]$. Thus, the expected value obtained in interval $I_t$ can be computed by conditioning on $Q_t=q$,
			\begin{align*}
				&\int_0^1 \frac{\alpha_{t,q}}{M_t} \Prob(\exists i=1,\ldots, \tau_t : X_i \geq F^{-1}(1-q)) \E[X\mid X\geq F^{-1}(1-q)] \, \mathrm{d}q \\
				&= \int_0^1 \frac{\alpha_{t,q}}{M_t} \left( 1- (1-q)^{\tau_t} \right) \frac{1}{q} \int_{F^{-1}(1-q)}^\infty x \, \mathrm{d}F(x) \, \mathrm{d}q\\
				& = \int_0^1 \frac{\alpha_{t,q}}{M_t} \left( \frac{1-(1-q)^{\tau_t}}{q}\right) \int_0^q F^{-1}(1-u) \, \mathrm{d}u \, \mathrm{d}q . \tag{Change of variable $x=F^{-1}(1-u)$.}
			\end{align*}
			The conclusion follows by changing the order of integration between $u$ and $q$.
		\end{proof}
		Using these two claims together, we obtain
		\begin{align*}
			\E[X_A] & = \sum_{t=1}^k \int_0^1 F^{-1}(1-u) \int_u^1 \alpha_{t,q}\left( \frac{1-(1-q)^{\tau_t}}{q} \right)\,\mathrm{d}q\, \mathrm{d}u.
		\end{align*}
		A similar calculation yields
		\[
		\E[\textstyle{\max_i} X_i]  = \int_0^1 F^{-1}(1-u) n (1-u)^{n-1}\, \mathrm{d}u.
		\]
		With this, we can show that $\E[X_A]\geq v\cdot \E[\textstyle{\max_i} X_i]$.
		The proof of this inequality is akin to the proof of weak duality in Theorem~\ref{prop:dual_P_nk} which we provide in Appendix~\ref{app:strong_dual}; hence we skip it here. This finishes the proof and shows that solutions to $(P)_{n,k}$ provide $k$-dynamic algorithms with guaranteed prophet inequalities.
	\end{proof}}

\subsection{Exact Solution for a Single Threshold}

\modif{In this subsection, we utilize $(P)_{n,k}$ to provide the optimal solution for the single-threshold algorithm and its optimal approximation guarantee. 
	\begin{proposition}
		For any $n$, $\gamma_{n,1}^*=1-(1-1/n)^n$. In particular, $\gamma_{n,1}^*\geq 1-1/e$.
	\end{proposition}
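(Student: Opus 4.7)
The plan is to specialize Theorem~\ref{prop:dual_P_nk} to $k=1$ and $\tau_1=n$, exhibit a feasible solution of $(P)_{n,1}$ of objective value exactly $v := 1-(1-1/n)^n$, and then match it with a primal witness; strong duality will then pin down $\gamma_{n,1}^*$ at this value.

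The dual-feasible construction I have in mind takes $\alpha_{1,\cdot}$ to be the unit point mass at $q^* = 1/n$, formally the weak limit of narrow bumps $\alpha_{1,q}=\mathbf{1}\{|q-1/n|<\epsilon\}/(2\epsilon)$ as $\epsilon\downarrow 0$, so that the integral constraints are all meaningful. Then~\eqref{const:constr_1_P_nk} holds since $\int_0^1\alpha_{1,q}\,dq=1$, and the right-hand side of~\eqref{const:P_nk_esp_max_value_const} collapses to $nv\,\mathbf{1}\{u<1/n\}$. Requiring equality almost everywhere in~\eqref{const:P_nk_esp_max_value_const} forces $d\eta_u/du = nv\,\mathbf{1}\{u<1/n\} - vn(1-u)^{n-1}$, and integration yields
$$\eta(u)=nvu - v\bigl(1-(1-u)^n\bigr)\text{ on }[0,1/n],\qquad \eta(u)=v(1-u)^n\text{ on }[1/n,1].$$
A direct check shows $\eta(0)=\eta(1)=0$ and $\eta\geq 0$ throughout (on $[0,1/n]$, $\eta'=nv(1-(1-u)^{n-1})\geq 0$ with $\eta(0)=0$; on $[1/n,1]$, $\eta\geq 0$ is immediate), so $(\alpha,v,\eta)$ is feasible in $(P)_{n,1}$ and Theorem~\ref{prop:dual_P_nk} gives $\gamma_{n,1}^*\geq 1-(1-1/n)^n$. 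Equivalently, this is the algorithmic bound recovered by the meta-algorithm with threshold $x^*=F^{-1}(1-1/n)$ (Proposition~\ref{prop:approximation_relaxed_LP}): decomposing $\E[X\mathbf{1}(X\geq x^*)]=x^*q+\E[(X-x^*)^+]$ with $q=1/n$ gives
$$\E[X_A]=\bigl(1-(1-1/n)^n\bigr)\bigl(x^*+n\,\E[(X-x^*)^+]\bigr),$$
and combining with the elementary bound $\max_i X_i \leq x^* + \sum_i (X_i-x^*)^+$ finishes the direction $\gamma_{n,1}^*\geq 1-(1-1/n)^n$.

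For the matching upper bound $\gamma_{n,1}^*\leq 1-(1-1/n)^n$, I would produce a primal-feasible witness to $(D)_{n,1}$ of value $v$, either via complementary slackness with the dual above or by exhibiting an explicit distribution family. A clean instance family is a smoothed two-atom distribution placing probability $p$ on value $1/p$ and the remaining $1-p$ uniformly on a shrinking neighborhood of $1$: direct computation shows the best single-threshold value occurs at the $(1/n)$-quantile threshold, and both that value and $\E[\max_i X_i]$ can be written in closed form, with the ratio tending to $1-(1-1/n)^n$ as $p\to 0$. The main technical nuisance is treating $\alpha_{1,\cdot}$ as a delta measure inside the infinite-dimensional LP; this is handled by the narrow-bump approximation together with continuity of all the integrals involved, which lets every computation above pass to the $\epsilon\to 0$ limit.
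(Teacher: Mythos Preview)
Your lower-bound argument is correct and essentially identical to the paper's: both place $\alpha_{1,\cdot}$ as a unit point mass at $q=1/n$ and construct the same piecewise $\eta$. Your additional direct argument via $\max_i X_i \le x^* + \sum_i(X_i-x^*)^+$ is a clean self-contained alternative.

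The upper bound, however, has a genuine gap. Your proposed two-atom family (mass $p$ at $1/p$, mass $1-p$ near $1$, then $p\to 0$) does \emph{not} witness the ratio $1-(1-1/n)^n$ for $n\ge 3$. In the limit $p\to 0$, the best single threshold for that family is \emph{not} at the $(1/n)$-quantile as you assert: writing $\phi(q)$ for the (limiting) value of the $q$-quantile threshold policy, one gets $\phi(q)=(1-(1-q)^n)(1+q)/q$, and for $n\ge 3$ this is strictly decreasing on $(0,1]$ with $\sup_q\phi(q)=n$ attained as $q\to 0$ (i.e.\ thresholding at the high atom). Since $\E[\max_i X_i]\to n+1$, the resulting upper bound is only $n/(n+1)$, strictly weaker than $1-(1-1/n)^n$ for every $n\ge 3$. (For $n=2$ the two numbers coincide, which may be what misled you.)

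The paper instead builds a feasible point of $(D)_{n,1}$ directly, taking $f$ of the form $a\,\delta_{\{0\}}+c\,\mathbf{1}_{(0,1)}$ with $a,c$ tuned so that $q\mapsto \frac{1-(1-q)^n}{q}(a+cq)$ is maximized at $q=1/n$ with value exactly $1-(1-1/n)^n$. This is precisely the shape complementary slackness with your dual solution predicts (your $\eta$ is strictly positive on $(0,1)$, forcing $f$ to be constant there, with a possible jump only at $u=0$). So your first proposed route---complementary slackness---would have led to the correct construction; the explicit distribution family you chose to pursue instead does not.
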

	\begin{proof}
		We denote by $\delta_{\{u_0\}}:[0,1]\to \R_+$ the dirac function with all the mass in $u_0\in [0,1]$. Let $\alpha_{1,q}=\delta_{\{1/n\}}(q)$, $v= 1- (1-1/n)^n$ and $\eta:[0,1]\to \R_+$ satisfy
		\[
		\eta_u = \begin{cases}
			v\left(  nu + (1-u)^n - 1 \right)  & u \leq 1/n\\
			v(1-u)^n & u> 1/n .
		\end{cases}
		\]
		It is easy to verify that $(\alpha,\eta,v)$ is feasible for $(P)_{n,k}$; hence, by Proposition~\ref{prop:dual_P_nk}, we obtain
		\[
		\gamma_{n,1}^* \geq  1- \left( 1 - \frac{1}{n} \right)^n .
		\]
		We show the equality by providing a solution to $(D)_{n,1}$. Consider the following function $f:[0,1]\to \R_+$,
	\[
	f_q = \frac{(1-1/n)^n}{n(1-(1-1/n)^n)} \delta_{\{0\}}(q) + \left(\frac{1-2(1-1/n)^n}{1-(1-1/n)^n}\right) \mathbf{1}_{(0,1)}(q).
	\]
	It can be shown that for $d=1-(1-1/n)^n$, the pair $(d,f)$ is a feasible solution to $(D)_{n,1}$ with objective value $1-(1-1/n)^n$. Indeed, the optimal value of
	\begin{align*}
		&\quad \sup_{q\in [0,1]} \left\{  \left( \frac{1-(1-q)^n}{q} \right)\int_0^q f_u \, \mathrm{d}u \right\} \\
		&= \sup_{q\in [0,1]} \left\{  \left( \frac{1-(1-q)^n}{q} \right) \left( \frac{(1-1/n)^n}{n(1-(1-1/n)^n)}+ \left(\frac{1-2(1-1/n)^n}{1-(1-1/n)^n}\right) q \right)  \right\} ,
	\end{align*}
	occurs at $q=1/n$, with optimal value $d$. Due to the weak duality between $(D)_{n,1}$ and $(P)_{n,1}$, we conclude that $\gamma_{n,1}^*=1-(1-1/n)^n$ for all $n$. Using the inequality $1+x\leq e^x$, we conclude $\gamma_{n,1}^* \geq 1-1/e$.
\end{proof}
Notice that the solution for $k=1$ takes the form of a delta function at $1/n$. When using it as input for Algorithm~\ref{alg:meta_alg}, the algorithm behaves deterministically, always choosing the threshold $x_1$ satisfying $\Prob(X\geq x_1)=1/n$. 
}

\subsection{Relaxation of the Exact Formulation}

\modif{In the last subsection, we provided an exact solution to $(P)_{n,k}$ for $k=1$. Unfortunately, providing closed-form solutions for $(P)_{n,k}$ becomes a difficult task even for $k=2$. We provide this analysis in Section~\ref{sec:small_thresholds}, showing $\gamma_{n,2}^* \approx 0.708$. As in the single-threshold case, the optimal solution for $k=2$ also exhibits the form of deterministic quantiles, which we compute in closed form. In general, to prove the optimality of a solution to $(P)_{n,k}$, we must exhibit a matching solution to $(D)_{n,k}$. We do this in Section~\ref{sec:small_thresholds} for $k=2$, but this strategy is difficult to generalize for larger values of $k$. To illustrate this point, in Appendix~\ref{app:small_thresholds} we present the dual solution to $(D)_{n,2}$, which occupies a full manuscript page in small font.}

\modif{We continue the analysis for larger values of $k$ by relaxing Constraint~\eqref{const:nonincreasing} in $(D)_{n,k}$, which is equivalent to restricting $(P)_{n,k}$ with $\eta=0$. The resulting linear program is $(CLP)_{n,k}$ in Theorem~\ref{thm:main1}, which we rewrite below for concreteness.}
\begin{subequations}\label{form:CLP_nk}
	\begin{align}
		\sup_{ \alpha:[k]\times [0,1]\to \R_+ }\quad & v  \notag\\
		(CLP)_{n,k} \qquad\qquad\qquad  s.t. \quad & \int_0^1 \alpha_{1,q}\, \mathrm{d}q \leq 1, & \label{const:infinity_dynamic_1} \\
		\quad & \int_0^1 \alpha_{t+1,q} \, \mathrm{d}q\leq \int_0^1 (1-q)^{\tau_t} \alpha_{t,q} \, \mathrm{d}q,  & \forall t\in [k-1] ,\label{const:infinity_dynamic_2} \\
		\quad &   v n (1-u)^{n-1} \leq \sum_{t=1}^k\int_u^1 \left( \frac{1-(1-q)^{\tau_t}}{q}   \right) \alpha_{t,q} \, \mathrm{d}q, & \forall u\in [0,1]. \label{const:infinity_multiobj}
	\end{align}
\end{subequations}

Dropping the a.e.\ condition in Constraint~\eqref{const:infinity_multiobj} does not affect optimality. Since solutions of $(CLP)_{n,k}$ are feasible for $(P)_{n,k}$, then, Proposition~\ref{prop:approximation_relaxed_LP} also guarantees that with solutions of $(CLP)_{n,k}$ we obtain prophet inequalities in Algorithm~\ref{alg:meta_alg}. In the next section, for the case $\tau_1=\cdots = \tau_{k-1} \geq \tau_k \geq 0$, we characterize the optimal solution of $(CLP)_{n,k}$. Because $(CLP)_{n,k}$ is obtained by relaxing $(D)_{n,k}$, a consequence is $\gamma_{n,k}^*\geq v_{n,k}^*$, where $v_{n,k}^*$ is the optimal value of $(CLP)_{n,k}$. This proves the first part of Theorem~\ref{thm:main1}.

\modif{
Furthermore, our findings in Section~\ref{sec:small_thresholds} show that droping Constraint~\ref{const:nonincreasing} causes only a minimal loss in the approximation of the prophet inequalities for $k\geq 2$. For example, for $k=2$ we have $v_{n,2}^*\approx 0.704$, while $\gamma_{n,2}^* \approx 0.708$ (see Appendix~\ref{app:threshold_k_2}). Moreover, for $k=5$, $v_{n,k}^*$ is already roughly 99\% of $\bar{\gamma}\approx 0.745$.
}

\section{\modif{Relaxation with} Equidistant Thresholds}\label{sec:existence_characterization_solutions}

In this section, we characterize the optimal solution of $(CLP)_{n,k}$ when $ \tau_1= \dotsb = \tau_{k-1} \geq \tau_k \geq 0$, proving the second part of Theorem~\ref{thm:main1}. Since this optimal solution is hard to study for finite values of $n$, in the next section (Section~\ref{sec:asymptotic_analysis}), we study its asymptotic behavior in $n$ for any fixed $k$. 

\modif{Let $\tau=\tau_1=\cdots =\tau_{k-1}$ and $\sigma=\tau_k = n-(k-1)\tau\geq 0$; we assume that $\tau,\sigma\geq 1$. In order to characterize the optimal solution of $(CLP)_{n,k}$, we use its dual, denoted $(DLP)_{n,k}$, where dual variables correspond to nonnegative measures in $[0,1]$. We use $(CLP)_{n,k}$ and its dual, instead of directly relaxing Constraints~\eqref{const:nonincreasing} in $(D)_{n,k}$, because the linearization allows us to more easily construct optimal solutions. $(DLP)_{n,k}$ is given by}
%
%
\begin{subequations}
	\begin{align}
		\inf_{\substack{\mu\in \mathcal{M}_+[0,1] \\ d\geq 0} } \quad & d_1 \notag\\
		(DLP)_{n,k}\qquad\qquad \text{s.t.} \quad & d_t \geq\left(  \frac{1-(1-q)^{\tau_t}}{q} \right) \mu[0,q] + (1-q)^{\tau_t} d_{t+1} , & \forall t\in[k],\forall q\in [0,1] \label{const:d_infinity_dynamic} \\
		\quad & 1 \leq \int_0^1 n(1-u)^{n-1} \, \mathrm{d}\mu(u).  & \label{const:d_infinity_multiobj}
	\end{align}
\end{subequations}
%
%
where $\mathcal{M}_+[0,1]$ is the space of positive Borel measures over $[0,1]$. We also use the notation $\int_a^b f(u)\, \mathrm{d}\mu(u) = \int_{[a,b]} f(u) \mathrm{d}\mu(u)$.
%

\begin{proposition}[Weak duality]
	For any feasible $(\alpha,v)$ for $(CLP)_{n,k}$ and feasible $(\mu,d)$ for $(DLP)_{n,k}$ we have $d_1 \geq v$.
\end{proposition}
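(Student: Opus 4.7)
The plan is to execute the standard weak-duality manipulation for this infinite-dimensional linear program: multiply the dual constraint~\eqref{const:d_infinity_dynamic} by the primal variable $\alpha_{t,q} \geq 0$, integrate in $q$, sum over $t$, and then compare against the primal constraints.

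First I would fix $t \in [k]$, adopt the convention $d_{k+1} := 0$, and integrate constraint~\eqref{const:d_infinity_dynamic} against $\alpha_{t,q}$ over $q \in [0,1]$. This produces
\[
d_t \int_0^1 \alpha_{t,q}\, \mathrm{d}q \;-\; d_{t+1}\int_0^1 (1-q)^{\tau_t} \alpha_{t,q}\, \mathrm{d}q \;\geq\; \int_0^1 \alpha_{t,q}\, \frac{1-(1-q)^{\tau_t}}{q}\, \mu[0,q]\, \mathrm{d}q.
\]
Summing these inequalities across $t=1,\dots,k$ and re-indexing, the left-hand side becomes
\[
d_1 \int_0^1 \alpha_{1,q}\,\mathrm{d}q \;+\; \sum_{t=1}^{k-1} d_{t+1} \Bigl(\int_0^1 \alpha_{t+1,q}\,\mathrm{d}q \;-\; \int_0^1 (1-q)^{\tau_t}\alpha_{t,q}\,\mathrm{d}q\Bigr).
\]
Primal constraint~\eqref{const:infinity_dynamic_1} combined with $d_1\geq 0$ bounds the first term by $d_1$, while constraint~\eqref{const:infinity_dynamic_2} makes each parenthesized quantity nonpositive, and $d_{t+1}\geq 0$ then makes each summand nonpositive. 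Hence the summed left-hand side is at most $d_1$.

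For the right-hand side, I would write $\mu[0,q] = \int_0^q \mathrm{d}\mu(u)$ and invoke Tonelli's theorem (all integrands are nonnegative) to swap the order of integration:
\[
\int_0^1 \alpha_{t,q}\,\frac{1-(1-q)^{\tau_t}}{q}\,\mu[0,q]\,\mathrm{d}q \;=\; \int_0^1 \int_u^1 \alpha_{t,q}\,\frac{1-(1-q)^{\tau_t}}{q}\,\mathrm{d}q\,\mathrm{d}\mu(u).
\]
Summing over $t$ and applying primal constraint~\eqref{const:infinity_multiobj} pointwise in $u$ gives a lower bound of $v\int_0^1 n(1-u)^{n-1}\,\mathrm{d}\mu(u)$, which by dual constraint~\eqref{const:d_infinity_multiobj} is at least $v$ (using $v\geq 0$, which we may assume since $(\alpha,v)=(0,0)$ is primal feasible). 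Chaining the two bounds yields $d_1 \geq v$.

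There is no serious obstacle; the argument is the standard nonnegative weighted sum of constraints. The only points requiring care are the telescoping bookkeeping with the boundary index $d_{k+1}$ (resolved by the convention $d_{k+1}=0$, consistent with $\mathbf{d}\in\R_+^k$) and the Tonelli swap (justified because $\alpha$, $\mu$, and the kernel $(1-(1-q)^{\tau_t})/q$ are all nonnegative).
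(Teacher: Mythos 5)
Your proof is correct and is essentially the same standard weak-duality manipulation as the paper's argument in Appendix~\ref{app:strong_dual} (which the paper says to adapt here): a nonnegative weighted combination of the dual constraints against $\alpha$, a telescoping sum bounded via \eqref{const:infinity_dynamic_1}--\eqref{const:infinity_dynamic_2}, and a Tonelli swap followed by \eqref{const:infinity_multiobj} and \eqref{const:d_infinity_multiobj}; you just traverse the chain from the dual side rather than the primal side, which is a cosmetic difference. One small note: the phrase ``we may assume $v \geq 0$'' should really be ``if $v<0$ the claim is trivial since $d_1\geq 0$,'' since the proposition quantifies over \emph{all} feasible $(\alpha,v)$, not just the optimal one.
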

The proof of this proposition is similar to the proof of weak duality in Theorem~\ref{prop:dual_P_nk} (see Appendix~\ref{app:strong_dual}) and it is skipped for brevity.

We prove strong duality between $(CLP)_{n,k}$ and $(DLP)_{n,k}$ by exhibiting primal and dual solutions with the same objective value. We devote the rest of the section to this task, proving the next result.
\begin{theorem}
	Strong duality holds between $(CLP)_{n,k}$ and $(DLP)_{n,k}$. Moreover, we can fully characterize their solution when $\tau_1=\cdots=\tau_{k-1}=\tau$, with $n/k \leq \tau \leq n/(k-1)$.
\end{theorem}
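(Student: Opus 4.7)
The plan is to prove strong duality by exhibiting explicit primal and dual feasible solutions whose objective values coincide; the construction itself will simultaneously furnish the promised characterization. Since weak duality is already at hand, it suffices to produce matching solutions.

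The structural ansatz I would start from is that the optimal $\alpha=\{\alpha_t\}_{t=1}^k$ has piecewise form $\alpha_{t,q}=c_t\,\mathbf{1}_{[p_{t-1},p_t]}(q)$ for breakpoints $0=p_0<p_1<\cdots<p_k=1$ and constants $c_t>0$. The motivation is that Algorithm~\ref{alg:meta_alg} should sample increasing quantiles across intervals (equivalently, a decreasing sequence of thresholds), so the supports of the $\alpha_t$ should be disjoint and ordered. Likewise, I would posit that the dual optimum has $\mu$ purely atomic, with atoms exactly at the breakpoints $p_1,\ldots,p_k$ with masses $m_1,\ldots,m_k$; this would be forced by complementary slackness with constraint~\eqref{const:infinity_multiobj}, which under the primal ansatz can only bind at finitely many points.

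First, I would plug the ansatz into~\eqref{const:infinity_dynamic_1}-\eqref{const:infinity_dynamic_2} at equality to get the recursion $c_1(p_1-p_0)=1$ and $c_{t+1}(p_{t+1}-p_t)=c_t\int_{p_{t-1}}^{p_t}(1-q)^{\tau_t}\,\mathrm{d}q$, using $\tau_t=\tau$ for $t<k$ and $\tau_k=\sigma=n-(k-1)\tau$. Next I would require~\eqref{const:infinity_multiobj} to hold with equality at each $u=p_{t-1}$, which produces $k$ additional relations pinning down the $c_t$, the $p_t$, and the value $v$. For the dual, with $d_{k+1}=0$ I would define $d_t$ by writing~\eqref{const:d_infinity_dynamic} at equality at $q=p_t$; the masses $m_t$ would be fixed by the normalization~\eqref{const:d_infinity_multiobj} together with the condition that the recursion produces a consistent $d_1$. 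A termwise computation using the defining equations for $c_t,p_t,m_t$ should then yield $d_1=v$, giving strong duality.

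The main obstacles sit in the feasibility verification, not in the algebra of matching values. For the primal, constraint~\eqref{const:infinity_multiobj} is an inequality that must hold for \emph{every} $u\in[0,1]$, not just at breakpoints; I would need to show the right-hand side, viewed as a function of $u$ and built piecewise from the $\alpha_t$'s, has a derivative pattern that forces its minimum over each sub-interval $[p_{t-1},p_t]$ to be achieved at one of the endpoints, so equality at $u=p_{t-1}$ extends to an inequality throughout. Dually, I need the RHS of~\eqref{const:d_infinity_dynamic}, as a function of $q$, to be maximized precisely at $q=p_t$ for each $t$; this is a unimodality statement for $q\mapsto\frac{1-(1-q)^{\tau_t}}{q}\mu[0,q]+(1-q)^{\tau_t}d_{t+1}$ given the atomic $\mu$, provable by a derivative computation. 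The most delicate part is checking that the algebraic system for $(c_t,p_t,m_t)$ admits a valid solution (with $p_t\in(p_{t-1},1)$ and $c_t,m_t>0$) exactly in the regime $n/k\le\tau\le n/(k-1)$: the lower endpoint guarantees $p_{k-1}<p_k=1$ and the upper endpoint guarantees $p_1>p_0=0$, and both require a monotonicity argument in $\tau$ on the recursively defined breakpoints. This boundary analysis is where I expect the technical core of the proof to live.
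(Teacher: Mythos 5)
Your overall strategy---exhibit explicit primal and dual feasible pairs with matching objective values, noting that the $\alpha_t$ should have disjoint consecutive supports---is the right one and matches the paper. However, your structural ansatz is incorrect on both sides, and the proof would not close.

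On the primal side, you posit piecewise-constant $\alpha_{t,q}=c_t\,\mathbf{1}_{[p_{t-1},p_t]}(q)$ and plan to enforce Constraint~\eqref{const:infinity_multiobj} with equality only at the breakpoints, arguing that an interior derivative/unimodality analysis extends the inequality to all of $[0,1]$. The paper instead takes
\[
\alpha_{t,q} \;\propto\; \frac{q}{1-(1-q)^{\tau_t}}\,n(n-1)(1-q)^{n-2}\,\mathbf{1}_{[\varepsilon_{t-1},\varepsilon_t]}(q),
\]
which is chosen precisely so that $\frac{1-(1-q)^{\tau_t}}{q}\alpha_{t,q}=v\,n(n-1)(1-q)^{n-2}$ on each piece, hence Constraint~\eqref{const:infinity_multiobj} holds with equality for \emph{every} $u\in[0,1]$, not just at breakpoints. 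This distinction is not cosmetic: if the constraint is slack on an interior set, one can redistribute $\alpha$-mass from the slack region toward the tight region to increase $v$, so a piecewise-constant $\alpha$ is strictly suboptimal. Your dual would then necessarily give a value strictly larger than your primal $v$ (by weak duality against the true optimum), so the objective values cannot match and strong duality cannot be concluded from your pair. In particular your complementary-slackness inference that $\mu$ is purely atomic at $p_1,\dots,p_k$ is a consequence of the incorrect ansatz, not a derivation of the optimum.

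Correspondingly, the paper's dual measure $\mu_F$ is not atomic: it is the Lebesgue--Stieltjes measure of a piecewise-defined, continuous, strictly increasing function $F(q)=\sum_t \frac{(a_t-a_{t+1}(1-q)^{\tau_t})q}{1-(1-q)^{\tau_t}}\mathbf{1}_{[\varepsilon_{t-1},\varepsilon_t)}(q)$, which has an atom at $q=0$ plus an absolutely continuous part on $(0,1)$. Finally, the paper's existence/boundary argument is not a direct monotonicity-in-$\tau$ argument for the breakpoints as you describe; rather, it parametrizes the breakpoints $\varepsilon_t=\varepsilon_t(v)$ by the target value $v$, proves $\varepsilon_t'(v)\le -1/(v^2 n(n-1))$ (Proposition~\ref{prop:epsilon_decreasing_gamma}), checks $\varepsilon_k(1)<1$, and uses the intermediate value theorem to find the unique $v^*$ with $\varepsilon_k(v^*)=1$. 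The condition $n/k\le\tau\le n/(k-1)$ in the statement serves to ensure $\sigma=\tau_k=n-(k-1)\tau$ satisfies $0\le\sigma\le\tau$, which the backward-induction argument for the dual ($d_t=a_t$) relies on; it is not directly what keeps the breakpoints interior.
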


\subsection{Primal Feasible Solution}

Consider the functions
\[
\alpha_{t,q} = \begin{cases}
	v \frac{q}{1-(1-q)^{\tau}} n(n-1) (1-q)^{n-2} \mathbf{1}_{[\varepsilon_{t-1},\varepsilon_t]}(q) & t=1,\ldots,k-1 \\
	v \frac{q}{1-(1-q)^{\sigma}} n(n-1) (1-q)^{n-2} \mathbf{1}_{[\varepsilon_{k-1},\varepsilon_k]}(q) & t=k ,
\end{cases}
\]
where $0=\varepsilon_0 \leq \varepsilon_1 \leq \cdots \leq \varepsilon_k\leq 1$. The goal is to demonstrate that there is a choice of $\varepsilon_1,\ldots,\varepsilon_k$ such that $(\alpha,v)$ is a feasible solution for $(CLP)_{n,k}$. Using Constraints~\eqref{const:infinity_dynamic_1}-\eqref{const:infinity_dynamic_2}, this is the case if
\begin{align*}
	1 &= v n(n-1) \int_{0}^{\varepsilon_1} \frac{q}{1-(1-q)^{\tau}} (1-q)^{n-2} \, \mathrm{d}q \\
	\int_{\varepsilon_{t-1}}^{\varepsilon_t} \frac{(1-q)^{\tau} q(1-q)^{n-2} }{1-(1-q)^{\tau}} \, \mathrm{d}q & = \int_{\varepsilon_t}^{\varepsilon_{t+1}} \frac{q(1-q)^{n-2}}{1-(1-q)^{\tau}} \, \mathrm{d}q,\quad  t=1,\ldots,k-2,\\
	\int_{\varepsilon_{k-2}}^{\varepsilon_{k-1}} \frac{(1-q)^{\tau} q(1-q)^{n-2} }{1-(1-q)^{\tau}} \, \mathrm{d}q & = \int_{\varepsilon_{k-1}}^{\varepsilon_{k}} \frac{q(1-q)^{n-2}}{1-(1-q)^{\sigma}} \, \mathrm{d}q,
\end{align*}
and $\varepsilon_k=1$; this last constraint on $\varepsilon_k$ is necessary to satisfy Constraint~\eqref{const:infinity_multiobj}. From this system of equations, for a fixed $v\geq 0$ we can define $\varepsilon_1=\varepsilon_1(v)$ uniquely. Once $\varepsilon_1$ is defined, $\varepsilon_2$ is uniquely defined by $\varepsilon_1$, itself defined as a function of $v$. In general, every $\varepsilon_t$ is uniquely defined as a function of $v$. In the next proposition we show that the $\varepsilon_t$ are decreasing as a function of $v$, and that there is a $v^*\geq 0$ such that $\varepsilon_k(v^*)=1$.

\begin{proposition}\label{prop:epsilon_decreasing_gamma}
	For $t=1,\ldots,k$, $\varepsilon_t=\varepsilon_t(v)$ is differentiable and strictly decreasing in $v$. Moreover, $\varepsilon_t' \leq - 1/(v^2 n(n-1))$.
\end{proposition}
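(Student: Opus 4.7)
The plan is to apply the implicit function theorem to the defining equations sequentially, derive a three-term linear recurrence for the scaled derivatives $g(\varepsilon_t)\varepsilon_t'(v)$, and then observe that the recurrence collapses into a one-term recurrence whose positivity is immediate. Set $g(q) := q(1-q)^{n-2}/(1-(1-q)^\tau)$, $g_\sigma(q) := q(1-q)^{n-2}/(1-(1-q)^\sigma)$, $\rho_t := (1-\varepsilon_t)^\tau$, and $c := 1/(v^2 n(n-1))$. The hypotheses $\tau,\sigma \geq 1$ and $n/k \leq \tau$ give $(1-q)^\tau \leq 1-q$, so $g(q), g_\sigma(q) \leq (1-q)^{n-2} \leq 1$ on $[0,1]$, and $\sigma \leq \tau$, so the ratio $r(q) := g_\sigma(q)/g(q) = (1-(1-q)^\tau)/(1-(1-q)^\sigma)$ satisfies $r(q) \geq 1$. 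Implicit differentiation of $1 = v\, n(n-1)\int_0^{\varepsilon_1} g$ combined with the equation itself yields the exact identity $g(\varepsilon_1)\varepsilon_1'(v) = -c$; dividing by $g(\varepsilon_1) \leq 1$ already gives $\varepsilon_1'(v) \leq -c$.

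For $t=1,\dots,k-2$, differentiating $\int_{\varepsilon_{t-1}}^{\varepsilon_t}(1-q)^\tau g(q)\,dq = \int_{\varepsilon_t}^{\varepsilon_{t+1}} g(q)\,dq$ implicitly and isolating the $\varepsilon_{t+1}'$ term produces the three-term recurrence
\[
b_{t+1} = (1+\rho_t)\,b_t - \rho_{t-1}\,b_{t-1},\qquad b_t := g(\varepsilon_t)\varepsilon_t'(v),
\]
with the convention $b_0 := 0$, consistent with the $t=1$ case which reduces to $b_2 = (1+\rho_1)b_1$. The crux of the argument is to write $b_t = -c\,\beta_t$ and prove by induction that $\beta_t = 1 + \rho_{t-1}\beta_{t-1}$, with base $\beta_1 = 1$: assuming this at index $t$, so that $\rho_{t-1}\beta_{t-1} = \beta_t - 1$, substitution into the three-term recurrence telescopes to $\beta_{t+1} = (1+\rho_t)\beta_t - (\beta_t - 1) = 1 + \rho_t\beta_t$, completing the step. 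Consequently $\beta_t \geq 1$ for every $t \leq k-1$, so $b_t \leq -c$, and $\varepsilon_t'(v) = b_t/g(\varepsilon_t) = -c\,\beta_t/g(\varepsilon_t) \leq -c$ using $g(\varepsilon_t) \leq 1 \leq \beta_t$.

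For the last interval I differentiate $\int_{\varepsilon_{k-2}}^{\varepsilon_{k-1}}(1-q)^\tau g\, dq = \int_{\varepsilon_{k-1}}^{\varepsilon_k} g_\sigma\, dq$, obtaining $g_\sigma(\varepsilon_k)\varepsilon_k'(v) = (r(\varepsilon_{k-1}) + \rho_{k-1})b_{k-1} - \rho_{k-2}b_{k-2}$; substituting $b_j = -c\beta_j$ and $\rho_{k-2}\beta_{k-2} = \beta_{k-1} - 1$ rearranges this to $g_\sigma(\varepsilon_k)\varepsilon_k'(v) = -c\bigl[(r(\varepsilon_{k-1}) - 1 + \rho_{k-1})\beta_{k-1} + 1\bigr] \leq -c$ since $r(\varepsilon_{k-1}) \geq 1$, and dividing by $g_\sigma(\varepsilon_k) \leq 1$ gives $\varepsilon_k'(v) \leq -c$. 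Differentiability at each stage follows from the implicit function theorem because the relevant partial derivatives $g(\varepsilon_t)$ or $g_\sigma(\varepsilon_k)$ are strictly positive on $(0,1)$, and strict monotonicity is then immediate from $\varepsilon_t'(v) \leq -c < 0$. The main obstacle is spotting the collapse to $\beta_{t+1} = 1 + \rho_t \beta_t$ inside the three-term recurrence: without it, the $-\rho_{t-1}b_{t-1}$ term contributes \emph{positively} to $b_{t+1}$, so a direct inductive lower bound on $|b_t|$ does not close, and one is forced to extract the explicit algebraic form of $b_t$.
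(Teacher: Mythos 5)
Your proof is correct and takes essentially the same route as the paper's. The paper telescopes the integral system to a two-term identity (its Claim~\ref{claim:identity_recursion_1}) and then differentiates, whereas you differentiate the original system first and observe that the resulting three-term recurrence in $b_t = g(\varepsilon_t)\varepsilon_t'$ collapses algebraically to $\beta_{t+1} = 1 + \rho_t\beta_t$; this is the same telescoping cancellation performed in a different order, and the closing bound ($g\leq 1$ together with nonnegativity of the $\rho_t\beta_t$ term) is identical to the paper's.
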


The proof of this proposition is deferred to Appendix~\ref{app:existence_characterization_solutions}. As a corollary, we obtain that $\alpha$ defined as above is feasible for $(CLP)_{n,k}$.

\begin{corollary}
	There is a $v^* > 0$ such that $\varepsilon_{k}(v^*)=1$. Hence, the solution $(\alpha, v^*)$ is feasible for $(CLP)_{n,k}$ and has objective value equal to $v^*$. Thus, $v_{n,k}^* \geq v^*$.
\end{corollary}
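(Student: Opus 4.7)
The plan is to use Proposition~\ref{prop:epsilon_decreasing_gamma} as a black box and invoke the intermediate value theorem on $\varepsilon_k(\cdot)$ to obtain $v^*$, then verify feasibility of the resulting primal solution. By Proposition~\ref{prop:epsilon_decreasing_gamma}, the function $v\mapsto \varepsilon_k(v)$ is continuous and strictly decreasing on its domain, with $\varepsilon_k'(v)\leq -1/(v^2 n(n-1))$. Integrating this derivative bound, for $v_1<v_2$ in the domain one gets $\varepsilon_k(v_1)-\varepsilon_k(v_2)\geq (1/v_1-1/v_2)/(n(n-1))$, so $\varepsilon_k(v)$ eventually exceeds $1$ as $v\to 0^+$. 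Conversely, for large $v$ the very first defining equation $1= v n(n-1)\int_0^{\varepsilon_1(v)} q(1-q)^{n-2}/(1-(1-q)^\tau)\,\mathrm{d}q$ forces $\varepsilon_1(v)$ to be arbitrarily close to $0$, and the cascade of equations defining $\varepsilon_2,\ldots,\varepsilon_k$ then keeps all of them small. Continuity together with the intermediate value theorem yields a $v^*>0$ with $\varepsilon_k(v^*)=1$.

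Next I would verify that, with this choice of $v^*$, the candidate $(\alpha,v^*)$ is indeed feasible for $(CLP)_{n,k}$. Constraints~\eqref{const:infinity_dynamic_1}-\eqref{const:infinity_dynamic_2} hold with equality by construction: they are precisely the system of equations that defines $\varepsilon_1,\ldots,\varepsilon_{k-1}$. For Constraint~\eqref{const:infinity_multiobj}, the key simplification is that the explicit form of $\alpha_{t,q}$ cancels the factor $(1-(1-q)^{\tau_t})/q$, so each term becomes $\int_u^1 \tfrac{1-(1-q)^{\tau_t}}{q}\alpha_{t,q}\,\mathrm{d}q = v^* n(n-1)\int_u^1 (1-q)^{n-2}\mathbf{1}_{[\varepsilon_{t-1},\varepsilon_t]}(q)\,\mathrm{d}q$. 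Summing over $t$, and using that $\varepsilon_k(v^*)=1$ makes the supports $[\varepsilon_{t-1},\varepsilon_t]$ a partition of $[0,1]$, the right-hand side of~\eqref{const:infinity_multiobj} collapses to $v^* n(n-1)\int_u^1 (1-q)^{n-2}\,\mathrm{d}q = v^* n(1-u)^{n-1}$, matching the left-hand side exactly. Since the objective value of $(\alpha,v^*)$ is $v^*$, we conclude $v_{n,k}^*\geq v^*$.

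The main obstacle is the intermediate value theorem setup, because $\varepsilon_k(v)$ is not defined on all of $(0,\infty)$ but only on the sub-interval where every intermediate $\varepsilon_t(v)$ remains in $[0,1]$ and the corresponding defining integrals are solvable. The derivative bound $\varepsilon_t'(v)\leq -1/(v^2 n(n-1))$ from Proposition~\ref{prop:epsilon_decreasing_gamma} is what allows one to simultaneously (i) guarantee strict monotonicity, (ii) quantify the blow-up of $\varepsilon_t$ as $v\to 0^+$ uniformly in $t$, and (iii) rule out plateaus that could prevent $\varepsilon_k$ from attaining the value $1$. Once this monotonic, continuous behavior on a connected interval containing a value below $1$ and exceeding $1$ is established, the existence of $v^*$ is immediate.
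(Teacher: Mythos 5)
Your overall strategy matches the paper's: exploit the continuity and strict monotonicity of $\varepsilon_k(v)$ established in Proposition~\ref{prop:epsilon_decreasing_gamma}, show $\varepsilon_k$ blows up as $v\to 0^+$ using the derivative bound, find a value of $v$ at which $\varepsilon_k<1$, and invoke the intermediate value theorem. Your feasibility verification at the end is essentially correct and, if anything, more explicit than the paper's one-sentence gloss that the conclusion ``follows from the construction.'' The verification that Constraints~\eqref{const:infinity_dynamic_1}--\eqref{const:infinity_dynamic_2} are tight by the defining system of equations, and that the sum in Constraint~\eqref{const:infinity_multiobj} telescopes to $v^* n(1-u)^{n-1}$ once the supports partition $[0,1]$, is exactly right.

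The one place you diverge is in establishing that $\varepsilon_k(v)<1$ for some $v$ in the domain. The paper does this concretely via a separate calculation, Proposition~\ref{prop:eps_k_1_less_1}, which bounds $\varepsilon_t(1)\leq t\tau/(n(n-1))$ by induction and concludes $\varepsilon_k(1)\leq 1/(n-1)<1$. You instead argue asymptotically: that for large $v$ the first equation forces $\varepsilon_1(v)\to 0$, and ``the cascade of equations defining $\varepsilon_2,\ldots,\varepsilon_k$ then keeps all of them small.'' That cascade claim is plausible and can be made rigorous (e.g., by lower-bounding the integrand $q(1-q)^{n-2}/(1-(1-q)^\tau)$ by $(1-q)^{n-2}/\tau$ and showing $\varepsilon_{t+1}-\varepsilon_t\to 0$ as $v\to\infty$), but as written it is the gap in your argument: it requires roughly the same inductive estimate the paper carries out in Proposition~\ref{prop:eps_k_1_less_1}, just at $v\to\infty$ rather than at $v=1$. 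There is also a small secondary issue: your route produces only $v^*>0$, whereas the paper's choice of $v=1$ in the IVT endpoint additionally yields $v^*<1$ (not strictly needed for the corollary, but it is what the paper states). Overall this is the same proof with a looser endpoint estimate that needs to be filled in.
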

\begin{proof}
	A simple calculation shows that $\varepsilon_k(1) < 1$ (see Proposition~\ref{prop:eps_k_1_less_1} in Appendix~\ref{app:existence_characterization_solutions}). Also, for any differentiable function $f:\R\to \R_+$ such that $f'(x)\leq -c/x^2$ we must have $\lim_{x\to 0+} f(x) = +\infty$ (here $c$ is a constant). Thus, using the previous proposition, we can find a value $0 < v^* < 1$ such that $\varepsilon_k=\varepsilon_k(v^*) = 1$. The remaining conclusions follow from the construction of $\alpha$ and $v^*$.
\end{proof}

\subsection{Dual Feasible Solution and Strong Duality}

In this subsection, we construct a feasible solution for $(DLP)_{n,k}$ with objective value $v^*$. This shows that strong duality holds and $v_{n,k}^*=v^*$. Let $\varepsilon_1\leq \cdots \leq \varepsilon_k$ be the values obtained in the previous subsection. We define a set of auxiliary quantities that help define a measure $\mu$. With this measure $\mu$, we define a sequence of $d=\{d_t\}_{t\in [k]}$ such that $(\mu,d)$ is feasible for $(DLP)_{n,k}$ and has objective value $ d_1 = v^*$. Recall that $\tau=\tau_1=\cdots = \tau_{k-1}$ and $\sigma=\tau_k$. For $r=1,\ldots,k-1$, let
\[
h_r = (1-\varepsilon_r)^{\tau_{r+1}} \left(\frac{1-(1-\varepsilon_r)^{\tau_{t}}} {1-(1-\varepsilon_r)^{\tau_{t+1} }} \right)
\]
and 
\[
a_t = v^* \left(\frac{1+ \sum_{s=t}^{k-1}\prod_{r=s}^{k-1}h_r}{1+ \sum_{s=1}^{k-1}\prod_{r=s}^{k-1}h_r} \right).
\]
These values hold several important properties that we list in the following proposition. The proof is a simple calculation, skipped for brevity.

\begin{proposition}\label{prop:properties_of_a_t}
	The values $a_1,\ldots,a_k$ satisfy the following properties:
	\begin{enumerate}
		\item For any $t\leq k-1$, $a_t \geq a_{t+1}$.
		
		\item For any $t\leq k-1$, $\displaystyle \frac{(a_t - a_{t+1}(1-\varepsilon_t)^{\tau_t}) \varepsilon_t}{1-(1-\varepsilon_t)^{\tau_t}} = \frac{(a_{t+1} - a_{t+2}(1-\varepsilon_t)^{\tau_{t+1}}) \varepsilon_t}{1-(1-\varepsilon_t)^{\tau_{t+1}}}$.
		
		\item For any $t\leq k-1$, $(a_t-a_{t+1}) = (a_{t+1}- a_{t+2}) h_t$.
	\end{enumerate}
\end{proposition}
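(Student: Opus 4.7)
The plan is to prove the three properties in the order $3$, then $1$, then $2$: once Property~3 is in hand, Property~1 is immediate and Property~2 reduces to a short algebraic manipulation using only the definition of $h_t$.

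First, I would compute $a_t - a_{t+1}$ directly from the closed form. Writing $D = 1 + \sum_{s=1}^{k-1} \prod_{r=s}^{k-1} h_r$ and $N_t = 1 + \sum_{s=t}^{k-1} \prod_{r=s}^{k-1} h_r$, the definition reads $a_t = v^* N_t / D$. Since $N_t - N_{t+1}$ is precisely the term with $s=t$, namely $\prod_{r=t}^{k-1} h_r$, we obtain the clean identity
\[
a_t - a_{t+1} \;=\; \frac{v^*}{D}\,\prod_{r=t}^{k-1} h_r.
\]
Adopting the convention $a_{k+1}=0$ (so that $a_k - a_{k+1} = v^*/D$ matches the empty product $\prod_{r=k}^{k-1} h_r = 1$), the ratio $(a_t-a_{t+1})/(a_{t+1}-a_{t+2})$ telescopes and leaves exactly the factor $h_t$, which is Property~3. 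For Property~1, each $h_r$ is nonnegative since $(1-\varepsilon_r)^{\tau_{r+1}} \geq 0$ and both $1-(1-\varepsilon_r)^{\tau_r}$ and $1-(1-\varepsilon_r)^{\tau_{r+1}}$ lie in $[0,1]$, with the denominator strictly positive when $\varepsilon_r>0$; hence the displayed formula for $a_t - a_{t+1}$ is nonnegative.

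For Property~2, I would cancel the common factor $\varepsilon_t$ on both sides and rewrite each side by splitting the numerator:
\[
\frac{a_t - a_{t+1}(1-\varepsilon_t)^{\tau_t}}{1-(1-\varepsilon_t)^{\tau_t}} \;=\; \frac{a_t - a_{t+1}}{1-(1-\varepsilon_t)^{\tau_t}} + a_{t+1},
\]
and symmetrically for the right-hand side in terms of $a_{t+1}-a_{t+2}$ and $a_{t+2}$. The claim then reduces to
\[
\frac{a_t-a_{t+1}}{1-(1-\varepsilon_t)^{\tau_t}} - \frac{a_{t+1}-a_{t+2}}{1-(1-\varepsilon_t)^{\tau_{t+1}}} \;=\; a_{t+2}-a_{t+1}.
\]
Substituting $a_t - a_{t+1} = h_t\,(a_{t+1}-a_{t+2})$ from Property~3 and then the definition $h_t = (1-\varepsilon_t)^{\tau_{t+1}}(1-(1-\varepsilon_t)^{\tau_t})/(1-(1-\varepsilon_t)^{\tau_{t+1}})$, the factor $1-(1-\varepsilon_t)^{\tau_t}$ cancels; what remains is
\[
(a_{t+1}-a_{t+2})\cdot \frac{(1-\varepsilon_t)^{\tau_{t+1}} - 1}{1-(1-\varepsilon_t)^{\tau_{t+1}}} \;=\; -(a_{t+1}-a_{t+2}),
\]
exactly the right-hand side.

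The only real obstacle is bookkeeping: keeping the convention $a_{k+1}=0$ consistent at the boundary $t=k-1$ so that both Properties~2 and~3 retain their form there (one checks directly that with $a_{k+1}=0$, Property~2 at $t=k-1$ reduces to the identity $h_{k-1}/(1-(1-\varepsilon_{k-1})^{\tau_{k-1}}) + 1 = 1/(1-(1-\varepsilon_{k-1})^{\tau_k})$, which is immediate from the definition of $h_{k-1}$). Beyond this edge case, the entire argument is a short algebraic computation that does not require any result from the paper other than the definitions themselves, which is consistent with the authors' remark that the proof is skipped for brevity.
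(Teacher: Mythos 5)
Your proof is correct, and since the paper itself omits the argument with the remark that it is ``a simple calculation, skipped for brevity,'' your direct algebraic verification from the definitions — computing $a_t - a_{t+1} = (v^*/D)\prod_{r=t}^{k-1}h_r$, reading off Properties 3 and 1, and reducing Property 2 to the identity $h_t/(1-(1-\varepsilon_t)^{\tau_t}) - 1/(1-(1-\varepsilon_t)^{\tau_{t+1}}) = -1$ — is exactly the intended computation. Your handling of the boundary convention $a_{k+1}=0$ (so the empty product gives $a_k - a_{k+1} = v^*/D$) is also the right bookkeeping to make all three properties hold uniformly at $t=k-1$.
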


Now, consider the function $F$ defined over $[0,1]$ as
\begin{align*}
	F(q) = \sum_{t=1}^{k}  \frac{(a_t - a_{t+1}(1-q)^{\tau_t}) q}{1-(1-q)^{\tau_t}} \mathbf{1}_{[\varepsilon_{t-1},\varepsilon_t)}(q),
\end{align*}
with $a_{k+1}=0$, $\tau=\tau_1=\cdots = \tau_{k-1}$ and $\sigma=\tau_k$. Then, by the construction of $a_1,\ldots,a_k$, $F$ is continuous over $[0,1)$. Moreover, $F$ is strictly increasing, which can be verified by deriving the function $F$ in each interval $(\varepsilon_{t-1},\varepsilon_t)$.

We extend $F$ to $\R$ by setting the extension to $0$ in $(-\infty,0)$ and $F(1)$ in $[1,+\infty)$. We keep denoting this extension by $F$. The function $F$ is right-continuous and nondecreasing. Thus, we can define the Lebesgue-Stieltjes measure $\mu_F$ generated by the function $F$ (see Chapter 1 in~\cite{folland1999real}). Note that $\mu_F \in \mathcal{M}_+[0,1]$, $\mu_F(0,q] = F(q)-F(0)$ and $\mu[0,q] = F(q)$.

We construct $d_k, d_{k-1},\ldots,d_1$ to satisfy Constraints~\eqref{const:d_infinity_dynamic} as 
\begin{align*}
	d_t &= \sup_{q\in [0,1]}\left\{ \left( \frac{1-(1-q)^{\tau_t}}{q} \right) \mu_F[0,q]+ (1-q)^{\tau_t} d_{t+1}  \right\} \\
	&= \sup_{q\in [0,1]}\left\{ \left( \frac{1-(1-q)^{\tau_t}}{q} \right)F(q)+ (1-q)^{\tau_t} d_{t+1}  \right\} ,
\end{align*}
with $d_{k+1}=0$.

\begin{lemma}\label{lem:dual_feasibility}
	The solution $(\mu_F,v)$ is feasible for $(DLP)_{n,k}$. Moreover, it has objective value $d_1=v^*$.
\end{lemma}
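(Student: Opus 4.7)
The plan is to verify two things in order: (i) $(\mu_F, d)$ is dual feasible, and (ii) $d_1 = v^*$. The dynamic constraints~\eqref{const:d_infinity_dynamic} are satisfied automatically by the definition of $d_t$ as the supremum of its right-hand side over $q \in [0,1]$. The real content is to evaluate these suprema in closed form and to compute the integral against $\mu_F$ appearing in~\eqref{const:d_infinity_multiobj}.

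First, I would establish $d_t = a_t$ for every $t \in [k]$ by downward induction on $t$, with base case $d_{k+1} = a_{k+1} = 0$. Defining $g_t(q) := \frac{1-(1-q)^{\tau_t}}{q} F(q) + (1-q)^{\tau_t} d_{t+1}$, the defining formula of $F$ on $[\varepsilon_{t-1}, \varepsilon_t]$ together with the inductive hypothesis $d_{t+1} = a_{t+1}$ yields the identity $g_t(q) \equiv a_t$ on that interval, so $d_t = \sup_q g_t(q) \geq a_t$. For the reverse inequality, I would show $g_t(q) \leq a_t$ on each other window $[\varepsilon_{s-1}, \varepsilon_s]$ with $s \neq t$. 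Substituting the $s$-th piece of $F$ into $g_t$ produces a simple expression in $(1-q)^{\tau_s}$ and $(1-q)^{\tau_t}$; continuity of $F$ across the breakpoints $\varepsilon_s$ (which is exactly what Property~2 of Proposition~\ref{prop:properties_of_a_t} encodes) implies that $g_t$ is continuous on $[0,1]$. A direct derivative computation combined with the monotonicity $a_s \geq a_{s+1}$ from Property~1 shows that $g_t$ is nondecreasing on $[0, \varepsilon_{t-1}]$ and nonincreasing on $[\varepsilon_t, 1]$. Consequently $g_t$ attains its maximum on $[\varepsilon_{t-1}, \varepsilon_t]$ with value $a_t$, completing the induction. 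In particular $d_1 = a_1 = v^*$ by the very definition of $a_1$.

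For the multi-objective constraint, I would evaluate $\int_0^1 n(1-u)^{n-1}\,\mathrm{d}\mu_F(u)$ by splitting $F$ into its atomic and absolutely continuous parts. The formula for $F$ gives a single atom at $0$ of mass $(a_1-a_2)/\tau$; this atom contributes $n(a_1-a_2)/\tau$ to the integral, while integration by parts on the absolutely continuous part yields $n(n-1)\int_0^1 F(u)(1-u)^{n-2}\,\mathrm{d}u - n(a_1-a_2)/\tau$, so the atom contribution cancels. Splitting the remaining integral over the windows $[\varepsilon_{t-1}, \varepsilon_t]$ and using the piecewise formula for $F$, the integrand on window $t$ decomposes as $a_t \int \frac{u(1-u)^{n-2}}{1-(1-u)^{\tau_t}}\,\mathrm{d}u - a_{t+1}\int \frac{u(1-u)^{n-2+\tau_t}}{1-(1-u)^{\tau_t}}\,\mathrm{d}u$. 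The primal defining equations for $\varepsilon_1, \ldots, \varepsilon_{k}$ state precisely that the second integral on window $t$ equals the first integral on window $t+1$, so the sum telescopes, and only the term $n(n-1) a_1 \int_0^{\varepsilon_1} \frac{u(1-u)^{n-2}}{1-(1-u)^{\tau}}\,\mathrm{d}u$ survives; this equals $1$ by the primal normalization equation defining $\varepsilon_1(v^*)$. Hence \eqref{const:d_infinity_multiobj} holds with equality, as expected from strong duality.

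The main obstacle is the cross-window bound $g_t(q) \leq a_t$ for $q \notin [\varepsilon_{t-1}, \varepsilon_t]$ in the inductive step. Although Property~3 of Proposition~\ref{prop:properties_of_a_t} encodes the ``telescoping'' relation that forces $g_t$ to peak on the $t$-th window, the derivative analysis is mildly delicate because $\tau_k = \sigma$ may differ from $\tau$ and because one must treat windows lying to the left and to the right of $[\varepsilon_{t-1}, \varepsilon_t]$ separately, especially the interface between the final $\tau$-window and the terminal $\sigma$-window. By contrast, the integration-by-parts calculation for \eqref{const:d_infinity_multiobj} is routine once the atom of $F$ at $0$ is bookkept correctly.
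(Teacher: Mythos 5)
Your proposal is correct and follows essentially the same route as the paper: backward induction via the function $g_t$ for the dual values, and order-of-integration manipulation (equivalently, integration by parts against $\mu_F$) plus telescoping via the primal defining equations for Constraint~\eqref{const:d_infinity_multiobj}. The only meaningful difference is organizational: you observe directly that $g_t \equiv a_t$ on $[\varepsilon_{t-1},\varepsilon_t]$ and then use continuity of $F$ (hence of $g_t$) to piece together global monotonicity on $[0,\varepsilon_{t-1}]$ and $[\varepsilon_t,1]$, whereas the paper maximizes $g_t$ window by window to the breakpoints $\varepsilon_s$ and then compares endpoint values using Claim~\ref{claim:equality_for_s_less_t}; these two arguments are interchangeable, and your explicit bookkeeping of the atom of $\mu_F$ at $q=0$ is simply a concrete version of what the paper does implicitly by writing $\mu_F[0,q] = F(q)$ after swapping the order of integration.
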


In the rest of this subsection, we present the proof of this lemma. We verify first that $\mu_F$ satisfies Constraint~\eqref{const:d_infinity_multiobj}; indeed,
\begin{align*}
	\int_0^1 n (1-u)^{n-1} \mathrm{d}\mu_F(u) & = \int_{0}^1 n (n-1)\int_{u}^1 (1-q)^{n-2} \, \mathrm{d}q  \, \mathrm{d} \mu_F (u) \\
	& = n (n-1)\int_0^1 \int_0^q  \, \mathrm{d} \mu_F(u) (1-q)^{n-2} \, \mathrm{d}q \\
	& = n(n-1) \sum_{t=1}^k \int_{\varepsilon_{t-1}}^{\varepsilon_t}  \frac{(a_t - a_{t+1}(1-q)^{\tau_t}) q}{1-(1-q)^{\tau_t}} (1-q)^{n-2} \, \mathrm{d}q \\
	& = n(n-1)\left(a_1 \int_0^{\varepsilon_1} \frac{q(1-q)^{n-2}}{1-(1-q)^{\tau_1}} \, \mathrm{d}q  \right. \\
	&\quad \left. + \sum_{t=1}^{k-1} a_{t+1} \left(  \int_{\varepsilon_{t}}^{\varepsilon_{t+1}} \frac{q(1-q)^{n-2}}{1-(1-q)^{\tau_{t+1}}} \, \mathrm{d}q - \int_{\varepsilon_{t-1}}^{\varepsilon_t} \frac{q(1-q)^{\tau_t+n-2}}{1-(1-q)^{\tau_{t}}} \, \mathrm{d}q \right) \right) \\
	& = \frac{a_1}{v^*}=1.
\end{align*}

By construction, $(\mu_F,v)$ satisfies Constraint~\eqref{const:d_infinity_dynamic}. To finish the proof of Lemma~\ref{lem:dual_feasibility}, we need to show that $d_1=v^*$. This is a consequence of the following more general result.

\begin{proposition}\label{prop:backward_induction_vt_at}
	For any $t=1,\ldots,k$, we have $d_t=a_t$. In particular, for $t=1$, $d_1=a_1=v^*$.
\end{proposition}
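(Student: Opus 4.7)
The plan is to prove $d_t = a_t$ for $t = 1, \ldots, k$ by backward induction, starting from $t = k$ (with the convention $d_{k+1} = 0 = a_{k+1}$) and using the recursion $d_t = \sup_{q \in [0,1]} \phi_t(q)$ where
\[
\phi_t(q) := \frac{1-(1-q)^{\tau_t}}{q}\, F(q) + (1-q)^{\tau_t} d_{t+1}.
\]
Each induction step splits into a lower bound $d_t \geq a_t$, attained at a specific $q$, and a matching upper bound $d_t \leq a_t$ proved by checking $\phi_t(q) \leq a_t$ globally.

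\textbf{Lower bound.} On the piece $[\varepsilon_{t-1}, \varepsilon_t]$, the piecewise formula for $F$ gives $F(q) = q(a_t - a_{t+1}(1-q)^{\tau_t})/(1-(1-q)^{\tau_t})$. Substituting this together with the inductive hypothesis $d_{t+1} = a_{t+1}$ into $\phi_t$, the expression collapses algebraically to the constant $\phi_t(q) = a_t$ on the entire interval, whence $d_t \geq a_t$.

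\textbf{Upper bound.} For $\phi_t(q) \leq a_t$ on $[\varepsilon_{s-1}, \varepsilon_s]$ with $s \neq t$, I substitute the piecewise $F$ and let $y = 1-q$ to obtain
\[
\phi_t(q) = \frac{1-y^{\tau_t}}{1-y^{\tau_s}}\bigl(a_s - a_{s+1}y^{\tau_s}\bigr) + y^{\tau_t} a_{t+1}.
\]
In the generic case $\tau_s = \tau_t = \tau$ (both $s, t \leq k-1$), this collapses to $\phi_t(q) = a_s + y^\tau(a_{t+1} - a_{s+1})$, and $\phi_t(q) \leq a_t$ reduces to $y^\tau(a_{s+1}-a_{t+1}) \geq a_s - a_t$ when $s<t$, and to $y^\tau(a_{t+1}-a_{s+1}) \leq a_t - a_s$ when $s>t$, for the relevant range of $y$. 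Using Property~3 of Proposition~\ref{prop:properties_of_a_t} to write $a_r - a_{r+1} = (v^*/S)\,P_r$ with $P_r := \prod_{j=r}^{k-1} h_j$, together with the identity $h_r = (1-\varepsilon_r)^\tau$ valid for $r \leq k-2$ and the ordering $\varepsilon_0 \leq \cdots \leq \varepsilon_k$, both inequalities telescope to term-by-term comparisons of the form $h_s P_{r+1} \geq P_r$ or $h_{s-1} P_r \leq P_{r-1}$. These follow from the monotonicity $h_{s-1} = (1-\varepsilon_{s-1})^\tau \leq (1-\varepsilon_{r-1})^\tau = h_{r-1}$ whenever $s-1 \geq r-1$, combined with the recursion $P_{r-1} = h_{r-1} P_r$.

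\textbf{Main obstacle.} The delicate part is the last window, where $\tau_k = \sigma$ may differ from $\tau$. This breaks the symmetric telescoping on any piece whose analysis involves $\varepsilon_{k-1}$ or the $\sigma$-exponent—in particular, the base case $t = k$ requires bounding $\phi_k$ separately on each $[\varepsilon_{s-1}, \varepsilon_s]$ with $s < k$. Here I would invoke Property~2 of Proposition~\ref{prop:properties_of_a_t}, which matches the boundary values of $F$ at $q = \varepsilon_{k-1}$ between the $\tau$- and $\sigma$-regimes, together with an elementary monotonicity check on $y \mapsto y^\sigma(1-y^\tau)/(1-y^\sigma)$ that hinges on $\sigma \leq \tau$ (equivalent to the paper's assumption $n/k \leq \tau$). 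With these boundary cases in hand, the induction closes and specializing to $t = 1$ yields $d_1 = a_1 = v^*$, establishing strong duality and finishing the proof of Lemma~\ref{lem:dual_feasibility}.
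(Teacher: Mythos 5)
Your proposal has the right skeleton — a backward induction identical in outline to the paper's — and the lower bound is a clean observation: on the diagonal piece $[\varepsilon_{t-1},\varepsilon_t]$, the piecewise formula for $F$ makes $\phi_t\equiv a_t$, which is slightly tidier than the paper's evaluation at the single point $\varepsilon_t$. The generic upper-bound case ($s,t\leq k-1$) is also correct: the collapse to $a_s+y^\tau(a_{t+1}-a_{s+1})$ and the term-by-term reduction to $h_s\geq h_r$ (for $s<t\leq r$) or $h_{s-1}\leq h_r$ (for $s>t$) both hold because $h_r=(1-\varepsilon_r)^\tau$ is nonincreasing in $r$ for $r\leq k-2$.

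The trouble is precisely the ``main obstacle'' you flag, and the sketched repair does not obviously close it. For $t=k$ (or a piece $s=k$) the collapse formula fails, and the term-by-term telescoping actually breaks in the wrong direction: when $\sigma\leq\tau$ one has
$$h_{k-1}=(1-\varepsilon_{k-1})^\sigma\,\frac{1-(1-\varepsilon_{k-1})^\tau}{1-(1-\varepsilon_{k-1})^\sigma}\;\geq\;(1-\varepsilon_{k-1})^\tau\;\geq\;(1-\varepsilon_s)^\tau=h_s\quad\text{for }s\leq k-2,$$
so the comparison $h_s\geq h_{k-1}$ you would need fails. Invoking ``Property 2'' and the monotonicity of $y\mapsto y^\sigma(1-y^\tau)/(1-y^\sigma)$ is asserted but not shown to repair this, and it is not the mechanism the paper uses. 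The paper instead establishes that $g_t$ is monotone on each piece (via $q\mapsto(1-(1-q)^a)/(1-(1-q)^b)$ being nondecreasing for $a\leq b$, applied with $a=\tau_t$, $b=\tau_s$), so the supremum sits at a grid point, and then chains \emph{consecutive} comparisons $g_t(\varepsilon_s)\leq g_t(\varepsilon_{s+1})$ and $g_t(\varepsilon_s)\leq g_t(\varepsilon_{s-1})$ using Claim~\ref{claim:equality_for_s_less_t}, a shift identity that re-expresses $g_t(\varepsilon_s)$ in terms of $(a_{s+1},a_{s+2})$ and the exponent $\tau_{s+1}$ and so absorbs the $\sigma$-vs-$\tau$ asymmetry exactly at the window boundary. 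That single-step shift, rather than a global telescoping, is what makes the base case and the last window go through; your argument needs an analogous move to be complete.
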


To prove this proposition we define $g_t(q) = (1-(1-q)^{\tau_t}) q^{-1} F(q) +(1-q)^{\tau_t} a_{t+1}$. Starting at $t=k$, we show that $d_k=\sup_{s\in [k]} \{ g_k(\varepsilon_s)  \}$ and that $g_k(\varepsilon_1)\leq \cdots \leq g_k(\varepsilon_k)=a_k$. This shows $d_k=a_k$. Now, assuming that the result is true for $t+1$, we can proceed by backward induction and show that $d_t=\sup \{ \sup_{s\leq t}\{ g_t(\varepsilon_s)  \}, \sup_{s> t} \{ g_t(\varepsilon_{s-1}) \}  \}$. We again show $g_t(\varepsilon_1)\leq \cdots \leq g_t(\varepsilon_t)=a_t$ and $g_{t}(\varepsilon_{k-1})\leq g_t(\varepsilon_{k-2})\leq \cdots \leq g_{t}(\varepsilon_t)=a_t$ and this shows $d_t=a_t$. We defer the details to Appendix~\ref{app:existence_characterization_solutions}.

\section{Asymptotic Analysis}\label{sec:asymptotic_analysis}

We characterized the optimal solution of $(CLP)_{n,k}$ in the previous section. This characterization is implicit in terms of $\varepsilon_1,\ldots,\varepsilon_k$, and thus not easy to analyze for finite $n$. In this section, we focus on analyzing the optimal value $v_{n,k}^*$ for $k$ fixed and large values of $n$. For this, we let $n$ go to infinity and study a model independent of $n$ that we call the \emph{infinite model}. We also show that the infinite model is not far from the finite model; specifically, we show that $v_{n,k}^*$ converges to the optimal value $v_{\infty,k}^*$ of the infinite model, and, as a byproduct of our analysis, we show that from solutions to the infinite model, we can recover solutions to the finite model (see Theorem~\ref{thm:approximation_large_n}).


Our approach can be interpreted as follows. By re-scaling the index $q \in [0,1]$ in both $(CLP)_{n,k}$ and $(DLP)_{n,k}$ via the transformation $q=-\log y / n$ and letting $n$ go to infinity, we obtain two new problems that are independent of $n$. The limits of these optimization problems are 

{ $(CLP)_{\infty,k}$ \begin{tabular}{cp{0.75\linewidth}}
		$\displaystyle\sup_{\substack{\omega:[k] \times [0,1]\to \R_+ \\ v \geq 0}}$ &  $\quad v$ \\
		s.t. &  {\vspace{-1.5cm}  \begin{align}
				\int_0^1 \frac{\omega_{1,y}}{y} \, \mathrm{d}y& \leq 1 &  \label{const:limit_infinity_dynamic_1} \\
				\int_0^1 \frac{\omega_{t+1,y}}{y} \, \mathrm{d}y& \leq \int_0^1 y^{1/k}  \frac{\omega_{t,y}}{y} \, \mathrm{d}y  & \forall t\in [k-1]  \label{const:limit_infinity_dynamic_2} \\
				v \overline{y}&\leq  \sum_{t=1}^k\int_0^{\overline{y}} \left(\frac{1-y^{1/k}}{-y\log y}   \right) \omega_{t,y} \, \mathrm{d}y & \forall \overline{y}\in [0,1] , \label{const:limit_infinity_multiobj}
			\end{align}\vspace{-0.5cm}}
\end{tabular}}

and

{ $(DLP)_{\infty,k}$ \begin{tabular}{cp{0.75\linewidth}}
		$\displaystyle\inf_{\substack{\overline{\mu}\in \mathcal{M}_+[0,1] \\ d\geq 0}}$ &  $\quad d_1$ \\
		s.t. &  {\vspace{-1.5cm}  \begin{align}
				d_t& \geq \left( \frac{1-\overline{y}^{1/k}}{-\log \overline{y}} \right) \overline{\mu}[\overline{
				y},1] + \overline{y}^{1/k} d_{t+1} & \forall t\in[k], \forall \overline{y} \in [0,1] \label{const:limit_d_infinity_dynamic} \\
				1 &\leq \int_{[0,1]} y\, \mathrm{d}\overline{\mu}(y) .   & \label{const:limit_d_infinity_multiobj}
			\end{align}\vspace{-0.5cm}}
\end{tabular}}

\begin{proposition}[Existence of Solutions and Strong Duality]
	$(DLP)_{\infty,k}$ and $(CLP)_{\infty,k}$ both have finite optimal solutions and their optimal values coincide.
\end{proposition}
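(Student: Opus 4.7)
The plan is to establish this proposition by mimicking the primal-dual construction from Section~\ref{sec:existence_characterization_solutions}, adapted via the change of variable $q = -\log y / n$ that motivates the infinite model. Weak duality will follow by the standard calculation: multiplying Constraint~\eqref{const:limit_d_infinity_dynamic} by $\omega_{t,y}/y$, integrating on $[0,1]$, applying Fubini to convert $\bar{\mu}[\bar{y},1]$ integrals into integrals against $\omega$, and chaining the resulting inequalities using Constraints~\eqref{const:limit_infinity_dynamic_1}--\eqref{const:limit_infinity_multiobj} yields $d_1 \geq v$, in direct analogy with the weak-duality step for $(CLP)_{n,k}$ and $(DLP)_{n,k}$. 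What remains is to exhibit matching primal and dual optimizers.

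For the primal, I would propose the ansatz
\[
\omega_{t,y} \;=\; v\,\frac{-y\log y}{1-y^{1/k}}\,\mathbf{1}_{[y_{t},\,y_{t-1}]}(y),
\]
with breakpoints $1=y_0\geq y_1\geq\cdots\geq y_k\geq 0$ chosen so that Constraints~\eqref{const:limit_infinity_dynamic_1}--\eqref{const:limit_infinity_dynamic_2} hold with equality. This ansatz is obtained by formally substituting $q=-\log y/n$ into the optimal $\alpha$ from Section~\ref{sec:existence_characterization_solutions}. Given $v>0$, the first equality uniquely determines $y_1(v)$, and each subsequent equality recursively defines $y_t(v)$. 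Mirroring Proposition~\ref{prop:epsilon_decreasing_gamma}, I would show each $y_t(v)$ is continuous and strictly monotone in $v$ with a quantitative derivative bound; an intermediate-value argument then produces a unique $v^*>0$ with $y_k(v^*)=0$. At this $v^*$, Constraint~\eqref{const:limit_infinity_multiobj} holds with equality throughout $[0,1]$, making $(\omega,v^*)$ primal feasible with value $v^*$, so $v_{\infty,k}^*\geq v^*$.

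For the dual, I would define coefficients $a_1\geq\cdots\geq a_k>0$ and auxiliary scalars $h_r$ by formulas analogous to those preceding Proposition~\ref{prop:properties_of_a_t}, and set
\[
G(\bar{y}) \;=\; \sum_{t=1}^{k}\frac{(a_t-a_{t+1}\,\bar{y}^{1/k})\,(-\log \bar{y})}{1-\bar{y}^{1/k}}\,\mathbf{1}_{[y_{t},\,y_{t-1})}(\bar{y}),
\]
verifying that $G$ is continuous and strictly monotone on $(0,1]$. Taking $\bar{\mu}$ to be the Borel measure satisfying $\bar{\mu}[\bar{y},1]=G(\bar{y})$ and defining $d_t$ recursively via the right-hand side of Constraint~\eqref{const:limit_d_infinity_dynamic}, a backward induction mirroring Proposition~\ref{prop:backward_induction_vt_at} yields $d_t=a_t$, hence $d_1=a_1=v^*$. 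Combined with weak duality, this proves strong duality and produces explicit optimizers for both programs; finiteness of the optimal values is automatic from feasibility of these constructions.

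The main obstacle is that the infinite model carries genuine singularities: $-\log y$ blows up at $y=0$ and $(1-y^{1/k})^{-1}$ at $y=1$. I will need integrability estimates to show that the defining primal and dual integrals converge, that $y_k(v)$ actually reaches $0$ at a finite positive $v^*$ (rather than only asymptotically), and that the dual measure $\bar{\mu}$, which may have unbounded total mass near $0$, still satisfies $\int_{[0,1]} y \, \mathrm{d}\bar{\mu}(y) \geq 1$ as required by Constraint~\eqref{const:limit_d_infinity_multiobj}. These are analogues of the computations in Lemma~\ref{lem:dual_feasibility}, but without the regularizing factor $n(1-u)^{n-1}$ present in the finite model, so convergence must be verified directly.
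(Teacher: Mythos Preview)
Your proposal is correct and follows essentially the same route as the paper: the paper constructs the primal solution $\omega_{t,y}=-v^*\,\frac{y\log y}{1-y^{1/k}}\mathbf{1}_{(y_t,y_{t-1})}(y)$ with breakpoints determined by tightening the mass constraints (citing the argument of Section~\ref{sec:existence_characterization_solutions} verbatim), then builds the dual via coefficients $a_t$, a monotone $G$, the induced Lebesgue--Stieltjes measure, and a backward induction showing $d_t=a_t$, exactly as you outline. The only cosmetic difference is that the paper first defines $G$ on $[0,\infty)$ in the variable $u=-\log y$ and then pushes forward by $P(u)=e^{-u}$, whereas you work directly in the $y$-variable; your integrability worries are mild, since $-\log y/(1-y^{1/k})\to k$ as $y\to 1$ and $\omega_{t,y}/y\sim -\log y$ near $0$, which is integrable.
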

The proof is similar to the analogous proof for finite $n$, and is deferred to the Appendix. As in the previous section, we can characterize the optimal solution of $(CLP)_{\infty,k}$ as
\[
\omega_{t,y} = -v_{\infty,k}^* \frac{y\log y}{1-y^{1/k}}\mathbf{1}_{(y_{t},y_{t-1})}(y) ,
\]
with $y_0=1\geq y_1 \geq \cdots \geq y_k=0$, and $v_{\infty,k}^*$ is the optimal value of $(CLP)_{\infty,k}$. A byproduct of this characterization is the following simple result for $k=1$ threshold.

\begin{corollary}\label{cor:value_k_1}
	For $k=1$, we have $v_{\infty,1}^* = 6/\pi^2$.
\end{corollary}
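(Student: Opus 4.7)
The plan is to specialize the characterization of the optimal solution of $(CLP)_{\infty,k}$ to $k=1$ and directly compute the resulting constraint. By the characterization stated just above the corollary,
\[
\omega_{1,y} = -v_{\infty,1}^* \frac{y\log y}{1-y}\mathbf{1}_{(y_1,y_0)}(y),
\]
where $y_0=1$ and $y_1=0$, so the support is simply $(0,1)$. For $k=1$, constraint~\eqref{const:limit_infinity_dynamic_2} is vacuous, so the only binding constraint is~\eqref{const:limit_infinity_dynamic_1} (constraint~\eqref{const:limit_infinity_multiobj} collapses to $v_{\infty,1}^*\overline{y} \leq v_{\infty,1}^*\overline{y}$ after substitution, as I will verify).

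First I would plug the formula for $\omega_{1,y}$ into~\eqref{const:limit_infinity_dynamic_1}, which must be tight at optimum (otherwise one could scale $\omega$ up and correspondingly increase $v$). This yields
\[
1 = \int_0^1 \frac{\omega_{1,y}}{y}\,\mathrm{d}y = -v_{\infty,1}^* \int_0^1 \frac{\log y}{1-y}\,\mathrm{d}y.
\]
Next I would invoke the classical dilogarithm identity
\[
\int_0^1 \frac{-\log y}{1-y}\,\mathrm{d}y = \mathrm{Li}_2(1) = \frac{\pi^2}{6},
\]
which can be proved by expanding $1/(1-y) = \sum_{m\geq 0} y^m$ and using $\int_0^1 -y^m\log y\,\mathrm{d}y = 1/(m+1)^2$, giving the Basel sum $\sum_{m\geq 1} 1/m^2 = \pi^2/6$. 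Substituting yields $v_{\infty,1}^* \cdot \pi^2/6 = 1$, hence $v_{\infty,1}^* = 6/\pi^2$.

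Finally I would check that this $\omega_{1,y}$ is genuinely feasible by verifying~\eqref{const:limit_infinity_multiobj} for every $\overline{y} \in [0,1]$: for $k=1$ the integrand simplifies as
\[
\int_0^{\overline{y}} \left(\frac{1-y}{-y\log y}\right)\omega_{1,y}\,\mathrm{d}y = v_{\infty,1}^* \int_0^{\overline{y}} 1\,\mathrm{d}y = v_{\infty,1}^* \overline{y},
\]
so the constraint holds with equality. Since weak duality already implies $v_{\infty,1}^*$ is an upper bound on any feasible primal objective, the matching construction and strong duality established earlier close the argument. The only nontrivial step is recognizing the Basel-type integral; everything else is a direct substitution into the already-derived characterization.
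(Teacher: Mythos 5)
Your proof is correct and follows essentially the same route as the paper: specialize the characterized optimal $\omega_{1,y}$ to $k=1$, note that~\eqref{const:limit_infinity_dynamic_1} must be tight, and evaluate $\int_0^1 \frac{-\log y}{1-y}\,\mathrm{d}y = \sum_{m\geq1}1/m^2 = \pi^2/6$. Your additional verification that~\eqref{const:limit_infinity_multiobj} holds with equality is a harmless extra check the paper leaves implicit.
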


\begin{proof}
	For $k=1$, the optimal solution of $(CLP)_{\infty,1}$ becomes
	\[
	\omega_{1,y} = - v_{\infty,1}^* \frac{y \log y}{1-y}.
	\]
	The value $v_{\infty,1}^*$ is then determined by
	\[
	v_{\infty,1}^* = \left(  \int_0^1  -\frac{\log y}{1-y} \, \mathrm{d}y  \right)^{-1} = \left(\sum_{k\geq 1} \frac{1}{k^2} \right)^{-1} = \frac{6}{\pi^2}.\qedhere
	\]
\end{proof}

Notice that $v_{\infty,k}^*$ is also defined implicitly as a function of $y_1,\ldots,y_k$. In Section~\ref{subsec:approximation_guarantee_infinite_model}, we show that $y_1,\ldots,y_k$ converge to the points produced by an ODE. This will allows us to produce tight guarantees for $v_{\infty,k}^*$. Before this, we show that the finite model converges to the infinite model.

\begin{theorem}\label{thm:approximation_large_n}
	Fix $k\geq 1$. There is a constant $c>0$ such that for large $n$
	\[
	v_{\infty,k}^* \left(  1 + 4 \frac{\log y_{k-1}}{n-1}  - c\frac{k^2}{n} \right)\leq v_{n,k}^* \leq v_{\infty,k}^* \left(  1 + c_{n,k} \right),
	\]
	where $c_{n,k}\to 0$ in $n$ for $k$ fixed. In particular, since $y_{k-1}$ does not depend on $n$, we have $v_{n,k}^* \to v_{\infty,k}^*$ as $n\to \infty$.
\end{theorem}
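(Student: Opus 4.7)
The plan is to link the finite and infinite models through the change of variables $y = (1-\varepsilon)^n$ (equivalently $\varepsilon = 1 - y^{1/n}$). Under this bijection, the optimal support endpoints $\varepsilon_1,\ldots,\varepsilon_k$ of $(CLP)_{n,k}$ from Section~\ref{sec:existence_characterization_solutions} correspond to the optimal endpoints $y_{k-1}\leq\cdots\leq y_1$ of $(CLP)_{\infty,k}$, since with $\tau_t = \lceil n/k\rceil$ we have $(1-\varepsilon)^{\tau_t} = y^{\tau_t/n}\to y^{1/k}$ uniformly for $y$ bounded away from $0$. I would prove each inequality by converting an optimal solution of one model into a feasible solution of the other.

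For the upper bound $v_{n,k}^* \leq v_{\infty,k}^*(1+c_{n,k})$, I would start from an optimal dual $(\overline{\mu}^*, d^*)$ of $(DLP)_{\infty,k}$ with $d^*_1 = v_{\infty,k}^*$ and build a feasible $(\mu,d)$ for $(DLP)_{n,k}$ by pushing $\overline{\mu}^*$ forward along $y\mapsto 1 - y^{1/n}$, with a reweighting to reconcile the measure $y\,\mathrm{d}\overline{\mu}(y)$ appearing in~\eqref{const:limit_d_infinity_multiobj} with $n(1-u)^{n-1}\,\mathrm{d}\mu(u)$ in~\eqref{const:d_infinity_multiobj}. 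I would set $d_t := d_t^*(1+\delta_n)$ with $\delta_n\to 0$ chosen so that~\eqref{const:d_infinity_dynamic} holds uniformly in $q\in[0,1]$, leveraging the uniform convergence of $(1-(1-q)^{\tau_t})/q$ and $(1-q)^{\tau_t}$ to $(1-y^{1/k})/(-\log y)$ and $y^{1/k}$ respectively. Weak duality between $(CLP)_{n,k}$ and $(DLP)_{n,k}$ then yields $v_{n,k}^* \leq d_1 = v_{\infty,k}^*(1+c_{n,k})$.

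For the lower bound, I would construct a primal $(\alpha, v)$ for $(CLP)_{n,k}$ mirroring the finite-model optimal form of Section~\ref{sec:existence_characterization_solutions}, but with the boundary points $\varepsilon_t := 1-y_t^{1/n}$ inherited from the infinite optimum rather than chosen afresh. Constraints~\eqref{const:infinity_dynamic_1}--\eqref{const:infinity_dynamic_2} are normalizations pinning $v$ down up to a factor converging to $v_{\infty,k}^*$. The binding condition is~\eqref{const:infinity_multiobj}: rewriting it in $y=(1-u)^n$ and expanding $-\log(1-q) = q + q^2/2 + O(q^3)$ on each interval recovers the infinite-model equality $v_{\infty,k}^*\overline{y} = \sum_t \int_0^{\overline{y}}((1-y^{1/k})/(-y\log y))\,\omega^*_{t,y}\,\mathrm{d}y$ plus an additive error of order $k^2/n$ (each of the $k$ intervals contributes $O(1/n)$ since $q \leq O(k/n)$ in bulk). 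Near $u\to 1$, where the infinite support is exhausted but the finite right-hand side must still dominate $v n(1-u)^{n-1}$, substituting $(1-u)^{n-1} = y^{(n-1)/n}$ introduces the correction $4\log y_{k-1}/(n-1)$, with $y_{k-1}$ entering as the smallest interior support point.

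The main obstacle will be the uniform control of~\eqref{const:infinity_multiobj} over all $u\in[0,1]$ in the lower bound. The tail region $u\to 1$ demands a more delicate estimate than the bulk, since both sides become small and logarithmic factors enter through $(1-u)^{n-1}$; this is exactly where the $\log y_{k-1}/(n-1)$ term must be extracted, likely by splitting the range of $u$ according to whether $(1-u)^n$ lies above or below $y_{k-1}$ and estimating each piece separately. Everything else reduces to (tedious) asymptotic expansions, since all the relevant integrals converge to their infinite-model counterparts at explicit polynomial rates in $1/n$.
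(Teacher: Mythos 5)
Your high-level strategy --- take the optimal solution of one model, change variables, and reweight to produce a feasible solution of the other, in both directions --- is the same as the paper's, and the bare-bones outline (push forward the dual measure for the upper bound, transplant the primal $\omega$ for the lower bound) matches. But the account of \emph{where the two error terms in the lower bound arise} is misattributed, and carrying out the plan as written would send you looking in the wrong place.

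You claim the $4\log y_{k-1}/(n-1)$ term emerges from controlling constraint~\eqref{const:infinity_multiobj} in the tail $u\to 1$. It does not. In the paper's construction, the tail $q\in[1-\delta,1]$ is handled by replacing $\alpha_{t,q}$ there with a constant chosen to make~\eqref{const:infinity_multiobj} hold via a concavity argument, and $\delta$ is sent to $0$ at the end so that piece leaves no trace in the final bound. The $\log y_{k-1}$ factor instead accumulates from the $k-1$ \emph{chain} constraints~\eqref{const:infinity_dynamic_2}: to verify $\int \alpha_{t+1,q}\,\mathrm{d}q \le \int (1-q)^{\tau_t}\alpha_{t,q}\,\mathrm{d}q$ after the change of variable $q=-\log y/(n-1)$, one needs a one-sided comparison $(1+\log y/(n-1))^{\tau_t}\ge y^{1/k}\bigl(1+4\log y_{k-1}/(k(n-1))\bigr)$, valid because the support of $\omega_t$ lies in $[y_t,y_{t-1}]\subset[y_{k-1},1]$. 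Each of the $k-1$ uses degrades the normalization by the same factor, and the product produces the $(1+4\log y_{k-1}/(n-1))$ correction.

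Your $k^2/n$ heuristic is also off: in the paper this term does not come from accumulating $O(1/n)$ Taylor errors over $k$ windows (that would be $O(k/n)$, not $O(k^2/n)$, and indeed your own sentence is internally inconsistent on the exponent). It arises from the rounding in $\tau_t=\lceil n/k\rceil$: the last window has $\tau_k = n-(k-1)\lceil n/k\rceil \ge (1-k^2/n)\lceil n/k\rceil$, and this shortfall of the final window is what produces Proposition~\ref{prop:bound_log_lower_bound}'s factor $(1-k^2/n)$ in $1-(1+\log y/(n-1))^{\tau_k}\ge (1-k^2/n)(1-y^{1/k})$. Without identifying these two sources correctly, you would neither locate the estimates you need nor reproduce the stated coefficients. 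The change of variable you propose ($\varepsilon=1-y^{1/n}$ rather than $q=-\log y/(n-1)$) is a minor deviation and is probably workable, but the misattribution of the error terms is a genuine gap.
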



In the next subsection we show that $y_{k-1}\geq 1/32k$. Hence, the theorem implies a prophet inequality guarantee of $v_{n,k}^* \geq v_{\infty,k}^*(1- Ck^2/n)$, where $C$ is a constant. 

To prove Theorem~\ref{thm:approximation_large_n}, we use the optimal solution of $(CLP)_{\infty,k}$, still denoted $(\omega,v_{\infty,k}^*)$, and we construct a solution $(\alpha,v^*)$ of $(CLP)_{n,k}$, where the value $v^*$ approaches $v_{\infty,k}^*$. This shows the lower bound for $v_{n,k}^*$. For the upper bound, we use the optimal solution $(\overline{\mu},d)$ of $(DLP)_{\infty,k}$ and construct a solution $(\mu, d^*)$ of $(DLP)_{n,k}$, where $d_1^*$ is at most $v_{\infty,k}^*(1+c_{n,k})$. In the remainder of the section, we prove the lower bound, because it also presents a way to construct policies from the infinite model.
The term $c_{n,k}$ in the upper bound of Theorem~\ref{thm:approximation_large_n} converges to $0$ due to Lebesgue's dominated converge theorem and it does not have a closed form that depends only on $k$ and $n$ as the lower bound does. We defer the complete proof of Theorem~\ref{thm:approximation_large_n} to Appendix~\ref{app:asymptotic_analysis}.

\subsection{Approximation of the Limit Guarantee}\label{subsec:approximation_guarantee_infinite_model}

In this subsection, we prove tight bounds on $v_{\infty,k}^*$ for large values of $k$, showing that $v_{\infty,k}^*\to \bar{\gamma}$ when $k\to \infty$. More specifically, we show the following result.
\begin{theorem}\label{thm:bound_infnite_model}
	There are constants $a,b\geq 0$ such that for $k$ large enough,
	\[
	\overline{\gamma}\left(  1 - a \frac{\log k}{k}  \right) \leq v_{\infty,k}^* \leq \overline{\gamma}\left(  1 - b \frac{\log k}{k}  \right),
	\]
	where $\bar{\beta}=1/\bar{\gamma} \approx 1.341$ is the unique solution of $\int_0^1 \left( w(1-\log w) +(\beta - 1)  \right)^{-1} \, \mathrm{d}w=1$. 
\end{theorem}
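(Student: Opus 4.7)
The plan is to reduce Theorem~\ref{thm:bound_infnite_model} to a careful asymptotic analysis of the scalar recursion governing the breakpoints $y_0=1\ge y_1\ge\dots\ge y_k=0$ that appear in the characterized optimal primal solution of $(CLP)_{\infty,k}$. Writing $f(y):=-\log y/(1-y^{1/k})$ and $\Delta_t:=\int_{y_t}^{y_{t-1}}f(y)\,dy$, constraint~\eqref{const:limit_infinity_dynamic_1} applied to the optimal $\omega$ gives $v_{\infty,k}^*=1/\Delta_1$, while constraints~\eqref{const:limit_infinity_dynamic_2} collapse, via the identity $y^{1/k}f(y)=f(y)+\log y$, to the clean recursion
\[
\Delta_{t+1}-\Delta_t \;=\; H(y_{t-1})-H(y_t), \qquad H(y):=y\log y-y.
\]
Thus the two-sided estimate on $v_{\infty,k}^*$ is equivalent to showing $\bar\beta\bigl(1-O(\log k/k)\bigr)\le \Delta_1 \le \bar\beta\bigl(1+O(\log k/k)\bigr)$.

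To identify the limit I would parametrize $y_t=w(t/k)$ for a smooth decreasing $w:[0,1]\to[0,1]$ and Taylor-expand. Since $f(y)\to k$ uniformly on compacts of $(0,1)$, we get $\Delta_t\approx -w'(t/k)$ and $\Delta_{t+1}-\Delta_t\approx -w''(t/k)/k$, whereas $H(y_{t-1})-H(y_t)\approx \log(w(t/k))\cdot(-w'(t/k)/k)$ since $H'(y)=\log y$. Matching yields $w''=w'\log w$, which integrates once to
\[
w'(s) = w(s)\bigl(\log w(s)-1\bigr) + C.
\]
The boundary conditions $w(0)=1$, $w(1)=0$ together with the identity $\int_0^1 dw/(-w'(s))=1$ force $C=1-\bar\beta$: the constant is determined by exactly the Hill-Kertz integral equation. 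Evaluating the ODE at $s=0$ yields $-w'(0)=\bar\beta$, so $\Delta_1\to\bar\beta$ and $v_{\infty,k}^*\to \bar\gamma$, confirming the zeroth-order limit.

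The rate of convergence comes from quantifying how closely the discrete sequence $\{y_t\}$ tracks the ODE solution $w$. In the interior region where $w(s)$ is bounded away from $0$, the ODE is smooth and classical finite-difference arguments give local error $O(1/k^2)$ per step, contributing an $O(1/k)$ perturbation to $\Delta_1$. The $\log k$ factor arises from a boundary layer near $s=1$, where $w\to 0$ and $w''=w'\log w\to +\infty$: here the $y_t$ shrink geometrically rather than arithmetically, so roughly $\Theta(\log k)$ such steps are required for $y_t$ to drop from $\Theta(1/k)$ down to $0$. I would bound the boundary contribution by estimating $H(y_{t-1})-H(y_t)=\int_{y_t}^{y_{t-1}}\log y\,dy$ against $|\log y_t|(y_{t-1}-y_t)$ and summing telescopically, producing a $\Theta(\log k/k)$ additive correction to $\Delta_1$ that transfers, after inverting, to the multiplicative $\log k/k$ bound. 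Running the same comparison in the reverse direction (lower-bounding $\Delta_1$) yields the matching upper bound on $v_{\infty,k}^*$.

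The main obstacle is this boundary-layer analysis near $y=0$: textbook Euler-discretization estimates require a bounded second derivative, which fails because $w''$ blows up logarithmically as $s\to 1^-$. A naive estimate could inflate the error to $\Omega(1/\sqrt{k})$; the delicate point is to exploit monotonicity of the recursion and the precise form of $H$ to show that, despite the singular $w''$, the accumulated perturbation of $\Delta_1$ is only $\Theta(\log k/k)$, and moreover that this rate is sharp so that both the upper and lower bounds in the theorem have matching orders (with possibly different constants $a$ and $b$).
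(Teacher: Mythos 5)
Your overall plan is close in spirit to the paper's proof: you correctly extract the discrete recursion $\Delta_{t+1}-\Delta_t = H(y_{t-1})-H(y_t)$ from the constraints (the identity $y^{1/k}f(y)=f(y)+\log y$ is exactly the simplification used), correctly identify the limiting ODE $w''=w'\log w$ (equivalently $w'=w(\log w -1)-(\bar\beta-1)$), and correctly pin down that $-w'(0)=\bar\beta$ so that $v_{\infty,k}^*\to\bar\gamma$. The paper works with the equivalent single implicit recursion \eqref{eq:implicit_recursion}, compares the sequence $y_t$ to an explicit Euler iterate $x_t$ of step $1/k$, and then relates $x_t$ to $w(t/k)$; the endgame is the same monotonicity argument on $I(\beta)=\int_0^1 (\beta-1+w(1-\log w))^{-1}\,\mathrm{d}w$ that you invoke ``after inverting.''

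The genuine gap is in the quantitative step, which you yourself flag as ``the main obstacle'' but do not close. Your proposed telescoping of $H(y_{t-1})-H(y_t)$ against $|\log y_t|(y_{t-1}-y_t)$ requires, to get $\Theta(\log k/k)$ rather than something larger, an a priori lower bound on the $y_t$ near the boundary $s=1$; without it, $|\log y_t|$ is uncontrolled and your own caveat about an $\Omega(1/\sqrt k)$ loss is exactly what goes wrong. The mechanism the paper uses and that your sketch lacks is the lower bound $y_{k-\ell}\ge \ell/(32k)$ (Proposition~\ref{prop:lower_bound_y_k_ell}), which pins $|\log y_t|=O(\log k)$ and also controls the ratio $\xi^{-1/k}\le (32k)^{1/k}=1+O(\log k/k)$ appearing when you replace $\int_{y_{t+1}}^{y_t} f\,\mathrm{d}y$ by $k(y_t-y_{t+1})$ via the mean value theorem. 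Concretely, the paper sandwiches $y_t$ between two Euler sequences with step sizes $1/k$ and $1/((32k)^{1/k}k)$ and shows they differ by at most $4\log(32k)/k$; that sandwich, not a direct Taylor argument on the singular ODE, is what makes the $\log k/k$ rate go through. Your proposal names the right target but leaves precisely this load-bearing estimate unproved, so as written it does not constitute a proof.
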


The idea of the proof is the following. For $k$ large enough, the values $1=y_0 \geq y_1 \geq \cdots \geq y_{k-1}\geq y_k=0$ that define the optimal solution of $(CLP)_{\infty,k}$ approach points in the solution of the ordinary differential equation (ODE)
\begin{align*}
	w'(t) &= w(t)(\log w(t)-1)-(\bar{\beta}-1) \\
	w(0) &= 1 .
\end{align*}
Specifically, we show that the sequence $x_t$ defined by Euler's method (see Chapter 1 of~\cite{iserles2009first}) of the ODE is $\mathcal{O}(\log k/k)$ away from $y_t$. This allows us to produce an approximate upper bound of $y_t$ in terms of $w(t/k)$, where $w$ is the solution of the ODE. If $\beta_{\infty,k}^*$ were larger than $(1+c) \overline{\beta}$, with $c= \mathcal{O}(\log k /k)$, our bound would imply $y_k < 0$, which is a contradiction. Therefore, $\beta_{\infty,k}^* \leq (1+c)\overline{\beta}$, which means $v_{\infty,k}^* \geq \overline{\gamma}(1-c)$. The proof of the upper bound is similar. In the remainder of the section, we formalize each of these steps for the lower bound on $v_{\infty,k}^*$; we defer the details of the upper bound to Appendix~\ref{app:upperbound_v}.

To ease notation, we write $\beta^*$ for $\beta_{\infty,k}^*$. We also assume  $\beta^* \geq 1.25$, which we  validate later. From the solution $(\omega,1/\beta^*)$ of $(CLP)_{\infty,k}$, we can deduce that the sequence $1=y_0\geq y_1 \geq \cdots \geq y_{k-1}\geq y_k=0$ satisfies
\begin{align*}
	\beta^* &= \int_{y_1}^{y_0} \frac{-\log y}{1-y^{1/k}} \, \mathrm{d}y \\
	\int_{y_{t+1}}^{y_{t}} \frac{-y^{1/k} \log y}{1-y^{1/k}} \, \mathrm{d}y & = \int_{y_{t}}^{y_{t-1}} \frac{-\log y }{1-y^{1/k}} \, \mathrm{d}y & t=1,\ldots,k-1 .
\end{align*}
From this system, we obtain the implicit recursion
\begin{align}
	\int_{y_{t+1}}^{y_t} \frac{- \log y}{1-y^{1/k}} \, \mathrm{d}y = \beta^* - 1 + y_{t}(1-\log y_{t}), \label{eq:implicit_recursion}
\end{align}
for $t=0,\ldots,k-1$. Note that the right-hand side of this expression corresponds to the negative of the ODE right-hand side if we replace $w(t)$ by $y_t$. From here, we can extract a sequence defined as
\begin{align*}
	x_0 & = 0\\
	x_{t+1} & = \max \left\{ 0, x_t - \frac{1}{k}\left((\beta^* - 1) + x_t(1- \log x_t ) \right) \right\}.
\end{align*}
This sequence can be interpreted as the output of Euler's method applied to the ODE with step-size $1/k$. We can then show the following lemma. 

\begin{lemma}\label{lem:y_t_bounded_by_x_t}
	For any $t=0,1,\ldots,k$,  $x_t \leq y_t \leq x_t + 4 \log (32 k)/k$.
\end{lemma}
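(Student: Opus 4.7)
The plan is to prove the two inequalities separately by induction on $t$, comparing the implicit recursion \eqref{eq:implicit_recursion} against the explicit Euler step defining $x_t$. Throughout, write $R(w) = \beta^* - 1 + w(1 - \log w)$ and $g(y) = -\log y/(1 - y^{1/k})$, so the recursion reads $\int_{y_{t+1}}^{y_t} g(y)\,\mathrm{d}y = R(y_t)$, and note $R'(w) = -\log w > 0$ on $(0,1)$.

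For the lower bound $x_t \leq y_t$, the base case holds since both sequences start at $1$. For the inductive step, the elementary inequality $1 - e^{-u} \leq u$ applied with $u = -\log y/k$ gives $g(y) \geq k$, and integrating yields $y_{t+1} \geq y_t - R(y_t)/k$. The Euler map $\phi(w) = w - R(w)/k$ satisfies $\phi'(w) = 1 + (\log w)/k \geq 0$ on $[e^{-k}, 1]$, so when $x_t \in [e^{-k},1]$ the monotonicity of $\phi$ combined with the induction hypothesis gives $x_{t+1} \leq \phi(x_t) \leq \phi(y_t) \leq y_{t+1}$; when $x_t < e^{-k}$, a direct computation gives $R(x_t) > k x_t$ and the max forces $x_{t+1} = 0 \leq y_{t+1}$.

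For the upper bound, the key ingredient is the sharper estimate $1 - e^{-u} \geq u/(1+u)$ for $u \geq 0$, which translates to $g(y) \leq k - \log y$ on $(0,1)$. Integrating over $[y_{t+1}, y_t]$, with $F(w) := w(1 - \log w)$ as antiderivative of $-\log w$, gives $R(y_t) \leq k(y_t - y_{t+1}) + F(y_t) - F(y_{t+1})$, which rearranges to the ``backward Euler'' type inequality $y_{t+1} + R(y_{t+1})/k \leq y_t$. Subtracting the explicit identity $x_{t+1} + R(x_t)/k = x_t$ (valid when the max is inactive) yields, for $e_t := y_t - x_t$, the recursive bound $e_{t+1} \leq e_t + (R(x_t) - R(y_{t+1}))/k$. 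If $y_{t+1} \geq x_t$ the correction is nonpositive and $e_{t+1} \leq e_t$; otherwise $y_{t+1} < x_t \leq y_t$ together with $y_t - y_{t+1} \leq R(y_t)/k \leq \beta^*/k$ gives $x_t - y_{t+1} \leq \beta^*/k$, and the mean value theorem with $R'(\xi) \leq R'(y_{t+1}) = -\log y_{t+1}$ yields $e_{t+1} \leq e_t + (-\log y_{t+1}) \beta^*/k^2$.

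Telescoping the recursion and using $y_s \geq y_{k-1} \geq 1/(32k)$ for $s \leq k-1$ (the lower bound on $y_{k-1}$ is established in the next subsection and monotonicity of $y$ propagates it to all such $s$) produces $e_t \leq \beta^*(k-1)\log(32k)/k^2 \leq 4\log(32k)/k$ after absorbing $\beta^* \leq 4/3 + o(1)$ in the regime of interest. The case $t = k$ is immediate: $y_k = 0$ together with the lower bound and the truncation at $0$ force $x_k = 0 = y_k$. The boundary case where the Euler max first activates at step $t+1$ requires $x_t \leq R(x_t)/k \leq \beta^*/k$, so $e_{t+1} = y_{t+1} \leq y_t = e_t + x_t \leq \beta^*\log(32k)/k + \beta^*/k$, still within $4\log(32k)/k$ for $k$ large. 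The main obstacles will be threading the constant $4$ cleanly through the Euler-boundary transition and ensuring no circularity with the auxiliary bound $y_{k-1} \geq 1/(32k)$, which must be proved in the next subsection independently of this lemma.
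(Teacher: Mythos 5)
Your proof is correct and, for the upper bound, takes a genuinely different route from the paper. For the lower bound $x_t\le y_t$, you use the same estimate $g(y)=-\log y/(1-y^{1/k})\ge k$, but you carefully spell out the monotonicity of the Euler map $\phi(w)=w-R(w)/k$ (and dispatch the small region $x_t<e^{-k}$), which the paper glosses over by calling the conclusion ``evident.'' For the upper bound, the paper introduces a \emph{second} auxiliary Euler sequence $z_t$ with the slightly larger step $\bigl(k(32k)^{1/k}\bigr)^{-1}$, proves $y_t\le z_t$ via the mean-value bound $\int g\le k\xi^{-1/k}(y_t-y_{t+1})$ and $\xi\ge y_{k-1}\ge 1/(32k)$, and then bounds $z_t-x_t$ by a separate recursion using $(32k)^{1/k}-1\le 2\log(32k)/k$. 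You instead stay with $x_t$ alone: from the sharper pointwise estimate $1-e^{-u}\ge u/(1+u)$ you get $g(y)\le k-\log y$, integrate to obtain the backward-Euler inequality $y_{t+1}+R(y_{t+1})/k\le y_t$, subtract the forward-Euler step for $x_t$, and control the resulting error increment by the mean value theorem on $R$ together with $x_t-y_{t+1}\le y_t-y_{t+1}\le R(y_t)/k\le\beta^*/k$ and $-\log y_{t+1}\le\log(32k)$. This avoids the third sequence entirely and is a bit more transparent; both routes rely on the same external input $y_{k-1}\ge 1/(32k)$, proved independently (Proposition~\ref{prop:lower_bound_y_k_ell}), so there is no circularity.

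Two minor remarks. First, your separate handling of the step at which the Euler truncation activates is unnecessary: since $x_{t+1}=\max\{0,x_t-R(x_t)/k\}\ge x_t-R(x_t)/k$ always holds, the inequality $e_{t+1}\le e_t+\bigl(R(x_t)-R(y_{t+1})\bigr)/k$ is valid in all cases, and once $x_t=0$ the increment $R(0)-R(y_{t+1})$ is nonpositive, so the error cannot grow thereafter. Second, you invoke $\beta^*\le 4/3+o(1)$, which isn't available at this stage; the paper's a priori bound $\beta^*\le 2$ (used in its own Proposition~\ref{prop:z_t_bounded_by_x_t_error}) is what you should cite, and it still gives $e_t\le 2\log(32k)/k\le 4\log(32k)/k$, so the constant threads through without trouble.
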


For the proof of this lemma, we define an additional sequence $z_t$ that can also be interpreted the output of Euler's method with a slightly different step size. We show that $x_t \leq y_t \leq z_t \leq x_t + 4 \log (32 k)/k$, from which the result follows; the details are deferred to Appendix~\ref{app:approx_guarantee_infinite_model}.

Let $w$ be the solution of the ODE; since $\beta^* \geq \overline{\beta}$, $w(\rho^*)=0$ for some $\rho^*\in [0,1]$. We extend $w$ to $[0,1]$ by setting $w( t ) = 0$ for $ t \in [\rho^*,1]$, and keep denoting this extension by $w$.

\begin{lemma}\label{lem:x_t_bounded_by_w}
	For any $t=0,1,\ldots,k$,  $x_t \leq w(t/k)$.
\end{lemma}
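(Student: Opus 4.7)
The plan is to establish $x_t \le w(t/k)$ by induction on $t$, comparing the Euler iterate to the exact ODE solution $w$. For the base case, $x_0 = w(0) = 1$ (I read the initialization ``$x_0 = 0$'' as a typo, given the description of $x_t$ as Euler's method with step $1/k$ on the ODE with initial condition $w(0)=1$). It is convenient to set $g_\beta(x) := x(\log x - 1) - (\beta - 1)$, so that the ODE reads $w' = g_{\bar\beta}(w)$ and the recursion reads $x_{t+1} = \max\{0,\, x_t + \tfrac{1}{k} g_{\beta^*}(x_t)\}$. I will exploit two monotonicity facts about $g_\beta$: (i) $g_\beta'(x) = \log x \le 0$ for $x\in (0,1]$, so $g_\beta$ is nonincreasing in $x$; and (ii) $g_\beta$ is decreasing in $\beta$, so $g_{\beta^*}\le g_{\bar\beta}$ since $\beta^* \ge \bar\beta$.

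Given $x_t \le w(t/k)$, I chain two inequalities. Define $\phi(x) := x + \tfrac{1}{k} g_{\bar\beta}(x)$; a direct computation yields $\phi'(x) = 1 + (\log x)/k$, so $\phi$ is nondecreasing on $[e^{-k},1]$. Combined with (ii) and the inductive hypothesis, this gives (ignoring the max momentarily)
\[
x_{t+1} \;\le\; x_t + \tfrac{1}{k} g_{\bar\beta}(x_t) \;=\; \phi(x_t) \;\le\; \phi(w(t/k)).
\]
For the second inequality, $w$ is strictly decreasing wherever positive, since $w'(s) = g_{\bar\beta}(w(s))$ is negative on $(0,1]$ ($w(s)(\log w(s)-1) \le 0$ there and $\bar\beta - 1 > 0$). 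Because $g_{\bar\beta}$ is nonincreasing in its argument, $s \mapsto g_{\bar\beta}(w(s))$ is nondecreasing, so the Riemann sum bound
\[
w((t+1)/k) - w(t/k) \;=\; \int_{t/k}^{(t+1)/k} g_{\bar\beta}(w(s))\, ds \;\ge\; \tfrac{1}{k} g_{\bar\beta}(w(t/k))
\]
rearranges to $w((t+1)/k) \ge \phi(w(t/k))$. Chaining the two displayed bounds yields $x_{t+1} \le w((t+1)/k)$, and the $\max\{0,\cdot\}$ truncation preserves the inequality since $w \ge 0$ on $[0,1]$ by the extension past $\rho^*$.

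The main obstacle is that $\phi$ ceases to be monotone once $x < e^{-k}$, so the monotonicity step breaks if $x_t$ or $w(t/k)$ falls into the tail region near the zero-crossing time $\rho^*$. In that regime, however, the constant term $-(\bar\beta-1)/k$ in $\phi$ dominates the contribution of $x(\log x - 1)$; concretely, once $x_t \le (\bar\beta-1)/k$ the raw Euler step turns nonpositive and the $\max$ clamps $x_{t+1}$ to $0 \le w((t+1)/k)$, closing the induction. Verifying carefully that both sequences enter this trivial zone at compatible times is the only delicate piece of bookkeeping; everything else reduces to the clean monotonicity-plus-Riemann-sum chain above.
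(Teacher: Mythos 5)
Your argument is correct and considerably more elementary than the paper's. The paper's proof proceeds by a third-order Taylor expansion around $t/k$, invokes convexity of both $w$ and $w'$ (properties $w''\ge 0$ and $w'''\ge 0$), and then appeals to a technical inequality from Correa et al.\ (their Proposition D.1) to compare the truncated Taylor lower bound with $\tfrac{k-1}{k}w(t/k)^{k/(k-1)} - \tfrac{\beta^*-1}{k}$, after which some algebra recovers the Euler step. You instead observe that (a) the Euler-step map $\phi(x) = x + \tfrac{1}{k}g_\beta(x)$ is nondecreasing on $[e^{-k},1]$ since $\phi'(x) = 1 + (\log x)/k$, and (b) $w((t+1)/k) \ge \phi(w(t/k))$ by a left-endpoint Riemann bound, because $g_\beta$ nonincreasing composed with $w$ decreasing is nondecreasing. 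This reduces the induction to a monotone chain, and the only residual case $x_t < e^{-k}$ is in fact not delicate: writing the raw step as $\psi(x_t)-(\beta^*-1)/k$ with $\psi(x)=x(1-1/k)+x(\log x)/k$, one checks $\psi'(x)=1+(\log x)/k<0$ on $(0,e^{-k})$ and $\psi(0^+)=0$, so $\psi\le 0$ there and the raw step is negative, clamping $x_{t+1}$ to $0\le w((t+1)/k)$ regardless of where $w(t/k)$ sits. Your route avoids all three listed properties of $w$ and the external Proposition D.1. Two small notes: the LaTeX writes the ODE with $\bar\beta$ while the recursion and the surrounding proof use $\beta^*$; the intended $w$ solves the ODE with $\beta^*$ (otherwise the sentence ``since $\beta^*\ge\bar\beta$, $w(\rho^*)=0$ for some $\rho^*\in[0,1]$'' and the downstream bound $\rho^*=I(\beta^*)\ge 1-c$ make no sense), and running your argument entirely with $\beta^*$ drops your comparison (ii) and is cleaner still. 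Your reading of $x_0=1$ rather than $x_0=0$ is also correct; it is a typo, confirmed by the base case ``$w(0)=1=x_0$'' in the paper's own proof.
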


The proof of this lemma is by induction in $t$, and is similar to the proof of Lemma 7 in~\cite{correa2021posted}. However, there are some technical details that must be addressed in our case, since we are utilizing a different sequence of points to approximate $w(t/k)$. We defer the proof to Appendix~\ref{app:approx_guarantee_infinite_model}.

Before presenting the bound over $\beta^*$, we need to quantify $\rho^*$. Using Equation~\eqref{eq:implicit_recursion}, we can show that $y_{k-\ell} \geq \ell /(32k)$ (see Proposition~\ref{prop:lower_bound_y_k_ell} in Appendix~\ref{app:approx_guarantee_infinite_model}). This result combined with Lemmas~\ref{lem:y_t_bounded_by_x_t} and \ref{lem:x_t_bounded_by_w} show that $\rho^* \geq 1 - 256 \log(32 k)/k$.

Now, let $\rho:[0,1] \to [0,\rho^*]$ be the inverse function of $w$, the solution of the ODE. We show in the Appendix that $w$ is strictly decreasing and at least twice differentiable, which implies that $\rho$ is also strictly decreasing and differentiable. Then,
\begin{align*}
	\rho(1) - \rho(0) & = \int_0^1 \rho'(w)\, \mathrm{d}w = \int_0^1 (w'(\rho(w)))^{-1} \, \mathrm{d}w  = \int_0^1 \frac{-\, \mathrm{d}w}{(\beta^* - 1) + w(1-\log w)}.
\end{align*}
Thus,
\[
\int_0^1 \frac{\mathrm{d}w}{\beta^* -1 + w (1-\log w)} = \rho^* \geq 1 - c,
\]
where $c= 256 \log (32k)/k$.

Let $I(\beta)= \int_0^1 (\beta-1 +w (1-\log w))^{-1} \, \mathrm{d}w$. Function $I$ is strictly decreasing and for $\overline{\beta}\approx 1.341$ we have $I(\overline{\beta})=1$; see, e.g.\ \cite{kertz1986stop,correa2021posted}.

\begin{lemma}
	We have $\beta^* \leq \overline{\beta} (1+2 c)$.
\end{lemma}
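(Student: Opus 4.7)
The plan is to translate the already-established inequality $I(\beta^*) \geq 1 - c$ into the desired upper bound on $\beta^*$ by exploiting the strict monotonicity and convexity of $I$. Since $I$ is strictly decreasing and $I(\beta^*) = \rho^* \leq 1 = I(\overline{\beta})$, we first note $\beta^* \geq \overline{\beta}$. Setting $\beta' := \overline{\beta}(1+2c)$, I would proceed by contradiction: if $\beta^* > \beta'$, monotonicity gives $I(\beta') > I(\beta^*) \geq 1 - c$, so it suffices to prove the contrary inequality $I(\beta') \leq 1 - c$.

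Differentiating under the integral sign (justified by dominated convergence on compact subsets of $(1,\infty)$ since the integrand is smooth and bounded away from zero), I get $I'(\beta) = -\int_0^1 (\beta - 1 + w(1-\log w))^{-2}\, dw$ and $I''(\beta) = 2\int_0^1 (\beta - 1 + w(1-\log w))^{-3}\, dw > 0$. Thus $I$ is strictly convex on $(1,\infty)$, and hence $|I'|$ is nonincreasing. Consequently, $I(\overline{\beta}) - I(\beta') = \int_{\overline{\beta}}^{\beta'} |I'(s)|\, ds \geq |I'(\beta')|\,(\beta' - \overline{\beta}) = 2c\,\overline{\beta}\,|I'(\beta')|$.

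The remaining task is a quantitative lower bound on $|I'(\beta')|$. Since $w(1-\log w) \leq 1$ for $w \in [0,1]$ (maximized at $w=1$), the integrand in $|I'(\beta)|$ is at least $(\beta-1+1)^{-2} = \beta^{-2}$, yielding $|I'(\beta')| \geq 1/\beta'^{2}$. Substituting back, $I(\overline{\beta}) - I(\beta') \geq 2c\,\overline{\beta}/\beta'^{2}$, so it suffices to verify $2\overline{\beta}/\beta'^{2} \geq 1$, i.e., $(1+2c)^{2} \leq 2/\overline{\beta} \approx 1.491$. Since $c = 256\log(32k)/k \to 0$, this inequality holds for $k$ large enough (concretely, once $c < 0.1$), giving $I(\beta') \leq 1-c$, which contradicts $I(\beta') > 1-c$.

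The only real obstacle is making sure the two quantitative estimates line up, namely that the lower bound $1/\beta'^{2}$ on $|I'(\beta')|$ is strong enough to overcome the factor $2\overline{\beta}$ in the inequality we must beat. This is where the gap between $\overline{\beta} \approx 1.341$ and $2$ is used in an essential way, and it is also the reason the statement is phrased only for $k$ sufficiently large. If a more robust bound were needed, one could replace the estimate $|I'(\beta)|\geq \beta^{-2}$ by the sharper Cauchy--Schwarz bound $|I'(\beta)|\geq I(\beta)^{2}$; at $\beta = \overline{\beta}$ this gives $|I'(\overline{\beta})| \geq 1$, and continuity transfers this to a uniform lower bound on $|I'|$ in a neighborhood of $\overline{\beta}$ that contains $\beta'$ for all sufficiently large $k$.
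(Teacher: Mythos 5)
Your proof is correct, but it takes a genuinely different route from the paper's, and it is worth contrasting them. You differentiate $I$ under the integral sign, establish that $I$ is strictly convex (so $|I'|$ is nonincreasing), combine $I(\overline{\beta}) - I(\beta') \geq |I'(\beta')|(\beta'-\overline{\beta})$ with the pointwise bound $|I'(\beta')| \geq 1/\beta'^2$, and then verify the resulting numerical inequality for $k$ large. The paper avoids all of this machinery with a single pointwise algebraic inequality: for every $w\in[0,1]$, since $w(1-\log w)\leq 1$,
\[
(1+2c)\overline{\beta} - 1 + w(1-\log w) \;\geq\; (1+2c)\bigl(\overline{\beta}-1+w(1-\log w)\bigr),
\]
which upon taking reciprocals and integrating gives directly $I\bigl((1+2c)\overline{\beta}\bigr) \leq I(\overline{\beta})/(1+2c) = 1/(1+2c) < 1-c$ for any $0<c<1/2$, contradicting $I(\beta^*)=\rho^*\geq 1-c$ and strict monotonicity. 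The paper's argument is shorter, needs no differentiation or convexity, and works on the wider range $c<1/2$; yours requires roughly $c<0.1$ (so that $(1+2c)^2\leq 2/\overline{\beta}$), which is still adequate since the lemma is asymptotic in $k$ anyway. Both are valid proofs by contradiction using monotonicity of $I$; the difference is purely in how the key estimate $I(\beta')\leq 1-c$ is obtained.
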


\begin{proof}
By contradiction, assume that $\beta^* > \overline{\beta}(1+2 c)$. Now, for any $w\in [0,1]$, we have
\[
(1+2 c) \overline{\beta} - 1 + w(1-\log w) \geq (1+2c)\left(  \overline{\beta} - 1 + w(1-\log w)  \right).
\]
Thus, rearranging this last inequality and integrating over $w\in [0,1]$, we obtain
\[
I((1+2c)\overline{\beta}) \leq \frac{1}{1+2c} I(\overline{\beta}) = \frac{1}{1+2c}< 1-c \leq \rho^* = I(\beta^*).
\]
Using the monotonicity of $I$, $\beta^* \leq (1+ 2c) \overline{\beta}$, which is a contradiction.
\end{proof}

This result implies
\[
v_{\infty,k}^* = \frac{1}{\beta^*} \geq \overline{\gamma}( 1 - 2 c) = \overline{\gamma} \left( 1 - 512 \frac{\log (32k)}{k}  \right) ,
\]
which finishes the proof of the lower bound in Theorem~\ref{thm:bound_infnite_model}. The proof of the upper bound is similar; the details are in Appendix~\ref{app:upperbound_v}.

\section{Prophet Inequalities for a Small Number of Thresholds}\label{sec:small_thresholds}

In this section, we provide numerical results when the number of thresholds is small. For $k=2$, we obtain the optimal value $\gamma_{n,2}^* \approx 0.708$ using the framework of Theorem~\ref{thm:optimal_cr}. For $k\geq3$, we utilize $(CLP)_{n,k}$ for $n$ tending to infinity to provide asymptotic lower bounds for $\gamma_{n,k}^*$. In Table~\ref{table:values_v}, we summarize the values computed in this section.

\subsection{Two Thresholds}

\modif{We use Theorem~\ref{thm:optimal_cr} and~\ref{prop:dual_P_nk} to find that $\gamma_{n,2}^*\approx0.70804$ when $n \to \infty$ as follows. We construct parametrized primal and dual solutions; by letting $n$ go to infinity, we find simpler expressions for the parameters while both primal and dual solutions have matching values. Optimizing over the parameters gives us the desired result.}

\subsubsection{The primal solution}

\modif{First, we provide the description of the primal solution. Let $0<a_1<a_2<n$, and $\tau_1=\theta n$ and $\tau_2=(1-\theta)n$. Now, take $\alpha_{1,q}=\delta_{\{a_1/n\}}(q)$ and $\alpha_{2,q}=(1-a_1/n)^{\tau_1} \delta_{\{ a_1/n \}}(q)$, where $\delta_{\{u_0\}}$ is the dirac function with mass in $u_0$. By construction, $\alpha_1$ and $\alpha_2$ satisfy Constraints~\eqref{const:constr_1_P_nk}-\eqref{const:constr_t_P_nk}. Consider $v=\left(1-\left(1-a_1/n \right)^{\tau_1}\right)+\left(1-a_1/n \right)^{\tau_1}\left(1-\left(1-a_2/n\right)^{\tau_2}\right)=1-\left(1-a_1/n\right)^{\tau_1}\left(1-a_2/n\right)^{\tau_2}$ and the function
\[
\eta_u = \begin{cases}
	u \left(1-\left(1-\frac{a_1}{n}\right)^{\tau_1}\right) \frac{n}{a_1} + u \left(1-\frac{a_1}{n}\right)^{\tau_1}\left(1-\left(1-\frac{a_2}{n}\right)^{\tau_2}\right)\frac{n}{a_2} - v (1-(1-u)^n), & u \leq \frac{a_1}{n},\\
	\left(1-\left(1-\frac{a_1}{n}\right)^{\tau_1}\right) + u \left(1-\frac{a_1}{n}\right)^{\tau_1}\left(1-\left(1-\frac{a_2}{n}\right)^{\tau_2}\right)\frac{n}{a_2} - v (1-(1-u)^n), & \frac{a_1}{n} <u \leq  \frac{a_2}{n},\\
	v(1-u)^n, & u > \frac{a_2}{n}.
\end{cases}
\]
Note that $\eta_0=0$ and $\eta_1=0$. It is straightforward to verify that $(\alpha,\eta,v)$ satisfies Constraint~\eqref{const:P_nk_esp_max_value_const}. Thus, in order to have $(\alpha,\eta,v)$ feasible for $(P)_{n,2}$, we need to ensure that $\eta_u \geq 0$ for all $u$. Due to the form of $\eta$, given $\theta\in (0,1)$ and $a_1<a_2$, we only need to verify $\eta_u \geq 0$ for $u\leq a_2/n$. Note that $\eta_u$ is convex in each interval $[0,a_1/n]$ and $[a_1/n,a_2/n]$; moreover, $\eta_{a_2/n}=v(1-a_2/n)^n \geq 0$. This implies that there are only 3 points of interest in order to verify the nonnegativity of $\eta$. Let $u_1/n\in \R$ be the global minimum of the function that defines $\eta_u$ in the interval $[0,a_1/n]$ and let $u_2/n\in \R$ be the global optimal point of the function defining $\eta_u$ in $(a_1/n,a_2/n]$, respectively. We arbitrarily set $u_1=0$ and search for solutions such that $u_2 \in (a_1/n,a_2/n)$ and $\eta_{u_2/n}=0$. This is enough to guarantee $\eta_u\geq 0$. These choices may seem arbitrary, but they are a byproduct of complementary slackness and will become clearer when we present the dual solution below.}

\modif{To continue the analysis, we let $n\to \infty$ and  numerically compute values $\theta$, $a_1$ and $a_2$ for which $\eta \geq 0$ and such that $v$ is as large as possible in the limit. Note that for finite $n$, reducing the value $v$ by a small amount $\varepsilon>0$ does not affect the feasibility of $(\alpha,\eta,v)$ in $(P)_{n,k}$ for $n$ large. Thus, when $n$ tends to infinity, the limit value of $(P)_{n,k}$ is at least the limit of $v$, which we can estimate numerically.
}

\modif{By letting $n$ tend to infinity, we obtain a limit version of $\eta$ given by
\[
\bar{\eta}_u = \begin{cases}
	\frac{u}{a_1} \left( 1- e^{-a_1 \theta} \right) + \frac{u}{a_2} e^{-a_1 \theta}\left( 1-e^{-a_2(1-\theta)} \right) - \overline{v} (1-e^{-u}) , & u \leq a_1, \\
	1-e^{-a_1 \theta} + \frac{u}{a_2} e^{-a_1 \theta}\left( 1- e^{-a_2 (1-\theta)} \right) - \overline{v}(1-e^{-u}),  & a_1 < u \leq a_2, \\
	\overline{v} e^{-u}, & u > a_2.
\end{cases}
 \]
Note that $\overline{v}$, the limit of $v$, equals $\overline{v}=\lim_{n\to \infty} 1-(1-a_1/n)^{\tau_1}(1-a_2/n)^{\tau_2}=1-e^{-a_1\theta -a_2(1-\theta)}$, where we used $\tau_1=\theta n$ and $\tau_2=(1-\theta)n$. Following the logic applied to $\eta$ in the finite $n$ case, we compute the points
\[
u_1 = - \log \left(   \frac{(1-e^{-a_1\theta})/a_1 + e^{-a_1 \theta} (1-e^{-a_2 (1-\theta)})/a_2 }{1-e^{-a_1 \theta - a_2 (1-\theta)}}  \right), \quad u_2 = - \log \left(   \frac{ e^{-a_1 \theta} (1-e^{-a_2 (1-\theta)})/a_2 }{1-e^{-a_1 \theta - a_2 (1-\theta)}}  \right) ,
\]
which are the global solutions to the function defining $\overline{\eta}$ in $[0,a_1]$ and $(a_1,a_2]$, respectively. Recall that we are forcing $u_1=0$. Because $u_1=0$ is the minimum of a strictly convex function, $u_2$ is as well, and $\bar{\eta}_{u_2}=0$; we  formulate the  system
\begin{align*}
	1-e^{-a_1 \theta - a_2(1-\theta)} & = \left(\frac{1-e^{-a_1 \theta}}{a_1}\right) + e^{-a_1 \theta}\left( \frac{1-e^{-a_2(1-\theta)}}{a_2} \right) \\
	(1-e^{-a_1\theta - a_2(1-\theta)})(1-e^{-u_2}) & = 1-e^{-a_1 \theta} + u_2 e^{-a_1 \theta}\left( \frac{1-e^{-a_2(1-\theta)}}{a_2} \right) \\
	(1-e^{-a_1\theta - a_2(1-\theta)})e^{-u_2} &= e^{-a_1\theta}\left( \frac{1-e^{-a_2(1-\theta)}}{a_2}\right) .
\end{align*}
Using this system, we can find expressions for $a_1$ and $a_2$ as a function of $u_2$ that guarantee $\bar{\eta}_u\geq 0$ for all $u\geq 0$, and an implicit formula that relates $\theta$ with $u_2$. We summarize this in the following proposition; we skip the proof for brevity.}

\modif{\begin{proposition}\label{prop:formulas_for_a1_a2}
		For $u_2\geq 0$, we have $a_1=1-u_2 e^{-u_2}/(1-e^{-u_2})$ and $a_2=u_2+1$. We also have
		\[
		-e^{-u_2} - u_2 e^{-u_2} = e^{-a_1\theta - a_2 (1-\theta)}(1-e^{-u_2}-u_2 e^{-u_2}) - e^{-a_2\theta}.
		\]
\end{proposition}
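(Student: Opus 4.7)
My plan is to derive the two closed-form expressions for $a_1, a_2$ and the implicit equation for $\theta$ by algebraic manipulation of the three-equation system stated just before the proposition. To keep the manipulation compact, I would introduce the abbreviations $x = e^{-a_1\theta}$, $y = e^{-a_2(1-\theta)}$, $A = 1 - xy$ (so that $\overline{v}=A$), $P = (1+u_2)e^{-u_2}$ and $Q = 1 - P = 1 - e^{-u_2} - u_2 e^{-u_2}$. In these variables the three stated equations take the simpler form
\begin{align*}
	\text{(i):}\quad & A = \frac{1-x}{a_1} + \frac{x(1-y)}{a_2}, \\
	\text{(ii):}\quad & A(1-e^{-u_2}) = (1-x) + \frac{u_2 \, x(1-y)}{a_2}, \\
	\text{(iii):}\quad & A\, e^{-u_2} = \frac{x(1-y)}{a_2},
\end{align*}
which correspond respectively to $\overline{\eta}'(0) = 0$, $\overline{\eta}_{u_2} = 0$, and $\overline{\eta}'(u_2) = 0$.

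Substituting (iii) into (ii) eliminates the fractional term and collapses (ii) to the identity $AQ = 1-x$. Rearranging (ii) differently as $(u_2+1)\, x(1-y)/a_2 = A - (1-x) = x(1-y)$ (using $A = 1-xy$) yields $a_2 = u_2 + 1$ after cancellation of $x(1-y)$, which is nonzero in the regime of interest $0 < \theta < 1$. Once $a_2 = u_2+1$ is available, substituting (iii) into (i) gives $A(1-e^{-u_2}) = (1-x)/a_1$; combined with $AQ = 1-x$, this forces $a_1(1-e^{-u_2}) = Q$ and hence the claimed formula $a_1 = 1 - u_2 e^{-u_2}/(1-e^{-u_2})$.

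Finally, plugging $a_2 = u_2 + 1$ back into (iii) gives $(u_2+1)(1-xy)e^{-u_2} = x(1-y)$, which expands to $P - Pxy = x - xy$ and rearranges into $x - P = xyQ$. Returning to the original exponentials produces the stated implicit equation relating $\theta$ and $u_2$. The calculation itself is light; the key structural observation is that, after imposing the derivative-at-$u_2$ condition, the remaining equations decouple neatly in the two auxiliary quantities $1-x$ and $x(1-y)/a_2$. Consequently, the main obstacle is notational bookkeeping rather than any delicate analytic step, with the only care needed being the exclusion of the degenerate cases $\theta \in \{0,1\}$ and $u_2 = 0$, which correspond to boundary regimes that do not maximize $\overline{v}$.
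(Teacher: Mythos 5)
Your derivation is algebraically correct, and since the paper explicitly skips the proof ("we skip the proof for brevity"), there is nothing to compare against. Each step checks out: $AQ=1-x$ follows from substituting (iii) into (ii); adding (iii) to (ii) (or, equivalently, replacing $A(1-e^{-u_2})$ with $A - Ae^{-u_2}=A - x(1-y)/a_2$ on the left of (ii)) gives $(u_2+1)x(1-y)/a_2 = A-(1-x)=x(1-y)$, hence $a_2=u_2+1$ after cancelling $x(1-y)\neq 0$; combining $A(1-e^{-u_2})=(1-x)/a_1$ with $AQ=1-x$ gives $a_1=Q/(1-e^{-u_2})=1-u_2e^{-u_2}/(1-e^{-u_2})$; and plugging $a_2=u_2+1$ into (iii) yields $x-P=xyQ$.

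However, note that your final equation reads $-e^{-u_2}-u_2 e^{-u_2}=e^{-a_1\theta-a_2(1-\theta)}(1-e^{-u_2}-u_2e^{-u_2})-e^{-a_1\theta}$, with $e^{-a_1\theta}$ in the last term, whereas the proposition as printed has $e^{-a_2\theta}$. A numerical sanity check with the paper's own reported optimum ($u_2\approx 1.3161$, $\theta\approx 0.6033$, $a_1\approx 0.5177$, $a_2\approx 2.3161$) gives $\text{LHS}\approx -0.621$, your $\text{RHS}\approx -0.621$, but the paper's $\text{RHS}\approx -0.137$. So the printed statement has a typo ($a_2\theta$ should be $a_1\theta$), and your version is the correct one. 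One small presentational quibble: the phrase "Rearranging (ii) differently" is a bit opaque; what you are actually doing is adding (ii) and (iii), or equivalently using (iii) on the left-hand side of (ii). It would be cleaner to say so explicitly.
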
}

\modif{Then, recalling that $\bar{v}=1-e^{-a_1\theta - a_2(1-\theta)}$, we can obtain the optimal $\theta$ and $u_2$ (and so $a_1$ and $a_2$ via the previous proposition) by solving
\[
\max_{\substack{\theta\in [0,1]\\u\geq 0}} \left\{  1- e^{-a_1\theta-a_2(1-\theta)}:  -e^{-u} - u e^{-u} = e^{-a_1\theta - a_2 (1-\theta)}(1-e^{-u}-u e^{-u}) - e^{-a_2\theta}  \right\} ,
\]
where we replace $a_1$ and $a_2$ by the formulas in Proposition~\ref{prop:formulas_for_a1_a2}. Numerically, we find $u=u_2=1.316097$ and $\theta=0.603285$. This gives us $a_1=0.517708$ and $a_2=2.316097$ with a value $\bar{v} \approx 0.70804$.
}

\subsubsection{The dual solution}

\modif{We now provide a dual solution with value matching the value computed above. We keep denoting by $a_1<u_2<a_2$, $\theta\in (0,1)$ and $\overline{v}$ the optimal values computed above. We provide a solution to $(D)_{n,2}$ with objective value converging to $\overline{v}$. For $a > 0$ and $b\geq c$, we define
\[
f_q = \frac{a}{n} \delta_{\{0\}} + b \mathbf{1}_{(0,u_2/n]}(q) + c \mathbf{1}_{(u_2/n,1)}(q).
\]
Note that $f$ is nonincreasing in $[0,1]$, and its choice is due to complementary slackness. Since $\eta_u$ is strictly positive in $(0,u_2/n)\cup (u_2/n,1)$, the only places where $f$ can change values are $0$, $u_2/n$ and $1$.  Let
\begin{align*}
	d_1 &= \sup_{q\in [0,1]} \left\{  \left( \frac{1-(1-q)^{\tau_1}}{q}\right)  \int_0^q f_u \, \mathrm{d}u +(1-q)^{\tau_1} d_2 \right\}\\
	d_2 &= \sup_{q\in [0,1]} \left\{  \left( \frac{1-(1-q)^{\tau_2}}{q}\right)  \int_0^q f_u \, \mathrm{d}u  \right\} .
\end{align*}
The pair $(\mathbf{d},f)$ satisfy Constraints~\eqref{const:dynamic_constr} and~\eqref{const:nonincreasing} by design. We only need to ensure that $f$ satisfies \eqref{const:max_value_const}; that is,
\[
1= a + b\left(1-\left(1-\frac{u_2}{n}\right)^n\right)+ c\left(1-\frac{u_2}{n}\right)^n.
\]
This imposes restrictions on $a,b$ and $c$. Using complementary slackness, we can argue that $d_1$ must attain its unique maximum at $a_1/n$, while $d_2$ must attain its unique maximum at $a_2$. This is because $\alpha_{1,q}$ is positive only for $q=a_1/n$, while $\alpha_{2,q}$ is positive only for $q=a_2/n$. These two conditions imply that the derivatives of the functions inside the supremum operator defining $d_1$ and $d_2$ have to have vanishing derivative in $a_1/n$ and $a_2/n$, respectively. Before continuing, from now on, we let $n$ tend to infinity to make the derivations easier. Similar to the primal case, from a solution to the limit model with $n\to \infty$, we can recover solutions for finite and large $n$ with arbitrarily small error.
}

\modif{When $n\to \infty$, we obtain 
\begin{align*}
	1& = a + b (1-e^{-u_2}) + c e^{-u_2}\\
	0&= b\left(\frac{1-e^{-\theta a_1}}{a_1}\right) + (a+b a_1)\left( \frac{\theta e^{-\theta a_1} a_1 - (1-e^{-\theta a_1})}{a_1^2} \right) \\
	&\quad - \theta e^{-\theta a_1} \left(\frac{1-e^{-(1-\theta)a_2}}{a_2}\right)(a+b u_2 + c(a_2-u_2)) \\
	0& = c\left( \frac{1-e^{-(1-\theta)a_2}}{a_2} \right) +(a+b u_2 + c(a_2-u_2)) \left( \frac{(1-\theta)e^{-(1-\theta)a_2}a_2 - (1-e^{-(1-\theta)a_2}) }{a_2^2} \right) .
\end{align*}
The first equality follows by taking the limit \eqref{const:max_value_const}, the second and third by the optimality of $a_1/n$ and $a_2/n$ in $d_1$ and $d_2$ respectively, when $n\to \infty$. This is a 3-by-3 system with unknowns $a$, $b$ and $c$. Using the information from Proposition~\ref{prop:formulas_for_a1_a2}, we obtain formulas for $a$, $b$ and $c$ in terms of $u_2$ and $\theta$. We defer the formulas to Appendix~\ref{app:formulas_a_b_c} due to their extensive length. Using the values computed in the previous part, we obtain $a=0.516213$, $b=0.567355$ and $c=0.255744$, which then implies $d_1\approx 0.70804$. }

\modif{\begin{remark}
		Another approach is to restrict $(D)_{n,k}$ (with $n\to \infty$) to functions of the form $f_q=a \delta_{\{0\}} + b \mathbf{1}_{(0,u_2]}(q)+c\mathbf{1}_{(u_2,\infty)}(q)$ with $b\geq c\geq 0$ and $a\geq 0$. This is a semi-infinite linear program with finitely many variables ($d_1,d_2,a,b,c$) and uncountably many constraints. Using the transformation $y=e^{-q}$ with $q\in \R_+$, we get a semi-infinite linear program with infinitely many constraints indexed by $[0,1]$. Using a discretization of $[0,1]$, we can solve this semi-infinite linear program to arbitrary accuracy. A similar approach was used in~\cite{li2022query}, obtaining a slightly weaker bound.
\end{remark}}

\subsection{Prophet Inequalities for Up to Ten Thresholds}

In this subsection, we utilize $v_{\infty,k}^*$ to provide prophet inequalities for $k\geq 3$. Since we have a closed-form expression for the optimal solution of $(CLP)_{\infty,k}$, we can compute $v_{\infty,k}^*$ using a binary search. For guessed values $v\leq v_{\infty,k}^*$ we should be able to find feasible solutions of $(CLP)_{\infty,k}$ with objective value $v$, while for $v>v_{\infty,k}^*$, we should fail at this task. Now, for $0< v\leq v_{\infty,k}^*$, we can construct feasible solutions of $(CLP)_{\infty,k}$ by using the functions
\[
\bar{\omega}_{t,y} = - v \frac{y\log y}{1-y^{1/k}} \mathbf{1}_{(y_{t},y_{t-1})}(y),
\]
where $1=y_0\geq y_1\geq \cdots \geq y_{k-1}\geq y_k=0$. The values $y_1,\ldots,y_{k-1}$ are implicitly defined as a function of $v$. Thus we use auxiliary optimization problems to find them. Let $H(x)=\int_0^x -(1-y^{1/k})^{-1}\log y  \, \mathrm{d}y$. Then,
$$H(y_{t-1})-H(y_t) = \frac{1}{v} \int_0^1 \frac{\bar{\omega}_{t,y}}{y} \, \mathrm{d}y.$$

For $v\leq v_{\infty,k}^*$, we can rewrite the constraints satisfied by $(\bar{\omega}_{t,\cdot})_{t=1}^k$ as a function of $H$,
\begin{align}
	\frac{1}{v} - H(y_0) + H(y_1) & \geq 0 \label{eq:system_y_1} \\
	H(y_0) - 2H(y_1) + H(y_2) + y_0 (\log y_0 - 1) - y_1 (\log y_1 -1) & \geq 0 \label{eq:system_y_2} \\
	                                                                   & \,\,\,\vdots \nonumber \\
	H(y_{k-3}) - 2H(y_{k-2}) + H(y_{k-1}) + y_{k-3}(\log y_{k-3} -1 ) - y_{k-2}(\log y_{k-2}-1) & \geq 0 \label{eq:system_y_k_1} \\
	H(y_{k-2}) - 2H(y_{k-1}) + H(y_k) + y_{k-2}(\log y_{k-2} -1 ) - y_{k-1}(\log y_{k-1}-1) & \geq 0 , \label{eq:system_y_k}
\end{align}
where $y_0=1$ and $y_{k}=1$. We compute $y_1$ by solving
\[
y_1 = \min\{ y\in [0,1]:1/v -H(y_0) +H(y) \geq 0 \} .
\]
Once $y_0,y_1,\ldots,y_{t-1}$ have been determined, we find $y_t$ by solving
\[
y_t = \min\{y \in[0,y_{t-1}]: H(y_{t-2}) -2H(y_{t-1}) +H(y) +y_{t-2}(\log y_{t-2}-1) - y_{t-1}(\log y_{t-1}-1) \geq 0 \}.
\]
Since $H$ is monotone, we can guarantee that $y_1,\ldots,y_{k-1}$ defined above satisfy Inequalities~\eqref{eq:system_y_1}-\eqref{eq:system_y_k}, if and only if $v\leq v_{\infty,k}^*$. Thus, we have the following result.

\begin{proposition}
	If $v=v_{\infty,k}^*$ and $y_1,\ldots,y_{k-1}$ are defined as above, $y_1\geq y_2 \geq \cdots \geq  y_{k-1}$ are the optimal thresholds defining the optimal solution $\omega$ of $(CLP)_{\infty,k}$.
\end{proposition}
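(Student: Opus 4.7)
The plan is to verify that the greedy construction produces exactly the thresholds defining the optimal $\bar{\omega}$ at $v = v_{\infty,k}^*$. I would combine the iff statement preceding the proposition with an argument that, at the supremum, every constraint in the system \eqref{eq:system_y_1}--\eqref{eq:system_y_k} is tight.

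\emph{Step 1 (constraint translation).} First, I would substitute $\bar{\omega}_{t,y} = -v(y\log y)/(1-y^{1/k}) \mathbf{1}_{(y_t, y_{t-1})}(y)$, with $y_0 = 1 \ge y_1 \ge \cdots \ge y_{k-1} \ge y_k = 0$, into the three constraint families of $(CLP)_{\infty,k}$. Using $\int \log y\, dy = y(\log y - 1)$ together with the identity $y^{1/k}/(1-y^{1/k}) = 1/(1-y^{1/k}) - 1$, a routine computation shows that \eqref{const:limit_infinity_dynamic_1} reduces to \eqref{eq:system_y_1}, that \eqref{const:limit_infinity_dynamic_2} reduces to one of \eqref{eq:system_y_2}--\eqref{eq:system_y_k} at each index $t$, and that \eqref{const:limit_infinity_multiobj} collapses to the tautology $v\bar y \le v\bar y$, since the intervals $(y_t, y_{t-1})_{t=1}^k$ partition $(0,1)$.

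\emph{Step 2 (greedy tightness and monotonicity).} Next, $H'(y) = -\log y/(1-y^{1/k}) > 0$ on $(0,1)$, so $H$ is strictly increasing and continuous on $[0,1]$. Viewing each inequality in \eqref{eq:system_y_1}--\eqref{eq:system_y_k_1} as a constraint on $y_t$ with $y_{t-1}, y_{t-2}$ and $v$ held fixed, the inequality has the form $H(y_t) \ge c(v, y_{t-1}, y_{t-2})$. The greedy update $y_t = \min\{y \in [0, y_{t-1}] : \cdots \ge 0\}$ therefore returns the unique $y_t \in [0, y_{t-1}]$ making the inequality tight whenever $c \le H(y_{t-1})$; by construction $y_t \le y_{t-1}$, yielding the monotone sequence $y_1 \ge y_2 \ge \cdots \ge y_{k-1}$ asserted in the proposition.

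\emph{Step 3 (matching optimality).} Since the optimal $\omega$ takes the form $\bar{\omega}$ with thresholds $y_0^* = 1, y_1^*, \ldots, y_{k-1}^*, y_k^* = 0$, the value $v_{\infty,k}^*$ is by definition the supremum of $v > 0$ for which the system \eqref{eq:system_y_1}--\eqref{eq:system_y_k} admits a solution with these boundary values. By the iff statement preceding the proposition, the greedy output at $v = v_{\infty,k}^*$ satisfies all of \eqref{eq:system_y_1}--\eqref{eq:system_y_k}. Moreover \eqref{eq:system_y_k} must be tight there: otherwise continuity of the greedy map in $v$ would allow $v$ to be nudged strictly above $v_{\infty,k}^*$ while preserving feasibility, contradicting the supremum. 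Combining this with the already-tight \eqref{eq:system_y_1}--\eqref{eq:system_y_k_1} from Step~2 yields a system of $k$ equalities that, by the strict monotonicity of $H$, uniquely determines $y_1, \ldots, y_{k-1}$; this unique sequence must coincide with the optimal thresholds $y_1^*, \ldots, y_{k-1}^*$. The main obstacle is the tightness assertion for \eqref{eq:system_y_k}: a clean execution requires continuity of $v \mapsto (y_1(v), \ldots, y_{k-1}(v))$ in a neighborhood of $v_{\infty,k}^*$, which in turn requires the greedy iterates to stay away from the singular boundary $y = 0$ until the last step. The earlier lower bound $y_{k-1}^* \ge 1/(32k)$ from Section~\ref{subsec:approximation_guarantee_infinite_model} supplies this non-degeneracy and closes the argument.
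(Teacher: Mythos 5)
Your argument tracks the paper's (implicit) reasoning closely; the paper itself states the proposition as a direct consequence of the preceding discussion and does not offer a separate proof, so you are filling in details the authors leave to the reader. Your three steps are all sound: the constraint translation in Step~1 is a routine substitution (and indeed constraint~\eqref{const:limit_infinity_multiobj} becomes an identity because the intervals $(y_t,y_{t-1})$ tile $(0,1)$); Step~2 correctly uses strict monotonicity of $H$ to see that the greedy recursion forces~\eqref{eq:system_y_1}--\eqref{eq:system_y_k_1} to hold with equality and yields a nonincreasing sequence; and Step~3 correctly observes that the tight $(k-1)$-equality system already pins down $y_1,\ldots,y_{k-1}$ given $v$, so the greedy output at $v=v_{\infty,k}^*$ must match the thresholds of the characterized optimal $\omega$. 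One remark on emphasis: the continuity/nudging argument for tightness of~\eqref{eq:system_y_k} is not actually needed to conclude that the greedy output coincides with the optimal thresholds --- uniqueness from the first $k-1$ tight equations already does that, and the fact that the last constraint holds at $v=v_{\infty,k}^*$ is exactly the ``only if'' half of the iff statement preceding the proposition. Tightness of the last constraint is then a corollary, not a prerequisite; your nudging argument proves it, and your appeal to $y_{k-1}^*\ge 1/(32k)$ for non-degeneracy is the right way to justify the required continuity, but the proposition itself does not hinge on it.
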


We numerically compute the values $y_1,\ldots,y_{k-1}$. Given a equidistant discretization of the interval $[0,1]$ with step size $1/\ell$, we can solve each problem in time $\mathcal{O}(\log \ell)$ via bisection. Therefore, in $\mathcal{O}(k \log \ell)$ time we can compute the $k-1$ values $y_1,\ldots,y_{k-1}$ that by construction satisfy Constraints~\eqref{eq:system_y_1}-\eqref{eq:system_y_k_1}. If they also satisfy  Constraint~\eqref{eq:system_y_k} then we have successfully constructed a feasible solution. We run a bisection in $v \leq v_{\infty,k}^*$ to obtain a solution that is at most $\delta$ far away from the optimal solution, in $\mathcal{O}(\log 1/\delta)$ iterations of the subroutine.

Table~\ref{table:values_v} details the values obtained for $k=3,\ldots,10$. We numerically computed these values for $\ell=10^{12}$ and $\delta=10^{-8}$. We skip the description of $y_1,\ldots,y_{k-1}$ for conciseness.

\begin{table}
	\small
	\begin{center}
	\begin{tabular}{|c|c|}
	\hline
	$~k~$& Approximation\\
	\hline
	1 & \modif{$1-1/e$}\\
	2 & \modif{0.7080} \\
	3  & 0.7233 \\
	4 & 0.7321 \\
	5 & 0.7364 \\
	6 & 0.7389 \\
	7 & 0.7405 \\
	8 & 0.7416 \\
	9 & 0.7423 \\
	10 & 0.7428 \\
	$k$ large & $(1-\Theta(\log k / k))\bar{\gamma}$\\
$\infty$ & $\bar{\gamma}\approx 0.745$\\
	\hline
\end{tabular}
\end{center}
\caption{Numerical values of $v_{\infty,k}^*$ computed with the aforementioned procedure, where $k$ is the number of threshold used. \modif{For $k=1,2$ we utilized the values computed via $\gamma_{n,k}^*$.}}
\label{table:values_v}
\end{table}

\section{Final Remarks}

In this work, we examined a version of the classical i.i.d.\ prophet inequality problem where at most $k$ different thresholds are selected. We presented an infinite-dimensional linear program that exactly models a $k$-dynamic policy and has its approximation ratio as objective value. This linear program recovers the known guarantee $\gamma_{n,1}^*=1-(1-1/n)^n$ for a single threshold, and provides the new result $\gamma_{n,2}^* \approx 0.708$ for two thresholds. For larger values of $k$, we utilized a relaxation of this infinite linear program to  provide a lower bound over the optimal approximation factor that a $k$-dynamic policy can attain. We also characterized the optimal solutions using duality. Since the optimal value of this linear program, $v_{n,k}^*$, is hard to analyze for finite $n$, we provided an asymptotic analysis in $n$. We showed that for $k$ large, $v_{\infty,k}^* =\lim_n v_{n,k}^* = \bar{\gamma}\left( 1 - \Theta(\log k/ k) \right)$.

Surprisingly, even for small values of $k$ we already obtain significant improvements over the best static solution; for example, just considering $k=2$ thresholds yields an $11\%$ improvement. Furthermore, for $k=5$ thresholds our approximation is roughly $1.2\%$ away from the optimal approximation attainable by a fully dynamic solution. We hope this motivates the study of policies that constrain the number of decisions over time in different decision-making settings, including multi-unit/cardinality constraints, combinatorial constraints, and in other models, such as secretary problems.

As we saw in Section~\ref{sec:asymptotic_analysis}, our results exhibit a gap between $v_{n,k}^*$ and $\gamma_{n,k}^*$ that stems from dropping Constraint~\eqref{const:nonincreasing} and assuming that the windows are of fixed size. However, we also show that this gap vanishes as $k$ grows. The first part of the gap is the most influential one, as even when $k=1$, we already have $v_{\infty,k}^*=6/\pi^2$, while $\gamma_{\infty,k}^*=1-1/e$; see Corollary~\ref{cor:value_k_1}. 
By considering a formulation with extended indices (i.e.\ quantile $q$, index of the interval $t$, length of the interval $\tau_t$), we can eliminate the second part of the gap. This creates a more convoluted formulation from which we can recover algorithms in the same manner as in this work. However, the practical benefits of this may be minimal given how quickly the guarantees approach $\bar{\gamma}$ as $k$ grows. 
Furthermore, as we show in Appendix \ref{app:threshold_k_2}, for $k=2$ the solution that sets the interval lengths to $n/2$ guarantees a competitive ratio of $0.701$, while optimizing over the length of the intervals only improves the competitive ratio to $0.704$. 


{\bibliographystyle{amsplain}\fontsize{10}{0}\selectfont\bibliography{biblio}
}

\appendix
\small

\section{Missing Details in the Proof of Theorem~\ref{thm:optimal_cr}}\label{app:proof_thm_1}

\modif{In this section, we present the missing details in the proof of Theorem~\ref{thm:optimal_cr} that characterizes the optimal competitive ratio for $k$-dynamic algorithms. We prove the following:
\begin{enumerate}
	\item First, we fix the sizes of windows to $\tau_1,\ldots,\tau_k$ and we show that for any continuous CDF $F$ such that $\E[\max_{i\in [n]} X_i] = 1$ where $X_i$ has CDF $F$, we have $\E[X_A] \geq \gamma_{n,k}^*(\tau_1,\ldots,\tau_k)$, where $A$ is the algorithm that implements the optimal stopping rule with windows of size $\tau_1,\ldots,\tau_k$.
	\item Second, with the sizes of windows still fixed to $\tau_1,\ldots,\tau_k$, we take any feasible solution to problem $(D)$, say $(\mathbf{d},f)$, we show that $d_1\geq \E[X_A]$ using backward induction in $t\in [k]$, where $X_i$ follows the distribution with CDF $F$ given by $F^{-1}(u)=f_{1-u}$. With this, we can conclude that the optimal competitive ratio for algorithms with windows of size $\tau_1,\ldots,\tau_k$ is exactly $\gamma_{n,k}^*(\tau_1,\ldots,\tau_k)$.
\end{enumerate}
For the first claim, we define $F^{-1}(u)=\inf\{ x : F(x) = u  \}$, which is well-defined since $F$ is continuous. Then, $f_{u}=F^{-1}(1-u)$ is non-increasing and satisfies Constraint~\eqref{const:nonincreasing}. Moreover,
\[
1=\E[\max_i X_i] = \int_0^1 n(1-u)^{n-1}F^{-1}(1-u)\, \mathrm{d}u = \int_0^1 n(1-u)^{n-1} f_u \, \mathrm{d}u ,
\]
as shown in Proposition~\ref{prop:approximation_relaxed_LP} in Section~\ref{sec:prophet_bounded_threshold}. Let $d_t$ be the expected value obtained from $A$ starting in the $t$-th window. Then, $d_1=\E[X_A]$. Moreover,
\begin{align*}
	d_k&=\sup_{x\geq 0} \left\{ \left(1-\Prob(X \leq x)^{\tau_k} \right) \E[X \mid X\geq x]   \right\}\\
	& = \sup_{q\in [0,1]} \left\{  \left( \frac{1-(1-q)^{\tau_k}}{q}\right) \int_0^q F^{-1}(1-u)\, \mathrm{d}u \right\}.
\end{align*}
The first line follows because $A$ is optimal for $F$, and the second line holds by doing the change of variable $q=\Prob(X\geq x)$. A similar argument shows that
\[
d_t = \sup_{q\in [0,1]} \left\{  \left( \frac{1-(1-q)^{\tau_t}}{q}\right) \int_0^q F^{-1}(1-u)\, \mathrm{d}u  + (1-q)^{\tau_t} d_{t+1}\right\},
\]
for $t=1,\ldots,k-1$. From here, we obtain that $(\mathbf{d},f)$ is a feasible solution to $(D)$, which shows that $\E[X_A]=d_1\geq \gamma_{n,k}^*(\tau_1,\ldots,\tau_k)$.}

\modif{For the second claim, let $(\mathbf{d},f)$ be a feasible solution of $(D)$. Notice that $f$ has to be $L_1$-integrable, thus for any $\varepsilon>0$ it can be approximated by a nonnegative, nondecreasing continuous function $\widehat{f}$ such that $\int_0^1 |f_u - \widehat{f}_u|\, \mathrm{d}u \leq \varepsilon$. Without loss of generality, we can assume that $\widehat{f}$ also satisfies Constraint~\eqref{const:max_value_const}, since we can renormalize the function; this does not affect the monotonicity and nonnegativity of $\widehat{f}$, while the error $\int_0^1 |f_u - \widehat{f}_u|\, \mathrm{d}u$ worsens from $\varepsilon$ to at most $3 \varepsilon$, for small $\varepsilon$. Moreover, we can assume that $\widehat{f}$ is strictly decreasing. Indeed, consider $g_u=\widehat{f}_u + \varepsilon(1-u)$. Then, $g_u$ is nonnegative, continuous, strictly decreasing and a good approximation of $f_u$ in the $L_1$-norm. To avoid notational clutter, we assume $f$ already has these properties. Otherwise, the results remains true up to an error of the order $nk \varepsilon$, which can be made arbitrarily small.}

\modif{We now construct a distribution $\mathcal{D}$ from $f$ as follows. We define $X=f_{Q}$, where $Q\sim \mathrm{Unif}[0,1]$ is a uniform $[0,1]$ random variable. Since $f$ is strictly decreasing, then $f^{-1}$ is well-defined. Moreover, if $F$ denotes the CDF of $X$, then $F$ is continous and it can be shown that $F^{-1}(u) = f_{1-u}$. Thus,
\[
\E[X\mathbf{1}_{\{ X \geq x \}}] =\int_{x}^\infty y \, \mathrm{d}F(y) = \int_0^{q} f_{u} \, \mathrm{d}u ,
\]
where $q=\Prob(X\geq x)$. Now, consider the optimal algorithm $A$ that solves the dynamic program in each window $t$,
\[
\widehat{d}_t=\max_{x\geq 0} \left\{  \left( 1-\Prob(X\leq x)^{\tau_t} \right)\E[X\mid X\geq x]  + \Prob(X\leq x)^{\tau_t}d_{t+1} \right\} ,
\]
with $\widehat{d}_{k+1}=0$; then $\E[X_A]=\widehat{d}_1$. Notice that $A$ is independent of $A$ and only receives $F$ as an input. We now show via backward induction in $t=k,\ldots,1$ that $d_t\geq \widehat{d}_t$. Indeed, for $t=k$, we have
\begin{align*}
	\widehat{d}_{k} &= \sup_{x\geq 0} \left\{ \left( 1-\Prob(X\leq x)^{\tau_k} \right) \E[X\mid X\geq x]  \right\} \tag{Optimality of $A$} \\
	& = \sup_{q\in [0,1]} \left\{ \left(\frac{1-(1-q)^{\tau_k}}{q}\right) \int_0^q f_u \, \mathrm{d}u  \right\} \tag{Change of variable $q=\Prob(X\geq x)$}\\
	& \leq d_k,
\end{align*}
where in the last inequality we used Constraint~\eqref{const:dynamic_constr} for $t=k$. Assuming the result true for $t+1$, we can proceed similarly for $t$. Indeed,
\begin{align*}
	\widehat{d}_t &= \sup_{x\geq 0} \left\{ \left( 1-\Prob(X\leq x)^{\tau_t} \right) \E[X\mid X\geq x] + \Prob(X\leq x)^{\tau_t} \widehat{d}_{t+1}  \right\} \tag{Optimality of $A$}\\
	&\leq \sup_{x\geq 0} \left\{ \left( 1-\Prob(X\leq x)^{\tau_t} \right) \E[X\mid X\geq x] + \Prob(X\leq x)^{\tau_t}  d_{t+1} \right\} \tag{Induction}\\
	& \leq d_t ,
\end{align*}
where again we used Constraint~\eqref{const:dynamic_constr} in the last line. This shows  that $d_1 \geq \E[X_A]$. Since $OPT=\E[\max_i X_i] = \int_0^1 n (1-u)^{n-1} f_u \, \mathrm{d}u = 1$, then, the competitive ratio of $A$ is at most $d_1$. This remains true for any feasible $(\mathrm{d},f)$. Thus, minimizing over $(\mathrm{d},f)$, we conclude that the competitive ratio of $A$ is at most $\gamma_{n,k}^*(\tau_1,\ldots,\tau_k)$.
}

\section{Missing Details in the Proof of Theorem~\ref{prop:dual_P_nk}}\label{app:strong_dual}

First, we present the weak duality result between $(D)_{n,k}$ and $(P)_{n,k}$. We later present the strong duality.

\subsection{Weak Duality Between $(D)_{n,k}$ and $(P)_{n,k}$}
\modif{ For weak duality, we verify it only for solutions $(\mathbf{d},f)$ of $(D)_{n,k}$ with differentiable $f$, because $f$ has to be $L_1$-integrable and can be approximated by smooth functions to within an arbitrarily small error. Taking any $(\alpha,v,\eta)$ feasible for $(P)_{n,k}$, multiplying constraint~\eqref{const:P_nk_esp_max_value_const} with $f_u$ and integrating, we obtain
	\begin{align*}
		v \int_0^1 n(1-u)^{n-1} f_u \, \mathrm{d}u+ \int_0^1 f_u \frac{\mathrm{d}\eta_u}{\mathrm{d}u} \, \mathrm{d}u& \leq \int_0^1 \sum_{t=1}^k \int_u^1 \left( \frac{1-(1-q)^{\tau_t}}{q} \right) \alpha_{t,q} \, \mathrm{d}q  f_u \, \mathrm{d}u \\
		& = \int_0^1 \sum_{t=1}^k \alpha_{t,q} \left( \frac{1-(1-q)^{\tau_t}}{q} \right)\int_0^q f_u \, \mathrm{d}u \, \mathrm{d}q \tag{Change order of integration since everything is nonnegative}\\
		& \leq \int_0^1 \sum_{t=1}^{k-1} \alpha_{t,q}\left( d_t - (1-q)^{\tau_t}d_{t+1} \right)\, \mathrm{d}q +\int_0^1 \alpha_{k,q} d_k\, \mathrm{d}q \tag{By using Constraint~\eqref{const:dynamic_constr}}\\
		& =  d_1 \int_{0}^1 \alpha_{1,q}\, \mathrm{d}q +\sum_{t=2}^{k} d_t\left( \int_0^1 \alpha_{t,q} \, \mathrm{d}q - \int_0^1 (1-q)^{\tau_t}\alpha_{t-1,q} \, \mathrm{d}q  \right)\\
		& \leq d_1 . \tag{By using Constraints~\eqref{const:constr_1_P_nk} and~\eqref{const:constr_t_P_nk}}
	\end{align*}
	The conclusion now follows by showing that $\int_0^1 f_u (\mathrm{d}\eta_u/\mathrm{d}u)\, \mathrm{d}u \geq 0$. Indeed, intergrating by parts,
	\[
	\int_0^1 f_u \frac{\mathrm{d} \eta_u}{\mathrm{d}u} \, \mathrm{d}u = f_u \eta_u \bigg|_{u=0}^{u=1} - \int_0^1 \eta_u \frac{\mathrm{d} f_u}{\mathrm{d}u}\, \mathrm{d}u  =  -\int_0^1 \eta_u \frac{\mathrm{d} f_u}{\mathrm{d}u}\, \mathrm{d}u \geq 0,
	\]
	where in the second equality we used the initial condition of $\eta$ given by Constraints~\eqref{const:P_nk_initial_condition} and the nonincreasing property of $f$ given by Constraint~\eqref{const:nonincreasing}, which implies that the derivative of $f$ is nonpositive.}

\subsection{Strong Duality Between $(D)_{n,k}$ and $(P)_{n,k}$}


\modif{We now show strong duality in Theorem~\ref{prop:dual_P_nk}. We do this by approximating both problems with a discretization of the $[0,1]$ interval. Let $\lambda_{n,k}^*=\lambda_{n,k}^*(\tau_1,\ldots,\tau_k)$ be the optimal value of $(P)_{n,k}$ in Theorem~\ref{prop:dual_P_nk}. We aim to show $\gamma_{n,k}^* = \lambda_{n,k}^*$ to conclude strong duality. For a large integer $m$, consider the following finite-dimensional linear minimization problem,
	\begin{subequations}\label{form:D_nkm}\small
		\begin{align}
			\min_{\substack{\mathbf{d}\in \R_+^k \\ \mathbf{f}\in \R^{m+1}_+}} \quad & d_1 \label{eq:finite_primal_1threshold_obj}\\
			(D)_{n,k,m}\qquad \text{s.t.} \quad & d_t \geq \left(\frac{1-(1-i/m)^{\tau_t}}{i/m}\right) \sum_{\ell=0}^i f_\ell \frac{1}{m}+ (1-i/m)^{\tau_t} d_{t+1}, & \forall t\in[k],\forall i\in [m] \label{const:finite_dynamic_constr} \\
			\quad &  \sum_{\ell=0}^m n \left(1-\frac{\ell}{m}\right)^{n-1} f_\ell \frac{1}{m} = 1,  & \label{const:finite_max_value_const} \\
			\quad & f_{\ell-1} \geq f_{\ell},& \forall \ell \in [m] , \label{const:finite_nonincreasing}
		\end{align}
	\end{subequations}
	and its dual,
	\begin{subequations}\label{form:P_nkm}\small
		\begin{align}
			\max_{\substack{ \alpha \in \R_+^{k\times m} \\ \eta\in  \R_+^{m+2}} }\quad & v  \label{eq:finite_P_nk_eps_primal_1threshold_obj}\\
			(P)_{n,k,m}\qquad \text{s.t.} \quad & \sum_{i=1}^m \alpha_{1,i}\leq 1, & \label{const:finite_alpha_constr_1} \\
			\quad & \sum_{i=1}^m \alpha_{t+1,i}  \leq\sum_{i=1}^m (1-i/m)^{\tau_t} \alpha_{t,i},  & \forall t\in [k-1] ,\label{const:finite_alpha_constr_t}\\
			\quad &   v \frac{n}{m}\left(1-\frac{\ell}{m}\right)^{n-1} + \eta_{\ell+1} - \eta_{\ell} \leq \sum_{t=1}^k \sum_{i=\max\{\ell,1\}}^m \left( \frac{1-(1-i/m)^{\tau_t}}{i/m}   \right)\frac{\alpha_{t,i}}{m}, & \forall \ell \in \{0\}\cup [m]  , \label{const:finite_P_nk_esp_max_value_const} \\
			\quad & \eta_{0} = 0, \eta_{m+1}=0 , \label{const:finite_P_nk_initial_condition}
		\end{align}
	\end{subequations}
where $\alpha_{t,i}$ is the dual variable associated with Constraint~\eqref{const:finite_dynamic_constr}, $v$ with Constraint~\eqref{const:finite_max_value_const}, and $\eta_\ell$ with Constraint~\eqref{const:finite_nonincreasing} for $\ell=1,\ldots,m$; we add $\eta_0$ and $\eta_{m+1}$ to write Constraint~\eqref{const:finite_P_nk_esp_max_value_const} in closed form. Strong duality between these two programs holds since $(P)_{n,k,m}$ is feasible, with a finite optimal value. We show that feasible solutions of $(D)_{n,k,m}$ produce feasible solutions to $(D)_{n,k}$ up to some small error. We also show that feasible solutions of $(P)_{n,k,m}$ converge to solutions of $(P)_{n,k}$ when $m\to \infty$. If $\gamma_{n,k,m}^*$ and $\lambda_{n,k,m}^*$ denote the optimal values of the problems $(D)_{n,k,m}$ and $(P)_{n,k,m}$ respectively, the first result and the strong duality between the problems $(D)_{n,k,m}$ and $(P)_{n,k,m}$ show that
	\[
	\gamma_{n,k}^* -\varepsilon \leq \gamma_{n,k,m}^* = \lambda_{n,k,m}^*.
	\]
Our second result shows that $\limsup_{m} \lambda_{n,k,m}^* \leq \lambda_{n,k}^*$, and this proves strong duality. We formalize this in the following two propositions.
}

\modif{\begin{proposition}
		Let $n\geq 10$ and $m\geq n^3$. Then $\gamma_{n,k,m}^* \geq (1 - 12 n/\sqrt{m})\gamma_{n,k}^*$, where $\gamma_{n,k}^*=\gamma_{n,k}^*(\tau_1,\ldots,\tau_k)$ denotes the optimal value of $(D)_{n,k}$.
\end{proposition}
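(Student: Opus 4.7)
The strategy is to lift any feasible $(d,f)$ of $(D)_{n,k,m}$ to a feasible $(\tilde d,\tilde f)$ of $(D)_{n,k}$ whose objective satisfies $\tilde d_1\le d_1/(1-12n/\sqrt{m})$; taking the infimum over $(d,f)$ then yields the claim via $\gamma_{n,k}^{*}\le\tilde d_1$. The lift uses the left-continuous step function $f'_u=f_{\lfloor mu\rfloor}$ on $[0,1)$, which is nonincreasing in $u$ because $f$ is nonincreasing in $\ell$, together with two multiplicative corrections: $c=1/S$ where $S=\int_0^1 n(1-u)^{n-1}f'_u\,\mathrm{d}u$ to restore normalization, and a ``window correction'' $C_t=\prod_{s=t}^{k}(1-\tau_s/m)^{-1}$ applied to $d$ to handle the behaviour near $q=0$. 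The final lift will be $\tilde f=cf'$ and $\tilde d_t=cC_t d_t$.

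\textbf{Step 1 (normalization gap and constraints on grid points).} Writing $S$ as the telescoping sum $\sum_{\ell=0}^{m-1}f_\ell[(1-\ell/m)^n-(1-(\ell+1)/m)^n]$ and comparing with the discrete normalization \eqref{const:finite_max_value_const}, a mean-value estimate bounds the per-$\ell$ discrepancy by $n(n-1)(1-\ell/m)^{n-2}/m^2$. To control $\sum_\ell f_\ell(1-\ell/m)^{n-2}$ I would split at $\ell^{*}=m-\lceil\sqrt{m}\rceil$: for $\ell\le\ell^{*}$ use $(1-\ell/m)^{n-2}\le\sqrt{m}(1-\ell/m)^{n-1}$ together with the normalization, and for $\ell>\ell^{*}$ use $f_\ell\le f_0\le m/n$ (from isolating the $\ell=0$ term of \eqref{const:finite_max_value_const}). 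This gives $1-S\le 3n/\sqrt{m}$. For the continuous dynamic constraint at $q=i/m$ with $i\ge 1$, the identity $\int_0^{i/m}f'_u\,\mathrm{d}u=\tfrac{1}{m}\sum_{\ell=0}^{i-1}f_\ell$ combined with \eqref{const:finite_dynamic_constr} gives $g(i/m)\le d_t-\phi_t(i/m)f_i/m\le d_t$, where $g$ and $\phi_t$ denote the right-hand side and the leading coefficient in \eqref{const:dynamic_constr}. Between grid points, both $q\mapsto(1-(1-q)^{\tau_t})/q$ and $q\mapsto(1-q)^{\tau_t}$ are decreasing, and $\phi_t(q)(q-i/m)f_i\le\phi_t(i/m)f_i/m$; these observations telescope to $g(q)\le g(i/m)+\phi_t(i/m)f_i/m\le d_t$, so $(d,f')$ already satisfies the continuous constraint on all of $[1/m,1]$.

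\textbf{Step 2 (boundary at $q=0$ and the window correction).} On $[0,1/m]$ the right-hand side of \eqref{const:dynamic_constr} reduces to $(1-(1-q)^{\tau_t})f_0+(1-q)^{\tau_t}d_{t+1}$, and at $q=0$ it degenerates to $d_{t+1}$, so the continuous constraint demands $d_t\ge d_{t+1}$, which is not implied by any single discrete constraint. However, \eqref{const:finite_dynamic_constr} at $q=1/m$ yields $d_t\ge(1-1/m)^{\tau_t}d_{t+1}\ge(1-\tau_t/m)d_{t+1}$ by Bernoulli, and this slack is exactly what $C_t$ absorbs: setting $\hat d_t=C_t d_t$ gives $\hat d_t/\hat d_{t+1}=(1-\tau_t/m)^{-1}(d_t/d_{t+1})\ge 1$. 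The function $q\mapsto(1-(1-q)^{\tau_t})f_0+(1-q)^{\tau_t}\hat d_{t+1}$ has derivative of sign $f_0-\hat d_{t+1}$ and is therefore monotone on $[0,1/m]$, so its maximum is attained at an endpoint: at $q=0$ it is $\hat d_{t+1}\le\hat d_t$, and at $q=1/m$ one uses $C_t\ge\max\{1,C_{t+1}\}$ together with \eqref{const:finite_dynamic_constr} (scaled by $C_t$) to dominate it by $\hat d_t$. The same $C_t$-scaling of the argument in Step~1 keeps the constraints on $[1/m,1]$ valid.

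\textbf{Step 3 (finish).} Rescaling by $c=1/S$ is homogeneous in $(d,f)$ and preserves the dynamic constraint, so $(\tilde d,\tilde f)=(cC\cdot d,cf')$ is feasible for $(D)_{n,k}$ and produces $\gamma_{n,k}^{*}\le\tilde d_1=cC_1 d_1\le d_1/[(1-3n/\sqrt{m})(1-n/m)]$. Using $C_1\le 1/(1-n/m)\le 1+2n/m$ and $n/m\le n/\sqrt{m}$ for $m\ge n^3$, the denominator is at least $1-4n/\sqrt{m}\ge 1-12n/\sqrt{m}$; the hypotheses $n\ge 10$ and $m\ge n^3$ also ensure the lower-order cross terms are absorbed. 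Taking the infimum over feasible $(d,f)$ gives $\gamma_{n,k}^{*}\le\gamma_{n,k,m}^{*}/(1-12n/\sqrt{m})$, i.e., the stated inequality. The cleanest part of the argument is the lift on $[1/m,1]$, which is essentially free from monotonicity of $\phi_t,\psi_t$. The main obstacle is the boundary $q=0$: the required $d_t\ge d_{t+1}$ is not enforced by the discrete system, and additive shifts fail because they are incompatible with the subsequent normalization rescaling. The multiplicative $C_t$ is the right object because it matches precisely the Bernoulli slack $(1-\tau_t/m)$ arising at the first grid point, and its aggregate cost $C_1\le 1/(1-n/m)$ is small enough to fold into the $O(n/\sqrt{m})$ error.
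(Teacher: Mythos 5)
Your proof is correct and takes a genuinely different (and arguably cleaner) technical route than the paper, while achieving the same $O(n/\sqrt m)$ error.

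The paper proceeds by (i) first truncating $f$ so that $f_i=0$ for $i\ge m-\sqrt m$, costing a factor $1+2n/m$; (ii) defining $g_u = \frac{1}{1-1/\sqrt m}f_{\lfloor mu\rfloor}$; (iii) setting $d'_t=(1+4n/\sqrt m)\,d_t$; and (iv) verifying feasibility of $(d',g)$ using the ratio estimate $(1-q)^j/(1-i/m)^j\le 1+4n/\sqrt m$ valid only for $i\le m-\sqrt m/2$, which is exactly why the truncation in (i) is needed. You sidestep both the truncation and the ratio estimate by observing that $\phi_t(q)=(1-(1-q)^{\tau_t})/q$ and $\psi_t(q)=(1-q)^{\tau_t}$ are decreasing in $q$, so that the discrete constraint at $q=i/m$ (whose sum runs up to $\ell=i$, not $i-1$) directly dominates the continuous constraint on the whole cell $(i/m,(i+1)/m]$. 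Your exact renormalization by $c=1/S$ together with the telescoping-plus-MVT bound on $1-S$ replaces the paper's fixed $1/(1-1/\sqrt m)$ rescaling; the arithmetic both yield something comfortably inside the $1-12n/\sqrt m$ budget, so the bound direction is fine.

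The other genuine divergence is your treatment of $q=0$. For $t<k$, constraint~\eqref{const:dynamic_constr} at $q=0$ degenerates to $d_t\ge d_{t+1}$, which the discrete program only enforces in the weakened form $d_t\ge(1-1/m)^{\tau_t}d_{t+1}$. The paper only works out the case $t=k$ (where $d_{k+1}=0$ makes the boundary vacuous) and asserts that $t<k$ is "completely analogous," but the uniform rescaling $d'_t=(1+4n/\sqrt m)d_t$ does not repair the missing monotonicity $d_t\ge d_{t+1}$. Your multiplicative correction $C_t=\prod_{s=t}^k(1-\tau_s/m)^{-1}$ is designed precisely to absorb the Bernoulli slack $(1-\tau_t/m)$ at the first grid point, restores $\hat d_t\ge\hat d_{t+1}$, and has total cost $C_1\le(1-n/m)^{-1}$, which is negligible. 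This is a real improvement: you have noticed and fixed a boundary case that the paper's write-up glosses over. The only minor point worth polishing in a final version is the verification that scaling $d_t$ by $C_t$ and $d_{t+1}$ by $C_{t+1}\le C_t$ still preserves the constraint on $[1/m,1]$ (it does, since the inequality is monotone in both $C_t$ and the ratio $C_t/C_{t+1}\ge 1$), which you state but do not spell out.
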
}
\begin{proof}
	\modif{To avoid notational clutter, we assume $\sqrt{m}$ is integer; the proof is analogous if we replace $\sqrt{m}$ by $\lceil \sqrt{m}\rceil$. Given a solution $(\mathbf{d},\mathbf{f})$ of $(D)_{n,k,m}$, we can find another solution $(\bar{\mathbf{d}},\bar{\mathbf{f}})$ of $(D)_{n,k,m}$ where $\bar{f}_i=f_i$ for $i<m-\sqrt{m}$, $\bar{f}_i=0$ for $i\geq m-\sqrt{m}$ and $\bar{d}_1\leq (1+2n/m) d_1$. In other words, we can assume that $f$ is $0$ in the last $\sqrt{m}$ coordinates by paying a small multiplicative loss in the objective value; hence, from now on we assume that $f_i=0$ for $i\geq m-\sqrt{m}$. Let
		\[
		g_u = \frac{1}{1- 1/\sqrt{m}} \sum_{i=1}^m f_{i-1} \mathbf{1}_{[(i-1)/m,i/m)}(u).
		\]
		Note that $g$ is nonincreasing in $u$ and $g_u=0$ for $u\geq 1-1/\sqrt{m}$. Let $d_t'=(1+4n/\sqrt{m}) d_t$. We are going to show that $(\mathbf{d}',g)$ is feasible for $(D)_{n,k}$. First,
		\[
		\int_0^1 n (1-u)^{n-1} g_u \, \mathrm{d}u = \frac{1}{1-1/\sqrt{m}}\sum_{i=1}^m f_{i-1} \int_{(i-1)/m}^{i/m} n(1-u)^{n-1}\, \mathrm{d}u \geq \sum_{i=1}^m f_{i-1} n \left(1- \frac{i}{m}\right)^{n-1}\frac{1}{m} =1.
		\]
		Thus, $g$ satisfies Constraint~\eqref{const:max_value_const}. This is enough for feasibility; if the inequality is strict, we can decrease $g$ and this can only help the minimization in the objective value.}
	
	\modif{We now show that $(\mathbf{d}',g)$ satisfies Constraint~\eqref{const:dynamic_constr} for all $t=1,\ldots,k$. We detail the case $t=k$ and skip $t<k$ for brevity; the latter argument is completely analogous to the former. }
	\modif{We need the following approximation for the proof. For $i\leq m-\sqrt{m}/2$ and $q\in [(i-1)/m,i/m]$, we have
		\begin{align}
			\frac{(1-q)^j}{(1-i/m)^j} \leq \left( 1 + \frac{1}{m-i} \right)^j \leq \left( 1 + \frac{2}{\sqrt{m}} \right)^n \leq e^{2n/\sqrt{m}} \leq 1+ \frac{4n}{\sqrt{m}}, \label{ineq:approximation_1}
		\end{align}
		where we used the fact that $m\geq n^3$, which implies $2n/\sqrt{m}\leq 2/\sqrt{n}\leq 1$ and that $e^x \leq 1+2x$ holds for $x\in [0,1]$.}
	
	\modif{Now we show that $(\mathbf{d}',g)$ satisfies Constraint~\eqref{const:dynamic_constr} for $t=k$. Let $i\leq m-\sqrt{m}/2$ and $q\in [(i-1)/m,i/m]$. Then,
		\[
		\left( \frac{1-(1-q)^{\tau_k}}{q} \right) \int_0^q g_u \, \mathrm{d}u \leq \left( \frac{1+4n/\sqrt{m}}{1-1/\sqrt{m}} \right) \left( \frac{1-(1-i/m)^{\tau_k}}{i/m} \right) \sum_{j=0}^i f_j \frac{1}{m} \leq  \left( 1+ \frac{8n}{\sqrt{m}} \right) d_k ,
		\]
		where we used Inequality~\eqref{ineq:approximation_1} and the feasibility of $(\mathbf{d},\mathbf{f})$ in $(D)_{n,k,m}$.	Likewise, for $i>m-\sqrt{m}/2$ we have
		\[
		\int_0^q g_u \, \mathrm{d}u = \sum_{j=0}^i f_j \frac{1}{m} = \sum_{j=0}^{m-\sqrt{m}} f_j \frac{1}{\sqrt{m}} ,
		\]
		since $f_j=0$ for $j\geq m-\sqrt{m}$, and thus for $q\in [(i-1)/m,i/m]$ we have
		\[
		\left( \frac{1-(1-q)^{\tau_k}}{q} \right) \int_0^q g_u \, \mathrm{d}u \leq \left(  \frac{1-(1-(m-\sqrt{m})/m)^{\tau_k}}{(m-\sqrt{m})/m} \right) \sum_{j=0}^{m-\sqrt{m}} f_i \frac{1}{m} \leq d_k,
		\]
where we used the fact that $(1-(1-q)^{\tau_k})q^{-1}=\sum_{j=0}^{\tau_k-1}(1-q)^j$ is decreasing. Thus, $(\mathbf{d}',g)$ satisfies Constraint~\eqref{const:dynamic_constr} for $t=k$; the case for $ t < k $ can be argued similarly. 
Thus, $(\mathbf{d}',g)$ is feasible for $(D)_{n,k}$ and has objective value $(1-4n/\sqrt{m}) d_1$. The result now follows, taking into account the additional multiplicative loss from assuming $f_i=0$ for $i\geq m-\sqrt{m}$. }
\end{proof}

\modif{\begin{proposition}
		Let $n\geq 10$ and $m\geq n^3$. Then $\lambda_{n,k,m}^*(1-n/\sqrt{m})\leq \lambda_{n,k}^*$.
\end{proposition}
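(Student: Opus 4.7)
The plan is to convert any feasible $(\alpha,v,\eta)$ of $(P)_{n,k,m}$ into a feasible $(\bar\alpha,v',\bar\eta)$ of $(P)_{n,k}$ with $v'\ge v(1-n/\sqrt{m})$; combined with the previous proposition and ordinary LP strong duality between $(D)_{n,k,m}$ and $(P)_{n,k,m}$, this will close the loop to establish strong duality in Theorem~\ref{prop:dual_P_nk}.

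Before constructing the continuous solution, I would first modify the discrete one so that $\eta_\ell=0$ for all $\ell>m-\lceil\sqrt{m}\rceil$. The cost is at most a factor $1-O(n/\sqrt{m})$ in $v$: in this tail range the discrete coefficient $(n/m)(1-\ell/m)^{n-1}$ is bounded by $(n/m)\,m^{-(n-1)/2}$, so zeroing $\eta$ there strains the discrete constraints only by a negligible amount that a small decrease in $v$ easily absorbs. Then I would define
\[
\bar\alpha_{t,q}=\sum_{i=1}^m \alpha_{t,i}\,\delta_{\{i/m\}}(q),
\]
a sum of Dirac masses at the grid points, and take $\bar\eta_u$ to be the piecewise linear interpolation through the nodes $(\ell/m,\eta'_\ell)$ for $\ell=0,1,\dotsc,m$, where $\eta'$ is the sequence produced by the tail step. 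By construction $\bar\eta_0=\bar\eta_1=0$ and the derivative $\mathrm d\bar\eta_u/\mathrm du$ on $(\ell/m,(\ell+1)/m)$ equals $m(\eta'_{\ell+1}-\eta'_\ell)$.

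Constraints~\eqref{const:constr_1_P_nk} and~\eqref{const:constr_t_P_nk} would follow immediately from the discrete analogues: $\int_0^1\bar\alpha_{1,q}\,\mathrm dq=\sum_i\alpha_{1,i}\le 1$ and $\int_0^1(1-q)^{\tau_t}\bar\alpha_{t,q}\,\mathrm dq=\sum_i(1-i/m)^{\tau_t}\alpha_{t,i}\ge\sum_i\alpha_{t+1,i}=\int_0^1\bar\alpha_{t+1,q}\,\mathrm dq$. The technical heart is the a.e.\ Constraint~\eqref{const:P_nk_esp_max_value_const}. On the subinterval $(\ell/m,(\ell+1)/m)$ the LHS reads $v'\,n(1-u)^{n-1}+m(\eta'_{\ell+1}-\eta'_\ell)$ while the RHS is the step sum $\sum_t\sum_{i\ge\ell+1}\alpha_{t,i}(1-(1-i/m)^{\tau_t})/(i/m)$. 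I would compare this with the discrete Constraint~\eqref{const:finite_P_nk_esp_max_value_const} at the appropriate index and absorb the two sources of discretization error -- the trivial $(1-u)^{n-1}/(1-\ell/m)^{n-1}\le 1$ and the cross-ratio $(1-\ell/m)^{n-1}/(1-(\ell+1)/m)^{n-1}\le 1+O(n/\sqrt{m})$, valid for $\ell\le m-\sqrt{m}$ -- into the multiplicative factor $1-n/\sqrt{m}$ in $v$.

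The main obstacle I expect is the slight mismatch between the discrete index set ($i\ge\max\{\ell,1\}$ in the constraint at index $\ell$) and the continuous one ($i\ge\ell+1$ on $(\ell/m,(\ell+1)/m)$). I plan to resolve it by combining adjacent discrete constraints and exploiting the nonnegativity $R_\ell-R_{\ell+1}=\sum_t\alpha_{t,\ell}(1-(1-\ell/m)^{\tau_t})/(\ell/m)\ge 0$, together with the tail modification from the first step, so that the resulting slack still fits inside the $(1-n/\sqrt{m})$ budget and yields a feasible $(\bar\alpha,v',\bar\eta)$ of $(P)_{n,k}$, finishing the proof.
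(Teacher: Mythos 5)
The central difficulty you identify — that with Dirac masses at $i/m$, the continuous right-hand side of~\eqref{const:P_nk_esp_max_value_const} on $(\ell/m,(\ell+1)/m)$ picks up only indices $i\geq\ell+1$, whereas the discrete constraint at index $\ell$ sums over $i\geq\ell$ — is exactly where the argument breaks, and the fix you propose does not close it. Writing $R_\ell=\sum_t\sum_{i\geq\ell}\alpha_{t,i}\frac{1-(1-i/m)^{\tau_t}}{i/m}$, the continuous RHS equals $R_{\ell+1}=R_\ell-\sum_t\alpha_{t,\ell}\frac{1-(1-\ell/m)^{\tau_t}}{\ell/m}$, and the nonnegativity of the discarded term is precisely the \emph{wrong} sign: it makes the continuous constraint strictly harder than the discrete one, not easier. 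The discarded mass can be of order $n/k$ (e.g.\ $\alpha_{1,1}=1$ and all else zero), which swamps the budget $v\cdot O(n^2/\sqrt{m})\leq O(\sqrt{n})$ available from a $(1-n/\sqrt{m})$ degradation. Switching to the discrete constraint at index $\ell+1$ fixes the index range but then produces the mismatched increment $\eta_{\ell+2}-\eta_{\ell+1}$ where your piecewise-linear $\bar\eta$ contributes $m(\eta_{\ell+1}-\eta_\ell)$ on that subinterval; realigning $\bar\eta$ by one grid step forces $\bar\eta_0=\eta_1$, which is not zero.

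The paper sidesteps all of this by placing the Dirac masses at $(i+1)/m$ rather than $i/m$, so that for $u\in(\ell/m,(\ell+1)/m)$ the integral $\int_u^1$ picks up exactly $i\geq\ell$, matching the discrete constraint at $\ell$ and leaving the $\eta$ increments aligned with the untranslated interpolation. The price of the shift is only that $\frac{1-(1-(i+1)/m)^{\tau_t}}{(i+1)/m}$ is slightly smaller than $\frac{1-(1-i/m)^{\tau_t}}{i/m}$, which (after first forcing $\alpha_{t,i}=0$ for $i\geq m-\sqrt m$) is a controlled $1-O(1/\sqrt m)$ multiplicative loss absorbed by the scaling factors $(1-2/\sqrt m)^{t-1}$ built into $\hat\alpha$ and the choice $\hat v=(1-2/\sqrt m)^n v$. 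A secondary remark: your opening step of zeroing the tail of $\eta$ is not a priori a cheap operation either — reducing $\eta_\ell$ for $\ell\geq L$ by varying amounts can tighten the constraint at $\ell=L$ if $\eta$ is not monotone there; the paper first argues (via tightening the tail constraints) that $\eta$ can be assumed nonincreasing past $m-\sqrt m+1$, and then subtracts the constant $\eta_m$, which is a constraint-preserving shift. You would need to incorporate both the $(i+1)/m$ shift and a careful tail argument for the proof to go through.
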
}
\begin{proof}
	\modif{The proof of this proposition follows a similar scheme as the previous one. We take a feasible solution of $(P)_{n,k,m}$ and construct a feasible solution of $(P)_{n,k}$, with error decreasing in $m$.} 
	\modif{Let $(\alpha,\eta,v)$ be a solution of $(P)_{n,k,m}$. Similarly to the previous proof, we can show that there is another solution $(\bar{\alpha}, \bar{\eta},\bar{v})$ of $(P)_{n,k,m}$ such that $\alpha_{t,i}=0$ for $i\geq m-\sqrt{m}$ and $\bar{v}\geq (1-n/\sqrt{m}) v$. From now on, we assume that $\alpha_{t,i}=0$ for $i\geq m-\sqrt{m}$. Consider the following function $\hat{\alpha}:[0,1]\to \R_+$,
		\[
		\hat{\alpha}_{t,q}=\left(1- \frac{2}{\sqrt{m}}\right)^{t-1}\sum_{i=1}^{m-\sqrt{m}}  \alpha_{t,i}  \delta_{\{(i+1)/m \}} (q) = \left(1- \frac{2}{\sqrt{m}}\right)^{t-1} \sum_{i=1}^{m-1}  \alpha_{t,i}  \delta_{\{(i+1)/m \}} (q) ,
		\]
		where $\delta_{\{x_0\}}$ is the dirac function with mass in $x_0$. The function $\hat{\alpha}_{t}$ can be transformed into a proper function by replacing $\delta_{\{i/m\}}$ by $(1/\varepsilon)\mathbf{1}_{(i/m-\varepsilon/2,i/m+\varepsilon/2)}$ with $\varepsilon<1/m$; the rest of the proof remains almost unchanged with $\varepsilon\to 0$. Note that
		\[
		\int_0^1 \hat{\alpha}_{1,q} \, \mathrm{d}q =\sum_{i=1}^{m-1} \alpha_{1,i} \leq 1 ,
		\]
		and for $t\geq 1$,
		\begin{align*}
			\int_{0}^1 \hat{\alpha}_{t,q} (1-q)^{\tau_{t}} \, \mathrm{d}q &= \left(1- \frac{2}{\sqrt{m}} \right)^{t-1} \sum_{i=1}^m (1-(i+1)/m)^\tau \alpha_{t,i} \\
			&\geq \left(1- \frac{2}{\sqrt{m}}\right)^t \sum_{i=1}^m (1-i/m)^{\tau_t} \alpha_{t,i} \tag{since $\alpha_{t,i}=0$ for $i\geq m-\sqrt{m}$}\\
			& = \int_0^1 \hat{\alpha}_{t,q} \, \mathrm{d}q.
		\end{align*}
		Hence, Constraints~\eqref{const:constr_1_P_nk}-\eqref{const:constr_t_P_nk} hold. Before constructing the function $\hat{\eta}\in \mathcal{C}$, we make some assumptions over $\eta$ that do not change the feasibility of $(\alpha,\eta,v)$. Since for $\ell > m-\sqrt{m}$ we have $\alpha_{t,\ell}=0$, Constraint~\eqref{const:finite_max_value_const} gives us the following set of inequalities,
		\begin{align*}
			v n (1-(m-\sqrt{m}+1)/m)^{n-1}/m + \eta_{m-\sqrt{m}+2}-\eta_{m-\sqrt{m}+1} &\leq 0\\
			v n (1-(m-\sqrt{m}+2)/m)^{n-1}/m + \eta_{m-\sqrt{m}+3}-\eta_{m-\sqrt{m}+2} &\leq 0\\
			&\,\,\,\vdots\\
			v n (1-(m-1)/m)^{n-1}/m + \eta_{m}-\eta_{m-1} &\leq 0\\
			\eta_{m+1}-\eta_{m} &\leq  0 .
		\end{align*}
		From here, first, we notice that we can assume that all inequalities except maybe for the last one can be set to equality without affecting the feasibility of $(\alpha,\eta,v)$. Indeed, start by increasing $\eta_{m-\sqrt{m}+2}$ until the first inequality tightens. Notice that this step does not affect the other inequalities and keeps $\eta\geq 0$. Keep doing this until all inequalities except maybe for the last one are tightened. From here, we get that $\eta_\ell$ is decreasing in $\ell$ for $\ell \geq m-\sqrt{m}+2$. Now, note that reducing the value of $\eta_{\ell}$ for $\ell \geq m-\sqrt{m}+2$ by an scalar $\varepsilon>0$ does not affect the feasibility of $(\alpha,\eta,v)$ as long as $\varepsilon\leq \eta_m$. Hence, we can reduce $\eta_{m-\sqrt{m}+2},\ldots,\eta_{m-1},\eta_{m}$ by $\eta_m\geq 0$ without affecting the feasibility of $(\alpha,\eta,v)$. This shows that we can assume $\eta_m=0$.}
	
	\modif{We define the function $\hat{\eta}:[0,1]\to \R_+$ as follows:
		\[
		\hat{\eta}_u = \left( 1- \frac{2}{\sqrt{m}} \right)^{n} \left( \left( u m - \ell \right)(\eta_{\ell+1}-\eta_\ell) + \eta_{\ell}\right), \quad \text{for }u\in [\ell/m,(\ell+1)/m].
		\]
		Note that $\mathrm{d}\hat{\eta}_u / \mathrm{d}u = \left( 1- \frac{2}{\sqrt{m}} \right)^{n}  (\eta_{\ell+1}- \eta_\ell)m$ for all $u\in (\ell/m,(\ell+1)/m)$, $\ell=0,1,\ldots,m-1$. Moreover, $\hat{\eta}$ is a nonnegative continuous function with $\hat{\eta}_0=\eta_0$ and $\bar{\eta}_{1}= \eta_m=0$. Now, let $\hat{v}=(1-2/\sqrt{m})^{n} v$. Then, for $\ell=0,\ldots, m-1$ and $u\in (\ell/m,(\ell+1)/m)$ we have
		\begin{align*}
			\sum_{t=1}^k \int_u^1 \hat{\alpha}_{t,q} \left( \frac{1-(1-q)^{\tau_t}}{q} \right)\, \mathrm{d}q & \geq \left( 1- \frac{2}{\sqrt{m}} \right)^{n-1} \sum_{t=1}^k \sum_{i=\ell}^{m-1} \alpha_{t,i} \left(\frac{1-(1-(i+1)/m)^{\tau_t}}{(i+1)/m}\right) \\
			& \geq \left( 1- \frac{2}{\sqrt{m}} \right)^{n} \sum_{t=1}^k \sum_{i=\ell}^{m-1} \alpha_{t,i} \left(\frac{1-(1-i/m)^{\tau_t}}{i/m}\right) \\
			& \geq \left( 1- \frac{2}{\sqrt{m}} \right)^{n}  \left(m(\eta_{\ell+1}-\eta_\ell) + vn (1-\ell/m)^{n-1}  \right)  \\
			& \geq \frac{\mathrm{d} \hat{\eta}_u}{\mathrm{d}u} + \hat{v} n (1-u)^{n-1}.
		\end{align*}
		This shows that $(\alpha,\hat{\eta},\hat{v})$ is feasible for $(P)_{n,k}$. The conclusion of the proposition follows from here.}
\end{proof}

\section{Missing Proofs from Section~\ref{sec:existence_characterization_solutions}}\label{app:existence_characterization_solutions}

\begin{proof}[Proof of Proposition~\ref{prop:epsilon_decreasing_gamma}]
	For short, we are going to write $\varepsilon_t'$ to denote $\varepsilon_t'(v)$, the derivative of $\varepsilon_t$ with respect to $v$. The proof of the proposition proceeds by induction in $t$. For $t=1$ we have
	\[
	\frac{1}{n(n-1)v} =  \int_{0}^{\varepsilon_1} \frac{q}{1-(1-q)^{\tau}} (1-q)^{n-2} \, \mathrm{d}q \implies \frac{-1}{v^2 n(n-1)} = \frac{\varepsilon_1 (1-\varepsilon_1)^{n-2}}{1-(1-\varepsilon_1)^{\tau_1}} \varepsilon_1'.
	\]
	From here, we can deduce that $\varepsilon_1' <0 $. Before proving the inductive case, let's show an identity.
	\begin{claim}~\label{claim:identity_recursion_1}
		For any $t=0,\ldots,k-1$, 
		\[
		\int_{\varepsilon_t}^{\varepsilon_{t+1}} \frac{q(1-q)^{n-2}}{1-(1-q)^{\tau_{t+1}}} \, \mathrm{d}q =\frac{1}{v n(n-1)} - \int_{0}^{\varepsilon_t} q (1-q)^{n-2} \, \mathrm{d}q.
		\]
	\end{claim}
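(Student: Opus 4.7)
The plan is to prove the identity by induction on $t$, leveraging the system of defining equations for $\varepsilon_1,\ldots,\varepsilon_k$ together with a simple algebraic rewriting that turns the recursion into a telescoping sum.

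For the base case $t=0$, the claimed identity reduces to
\[
\int_{0}^{\varepsilon_1} \frac{q(1-q)^{n-2}}{1-(1-q)^{\tau_1}}\,\mathrm{d}q = \frac{1}{vn(n-1)},
\]
which is exactly the first equation in the defining system (the constraint coming from $\int_0^1 \alpha_{1,q}\,\mathrm{d}q \le 1$ tight).

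For the inductive step, assume the identity holds for $t-1$. The key algebraic observation is the decomposition
\[
\frac{(1-q)^{\tau_t}}{1-(1-q)^{\tau_t}} \;=\; \frac{1}{1-(1-q)^{\tau_t}} \;-\; 1,
\]
which applied to the recursion $\int_{\varepsilon_{t-1}}^{\varepsilon_t} (1-q)^{\tau_t} q (1-q)^{n-2}/(1-(1-q)^{\tau_t})\,\mathrm{d}q = \int_{\varepsilon_t}^{\varepsilon_{t+1}} q(1-q)^{n-2}/(1-(1-q)^{\tau_{t+1}})\,\mathrm{d}q$ (where for $t=k-1$ the exponent $\tau_{t+1}=\sigma$) yields
\[
\int_{\varepsilon_t}^{\varepsilon_{t+1}} \frac{q(1-q)^{n-2}}{1-(1-q)^{\tau_{t+1}}}\,\mathrm{d}q
\;=\;
\int_{\varepsilon_{t-1}}^{\varepsilon_t} \frac{q(1-q)^{n-2}}{1-(1-q)^{\tau_t}}\,\mathrm{d}q
\;-\;
\int_{\varepsilon_{t-1}}^{\varepsilon_t} q(1-q)^{n-2}\,\mathrm{d}q.
\]
Using the inductive hypothesis at step $t-1$ to rewrite the first term on the right as $\tfrac{1}{vn(n-1)} - \int_0^{\varepsilon_{t-1}} q(1-q)^{n-2}\,\mathrm{d}q$, and combining the two remaining $q(1-q)^{n-2}$ integrals into $\int_0^{\varepsilon_t}$, produces exactly the claimed formula.

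There is no real obstacle here; the step to watch is making sure the recursion equation invoked has the right exponent on the right-hand side (namely $\tau_{t+1}$, which equals $\tau$ for $t\le k-2$ and equals $\sigma$ for $t=k-1$), and that the base case indeed matches the normalization equation. The identity $\frac{(1-q)^{\tau_t}}{1-(1-q)^{\tau_t}} = \frac{1}{1-(1-q)^{\tau_t}} - 1$ is what allows the terms to telescope cleanly, accumulating $-\int_0^{\varepsilon_t} q(1-q)^{n-2}\,\mathrm{d}q$ on the right.
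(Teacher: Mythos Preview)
Your proof is correct and takes essentially the same approach as the paper: both use the algebraic identity $\frac{(1-q)^{\tau_t}}{1-(1-q)^{\tau_t}} = \frac{1}{1-(1-q)^{\tau_t}} - 1$ to rewrite the defining recursion as $A_{t+1} = A_t - \int_{\varepsilon_{t-1}}^{\varepsilon_t} q(1-q)^{n-2}\,\mathrm{d}q$ and then telescope (the paper phrases it as a direct telescoping sum rather than induction, but the content is identical). Your handling of the $t=k-1$ case with $\tau_{t+1}=\sigma$ is also correct and matches the paper's treatment.
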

	\begin{proof}
		Let
		\[
		A_t = \int_{\varepsilon_{t-1}}^{\varepsilon_{t}} \frac{q(1-q)^{n-2}}{1-(1-q)^{\tau_t}} \, \mathrm{d}q.
		\]
		For $t=1,\ldots,k-2$, we have
		\[
		A_{t+1} - A_t = - \int_{\varepsilon_{t-1}}^{\varepsilon_t} q (1-q)^{n-2} \, \mathrm{d}q.
		\]
		For $t<k-1$, $\tau_t=\tau$ and we can telescope the RHS of this identity, while the RHS will be an integral from $0$ to $\varepsilon_t$. This exactly gives us the result of the claim for $t<k-1$. To prove the result for $t=k-1$, we just need to notice that
		\[
		\int_{\varepsilon_{k-1}}^{\varepsilon_k} \frac{q(1-q)^{n-2}}{1-(1-q)^\sigma} \, \mathrm{d}q = A_{k-1} - \int_{\varepsilon_{k-2}}^{\varepsilon_{k-1}} q (1-q)^{n-2}\, \mathring{d}q.
		\]
		Here we use the identity of the claim for $A_{k-1}$ one more time and conclude the result.
	\end{proof}
	
	Using the claim, we can derive $v$ and obtain,
	\begin{align*}
		\frac{\varepsilon_{t+1}(1-\varepsilon_{t+1})^{n-2}}{1-(1-\varepsilon_{t+1})^{\tau_{t+1}}} \varepsilon_{t+1}' & = \frac{-1}{v^2 n(n-1)} - \varepsilon_t(1-\varepsilon_t)^{n-2}\varepsilon_t' + \frac{\varepsilon_{t}(1-\varepsilon_{t})^{n-2}}{1-(1-\varepsilon_{t})^{\tau_{t+1}}} \varepsilon_{t}'\\
		& = \frac{-1}{v^2 n(n-1)} + \frac{(1-\varepsilon)^{\tau_{t+1}}}{1-(1-\varepsilon_t)^{\tau_{t+1}}}  \varepsilon_t(1-\varepsilon_t)^{n-2}\varepsilon_t'.
	\end{align*}
	Using induction, we can show that the RHS of the equality is negative. Therefore, $\varepsilon_{t+1}' < 0$.
	
	Moreover, for any $t\geq 0$ we have
	\[
	\varepsilon_{t+1}' \leq - \frac{1}{v^2 n(n-1)} \left( \frac{1-(1-\varepsilon_{t+1})^{\tau_{t+1}}}{\varepsilon_{t+1}(1-\varepsilon_{t+1})^{n-2}} \right) \leq \frac{-1}{v^2 n(n-1)},
	\]
	where we used the fact that $1-(1-\varepsilon_{t+1})^{\tau_{t+1}} = \sum_{\ell=0}^{\tau_{t+1}-1} \varepsilon_{t+1}(1-\varepsilon_{t+1})^{\ell} \geq \varepsilon_{t+1}$.
\end{proof}

\begin{proposition}\label{prop:eps_k_1_less_1}
We have	$\varepsilon_k(1) < 1$.
\end{proposition}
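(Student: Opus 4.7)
The plan is to argue by contradiction: assume $v=1$ and $\varepsilon_k(1)=1$, and propagate this equality backwards through the defining system until it contradicts the base equation at $t=0$. The key tool is the recursive identity already established in Claim~\ref{claim:identity_recursion_1} (inside the proof of Proposition~\ref{prop:epsilon_decreasing_gamma}), namely
\[
\int_{\varepsilon_t}^{\varepsilon_{t+1}} \frac{q(1-q)^{n-2}}{1-(1-q)^{\tau_{t+1}}}\,\mathrm{d}q \;=\; \frac{1}{v\,n(n-1)} - \int_0^{\varepsilon_t} q(1-q)^{n-2}\,\mathrm{d}q,
\]
valid for $t=0,1,\ldots,k-1$, with $\tau_1=\cdots=\tau_{k-1}=\tau$ and $\tau_k=\sigma$. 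I will combine this with the elementary beta identity $\int_0^1 q(1-q)^{n-2}\,\mathrm{d}q = B(2,n-1) = 1/(n(n-1))$.

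Setting $v=1$, the base-case identity ($t=k-1$) with $\varepsilon_k=1$ rearranges to
\[
\int_{\varepsilon_{k-1}}^{1} q(1-q)^{n-2}\left(\frac{1}{1-(1-q)^{\sigma}}-1\right)\mathrm{d}q \;=\; 0.
\]
Because $\sigma\geq 1$, the factor $(1-(1-q)^{\sigma})^{-1}-1 = (1-q)^{\sigma}/(1-(1-q)^{\sigma})$ is strictly positive on $(0,1)$, so the integrand is strictly positive there, forcing $\varepsilon_{k-1}=1$. I then apply exactly the same manipulation at $t=k-2$ (using $\tau\geq 1$) to get $\varepsilon_{k-2}=1$, and iterate down to $\varepsilon_1=1$.

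The final step is the $t=0$ instance of the identity, which reads $\int_0^{\varepsilon_1}\frac{q(1-q)^{n-2}}{1-(1-q)^{\tau}}\,\mathrm{d}q = 1/(n(n-1))$. Setting $\varepsilon_1=1$ and subtracting the beta identity gives
\[
\int_0^{1} q(1-q)^{n-2}\left(\frac{1}{1-(1-q)^{\tau}}-1\right)\mathrm{d}q \;=\; 0,
\]
which contradicts strict positivity of the integrand on $(0,1)$, since $\tau\geq 1$. Hence the initial assumption $\varepsilon_k(1)=1$ is false, proving $\varepsilon_k(1)<1$.

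The main obstacle is nothing more than the bookkeeping of the cascade; the entire substance is the one-line observation that for any integer $m\geq 1$ the function $(1-q)^m/(1-(1-q)^m)$ is strictly positive on $(0,1)$. The assumption $\tau,\sigma\geq 1$ made at the start of Section~\ref{sec:existence_characterization_solutions} is exactly what makes this work for every step of the induction.
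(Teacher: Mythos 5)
Your proof takes a genuinely different route from the paper's. The paper proves, by forward induction and repeated applications of Bernoulli's inequality, the quantitative bound $\varepsilon_t(1)\leq t\tau/(n(n-1))$ for $t\leq k-1$, then concludes $\varepsilon_k(1)\leq((k-1)\tau+\sigma)/(n(n-1))=1/(n-1)<1$. You instead run a qualitative contradiction argument built around the elementary identity $\int_0^1 q(1-q)^{n-2}\,\mathrm{d}q=1/(n(n-1))$ and the positivity of $(1-q)^m/(1-(1-q)^m)$ on $(0,1)$; this is cleaner and avoids the Bernoulli bookkeeping.

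There is, however, a small logical gap. Assuming $\varepsilon_k(1)=1$ and deriving a contradiction only establishes $\varepsilon_k(1)\neq 1$, not $\varepsilon_k(1)<1$. Since $\varepsilon_k(v)$ is not a priori confined to $[0,1]$ (indeed the corollary following this proposition explicitly uses $\varepsilon_k(v)\to+\infty$ as $v\to 0^+$), you cannot silently rule out $\varepsilon_k(1)>1$. The fix is cheap and uses exactly your positivity observation, just run \emph{forward}: $\varepsilon_0=0<1$, and given $\varepsilon_t<1$, Claim~\ref{claim:identity_recursion_1} with $v=1$ reads
\[
\int_{\varepsilon_t}^{\varepsilon_{t+1}}\frac{q(1-q)^{n-2}}{1-(1-q)^{\tau_{t+1}}}\,\mathrm{d}q \;=\; \int_{\varepsilon_t}^{1}q(1-q)^{n-2}\,\mathrm{d}q ,
\]
where the right-hand side is finite and strictly positive. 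Evaluating the left-hand side at $\varepsilon_{t+1}=1$ yields $\int_{\varepsilon_t}^{1}\frac{q(1-q)^{n-2}}{1-(1-q)^{\tau_{t+1}}}\,\mathrm{d}q$, which strictly exceeds the right-hand side because the extra factor $\bigl(1-(1-q)^{\tau_{t+1}}\bigr)^{-1}>1$ on $(\varepsilon_t,1)$. Hence the solution $\varepsilon_{t+1}$ is strictly below $1$, and the forward induction gives $\varepsilon_k(1)<1$ (and, incidentally, verifies that the recursion stays well-defined in $[0,1]$ for $v=1$). With this adjustment your proof is complete and arguably simpler than the paper's, though it does not produce the explicit $O(1/n)$ bound the paper obtains.
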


\begin{proof}
	First, we prove by induction in $t$ that $\varepsilon_t (1) \leq t \tau / (n(n-1))$ for $t\leq k-1$. We use this result to show that $\varepsilon_k(1) \leq ((k-1)\tau + \sigma)/(n(n-1)) =1/(n-1)$, which implies the result.
	
	For the sake of notation, we write $\varepsilon_t= \varepsilon_t(1)$. Now, for $t=1$ we have
	\[
	\frac{1}{n(n-1)} = \int_0^{\varepsilon_1} \frac{q(1-q)^{n-1}}{1-(1-q)^\tau} \, \mathrm{d}q \geq \frac{1}{\tau (n-1)} ( 1 - (1-\varepsilon_1)^{n-1}).
	\]
	From here, by rearranging the inequality and using Bernoulli's inequality, we obtain $\varepsilon_1\leq \tau / (n(n-1))$.
	
	Assume the result for $t\geq 1$ and let us show it for $t+1 \leq k-1$. By using the identity in the previous proposition's proof, we have
	\[
	\frac{1}{n(n-1)} \geq \int_{\varepsilon_t}^{\varepsilon_{t+1}} \frac{q(1-q)^{n-2}}{1-(1-q)^{\tau}} \, \mathrm{d}q \geq \frac{1}{\tau (n-1)} ( (1-\varepsilon_{t})^{n-1} - (1-\varepsilon_{t+1})^{n-1}  ) .
	\]
	Rearranging terms, using the inductive hypothesis over $\varepsilon_t$ and Bernoulli's inequality, we obtain
	\[
	(1 - \varepsilon_{t+1})^{n-1} \geq  (1-\varepsilon_t)^{n-1} - \frac{\tau}{n} \geq \left(1 - \frac{t \tau}{n}\right) - \frac{\tau}{n} = 1 - (t+1) \frac{\tau}{n}.
	\]
	Using Bernoulli's inequality one more time and rearranging terms we obtain $\varepsilon_{t+1} \leq (t+1)\tau / (n(n-1))$.
	
	The calculation for $\varepsilon_k$ is similar to the previous analysis and is skipped for brevity.
\end{proof}

\begin{proof}[Proof of Proposition~\ref{prop:backward_induction_vt_at}]
	We prove the result by backward induction in $t=k,k-1,\ldots,1$. Before we proceed with the proof, we need a fact that can be deduced from Proposition~\ref{prop:properties_of_a_t}.
	\begin{claim}\label{claim:equality_for_s_less_t}
		Fix $t\in [k]$. Then for any $s$,
		\[
		(a_s - a_{s+1}) \left(  \frac{1-(1-\varepsilon_s)^{\tau_t}}{1-(1-\varepsilon_s)^{\tau_s}}  \right) = (a_{s+1}-a_{s+2}) (1-\varepsilon_{s})^{\tau_{s+1}} \left(  \frac{1-(1-\varepsilon_s)^{\tau_t}}{1-(1-\varepsilon_s)^{\tau_{s+1}}}  \right) .
		\]
	\end{claim}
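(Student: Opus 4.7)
The plan is to prove this claim by directly invoking item 3 of Proposition~\ref{prop:properties_of_a_t}, which states that $a_s - a_{s+1} = (a_{s+1} - a_{s+2}) h_s$, and then expanding $h_s$ using its definition. This reduces the claim to an elementary algebraic identity once both sides are divided by a common factor.

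First, I would recall the definition $h_s = (1-\varepsilon_s)^{\tau_{s+1}} \bigl((1-(1-\varepsilon_s)^{\tau_s})/(1-(1-\varepsilon_s)^{\tau_{s+1}})\bigr)$. Substituting this into $a_s - a_{s+1} = (a_{s+1} - a_{s+2}) h_s$ gives
\[
a_s - a_{s+1} = (a_{s+1} - a_{s+2})(1-\varepsilon_s)^{\tau_{s+1}}\, \frac{1-(1-\varepsilon_s)^{\tau_s}}{1-(1-\varepsilon_s)^{\tau_{s+1}}}.
\]
Dividing both sides by $1-(1-\varepsilon_s)^{\tau_s}$, which is strictly positive since $\varepsilon_s \in (0,1]$ and $\tau_s \geq 1$, yields
\[
\frac{a_s - a_{s+1}}{1-(1-\varepsilon_s)^{\tau_s}} = (a_{s+1} - a_{s+2})\,\frac{(1-\varepsilon_s)^{\tau_{s+1}}}{1-(1-\varepsilon_s)^{\tau_{s+1}}}.
\]
Finally, multiplying both sides by the common factor $1 - (1-\varepsilon_s)^{\tau_t}$ produces the desired equality.

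There is essentially no obstacle: the claim is a pure algebraic rearrangement of one of the identities already established in Proposition~\ref{prop:properties_of_a_t}, with the factor $1-(1-\varepsilon_s)^{\tau_t}$ appearing symmetrically on both sides and playing no role in the derivation. The only thing to check is that division by $1-(1-\varepsilon_s)^{\tau_s}$ is valid (which requires $\varepsilon_s > 0$); this is easily verified from the construction of the $\varepsilon_s$ in the previous subsection, since $\varepsilon_1(v^*) > 0$ and the $\varepsilon_t$ are increasing in $t$.
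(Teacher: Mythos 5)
Your proposal is correct and matches the paper's intent: the paper states only that the claim ``can be deduced from Proposition~\ref{prop:properties_of_a_t}'' without spelling out the steps, and your derivation---substitute the definition of $h_s$ into item 3, divide by $1-(1-\varepsilon_s)^{\tau_s}$, and multiply through by the common factor $1-(1-\varepsilon_s)^{\tau_t}$---is exactly the intended deduction. One small note: the paper's displayed definition of $h_r$ contains an index typo (it writes $\tau_t$, $\tau_{t+1}$ inside the fraction where $\tau_r$, $\tau_{r+1}$ are meant), and you correctly used the intended reading, which one can also recover directly from item 2 of Proposition~\ref{prop:properties_of_a_t}; your check that $\varepsilon_s>0$ justifies the division is also right.
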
	
	Now, let
	\[
	g_t(q)=\left( \frac{1-(1-q)^{\tau_t} }{q} \right)F(q) + (1-q)^{\tau_t} a_{t+1}.
	\]
	We begin the backward induction for the proof at $t=k$. Using $\tau_k = \sigma \leq \tau_s$ for any $s$, we have
	\begin{align*}
		d_k &= \sup_{q\in [0,1]}\left\{ g_k(q) \right\} \\
		&= \sup_{\substack{s\in [k] \\ q\in [\varepsilon_{s-1},\varepsilon_s]}} \left\{  (a_s-a_{s+1})\left( \frac{1-(1-q)^{\tau_k}}{1-(1-q)^{\tau_s}} \right) + (1-(1-q)^{\tau_k})a_{s+1}  \right\}\\
		& = \sup_{s\in [k]} \left\{  g_k(\varepsilon_s) \right\} ,
	\end{align*}
	where we used the fact that the function $q\mapsto (1-(1-q)^{a})/(1-(1-q)^{b})$ is nondecreasing if $b \geq a$, and the function $q\mapsto (1-(1-q)^{a})$ is always increasing. Note that for $s=k$, $g_k(\varepsilon_k)=a_k$. We claim that for any $s<k$, $g_k(\varepsilon_s) \leq g_k(\varepsilon_{s+1})$. Indeed, using Claim~\ref{claim:equality_for_s_less_t} with $t=k$, we have
	\begin{align*}
		g_{k}(\varepsilon_s)=&(a_s-a_{s+1})\left( \frac{1-(1-\varepsilon_s)^{\tau_k}}{1-(1-\varepsilon_s)^{\tau_s}} \right) + (1-(1-\varepsilon_s)^{\tau_k}) a_{s+1}  \\
		= & (a_{s+1}-a_{s+2}) (1-\varepsilon_s)^{\tau_{s+1}} \left( \frac{1-(1-\varepsilon_s)^{\tau_t}}{1-(1-\varepsilon_s)^{\tau_{s+1} }} \right) + (1-(1-\varepsilon_s)^{\tau_k}) a_{s+1} \\
		= & (a_{s+1}-a_{s+2})\left( \frac{1-(1-\varepsilon_s)^{\tau_t}}{1-(1-\varepsilon_s)^{\tau_{s+1}}} \right) + (1-(1-\varepsilon_s)^{\tau_k}) a_{s+2} \\
		\leq & (a_{s+1}-a_{s+2})\left( \frac{1-(1-\varepsilon_{s+1})^{\tau_t}}{1-(1-\varepsilon_{s+1})^{\tau_{s+1}}} \right) + (1-(1-\varepsilon_{s+1})^{\tau_k}) a_{s+2} ,
	\end{align*}
	where in the last inequality we used the monotonicity of $q\mapsto (1-(1-q)^{a})/(1-(1-q)^{b})$ and the function $q\mapsto (1-(1-q)^{a})$. This shows that $g_k(\varepsilon_s)\leq g_k(\varepsilon_{s+1})$. From here we deduce that $d_k=a_k$.
	
	Now, assume the result is true for $t+1$ and let us show it for $t<k$. We have
	\begin{align*}
		d_t &= \sup_{q\in [0,1]}\left\{ \left( \frac{1-(1-q)^{\tau_t}}{q} \right) F(q) + (1-q)^{\tau_t} d_{t+1}  \right\} \\
		& =\sup_{q\in [0,1]}\left\{ g_t(q) \right\} \tag{Induction} \\
		& =\sup_{\substack{s\in [k]\\ q\in [\varepsilon_{s-1},\varepsilon_{s}]}}\left\{ (a_s-a_{s+1})\left( \frac{1-(1-q)^{\tau_t}}{1-(1-q)^{\tau_s}} \right) + (1-(1-q)^{\tau_t})a_{s+1} + (1-q)^{\tau_t} a_{t+1}  \right\} \\
		& = \sup \left\{  \sup_{s\leq t} \{ g_t(\varepsilon_s) \} , \sup_{s>t} \{ g_t(\varepsilon_{s-1}) \}  \right\}.
	\end{align*}
	The last equality can be justified as follows
	\begin{itemize}
		\item For $s\leq t$, $\tau_s = \tau_t=\tau$ and so, for $q\in [\varepsilon_{s-1},\varepsilon_s]$, we have
		\[
		g_t(q) = (a_s - a_{s+1}) + (1-(1-q)^\tau)a_{s+1} + (1-q)^\tau a_{t+1}
		\]
		since $a_{s+1} \geq a_{t+1}$ (Proposition~\ref{prop:properties_of_a_t}), we have that $g_t(q)$ is increasing in $[\varepsilon_{s-1},\varepsilon_s]$. Thus,
		\[
		\sup_{q\in [\varepsilon_{s-1},\varepsilon_s]} \{ g_t(q) \} = g_t(\varepsilon_s).
		\]
		\item For $s> t$, we have $\tau_s \leq \tau_t$ and so, for $q\in [\varepsilon_{s-1},\varepsilon_s]$, we have
		\[
		g_t(q) =  (a_s-a_{s+1})\left( \frac{1-(1-q)^{\tau_t}}{1-(1-q)^{\tau_s}} \right) + (1-(1-q)^{\tau_t})a_{s+1} + (1-q)^{\tau_t} a_{t+1}.
		\]
		The function $q\mapsto (1-(1-q)^{\tau_t})/(1-(1-q)^{\tau_s})$ is nonincreasing since $\tau_t\geq \tau_s$. Also, $a_{s+1}\leq a_{t+1}$. Thus,
		\[
		\sup_{q\in [\varepsilon_{s-1},\varepsilon_s]} \{ g_t(q) \} = g_t(\varepsilon_{s-1}).
		\]
	\end{itemize}
	Now, for $s=t$, $g_t(\varepsilon_s) = g_t(\varepsilon_t) = a_t$. Using Claim~\ref{claim:equality_for_s_less_t}, we can show again that $g_t(\varepsilon_s)\leq g_t(\varepsilon_{s+1})$ for $s< t$. To conclude, we will show that $g_t(\varepsilon_{s}) \leq g_{t}(\varepsilon_{s-1})$ for $s> t$. Indeed,
	\begin{align*}
		g_t(\varepsilon_{s-1}) & = \left(\frac{1-(1-\varepsilon_{s-1})^{\tau_t}}{\varepsilon_{s-1}}\right) F(\varepsilon_{s-1}) + (1-\varepsilon_{s-1})^{\tau_t} a_{t+1} \\
		& = \left(\frac{1-(1-\varepsilon_{s-1})^{\tau_t}}{1-(1-\varepsilon_{s-1})^{\tau_s}}\right)\left( a_{s} - a_{s+1}(1-\varepsilon_{s-1})^{\tau_{s}}  \right)  + (1-\varepsilon_{s-1})^{\tau_t} a_{t+1} \tag{Using continuity of $F$. See also Proposition~\ref{prop:properties_of_a_t}} \\
		& = (a_s- a_{s+1})\left(\frac{1-(1-\varepsilon_{s-1})^{\tau_t}}{1-(1-\varepsilon_{s-1})^{\tau_s}}\right)+ (1-(1-\varepsilon_{s-1})^{\tau_{t}})a_{s+1}  + (1-\varepsilon_{s-1})^{\tau_t} a_{t+1} \\
		& \geq (a_s- a_{s+1})\left(\frac{1-(1-\varepsilon_{s})^{\tau_t}}{1-(1-\varepsilon_{s})^{\tau_s}}\right)+ (1-(1-\varepsilon_{s})^{\tau_{t}})a_{s+1}  + (1-\varepsilon_{s})^{\tau_t} a_{t+1}\\
		& = g_t(\varepsilon_s),
	\end{align*}
	where in the last inequality we used again that the function $q\mapsto (1-(1-q)^{\tau_t})/(1-(1-q)^{\tau_s})$ is nonincreasing since $\tau_t\geq \tau_s$, and that $a_{s+1}\leq a_{t+1}$. Thus, $d_t=a_t$ and this finishes the proof.	
\end{proof}

\section{Missing Proofs of Section~\ref{sec:asymptotic_analysis}}\label{app:asymptotic_analysis}

\begin{proof}[Proof of Strong Duality in the Infinite Model]
	The proof is similar to the finite model case, with finite $n$. We set the primal solution of $(CLP)_{k,\infty}$ to be
	\[
	\omega_{t,y} = -v^* \frac{y\log y}{1-y^{1/k}}\mathbf{1}_{(y_{t},y_{t-1})}(y)
	\]
	with $y_0=1\geq y_1 \geq \cdots \geq y_k$ and $v^*$ such that $y_k=0$. Repeating verbatim the proof in Section~\ref{sec:existence_characterization_solutions}, we can easily show the existence of such a $v^*$ and that the value of $(CLP)_{k,\infty}$, namely $v_{\infty,k}^*$, is at least $v^*$.
	
	To show the optimality of $v^*$, we produce a dual solution of $(DLP)_{\infty,k}$ with objective value $v^*$.	Define
	\[
	a_t = v^* \left(\frac{1+\sum_{\tau=t}^{k}\prod_{\sigma=\tau}^{k}y_\sigma^{1/k} }{1+\sum_{\tau=1}^k \prod_{\sigma=\tau}^k y_{\sigma}^{1/k}}\right)
	\]
	and $a_{k+1}=0$. Then, the function
	\[
	G(u) = u\sum_{t=1}^k \left( \frac{a_t-a_{t+1}e^{-u/k}}{1-e^{-u/k}} \right)\mathbf{1}_{[u_{t-1},u_t)}(u)
	\]
	is right-continuous in $(0,+\infty)$ and increasing for $u\geq 0$, where $u_t = -\log y_t$ for all $t$, and we use $u_k=+\infty$. The function $G$ defines a Lebesgue-Stieltjes measure $\mu_G$ over $[0,+\infty)$ via
	\[
	\mu_G(E) = \delta_{\{0 \}} (E) G(0) + \int_{E\cap (0,+\infty)} G'(u) \, \mathrm{d}u ,
	\]	
	where $G(0)=\lim_{u\to 0^+} G(u)$.
	
	Let $\overline{\mu}_G(E) = \mu (P^{-1}(E))$ be the pushforward measure, where $P:[0,+\infty)\to (0,1]$ is $P(u)=e^{-u}$. Then $P^{-1}(y)=-\log y$. So, for $\overline{y}>0$,
	\begin{align*}
		\overline{\mu}[\overline{y},1] & = \mu_G[0,-\log \overline{y}] \\
		& = G(0) + \mu_G(0,-\log \overline{y}] \\
		& = G(0) + G(-\log \overline{y}) - G(0) \\
		& = G(-\log \overline{y}) = -\log \overline{y}\sum_{t=1}^k \left( \frac{a_t-a_{t+1}\overline{y}^{1/k}}{1-\overline{y}^{1/k}}  \right)\mathbf{1}_{(y_{t},y_{t-1}]}(\overline{y}) .
	\end{align*}
	Then,
	\begin{align*}
		\int_{[0,1]} y \, \mathrm{d}\overline{\mu}(y) & = \int_{[0,1]} \int_{0}^1 \mathrm{d}\overline{y}  \, \mathrm{d}\overline{\mu}(y) \\
		& = \int_0^1 \int_{[\overline{y},1]} \, \mathrm{d}\mu_G(y)  \, \mathrm{d}\overline{y} \\
		& = \int_0^1  -\log \overline{y}\sum_{t=1}^k \left( \frac{a_t-a_{t+1}\overline{y}^{1/k}}{1-\overline{y}^{1/k}}  \right)\mathbf{1}_{(y_{t},y_{t-1}]}(\overline{y})   \, \mathrm{d}\overline{y} \\
		& = 1 .
	\end{align*}
	
	We define $d_1,\ldots,d_{t+1}$ recursively as follows. Let $d_{t+1}=0$; for $t\leq k$, we have
	\[
	d_{t} = \sup_{y\in [0,1]} \left\{  \left(  \frac{1-y^{1/k}}{-\log y}   \right) \mu[y,1] + y^{1/k} d_{t+1}  \right\}.
	\]	
	Thus, the solution $(\overline{\mu}_G,d)$ is feasible for $(DLP)_{\infty,k}$. Moreover, $d_1=v^*$ as a byproduct of the following general result.
	
	\begin{proposition}
		For all $t=1,\ldots,k$, $d_t=a_t$.
	\end{proposition}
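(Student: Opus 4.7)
The plan is backward induction on $t$, with base case $d_{k+1}=0=a_{k+1}$ by definition. Fix $t\le k$ and assume inductively $d_{t+1}=a_{t+1}$; set
\[
g_t(y):=\left(\frac{1-y^{1/k}}{-\log y}\right)\overline{\mu}[y,1]+y^{1/k} a_{t+1},
\]
so that $d_t=\sup_{y\in[0,1]} g_t(y)$ (the values at $y\in\{0,1\}$ understood as limits). The goal is to show this supremum equals $a_t$, attained on $(y_t,y_{t-1}]$.

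First I would compute $g_t$ in closed form on each interval $(y_s,y_{s-1}]$ for $s=1,\ldots,k$. Since $\overline{\mu}$ is the pushforward of $\mu_G$ under $u\mapsto e^{-u}$, the computation already carried out in the strong-duality proof gives $\overline{\mu}[y,1]=G(-\log y)=-\log y\cdot\frac{a_s-a_{s+1}y^{1/k}}{1-y^{1/k}}$ for $y\in(y_s,y_{s-1}]$. Substituting collapses the expression to
\[
g_t(y)=a_s+y^{1/k}\bigl(a_{t+1}-a_{s+1}\bigr),\qquad y\in(y_s,y_{s-1}].
\]
Next I would verify that $g_t$ extends continuously to $[0,1]$ across every breakpoint $y_{s-1}$; the matching condition between the two adjacent pieces reduces to the algebraic identity $a_{s-1}-a_s=y_{s-1}^{1/k}(a_s-a_{s+1})$, which follows directly from the definition of $a_\tau$ (the ratio of consecutive differences in its numerator equals $y_{s-1}^{1/k}$). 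The limits at $y=0$ and $y=1$ are handled similarly.

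With the piecewise form of $g_t$ in hand, I would locate the supremum by monotonicity. Since $a_1\ge a_2\ge\cdots\ge a_{k+1}=0$ is non-increasing, the slope $a_{t+1}-a_{s+1}$ of $y\mapsto g_t(y)$ is negative for $s<t$, zero for $s=t$, and positive for $s>t$. Hence $g_t\equiv a_t$ on $(y_t,y_{t-1}]$, $g_t$ is non-increasing on each interval with $s<t$, and non-decreasing on each interval with $s>t$. Combining with continuity: for $s<t$ the sup over $(y_s,y_{s-1}]$ is attained as $y\to y_s^+$ and by continuity equals $g_t(y_s)$ on the adjacent interval; iterating produces a chain that terminates at the constant value $a_t$ on $(y_t,y_{t-1}]$. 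Symmetrically, for $s>t$ the sup is attained at $y_{s-1}$ and telescopes via continuity up to $a_t$. Therefore $\sup_y g_t(y)=a_t$, i.e., $d_t=a_t$, which closes the induction.

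The main obstacle is the bookkeeping between the two coordinate systems (the $y\in[0,1]$ interval where the supremum is taken and the $u\in[0,\infty)$ interval where $G$ is defined) together with careful handling of the endpoints $s=1$ (where $y_{s-1}=1$ and $-\log y$ vanishes) and $s=k$ (where $y_s=0$, so $\overline{\mu}[y,1]$ blows up but $g_t$ remains bounded because the coefficient $\frac{1-y^{1/k}}{-\log y}$ compensates). Once the identity $a_{s-1}-a_s=y_{s-1}^{1/k}(a_s-a_{s+1})$ and the closed form for $g_t$ are established, the remainder is routine case analysis paralleling the finite-$n$ argument given for Proposition~\ref{prop:backward_induction_vt_at}.
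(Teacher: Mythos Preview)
Your proposal is correct and follows the same backward-induction template the paper points to (the finite-$n$ Proposition~\ref{prop:backward_induction_vt_at}). The one nice touch is that, because every exponent in the infinite model is $1/k$, your substitution collapses $g_t$ on each piece to the affine form $a_s+y^{1/k}(a_{t+1}-a_{s+1})$, which makes the monotonicity and continuity checks immediate rather than requiring the ratio-monotonicity argument used in the finite-$n$ proof; this is exactly the simplification one expects when specializing that proof to equal window sizes.
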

	
	The proof of this proposition is analogous to the proof of Proposition~\ref{prop:backward_induction_vt_at} in Section~\ref{sec:existence_characterization_solutions}.
\end{proof}

\subsection{Proof of Lower Bound in Theorem~\ref{thm:approximation_large_n}}

In this subsection, we show that $v_{n,k}^*$ is lower bounded by the term $(1+4 \log y_{k-1}/(n-1) - c k^2/n) v_{\infty,k}^*$, where $c$ is a constant. To prove this bound, we take the optimal solution of $(CLP)_{\infty,k}$, denoted $(\omega, v_{\infty,k}^*)$ and we construct a solution $(\alpha,v^*)$ of $(CLP)_{n,k}$, where the value $v^*$ holds the aforementioned lower bound.

Before describing $\alpha$, we need to introduce some intermediary values. Fix $\delta>0$ to be a small number, and let $n$ be large enough so that $y_{k-1}> e^{-(1-\delta)(n-1)}$. In principle, $n$ could depend on $\delta$; however, if we assume $\delta\leq 1/2$, it is enough to take any $n \geq -\log \sqrt{y_{k-1}}$, which is independent of $\delta$; this is crucial for the conclusion of the proof. Consider the  scalars
\[
c_1= \left(\frac{n-1}{n}\right)\left(  1 + 4 \frac{\log y_{k-1}}{k(n-1)} \right)^{t-1} ,
\]
for $t=1,\ldots,k$. Note that $c_{t+1}<c_t$ since $y_{k-1}\in (0,1)$. Define the family of functions $\alpha$ as
\[
\alpha_{t,q} = c_t n \omega_{t,e^{-q(n-1)}}\mathbf{1}_{[0,1-\delta)}(q) + \left(\frac{1-\delta}{\delta}\right)\mathbf{1}_{[1-\delta,1]}(q)\int_0^{e^{-(1-\delta)(n-1)}} n \omega_{t,y} \left(  \frac{1-y^{1/k}}{-y\log y}  \right)\, \mathrm{d}y .
\]
First, we verify that Constraints~\eqref{const:infinity_dynamic_1}-\eqref{const:infinity_dynamic_2} hold, and then we define $v^*$ such that $(\alpha,v^*)$ is feasible for $(CLP)_{n,k}$, with $v^*=(1+4 \log y_{k-1}/(n-1) - c k^2/n) v_{\infty,k}^*$ for an appropriate constant $c$.

For $t=1$, we have
\begin{align*}
	\int_0^1 \alpha_{1,q} \, \mathrm{d}q & = c_1 \frac{n}{n-1}\int_{e^{-(1-\delta)(n-1)}}^1 \frac{\omega_{1,y}}{y} \, \mathrm{d}y \tag{change of variable, $q=-\log y/(n-1)$} \\
	& \leq c_1 \frac{n}{n-1} =1.
\end{align*}
Note that we implicitly use that the support of $\omega_{1,q}$ is in $[y_1,1]$, and $y_1\geq y_{k-1} >e^{-(1-\delta)(n-1)}$.

The following proposition bounds the error of approximating $y^{1/k}$ by $(1+ \log y/(n-1))^{\tau}$, where $\tau= \lceil n/k \rceil$. The proof is a simple calculation, deferred to Appendix~\ref{app:asymptotic_analysis}.

\begin{proposition}\label{prop:bound_log_infinite_lp}
	For $y\geq y_{k-1}$ we have
	\[
	\left(  1 + \frac{\log y}{n-1} \right)^{\tau} \geq y^{1/k} \left(  1 + 4\frac{\log y_{k-1}}{k(n-1)}  \right).
	\]
\end{proposition}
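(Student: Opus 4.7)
The plan is to reduce the inequality to an elementary Taylor-series estimate in the single variable $x = \log y/(n-1)$. The hypothesis $y \geq y_{k-1} > e^{-(1-\delta)(n-1)}$ guarantees $x \in (-(1-\delta),\,0]$, so $1+x \geq \delta > 0$ and both sides of the claimed inequality are strictly positive. I would begin by taking logarithms, reducing the statement to
\[
\tau \log(1+x) \;\geq\; \frac{\log y}{k} + \log\!\left(1 + \frac{4\log y_{k-1}}{k(n-1)}\right).
\]
Since the argument of the right-hand log lies in $(-1,\,0]$, applying $\log(1+z) \leq z$ further reduces the target to the (strictly stronger) inequality
\[
\tau \log(1+x) - \frac{\log y}{k} \;\geq\; \frac{4\log y_{k-1}}{k(n-1)}.
\]

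For the left-hand side, the main tool is the elementary lower bound $\log(1+x) \geq x - x^2/(2(1+x))$ valid for $x \in (-1,\,0]$, which I would derive from the alternating Maclaurin series via $\sum_{j\geq 2}(-x)^j/j \leq (-x)^2/(2(1+x))$. Multiplying by $\tau$ gives $\tau\log(1+x) \geq \tau x - \tau x^2/(2(1+x))$. A direct manipulation then yields $\tau x - \log y/k = \log y \cdot (k\tau - (n-1))/(k(n-1))$; the identity $\tau = \lceil n/k\rceil$ forces $k\tau - (n-1) \in \{1,\ldots,k\}$, with the favorable value $1$ when $k \mid n$. Combined with $\log y \leq 0$ and the monotonicity $\log y \geq \log y_{k-1}$, this produces a clean lower bound for the linear piece of the form $\log y_{k-1}/(k(n-1))$ (up to a factor depending on $n \bmod k$).

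The final step is to absorb the quadratic correction $\tau x^2/(2(1+x))$ into the slack between $\log y$ and $4\log y_{k-1}$. Using the pointwise estimates $|x| \leq |\log y_{k-1}|/(n-1)$, $1+x \geq \delta$, and $\tau \leq n/k + 1$, the correction is of order $(\log y_{k-1})^2/(\delta\, k\, (n-1))$, while the linear slack is of order $|\log y_{k-1}|/(k(n-1))$. I expect the main obstacle to be balancing these two quantities uniformly and handling the worst case of the ceiling $\tau = \lceil n/k \rceil$ (in which $k\tau-(n-1)$ can be as large as $k$) cleanly. In the regime that ultimately matters for Theorem~\ref{thm:approximation_large_n}, where $|\log y_{k-1}|$ grows only logarithmically in $k$ by virtue of the later bound $y_{k-1} \geq 1/(32k)$, the quadratic correction is of lower order in $1/n$ and can be absorbed either directly or through the $-c\,k^2/n$ error term in the main theorem.
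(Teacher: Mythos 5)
Your plan attacks the quantity from a complementary angle to the paper: you take logarithms and separate a linear term governed by $\tau k-(n-1)$ from a quadratic Taylor remainder, whereas the paper works multiplicatively, rewriting $\log y/(n-1)=-\tfrac{\tau k}{n-1}\int_{y^{1/\tau k}}^1 u^{-1}\,\mathrm{d}u$, factoring out $y^{1/\tau k}$, controlling $y_{k-1}^{-1/\tau k}$ via $e^z\le 1+2z$, and finishing with Bernoulli after raising to the power $\tau$. As plans go, yours is a legitimate alternative.

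However, the difficulty you flag at the end is a genuine, unresolved gap and not a routine loose end. When $\tau k-(n-1)=k$ (which happens whenever $n\equiv 1\pmod k$), your linear contribution $\log y\cdot\tfrac{\tau k-(n-1)}{k(n-1)}$ is just $\tfrac{\log y}{n-1}$, which for $k>4$ already lies strictly below the target $\tfrac{4\log y_{k-1}}{k(n-1)}$; the quadratic correction is nonpositive and pushes the wrong way, so the chain cannot close. Concretely, $n=20$, $k=6$, $\tau=\lceil 20/6\rceil=4$, $y=y_{k-1}=0.01$ gives a left-hand side of roughly $0.330$ against a right-hand side of roughly $0.389$. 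The paper's own proof runs into the same wall: it bounds $\tfrac{\tau k-(n-1)}{k}<2$, which only delivers a factor $1+\tfrac{4\log y_{k-1}}{n-1}$ with $n-1$ (not $k(n-1)$) in the denominator, and moreover its first displayed inequality has the wrong direction ($\int_{y^{1/\tau k}}^1 u^{-1}\,\mathrm{d}u \ge 1-y^{1/\tau k}$ because $u^{-1}\ge 1$ on that interval, not $\le$). Your instinct --- that the discrepancy must be absorbed into the $-ck^2/n$ term of Theorem~\ref{thm:approximation_large_n}, or that the clean case is $k\mid n$ so that $\tau k-(n-1)=1$ --- is exactly the right repair, but a complete proof has to carry this out (e.g.\ restate the proposition with $\tfrac{4\log y_{k-1}}{n-1}$ and thread the extra factor of $k$ through the definition of $c_t$) rather than gesture at it.
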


\begin{proof}
	We have
	\begin{align*}
		\frac{\log y}{n-1} & = -\frac{\tau k}{n-1} \int_{y^{1/\tau k}}^{1} \frac{1}{u} \, \mathrm{d}u \\
		& \geq - \frac{\tau k}{n-1} \left( 1- y^{1/\tau k}   \right) .
	\end{align*}
	Then,
	\[
	1+ \frac{\log y}{n-1} \geq 1 -\frac{\tau k}{n-1}\left( 1- y^{1/\tau k} \right) \geq y^{1/\tau k} \left( 1 + \left(\frac{\tau k - (n-1)}{n-1}\right)(1-y_{k-1}^{-1/\tau k}) \right),
	\]
	where we used  $y\geq y_{k-1}$. Now,
	\[
	y_{k-1}^{-1/\tau k} = e^{-\log y_{k-1}/\tau k}\leq 1 - 2\frac{\log y_{k-1}}{ \tau k } \implies 1-y_{k-1}^{-1/\tau k } \geq 2\frac{\log y_{k-1}}{\tau k} ,
	\]
	using $y_{k-1} \geq e^{-(1-\varepsilon)(n-1)}$ and for $n$ large, $-\log y_{k-1} \leq \tau k$. We now elevate  $1+\log y /(n-1)$ to the power $\tau$ and use the previous bound to obtain
	\[
	\left( 1 + \frac{\log y}{n-1} \right)^{\tau} \geq y^{1/k} \left(  1 + 2\left( \frac{\tau  k - (n-1)}{k}\right) \frac{\log y_{k-1}}{n-1}  \right) ,
	\]
	where in the last inequality we used Bernoulli's inequality, $(1+x)^r \geq 1+ rx$. The conclusion now follows by using $\tau=\lceil n/k \rceil < n/k + 1$, and so
	\[
	\frac{\tau k - (n-1)}{k} = \tau - \frac{n}{k} + \frac{1}{k} < 1 + \frac{1}{k} \leq 2,
	\]
	which finishes the proof.
\end{proof}

Now, for $1\leq t\leq k-1$ we have
\begin{align*}
	\int_0^1 (1-q)^{\tau} \alpha_{t,q} \, \mathrm{d}q &\geq c_t \frac{n}{n-1}\int_{e^{-(1-\delta)(n-1)}}^1 \left( 1+\frac{\log y}{n-1} \right)^{\tau} \frac{\omega_{t,y}}{y} \, \mathrm{d}y \\
	&\geq  c_t \frac{n}{n-1} \left( 1+4 \frac{\log y_{k-1}}{k(n-1)} \right) \int_{0}^1  y^{1/k} \frac{\omega_{t,q}}{y} \, \mathrm{d}y,
\end{align*}
since the support of $\omega_{t,q}$ is contained in $[y_{t},y_{t-1}]$ and $y_{t}\geq y_{k-1}$. Also,
\begin{align*}
	\int_0^1 \alpha_{t+1,q} \, \mathrm{d}q &= c_{t+1} \frac{n}{n-1}\int_{e^{-(1-\delta)(n-1)}}^{1} \frac{\omega_{k,y}}{y} \, \mathrm{d}y + c_{t+1} n\int_{0}^{e^{-(1-\delta)(n-1)}} \omega_{t,y} \left(  \frac{1-y^{1/k}}{-y\log y} \right) \, \mathrm{d}y \\ 
	& \leq c_{t+1} \frac{n}{n-1}\int_{0}^{1} \frac{\omega_{t,y}}{y} \, \mathrm{d}y  \\
	& \leq c_{t+1} \frac{n}{n-1} \int_0^1 y^{1/k} \frac{\omega_{t+1,y}}{y} \, \mathrm{d}y \\
	& \leq \int_0^1 \alpha_{t,q} (1-q)^{n/k} \, \mathrm{d}q .
\end{align*}
In the equality we use the fact that $\frac{-\log y}{1-y^{1/k}} \geq (1-\delta)(n-1)$ for $y\leq e^{-(1-\delta)(n-1)}$, and in the last inequality we use $c_{t+1}=c_{t}\left( 1 + 4\frac{\log y_{k-1}}{k(n-1)}  \right)$. With this, we have shown that $\alpha$ satisfies Constraints~\eqref{const:infinity_dynamic_1}-\eqref{const:infinity_dynamic_2}.

Let $v = \overline{c}_k  v_{\infty,k}^*$, where $\overline{c}_k=(1-\delta)(1-k^2/n) c_k$. We now prove that $(\alpha,v)$ satisfies Constraint~\ref{const:infinity_multiobj}. Before we do this, we need the following facts, proved in Appendix~\ref{app:asymptotic_analysis}.

\begin{proposition}\label{prop:bound_log_lower_bound}
	For any $y\in [0,1]$ and $t\in [k]$,
	\[
	1-\left( 1 + \frac{\log y}{n-1}  \right)^{\tau_t} \geq \left(1 - \frac{k^2}{n}\right) \left(1 - y^{1/k}\right)
	\]
	where $\tau_t = \lceil n/k \rceil$ if $t< k$ and $\tau_k = n-(k-1)\lceil n/k \rceil$.
\end{proposition}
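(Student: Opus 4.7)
The plan is to reduce the claim to an elementary algebraic comparison via two steps. First, I would apply the inequality $1 + x \le e^x$ at $x = \log y/(n-1)$ to obtain $(1 + \log y/(n-1))^{\tau_t} \le y^{\tau_t/(n-1)}$ (valid whenever $1 + \log y/(n-1) \ge 0$, which in the application of the proposition holds because the relevant $y$ lies in the support of $\omega_t$ and hence $y \ge y_{k-1} > e^{-(n-1)}$). This reduces the claim to showing
\[
1 - y^{\tau_t/(n-1)} \;\ge\; \left(1 - \tfrac{k^2}{n}\right)\left(1 - y^{1/k}\right).
\]

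Second, I would exploit the concavity of $c \mapsto 1 - y^c$ on $[0,\infty)$ (its second derivative $-y^c(\log y)^2$ is nonpositive) together with the fact that the function vanishes at $c=0$. This makes $c \mapsto (1-y^c)/c$ nonincreasing on $(0,\infty)$, and combined with monotonicity of $y^c$ in $c$ when the exponent exceeds $1/k$, it yields the unified bound
\[
1 - y^{\tau_t/(n-1)} \;\ge\; \min\!\left(1,\;\frac{k\tau_t}{n-1}\right)\bigl(1 - y^{1/k}\bigr).
\]

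It then remains to verify $\min(1,\,k\tau_t/(n-1)) \ge 1 - k^2/n$. For $t < k$ this is immediate: $\tau_t = \lceil n/k\rceil \ge n/k$ forces $k\tau_t/(n-1) \ge n/(n-1) \ge 1$. The only nontrivial case is $t = k$. Writing $n = qk + r$ with $0 \le r < k$ and computing $\tau_k$ explicitly in the cases $r = 0$ and $r \ge 1$, I would establish the uniform lower bound $\tau_k \ge (n - (k-1)^2)/k$ (tight at $r=1$), from which the desired inequality follows by cross-multiplying $(n - (k-1)^2)/(n-1) \ge (n-k^2)/n$ and simplifying to $2nk \ge k^2$, which holds trivially in the regime $n \gg k$. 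The main (though minor) obstacle is carrying through this case analysis for $\tau_k$ cleanly; everything else is mechanical.
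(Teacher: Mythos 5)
Your proof is correct, but it is organized differently from the paper's. Both arguments begin identically, applying $1+x \le e^x$ to obtain $\left(1+\tfrac{\log y}{n-1}\right)^{\tau_t} \le y^{\tau_t/(n-1)}$. From there the paper treats $t<k$ and $t=k$ by separate mechanisms: for $t<k$ it notes $\tau_t/(n-1) \ge 1/k$ directly so the coefficient $1$ already suffices; for $t=k$ it shows $\tau_k \ge (1-k^2/n)\lceil n/k\rceil$, which gives $y^{\tau_k/(n-1)} \le y^{(1-k^2/n)/k}$, and then invokes a separate inequality $\inf_{v\in[0,1]}(1-v^{1-k^2/n})/(1-v) \ge 1-k^2/n$ (a consequence of concavity of $v\mapsto v^a$ for $a<1$). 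You instead extract a single unified coefficient $\min\!\left(1, k\tau_t/(n-1)\right)$ from the concavity of $c\mapsto 1-y^c$ (the standard fact that a concave function vanishing at $0$ has $f(c)/c$ nonincreasing), which collapses both cases into the purely arithmetic claim $\min\!\left(1, k\tau_t/(n-1)\right) \ge 1-k^2/n$; your lower bound $\tau_k \ge (n-(k-1)^2)/k$ and the cross-multiplication reducing to $2n\ge k$ both check out. The two routes are structurally parallel (each ultimately rests on a convexity/concavity fact about a power function plus a bound on $\tau_k$), but yours is a bit cleaner in that the single concavity lemma absorbs the work the paper spreads over two different ad hoc estimates. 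You also explicitly flag that the first step needs $1+\log y/(n-1)\ge 0$; the paper leaves this implicit, though in fact the inequality can genuinely fail for $y$ much smaller than $e^{-(n-1)}$ when $\tau_t$ is even, so the restriction to the range of $y$ that arises in the application (namely $y\ge e^{-(1-\delta)(n-1)}$, where the integrand is supported) is a caveat worth stating — your observation there is a small improvement on the paper.
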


\begin{proof}
	We do the case $\tau_t=\tau = \lceil n/k \rceil$ for $t<k$ and the case $\tau_k$ separately:
	\begin{align*}
		\left( 1 + \frac{\log y}{n-1} \right)^{\tau} & \leq y^{\tau/(n-1)} \\
		& \leq y^{n/(k(n-1))} \tag{since $\lceil n/k \rceil \geq n/k$ and $y\leq 1$}\\
		& \leq y^{1/k}.
	\end{align*}
	From here, we can deduce the inequality for $t<k$. For $\tau_k$ we have the following. First, it is easy to verify that
	\[
	\tau_k = n- (k-1) \lceil n/k \rceil \geq (1-k^2/n) \lceil n/k \rceil = (1-k^2/n) \tau.
	\]
	Thus, repeating the same calculations as before, we have
	\begin{align*}
		\left(  1 + \frac{\log y}{n-1}  \right)^{\tau_k} &\leq y^{\tau_k/(n-1)} \\
		& \leq y^{(1-k^2/n) \tau/(n-1)} \\
		& \leq y^{(1-k^2/n)/k}.
	\end{align*}
	Now, an easy calculation shows that $\inf_{v\in [0,1]} (1-v^{(1-k^2/n)}/(1-v)) \geq (1-k^2/n)$. Thus, for any $y\in [0,1]$, we have
	\[
	1-\left(1+\frac{\log y}{n-1}\right)^{\tau_k} \geq  1-y^{(1-k^2/n)/k} \geq \left( 1 - \frac{k^2}{n} \right) (1-y^{1/k}).
	\]
\end{proof}

Let $u\in [0,1-\delta)$; using the definition of $\alpha$ and the previous proposition, we get
\begin{align*}
	\sum_{t=1}^k \int_u^1 \alpha_{t,q}\left( \frac{1-(1-q)^{\tau_t}}{q}   \right)\, \mathrm{d}q & = n\sum_{t=1}^k \int_{e^{-(1-\delta)(n-1)}}^{e^{-u(n-1)}} c_t \frac{\omega_{t,y}}{y} \left(  \frac{1-(1+\log y/n)^{n/k}}{-\log y} \right) \, \mathrm{d}y \\
	& \quad + (1-\delta)n\sum_{t=1}^k c_t \int_{0}^{e^{-(1-\delta)(n-1)}} \omega_{t,y} \left(\frac{1-y^{1/k}}{-y\log y} \right)\, \mathrm{d}y \\
	& \geq \overline{c}_{k} n\sum_{t=1}^k \int_{0}^{e^{-u (n-1)}} \omega_{t,y} \frac{1-y^{1/k}}{-y \log y} \, \mathrm{d}y \\
	& \geq \overline{c}_k n  v_{\infty,k}^*  e^{-u(n-1)} \\
	& \geq v^* n (1-u)^{n-1}.
\end{align*}
For $u\in [1-\delta,1)$, we have
\begin{align*}
	\sum_{t=1}^k \int_u^1 \alpha_{t,q}\left( \frac{1-(1-q)^{\tau_t}}{q}   \right)\, \mathrm{d}q & = (1-\delta)n \frac{(1-u)}{\delta} \sum_{t=1}^k \int_0^{e^{-(1-\delta)(n-1)}} \omega_{t,y} \left(\frac{1-y^{1/k}}{-y \log t} \right) \, \mathrm{d}y \\
	& \geq \overline{c}_k v_{\infty,k}^* n \frac{(1-u)}{\delta} e^{-(1-\delta)(n-1)}\\
	& \geq v^*  n (1-u)^{n-1},
\end{align*}
where in the last line we used the fact that the function $f(u)=(1-u)e^{-(1-\delta)(n-1)}/\delta - (1-u)^{n-1}$ is concave, $f(1-\delta) >0$, $f(1)=0$ and $f'(1-\delta)>0$, $f(1)<0$; hence, $f(u)\geq 0$ in $[1-\delta,1]$.

This shows that $(\alpha, v^*)$ is a feasible solution of $(CLP)_{k,n}$, and therefore $v_{n,k}^* \geq \overline{c}_k v_{\infty,k}^*$. Since $v_{n,k}^*$ is independent of $\delta$, we can take $\delta\to 0$ and obtain $v_{n,k}^* \geq (1-k^2/n)c_k v_{\infty,k}^*$. By using Bernoulli's inequality over $c_k$, we conclude
\[
v_{n,k}^* \geq \left(1 - \frac{k^2}{n}\right)\left( 1+ 4 \frac{\log y_{k-1}}{n-1}  \right) v_{\infty,k}^*.
\]
This finishes the proof of the lower bound in Theorem~\ref{thm:approximation_large_n}.

\subsection{Proof of Upper Bound in Theorem~\ref{thm:approximation_large_n}}\label{app:upper_bound_v_n_k}

In this Subsection, we show that $v_{n,k}^*$ is upper bounded by a term that converges to $v_{\infty,k}^*$. We take the optimal solution $(\overline{\mu},d)$ of $(DLP)_{\infty,k}$ and we construct a new solution $(\mu,d^*)$ of $(DLP)_{n,k}$ that shows the desired property.

For simplicity, we assume that $n/k$ is an integer, thus $\tau_1=\cdots = \tau_k=n/k$. This fundamentally does not change the upper bound.

Recall the definition of $G$ given at the beginning of this section,
\[
G(u) = u\sum_{t=1}^k \left( \frac{a_t-a_{t+1}e^{-u/k}}{1-e^{-u/k}} \right)\mathbf{1}_{[u_{t-1},u_t)}(u).
\]
This function is right-continuous and nondecreasing. Now, let $F(q) = G(qn)/n$. Then, $F$ is also right-continuous and nondecreasing. Thus, we can consider the Lebesgue–Stieltjes measure generated by $F$, that we denote $\mu_F$. Note that
\[
\mu_F[0,q] = F(q) = \frac{1}{n}G(q n) = \frac{1}{n}\overline{\mu}_G[e^{-qn},1] ,
\]
where $\overline{\mu}_G$ is the pushforward measure induced by $G$ and the function $P(u)=e^{-u}$ (see the proof at the beginning of the section). Let $c_n=\int_0^1 n(1-u)^{n-1} \, \mathrm{d}\mu_F(u)$. Then,
\begin{align*}
	c_n & = \int_0^1 n(n-1)(1-q)^{n-2}\mu_F[0,q] \, \mathrm{d}q \\
	&  =\int_0^1 n(n-1)(1-q)^{n-2} F(q) \, \mathrm{d}q \\
	& = \int_0^1 (n-1)(1-q)^{n-2} \overline{\mu}_G[e^{-qn},1] \, \mathrm{d}q \\
	& = \int_{e^{-n}}^1 (n-1) (1 + \log y / n)^{n-2} \overline{\mu}_G[y,1] \frac{1}{y n} \mathrm{d}y  \\
	& = \left(\frac{n-1}{n}\right) \int_{e^{-n}}^1 \int_{e^{-n}}^s \frac{(1+\log y/n)^{n-2}}{y} \, \mathrm{d}y \,\mathrm{d}\overline{\mu}_G(s) .
\end{align*}
Let $f_n(s) = \mathbf{1}_{[e^{-n},1]}(s)\int_{e^{-n}}^s y^{-1}(1+\log y /n)^{n-2} \, \mathrm{d}y$. A simple proof shows that $f_n(s)\to f(s)=s$ as $n\to \infty$. Moreover, $f_n(s)$ is bounded by the function $e^2 f(s)=e^2 s$ which is integrable with respect to $\overline{\mu}_G$. Thus, using Lebesgue's dominated convergence theorem, we obtain
\[
c_n \to \int_0^1 f(s) \, \mathrm{d}\overline{\mu}_G(s) = \int_0^1 s \, \mathrm{d}\overline{\mu}_G(s)=1.
\]
Define $\mu_F'=\mu_F/c_n$. Then,
\[
\int_0^1 n (1-u)^{n-1} \, \mathrm{d}\mu_F'(u) = 1.
\]
Let
\[
d_t^* = c_n^{-1}  \alpha_{n,t}^{-1} d_t,
\]
where $\alpha_{n,t}= (1-20k/n)^{k-t+1}$. We now prove that $(\mu_F',d^*)$ is feasible for $(DLP)_{n,k}$. For this, we need the following result. The proof is deferred to end of the subsection.

\begin{proposition}\label{prop:lower_bound_1-exp}
	For any $q\in [0,1]$, $1- e^{-q n/k} \geq \left(  1 - 20 k/n  \right)\left(  1 - (1-q)^{n/k}  \right)$.
\end{proposition}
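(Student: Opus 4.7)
The plan is to upgrade the claim to a pointwise inequality between derivatives, which then integrates to the desired bound. Setting $m = n/k$, the statement reads $1 - e^{-mq} \geq (1 - 20/m)(1 - (1-q)^m)$. The case $m \leq 20$ is trivial: the factor $1 - 20/m$ is nonpositive, while $1 - (1-q)^m \geq 0$ and $1 - e^{-mq} \geq 0$. So I will assume $m > 20$.

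For such $m$, I would aim to establish the pointwise inequality $e^{-ms} \geq (1 - 20/m)(1-s)^{m-1}$ for all $s \in [0, 1)$. Integrating both sides against $m\, ds$ from $0$ to $q$ yields the claim, because $1 - e^{-mq} = m \int_0^q e^{-ms}\, ds$ and $1 - (1-q)^m = m\int_0^q (1-s)^{m-1}\, ds$. Taking logarithms and using $-\log(1-s) = \sum_{j\geq 1} s^j/j$, one gets
\[
\log\frac{e^{-ms}}{(1-s)^{m-1}} \;=\; -ms + (m-1)\sum_{j\geq 1} \frac{s^j}{j} \;=\; -s + (m-1)\sum_{j\geq 2}\frac{s^j}{j} \;\geq\; -s + \frac{(m-1)s^2}{2}.
\]
Combined with $\log(1 - 20/m) \leq -20/m$ (from $\log(1-x) \leq -x$), the pointwise claim reduces to showing $Q(s) := \frac{m-1}{2}\, s^2 - s + \frac{20}{m} \geq 0$ for every $s \in [0,1)$.

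Finally, $Q$ is a quadratic in $s$ with positive leading coefficient (since $m > 1$), and its discriminant is $1 - 40(m-1)/m = (40 - 39m)/m$, which is strictly negative whenever $m > 40/39$, and in particular for every $m > 20$. Hence $Q$ is strictly positive throughout, which closes the argument. The only nontrivial step is the discriminant computation; the constant $20$ is chosen with room to spare, so no further obstacle is anticipated.
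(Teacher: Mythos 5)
Your proof is correct and takes a genuinely different route from the paper's. The paper starts from the polynomial bound $e^{-q} \leq 1 - q + q^2/2$, reduces the claim to bounding the ratio $\bigl((1-q+q^2/2)^{n/k} - (1-q)^{n/k}\bigr)/\bigl(1-(1-q)^{n/k}\bigr)$ above by $20k/n$, and then argues via convexity of $x \mapsto x^{n/k}$ together with a case split on whether $q \leq 10k/n$, and within the complementary case a further split on $q \gtrless 1/2$, locating critical points of an auxiliary function. Your argument instead notes that both sides of the claimed inequality vanish at $q=0$, so it suffices to compare derivatives pointwise; taking logarithms and expanding $-\log(1-s)$ reduces everything to nonnegativity of the quadratic $Q(s) = \tfrac{m-1}{2}s^2 - s + \tfrac{20}{m}$, which follows from a one-line discriminant check for $m > 20$ (with $m \leq 20$ trivial). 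Both proofs use the constant $20$ with slack, but your version eliminates the case analysis and the critical-point computation entirely, and the key estimates ($\sum_{j\geq 2} s^j/j \geq s^2/2$ and $\log(1-x) \leq -x$) are transparent. The paper's route does carry one advantage: it is more modular if one later wants a sharper constant by refining the ratio bound region by region, but for the present purpose your argument is cleaner and shorter.
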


To prove that $(\mu_F',d_t^*)$ is feasible for $(DLP)_{n,k}$, we only need to show that
\[
d_t^* \geq \sup_{q\in [0,1]} \left\{  \left(  \frac{1-(1-q)^{n/k}}{q}\right) \mu_F'[0,q] + (1-q)^{n/k} d_{t+1}^*    \right\}.
\]
Let $q\in [0,1]$; then
\begin{align*}
	c_n \alpha_{n,t} d_t^* & = d_t \\
	& \geq \left(  \frac{1-e^{-qn/k}}{qn}\right) \overline{\mu}_G[e^{-qn},1] + e^{-qn} d_{t+1} \tag{$y=e^{-qn}$} \\
	& \geq \left( 1 - 20\frac{k}{n}  \right)\left( \frac{1-(1-q)^{n/k}}{q} \right)\mu_F'[0,q] c_n + (1-q)^{n/k} d_{t+1} \tag{Prop.~\ref{prop:lower_bound_1-exp}} \\
	& \geq \left( 1 - 20 \frac{k}{n} \right) c_n \alpha_{n,t+1}\left( \left( \frac{1-(1-q)^{n/k}}{q}  \right) \mu_F'[0,q] + (1-q)^{n/k} d_{t+1}^*  \right).  \tag{Def. of $d_{t+1}^*$}
\end{align*}
We can use $\alpha_{n,t}=(1-20k/n)\alpha_{n,t+1}$, with $\alpha_{n,k+1}=1$, to simplify terms. From here we deduce the feasibility of $(\mu_F',d^*)$ in $(DLP)_{n,k}$.

Now, to show the upper bound of $v_{n,k}^*$, we can minimize over $(\mu,d)$ feasible for $(DLP)_{n,k}$ and obtain the guarantee
\[
v_{n,k}^* \leq d_1^* = c_n^{-1} \alpha_{n,1}^{-1} d_1 = c_{n}^{-1} \alpha_{n,1}^{-1} v_{\infty,k}^* ,
\]
which concludes the proof of the upper bound in Theorem~\ref{thm:approximation_large_n}.

We close this subsection with the proof of Proposition~\ref{prop:lower_bound_1-exp}.

\begin{proof}[Proof of Proposition~\ref{prop:lower_bound_1-exp}]
	Recall that $e^{-q} \leq 1-q + q^2/2$. Thus, it is enough to show that for any $q\in [0,1]$,
	\[
	20\frac{k}{n} \geq \frac{(1-q-q^2/2)^{n/k}-(1-q)^{n/k}}{1-(1-q)^{n/k}} = \frac{ f(1-q+q^2/2) - f(1-q)}{f(1)- f(1-q)},
	\]
	where $f(q)= x^{n/k}$. Note that $f$ is convex. Now, we divide the proof into two cases: $q\leq 10k/n$ and $q> 10k/n$.
	\begin{itemize}
		\item For $q\leq 10 k/n$, note that
		\[
		\frac{q}{2} \cdot 1 + \left(  1 - \frac{q}{2} \right)\cdot (1-q) = 1-q + \frac{q^2}{2} .
		\]
		Thus, using the convexity of $f$, we have
		\[
		\frac{ f(1-q+q^2/2) - f(1-q)}{f(1)- f(1-q)} \leq \frac{q}{2} \leq 5 \frac{k}{n}.
		\]

		\item For $q > 10k/n$, $1-(1-q)^{n/k} \geq 1- (1-10k/n)^{n/k} \geq 1-e^{-10}$. Thus,
		\begin{align*}
			\frac{ f(1-q+q^2/2) - f(1-q)}{f(1)- f(1-q)} & \leq \frac{(n/k) (1-q+q^2/2)^{n/k-1}q^2/2 }{1-e^{-10}} ,
		\end{align*}
		where we used the mean value theorem on $f$ and the monotonicity of $f'(x)=(n/k)x^{n/k-1}$ to bound the numerator. Now, if $q\geq 1/2$, then
		\[
		\frac{ f(1-q+q^2/2) - f(1-q)}{f(1)- f(1-q)} \leq \frac{n}{k}\left( \frac{5}{8} \right)^{n/k-1} \leq 5 \frac{k}{n}
		\]
		for $n$ large enough.
		
		If $q \leq 1/2$, we proceed as follow. Let $h(q)= (1-q+q^2/2)^{n/k-1}q^2/2$. Then, $h(0)=0$, $h(1)=(1/2)^{n/k-1}$, and, using basic calculus, $h$ has two critical points, 
		\[
		q_1 = \frac{1+(k/n)- \sqrt{1-6k/n + (k/n)^2}}{2}, \quad q_2 =\frac{1+(k/n)+ \sqrt{1-6k/n + (k/n)^2}}{2}.
		\]
		It can be shown that $h$ is increasing in $[0,q_1]$ and decreasing in $[q_1,q_2]$. Moreover, $q_2\geq 1/2$, and we know the behavior of $\frac{ f(1-q+q^2/2) - f(1-q)}{f(1)- f(1-q)} $ in that case. Thus, we only need to bound the value of $h(q_1)$ to conclude. It can be shown that $q_1\leq 6k/n$; hence $ h(q_1) \leq 18 \frac{k}{n}$. Thus,
		\begin{align*}
			\frac{ f(1-q+q^2/2) - f(1-q)}{f(1)- f(1-q)} & \leq 20 \frac{k}{n},
		\end{align*}
		which finishes the proof.
	\end{itemize}

\end{proof}

\subsection{Missing Proofs of Subsection~\ref{subsec:approximation_guarantee_infinite_model}}\label{app:approx_guarantee_infinite_model}

In this subsection, we present the proofs of Lemma~\ref{lem:y_t_bounded_by_x_t} and Lemma~\ref{lem:x_t_bounded_by_w}. We begin with the proof of Lemma~\ref{lem:y_t_bounded_by_x_t}. We divide its proof in several lemmas and propositions.

\begin{lemma}\label{lem:bounds_over_integral_y_t}
	For any $t\leq k-1$, we have
	\[
	k(y_t-y_{t+1}) \leq \int_{y_{t+1}}^{y_t} \frac{-\log y}{1-y^{1/k}} \, \mathrm{d}y \leq \frac{k^2}{k-1} (  y_t^{(k-1)/k} - y_{t+1}^{(k-1)/k}  ).
	\]
\end{lemma}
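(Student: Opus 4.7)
My plan is to prove both bounds by establishing pointwise inequalities on the integrand $\dfrac{-\log y}{1 - y^{1/k}}$ and then integrating term by term over $[y_{t+1}, y_t]$. The whole argument will reduce to the standard elementary inequality $v \log v \geq v - 1$ for $v > 0$ (or equivalently $\log v \leq v - 1$), applied after a convenient substitution.

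For the lower bound, I would show that $\dfrac{-\log y}{1 - y^{1/k}} \geq k$ for all $y \in (0,1)$. Setting $u = y^{1/k} \in (0,1)$, so that $-\log y = -k \log u$, this becomes $\dfrac{-k \log u}{1 - u} \geq k$, i.e.\ $-\log u \geq 1 - u$, which is exactly the standard inequality $\log(1/u) \geq 1 - u$. Integrating over $[y_{t+1}, y_t]$ then immediately yields $\int_{y_{t+1}}^{y_t} \frac{-\log y}{1 - y^{1/k}} \, \mathrm{d}y \geq k(y_t - y_{t+1})$.

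For the upper bound, I would establish the pointwise inequality $\dfrac{-\log y}{1 - y^{1/k}} \leq k\, y^{-1/k}$ for $y \in (0,1)$. With the same substitution $u = y^{1/k}$, this becomes $-u \log u \leq 1 - u$, i.e.\ $u \log u \geq u - 1$, again the same inequality. Once this is in hand, I would simply observe that $\dfrac{d}{dy}\bigl[y^{(k-1)/k}\bigr] = \frac{k-1}{k}\, y^{-1/k}$, so
\[
\int_{y_{t+1}}^{y_t} k\, y^{-1/k} \, \mathrm{d}y = \frac{k^2}{k-1} \bigl( y_t^{(k-1)/k} - y_{t+1}^{(k-1)/k} \bigr),
\]
and integrating the pointwise bound gives the upper estimate.

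There is no real obstacle here: both bounds are tight instances of one convexity inequality, and the only thing to be careful about is the substitution $u = y^{1/k}$ to reduce the $k$-dependent integrand to the elementary form $(-\log u)/(1-u)$. The bounds are in fact sharp as $k \to \infty$ and $y_t, y_{t+1} \to 1$, which is consistent with the use of this lemma in the asymptotic analysis that bounds $|y_t - x_t|$ via Euler's method comparison with the governing ODE.
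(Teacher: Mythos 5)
Your proof is correct and takes essentially the same approach as the paper: both rely on the substitution $u = y^{1/k}$ together with the elementary bounds $1 \leq -\log u/(1-u) \leq 1/u$ for $u \in (0,1)$. Whether one changes variables in the integral first (as the paper does) or bounds the integrand pointwise in $y$ and then antidifferentiates (as you do) is an immaterial presentational choice.
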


\begin{proof}
	The proof follows by simply making the change of variable $x=y^{1/k}$ in the integral and using the fact that $1\leq -\log v/(1-v)\leq 1/v$ for any $v\in (0,1)$.
\end{proof}

As a corollary, we have the following lower bound over $y_t$.

\begin{proposition}\label{prop:lower_bound_y_k_ell}
	For $k\geq 4$ and for any $\ell$ we have $y_{k-\ell} \geq \ell/(32 k)$.
\end{proposition}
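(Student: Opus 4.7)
The strategy is to combine the implicit recursion \eqref{eq:implicit_recursion} with the upper bound from Lemma~\ref{lem:bounds_over_integral_y_t} and iterate from $t = k-1$ downward, using $y_k = 0$ as the anchor.

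First I would plug the upper bound
$$\int_{y_{t+1}}^{y_t} \frac{-\log y}{1-y^{1/k}}\,\mathrm{d}y \leq \frac{k^2}{k-1}\bigl(y_t^{(k-1)/k} - y_{t+1}^{(k-1)/k}\bigr)$$
into the recursion \eqref{eq:implicit_recursion}. The term $y_t(1-\log y_t)$ appearing on the right-hand side of \eqref{eq:implicit_recursion} is nonnegative for $y_t\in[0,1]$, so dropping it yields the clean one-step bound
$$y_t^{(k-1)/k} - y_{t+1}^{(k-1)/k} \;\geq\; \frac{(k-1)(\beta^*-1)}{k^2}, \qquad t \in \{0,1,\ldots,k-1\}.$$

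Next I would telescope this inequality over $t = k-\ell, k-\ell+1, \ldots, k-1$. Using $y_k = 0$, this gives
$$y_{k-\ell}^{(k-1)/k} \;\geq\; \frac{\ell\,(k-1)(\beta^*-1)}{k^2},$$
and therefore
$$y_{k-\ell} \;\geq\; \left(\frac{\ell\,(k-1)(\beta^*-1)}{k^2}\right)^{k/(k-1)}.$$

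The final step is to verify that this lower bound exceeds $\ell/(32k)$. Rearranging and noting that $\ell \geq 1$ implies $\ell^{1/(k-1)} \geq 1$, the comparison reduces to the $\ell$-free inequality
$$32k \left(\frac{(k-1)(\beta^*-1)}{k^2}\right)^{k/(k-1)} \;\geq\; 1,$$
which I would verify using the standing assumption $\beta^* \geq 1.25$ (so $\beta^* - 1 \geq 1/4$) together with $k\geq 4$. The $\ell=0$ case is trivial since $y_k = 0 \geq 0$.

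The main obstacle I expect is the last numerical check. The left-hand side is roughly $3$ at $k=4$ and grows toward $32(\beta^*-1) \gtrsim 10$ as $k\to\infty$, so there is slack but some care is required to handle it cleanly for all $k \geq 4$; one can do this by a direct computation at $k=4$ combined with a monotonicity argument in $k$, or by applying $(k-1)/k^2 \geq 1/(2k)$ and bounding $(8k)^{1/(k-1)} \leq 4$ for $k \geq 4$. Everything else in the argument is straightforward telescoping.
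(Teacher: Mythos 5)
Your proof is correct and follows essentially the same route as the paper: combine the implicit recursion with the upper bound from Lemma~\ref{lem:bounds_over_integral_y_t}, drop the nonnegative $y_t(1-\log y_t)$ term to get the one-step bound on $y_t^{(k-1)/k} - y_{t+1}^{(k-1)/k}$, telescope from $y_k=0$, and finish using $\beta^*\geq 1.25$. The paper simply states ``the bound now follows because $\beta^*\geq 1.25$'' and omits the final numerical verification that you spell out; your suggested route via $(k-1)/k^2\geq 1/(2k)$ and $(8k)^{1/(k-1)}\leq 4$ for $k\geq 4$ (which is decreasing in $k$ on that range and equals $32^{1/3}\approx 3.17$ at $k=4$) closes that check cleanly.
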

\begin{proof}
	For any $w\in [0,1]$, we have $w(1-\log w)\geq 0$. Then,
	\begin{align*}
		\beta^*-1 & \leq \beta^* -1 +y_t(1-\log y_t) \\
		& = \int_{y_{t+1}}^{y_t} -\frac{\log y}{1-y^{1/k}} \, \mathrm{d}y\\
		& \leq \frac{k^2}{k-1}\left(  y_{t}^{(k-1)/k} - y_{t+1}^{(k-1)/k}   \right) .
	\end{align*}
	Using  $y_k=0$, this shows that for $t=0,\ldots,k-1$,
	\[
	y_{t-\ell}^{(k-1)/k} \geq \ell \frac{(\beta^* - 1)(k-1)}{k^2}.
	\]
	The bound now follows because $\beta^* \geq 1.25$.
\end{proof}

From the corollary, we obtain the lower bound $y_{k-1} \geq 1/(32 k)$.

Define the two following sequences. Let $x_0=1$, $z_0=1$ and
\begin{align*}
	x_{t+1} &= \max\left\{0, x_t - \frac{1}{k} \left( \beta^* - 1 + x_{t}(1-\log x_{t}) \right)\right\} , \\
	z_{t+1} &= \max \left\{0, z_t - \frac{1}{(32k)^{1/k} k}\left( \beta^* - 1 + z_t (1-\log z_t)\right)\right\}.
\end{align*}
If we ignore the max operator, the sequences represent the Euler method applied to (ODE) with step-sizes $1/k$ and $1/((32k)^{1/k}k)$, respectively.
The following result states the order between elements $x_t$, $y_t$ and $z_t$.

\begin{proposition}\label{prop:y_t_bounded_by_x_t_and_z_t}
	For any $t=0,\ldots,k$, we have $x_{t}\leq y_t\leq z_t$.
\end{proposition}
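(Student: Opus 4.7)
The plan is to prove both inequalities simultaneously by induction on $t$, with base case $x_0=y_0=z_0=1$ trivial. The central tool is Lemma~\ref{lem:bounds_over_integral_y_t} combined with the implicit recursion~\eqref{eq:implicit_recursion}, which together give the two-sided bound
$$k(y_t-y_{t+1}) \,\leq\, \beta^*-1+y_t(1-\log y_t) \,\leq\, \frac{k^2}{k-1}\bigl(y_t^{(k-1)/k}-y_{t+1}^{(k-1)/k}\bigr).$$

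For the lower bound $x_t \leq y_t$, the left half directly yields $y_{t+1} \geq y_t - \frac{1}{k}(\beta^*-1+y_t(1-\log y_t))$, which is exactly the form of the $x$-recursion. I would then show that the map $\phi(w)=w-\frac{1}{k}(\beta^*-1+w(1-\log w))$ is nondecreasing on the interval containing the sequences: a direct computation gives $\phi'(w)=1+(\log w)/k$, which is nonnegative as long as $w\geq e^{-k}$. By Proposition~\ref{prop:lower_bound_y_k_ell}, $y_t\geq 1/(32k)\gg e^{-k}$ for $t\leq k-1$ and for $k$ large, so monotonicity applies. Induction then gives $y_{t+1}\geq \phi(y_t)\geq \phi(x_t)$, and comparing with $x_{t+1}=\max\{0,\phi(x_t)\}$ concludes $y_{t+1}\geq x_{t+1}$ (the max with $0$ only helps).

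For the upper bound $y_t\leq z_t$, I first need to convert the difference of $(k-1)/k$ powers into a difference of $y_t,y_{t+1}$ themselves. By the mean value theorem applied to $g(x)=x^{(k-1)/k}$, there is $\xi\in[y_{t+1},y_t]$ with $y_t^{(k-1)/k}-y_{t+1}^{(k-1)/k}=\frac{k-1}{k}\xi^{-1/k}(y_t-y_{t+1})$. Bounding $\xi^{-1/k}\leq y_{t+1}^{-1/k}\leq (32k)^{1/k}$ via Proposition~\ref{prop:lower_bound_y_k_ell} gives
$$y_{t+1}\leq y_t-\frac{1}{(32k)^{1/k}k}\bigl(\beta^*-1+y_t(1-\log y_t)\bigr),$$
which matches the $z$-recursion. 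A parallel monotonicity argument for $\psi(w)=w-\frac{1}{(32k)^{1/k}k}(\beta^*-1+w(1-\log w))$ (whose derivative is nonnegative on a range containing $[1/(32k),1]$ for $k$ large, since $(32k)^{1/k}\to 1$) then yields $y_{t+1}\leq \psi(y_t)\leq \psi(z_t)\leq z_{t+1}$.

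The main obstacle is not any single step but assembling the pieces correctly: one must verify that both $\phi$ and $\psi$ are nondecreasing on the entire trajectory, which is precisely why the \emph{a priori} lower bound $y_t\geq 1/(32k)$ from Proposition~\ref{prop:lower_bound_y_k_ell} was established first. A minor bookkeeping issue is the terminal case $t=k-1\to k$, where $y_k=0$; here $y_k\leq z_k$ is automatic, and the lower bound $x_k\leq y_k=0$ forces $x_k=0$, which is consistent with $x_{t+1}$ being defined via a $\max$ with $0$. One should also note the implicit assumption $k$ is large enough for the monotonicity thresholds (e.g.\ $k\geq 4$ as in Proposition~\ref{prop:lower_bound_y_k_ell}) to apply, which is fine since the asymptotic regime $k\to\infty$ is the one of interest.
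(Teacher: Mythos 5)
Your proof follows essentially the same route as the paper's: both halves start from Lemma~\ref{lem:bounds_over_integral_y_t} together with the implicit recursion~\eqref{eq:implicit_recursion}, and the $y_t\le z_t$ half uses the same mean-value-theorem step combined with $y_{t+1}\ge 1/(32k)$ from Proposition~\ref{prop:lower_bound_y_k_ell}; you usefully make explicit the monotonicity of the Euler maps $\phi,\psi$, which the paper dispatches with ``it is evident.'' One caveat in the $x_t\le y_t$ half: the chain $y_{t+1}\ge \phi(y_t)\ge\phi(x_t)$ requires $\phi$ nondecreasing on $[x_t,y_t]$, and citing $y_t\ge 1/(32k)\gg e^{-k}$ only controls the \emph{right} endpoint of that interval; the left endpoint $x_t$ can a priori drop below $e^{-k}$ (or equal $0$), where $\phi'<0$ and the inequality $\phi(y_t)\ge\phi(x_t)$ need not hold. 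The fix is exactly what your parenthetical ``(the max with $0$ only helps)'' gestures at, but should be the structure of the argument rather than an aside: on $(0,e^{-k}]$ the map $\phi$ is decreasing from $\phi(0^+)=-(\beta^*-1)/k<0$, so $\phi(x_t)<0$ there and $x_{t+1}=\max\{0,\phi(x_t)\}=0\le y_{t+1}$ immediately; and whenever $\phi(x_t)\ge 0$ one necessarily has $x_t>e^{-k}$, so monotonicity on $[x_t,y_t]\subset[e^{-k},1]$ does apply. The $y_t\le z_t$ half is free of this wrinkle because there the left endpoint of $[y_t,z_t]$ is $y_t$ itself, which is bounded below by $1/(32k)$.
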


\begin{proof}
	For the first bound, we use the lower bound found in Lemma~\ref{lem:bounds_over_integral_y_t} to deduce that for any $t=1,\ldots,k-1$,
	\[
	y_{t+1} \geq y_t - \frac{1}{k} \left( \beta^* - 1 + y_{t}(1-\log y_{t}) \right).
	\]
	From here, it is evident that $x_t \leq y_t$ for all $t=0,\ldots,k$.
	
	For the second bound, we need to use the fact that $y_{k-1}\geq 1/(32k)$. Let $t+1 \leq k-1$; then
	\begin{align*}
		\beta^* - 1 + y_t (1-\log y_t) & = \int_{y_{t+1}}^{y_t} - \frac{\log y}{1-y^{1/k}} \, \mathrm{d}y\\
	& \leq \frac{k^2}{k-1} ( y_{t}^{(k-1)/k}  - y_{t+1}^{(k-1)/k}  ) \\
	& = k (\xi)^{-1/k} ( y_t - y_{t+1}) \tag{Mean value theorem} \\
	& \leq k (32k)^{1/k} (y_t - y_{t+1}) . \tag{$y_{k-1}\geq 1/(32k)$}
	\end{align*}
	Reordering terms, we obtain
	\[
	y_{t+1} \leq y_t - \frac{1}{(32k)^{1/k} k} (\beta^* - 1 + y_t(1-\log y_t)) ,
	\]
	from which the bound $y_t\leq z_t$ follows for $t=0,\ldots,k-1$. Since $z_{k}\geq 0 = y_{k}$, the bound holds for any $t=0,\ldots,k$.
\end{proof}

The next result states that the sequence $x_t$ and $z_t$ are at most $\mathcal{O}(\log k/k)$ far from each other.

\begin{proposition}\label{prop:z_t_bounded_by_x_t_error}
	$z_t - x_t \leq 4\log(32 k)/k$, for any $t=0,\ldots,k$.
\end{proposition}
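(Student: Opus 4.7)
The plan is to compare the two Euler-style iterations directly by a single increment inequality. Set $\rho := (32k)^{1/k}$ and $\delta := \log(32k)/k$, so that $1 - 1/\rho = 1 - e^{-\delta} \leq \delta$. Let $g(w) := \beta^* - 1 + w(1-\log w)$ and define $\phi(w) := w - g(w)/k$, $\phi_\rho(w) := w - g(w)/(\rho k)$; then $x_{t+1} = \max\{0, \phi(x_t)\}$ and $z_{t+1} = \max\{0, \phi_\rho(z_t)\}$. Two elementary facts drive the argument: $\phi(w) \leq \phi_\rho(w)$ pointwise (since $\rho \geq 1$ and $g \geq 0$), and $g$ is non-decreasing on $[0,1]$ because $g'(w) = -\log w \geq 0$.

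My first step will be to show by induction that $x_t \leq z_t$ for all $t$. The base case is $x_0 = z_0 = 1$. Computing $\phi_\rho'(w) = 1 + \log(w)/(\rho k) \geq 0$ for $w \geq e^{-\rho k}$, the map $\phi_\rho$ is non-decreasing on $[e^{-\rho k}, 1]$. If $x_t \geq e^{-\rho k}$, the chain $\phi(x_t) \leq \phi_\rho(x_t) \leq \phi_\rho(z_t)$, combined with the outer $\max\{0, \cdot\}$, gives $x_{t+1} \leq z_{t+1}$. If $x_t < e^{-\rho k}$, then $\phi(x_t) \leq x_t - (\beta^*-1)/k < 0$ (valid for $k$ sufficiently large, since $e^{-\rho k} \ll (\beta^*-1)/k$), so $x_{t+1} = 0 \leq z_{t+1}$.

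My second step will be the increment bound $D_{t+1} \leq D_t + \beta^*(1-1/\rho)/k$, where $D_t := z_t - x_t \geq 0$. The key trick is that $x_{t+1} = \max\{0, \phi(x_t)\} \geq \phi(x_t)$ holds universally; together with Step 1, which rules out the case $\phi_\rho(z_t) < 0 \leq \phi(x_t)$, a short check on the signs of $\phi_\rho(z_t)$ and $\phi(x_t)$ yields uniformly
\[
D_{t+1} \;\leq\; \phi_\rho(z_t) - \phi(x_t) \;=\; D_t + \frac{g(x_t) - g(z_t)}{k} + \frac{g(z_t)(1-1/\rho)}{k}.
\]
Monotonicity of $g$ kills the middle term, and $g(z_t) \leq g(1) = \beta^*$ bounds the last, giving the stated increment.

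Iterating this bound for $t \leq k$ gives $D_t \leq \beta^*(1-1/\rho) \leq \beta^* \delta$. Since the whole asymptotic analysis is in the large-$k$ regime (the section already assumes $\beta^* \geq 1.25$, to be validated a posteriori, and Theorem~\ref{thm:bound_infnite_model} eventually yields $\beta^* \to \bar\beta \approx 1.341$), we may assume $\beta^* \leq 4$, obtaining $D_t \leq 4\log(32k)/k$ as required. The principal obstacle is handling the $\max\{0, \cdot\}$ truncations cleanly; this is resolved by the one-sided inequality $x_{t+1} \geq \phi(x_t)$, which furnishes a single uniform estimate covering every truncation configuration at once.
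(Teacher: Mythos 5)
Your proof is correct and follows essentially the same strategy as the paper: derive a one-step increment bound of order $\log k/k^2$ using the inequality $x_{t+1} \geq \phi(x_t)$ to control the truncation at zero, use the monotonicity of $g(w)=\beta^*-1+w(1-\log w)$ to kill the cross term, and telescope over $t \leq k$. Your self-contained proof of $x_t \leq z_t$ via monotonicity of $\phi_\rho$ on $[e^{-\rho k},1]$ (the paper instead invokes the $y_t$-based Proposition~\ref{prop:y_t_bounded_by_x_t_and_z_t}) and the tighter bound $1-e^{-\delta}\leq \delta$ (the paper uses $e^{\delta}-1\leq 2\delta$, valid for $k\geq 6$, then compensates with $\beta^*-1\leq 1$ and $f(z_t)\leq 1$) are minor variants that land on the same constant.
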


\begin{proof}
	Let $f(x) = x(1-\log x)$. Let $e_t=z_t - x_t \geq 0$. We show that $e_t \leq 4 t \log(32 k)/k^2$. The result is true if $z_{t+1}=0$. So, for $z_{t+1}> 0$ we have
	\begin{align*}
		e_{t+1} & = z_{t+1} - x_{t+1} \\
		& \leq z_t - \frac{1}{(32 k)^{1/k} k} (\beta^* - 1 + f(z_t)) - x_t + \frac{1}{k} (\beta^* - 1 + f(x_t)) \\
		& = e_t + \frac{\beta^*-1}{k} \left(  1 - \frac{1}{(32k)^{1/k}}  \right) + \frac{1}{k} (f(x_t) - f(z_t)) + f(z_t) \frac{1}{k} \left(  1 - \frac{1}{(32k)^{1/k}}  \right) .
	\end{align*}
	Note that
	\[
	1 - \frac{1}{(32k)^{1/k}} \leq (32k)^{1/k} - 1 = e^{\log(32 k)/k} - 1 \leq 2\frac{\log (32 k)}{k} ,
	\]
	for $k\geq 6$, using $e^{x}\leq 1+ 2x$ for $x\leq 1$. Since $\beta^* \leq 2$ and $f(x) \leq 1$ for any $x\in [0,1]$ and $f(x_t) \leq f(z_t)$ since $z_t \geq x_t$, we have
	\[
	e_{t+1} \leq e_t + 4 \frac{\log (32 k)}{k^2}.
	\]
	Solving this recursion gives us the desired inequality.
\end{proof}

\begin{proof}[Proof of Lemma~\ref{lem:y_t_bounded_by_x_t}]
	Proposition~\ref{prop:y_t_bounded_by_x_t_and_z_t} immediately gives us $x_t\leq y_t $ for all $t=0,\ldots,k$. Proposition~\ref{prop:y_t_bounded_by_x_t_and_z_t} in combination with Proposition~\ref{prop:z_t_bounded_by_x_t_error} gives us the bound $y_t\leq x_t + 4 \log(32k)/k$ for all $t=0,\ldots,k$. This finishes the proof.
\end{proof}


\begin{proof}[Proof of Lemma~\ref{lem:x_t_bounded_by_w}]
	For the proof of this lemma, we need several properties of $w$, the solution of (ODE), that we list below. Property 1 and 2 are easy to deduce, while the proof of property 3 appears in~\cite{correa2021posted}.
	\begin{enumerate}
		\item $w'' = w'\log w \geq 0$, and so $w$ is convex.
		
		\item $w''' \geq 0$. In particular, $w'$ is a convex function.
		
		\item $w + w(\log w - 1)/k + \log (w) (w (\log w - 1) - (\beta^*-1))/(2k^2) \geq  w^{k/(k-1)}(k-1)/k$, for any real $k\geq 2$. This is a byproduct of Proposition D.1 in~\cite{correa2021posted}.
	\end{enumerate}
	
	We show that $x_t\leq w(t/k)$ by induction in $t$.	For $t=0$, we have $w(0) = 1 = x_0$. For $t=1$, using the convexity of $w$, we have $w(1/k) \geq w(0) + w'(0)/k = 1- \beta^*/k$. On the other side,
	\[
	x_1 = x_0 - \frac{1}{k} \left( \beta^* - 1 + x_0 (1- \log x_0)  \right) = 1 - \frac{\beta^*}{k} \leq w\left( \frac{1}{k} \right).
	\]
	Assume the result is true for $t$ and let us show it for $t+1$. If $w(t/k) = 0$, then $x_t=0$ by the inductive hypothesis; hence, $x_{t+1} = 0 = w((t+1)/k)$. Let's assume that $w(t/k)>0$.
	
	\begin{claim}
		Let $\rho\in [t/k,(t+1)/k]$; if $w(\rho) >0$, then
		\[
		w(\rho) \geq x_t - \frac{1}{k} \left( \beta^* - 1 + x_t(1-\log x_t) \right).
		\]
	\end{claim}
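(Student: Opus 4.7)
The plan is to split on the sign of the desired lower bound $R := x_t - \tfrac{1}{k}\bigl(\beta^* - 1 + x_t(1-\log x_t)\bigr)$. If $R \leq 0$, the conclusion $w(\rho) \geq R$ is immediate from the hypothesis $w(\rho)>0$, so I only need to treat the case $R > 0$. For this nontrivial case, the tangent-line inequality for convex functions (using property 1, $w'' \geq 0$) yields $w(\rho) \geq w(t/k) + w'(t/k)(\rho - t/k)$ for $\rho \geq t/k$. The ODE gives $w'(t/k) = w(t/k)(\log w(t/k)-1) - (\beta^*-1)$, and since $w(t/k) \in (0,1]$ both summands are nonpositive, so $w'(t/k) \leq 0$. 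Hence the right-hand side of the tangent bound is smallest at $\rho = (t+1)/k$; using $\rho - t/k \leq 1/k$ gives
\[
w(\rho) \;\geq\; w(t/k) + \tfrac{1}{k}w'(t/k) \;=\; g\bigl(w(t/k)\bigr), \qquad g(y) := y - \tfrac{1}{k}\bigl(\beta^* - 1 + y(1-\log y)\bigr).
\]

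It then remains to show $g(w(t/k)) \geq g(x_t) = R$ by combining the inductive hypothesis $x_t \leq w(t/k)$ with monotonicity of $g$. A direct computation gives $g'(y) = 1 + (\log y)/k$, so $g$ is strictly increasing precisely on $(e^{-k},1]$. The key observation is that $R = g(x_t) > 0$ already places $x_t$ in this range: expanding yields $x_t\bigl((k-1) + \log x_t\bigr) > \beta^* - 1 > 0$, which together with $x_t \in (0,1]$ forces $\log x_t > 1-k$, i.e., $x_t > e^{1-k} > e^{-k}$. Therefore $g$ is strictly increasing on $[x_t, 1]$, and the inductive hypothesis $x_t \leq w(t/k)$ yields $g(w(t/k)) \geq g(x_t) = R$, completing the claim.

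The main subtlety is that $g$ is not monotone on all of $[0,1]$---it has a unique minimum at $e^{-k}$---so one cannot naively compose the inductive inequality $x_t \leq w(t/k)$ with $g$. The case split on the sign of $R$ is precisely what bypasses this obstacle: the trivial case $R \leq 0$ absorbs the bad region, and in the nontrivial case the sign constraint itself pushes $x_t$ into the branch where $g$ is increasing, with no a priori lower bound on $x_t$ needed.
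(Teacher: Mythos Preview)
Your proof is correct and takes a genuinely different, more elementary route than the paper.

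The paper expands $w$ to second order at $t/k$, uses property~2 ($w'''\geq 0$) to drop the cubic remainder, uses convexity of $w'$ to push $\rho$ to $(t+1)/k$, and then invokes the nontrivial inequality of property~3 (imported from \cite{correa2021posted}) together with $e^x\geq 1+x$ to pass from $w(t/k)$ to $x_t$. By contrast, you use only the first-order tangent bound from property~1 to obtain $w(\rho)\geq g(w(t/k))$ with $g(y)=y-\tfrac{1}{k}(\beta^*-1+y(1-\log y))$, and then handle the non-monotonicity of $g$ by a case split on the sign of $R=g(x_t)$: when $R\leq 0$ the hypothesis $w(\rho)>0$ already gives the claim, and when $R>0$ the positivity itself forces $x_t>e^{1-k}>e^{-k}$, which is exactly the region where $g'(y)=1+\tfrac{\log y}{k}>0$, so the inductive bound $x_t\leq w(t/k)$ can be pushed through $g$. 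Your argument therefore dispenses with properties~2 and~3 entirely---in particular it avoids the external inequality from \cite{correa2021posted}---at no cost in strength for this claim. The paper's heavier machinery buys nothing additional here; it is presumably used because those higher-order properties are already in hand.
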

	\begin{proof}
		The result is immediate if $x_t=0$. Thus, let us assume that $x_t >0$. Using the Taylor expansion around $t/k$, we have
		\begin{align*}
			w(\rho) &= w\left( t/k \right) + (\rho - t/k)w'\left( t/k\right) + \frac{(\rho - t/k)^2}{2} w''\left( t/k  \right) + \frac{(\rho-t/k)^3}{6} w'''(\xi) \\
			& \geq w\left( t/k \right) + (\rho - t/k)w'\left( t/k\right) + \frac{(\rho - t/k)^2}{2} w''\left( t/k  \right),
		\end{align*}
		since $w'''\geq 0$ by the second property listed above. Now, the function
		$$\rho \mapsto (\rho - t/k)w'\left( t/k\right) + \frac{(\rho - t/k)^2}{2} w''\left( t/k  \right)$$
		is decreasing because its derivative is upper bounded by $w'(\rho)\leq 0$ due to the convexity of $w'$ (see the second property of $w$ above). Thus,		
		\begin{align*}
			w(\rho) &\geq w\left( \frac{t}{k} \right) + \frac{1}{k}w'\left( \frac{t}{k}\right) + \frac{1}{2k^2} w''\left( \frac{t}{k}  \right)\\
			& = w\left( \frac{t}{k} \right) +  \left( 1 + \frac{1}{2k} \log w\left( \frac{t}{k}\right) \right)\left(\frac{w(t/k)\left( \log w(t/k) -1 \right) - (\beta^* - 1)}{k}\right)  \\
			& \geq \frac{k-1}{k} w \left( \frac{t}{k} \right)^{k/(k-1)} - \frac{\beta^*-1}{k} \\
			& \geq \frac{k-1}{k} x_t^{k/(k-1)} - \frac{\beta^*-1}{k}.
		\end{align*}
		In the second inequality we used the third property of $w$ listed above, and in the last inequality we used the inductive hypothesis. Now,
		\[
		x_t^{k/(k-1)} = x_t \cdot x_t^{1/(k-1)} = x_{t} e^{\log (x_t)/(k-1)}\geq x_t \left( 1 + \frac{1}{k-1} \log x_t  \right) = \frac{k}{k-1} x_t + \frac{x_t}{k-1}\left(  \log x_t -1 \right)
		\]
		by using the standard inequality $1+x \leq e^x$ for any $x$. The claim now follows by plugging this inequality in the inequality above.
	\end{proof}
	
	Thus, for $\rho \in [t/k,(t+1)/k]$ with $w(\rho)>0$, we have
	\[
	x_{t+1} = \max \left\{  0 , x_t - \frac{1}{k}( \beta_\infty^* - 1 + x_t (1-\log _x))  \right\} \leq w(\rho).
	\]
	If $\rho^*\in [t/k,(t+1)/k]$, since $w(t/k)>0$, then $\rho^* > t/k$, where $\rho^*=\inf\{ \rho\in [0,1] : w(\rho)=0 \}$. By the continuity of $w$ we obtain
	\[
	w\left( \frac{t+1}{k} \right) = 0 = w(\rho^*) = \lim_{\rho\nearrow \rho^*} w(\rho) \geq x_{t+1}.
	\]
	If $\rho^*\notin [t/k, (t+1)/k]$, then we obtain immediately that $w((t+1)/k)\geq x_{t+1}$. This finishes the proof of Lemma~\ref{lem:x_t_bounded_by_w}.
\end{proof}

\subsection{Proof of $v_{\infty,k} \leq \bar{\gamma}(1- c \log k / k)$}\label{app:upperbound_v}

To achieve an upper bound for $v_{\infty,k}$, we follow a similar approach to the one used in the proof of a lower bound of $v_{\infty,k}$. The idea is to study $\rho^*$, the value $\rho\in [0,1]$ such that $w(\rho)=0$, where $w$ is the solution of (ODE). For the lower bound of $v_{\infty,k}^*$ we showed that $\rho^* \geq 1 - C \log k / k$. We defined
\[
I(\beta) = \int_{0}^1 \frac{\mathrm{d}w}{\beta - 1 + w(1-\log w)},
\]
which is decreasing in $\beta$ and $I(\overline{\beta})=1$. Using these two facts, we deduced that $\beta^* \leq \overline{\beta}(1+ C' \log k / k)$. We plan to use a similar approach for the upper bound of $v_{\infty,k}$. We aim to show that $\rho^* \leq 1 - c\log k/k$ for some $c$, which will imply that $\beta^* \geq \overline{\beta}(1-c'\log k/k)$.

Recall the definition of the sequence $z_t$ presented in the previous subsection:
\begin{align*}
z_0 & = 0 \\
z_{t+1} &= \max \left\{0, z_t - \frac{1}{(32k)^{1/k} k}\left( \beta^* - 1 + z_t (1-\log z_t)\right)\right\} .
\end{align*}
By repeating the same analysis as in the previous subsection (see the proof of Lemma~\ref{lem:x_t_bounded_by_w}), we obtain the following result.
\begin{lemma}
	$w(t/k(32k)^{1/k}) \geq z_{t}$ for any $t=1,\ldots,k$.
\end{lemma}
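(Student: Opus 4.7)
The plan is to mirror the proof of Lemma~\ref{lem:x_t_bounded_by_w} verbatim, swapping the step size $1/k$ for $h = 1/(k(32k)^{1/k})$, and evaluating the ODE argument at times $\rho = t h$ rather than $\rho = t/k$. Writing $K = 1/h = k(32k)^{1/k} \geq 2$, the three properties of $w$ used in that proof all remain at our disposal: Properties~1 and~2 are parameter-free, and Property~3 (from Proposition~D.1 of~\cite{correa2021posted}) is stated for any real parameter at least $2$, hence applies with $K$ in place of $k$.

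I would proceed by induction on $t$ with the same case structure as before. Base case $t=0$: $w(0) = 1 = z_0$, and convexity of $w$ gives $w(h) \geq w(0) + h\, w'(0) = 1 - h\beta^* = z_1$. Inductive step: assume $w(t h) \geq z_t$; if $w(t h) = 0$ then $z_t = z_{t+1} = 0$ and the bound is immediate, so assume $w(t h) > 0$. Taylor's theorem together with $w''' \geq 0$ (Property~2) yields
\[
w((t+1)h) \;\geq\; w(t h) + h\, w'(t h) + \tfrac{h^2}{2}\, w''(t h).
\]
Using the ODE identity $w' = w(\log w - 1) - (\beta^* - 1)$ and $w'' = w' \log w$, the right-hand side coincides with the left-hand side of Property~3 applied to $w(t h)$ with parameter $K$, minus $h(\beta^* - 1)$. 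Hence
\[
w((t+1)h) \;\geq\; \tfrac{K-1}{K}\, w(t h)^{K/(K-1)} - h(\beta^* - 1).
\]
The inductive hypothesis $w(t h) \geq z_t$ combined with the elementary bound $z_t^{K/(K-1)} = z_t\, e^{(\log z_t)/(K-1)} \geq z_t + z_t(\log z_t)/(K-1)$ (from $e^x \geq 1 + x$) then reduces the right-hand side to $z_t - h\bigl(\beta^* - 1 + z_t(1 - \log z_t)\bigr) = z_{t+1}$ whenever $z_{t+1} > 0$, and the bound is vacuous when $z_{t+1} = 0$.

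The main obstacle, exactly as in Lemma~\ref{lem:x_t_bounded_by_w}, is the boundary case $\rho^* \in [t h, (t+1)h]$ in which $w$ reaches $0$ inside the interval, so that $(t+1)h$ is no longer a legitimate evaluation point for the Taylor expansion. I would handle this verbatim as in the original proof: verify that the quadratic $\rho \mapsto (\rho - t h)\, w'(t h) + \tfrac{(\rho - t h)^2}{2}\, w''(t h)$ is monotonically decreasing on $[t h, (t+1)h]$ (its derivative is dominated by $w'(\rho) \leq 0$ by convexity of $w'$, Property~2), then take $\rho \nearrow \rho^*$ and use continuity of $w$ together with $w(\rho^*) = 0$ to conclude $w((t+1)h) = 0 \geq z_{t+1}$.
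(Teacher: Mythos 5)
Your proposal is correct and follows exactly the approach the paper intends: the paper's own proof of this lemma is just the one-line remark ``by repeating the same analysis as in the proof of Lemma~\ref{lem:x_t_bounded_by_w},'' and you carry out that repetition faithfully, correctly substituting $K = k(32k)^{1/k} \geq 2$ for $k$ so that Property~3 (Proposition D.1 of~\cite{correa2021posted}) still applies. The algebraic chain from the Taylor expansion through Property~3 to the recursion $z_{t+1} = z_t - h(\beta^* - 1 + z_t(1-\log z_t))$ checks out, and the boundary case $\rho^* \in [th,(t+1)h]$ is handled the same way as in the original lemma.
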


The following result controls the distance between $w(t/k)$ and $w(t/(k(32)^{1/k}))$. The proof follows by the convexity of $w$ and we skip it for brevity.

\begin{proposition}
	For any $t=1,\ldots,k$, $w(t/k) \geq w(t/(k(32k)^{1/k})) - 2\log(32k)/k$.
\end{proposition}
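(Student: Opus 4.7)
The plan is to combine convexity of $w$ with the uniform Lipschitz bound $|w'| \leq \bar\beta$ extracted from the ODE. Set $s_1 := t/(k(32k)^{1/k})$ and $s_2 := t/k$, so that $0 \leq s_1 \leq s_2 \leq 1$ and the claim is equivalent to $w(s_1) - w(s_2) \leq 2\log(32k)/k$.

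First I would estimate the gap $s_2 - s_1 = (t/k)\bigl(1 - (32k)^{-1/k}\bigr)$. Since $\log(32k)/k \to 0$ as $k \to \infty$, the elementary expansion $e^x - 1 \leq x + x^2$ (valid for small $x$) yields $(32k)^{1/k} - 1 \leq (1+o(1))\log(32k)/k$, and therefore
\[
s_2 - s_1 \;\leq\; (1+o(1))\,\frac{t}{k}\cdot\frac{\log(32k)}{k}.
\]

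Next I would exploit that $w$ is convex (property 1 of $w$ recalled earlier: $w'' = w'\log w \geq 0$) and nonincreasing on $[0,1]$ (with $w' \leq 0$ on $[0,\rho^*]$ and $w \equiv 0$ on $[\rho^*,1]$). The supporting-line inequality at $s_1$ gives $w(s_2) \geq w(s_1) + w'(s_1)(s_2-s_1)$, i.e.\
\[
w(s_1) - w(s_2) \;\leq\; |w'(s_1)|\,(s_2-s_1).
\]
From the ODE, $|w'(s)| = (\bar\beta - 1) + w(s)(1 - \log w(s))$. For $w \in [0,1]$, the map $w \mapsto w(1-\log w)$ is nondecreasing (its derivative is $-\log w \geq 0$) and attains maximum value $1$ at $w=1$, so $|w'(s_1)| \leq \bar\beta$.

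Combining the three estimates,
\[
w(s_1) - w(s_2) \;\leq\; \bar\beta\,(1+o(1))\,\frac{t}{k}\cdot\frac{\log(32k)}{k} \;\leq\; 2\,\frac{\log(32k)}{k},
\]
for all $t \leq k$ and $k$ sufficiently large, using $\bar\beta \approx 1.341 < 2$. The main (mild) technical obstacle is calibrating the $(1+o(1))$ slack so that it is absorbed by the strict gap $2 - \bar\beta > 0$; this only requires naming a threshold $k_0$ beyond which both the Taylor remainder for $(32k)^{1/k}$ and the slack in $\bar\beta < 2$ are sharp enough, which is immediate from the explicit form of the estimate.
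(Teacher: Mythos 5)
Your proposal is correct and takes the convexity route the paper alludes to (the paper only says ``the proof follows by the convexity of $w$ and we skip it for brevity''), so the comparison is mostly about tightening your presentation. The skeleton — (i) the supporting-line inequality $w(s_1)-w(s_2) \leq |w'(s_1)|(s_2-s_1)$ from convexity, (ii) a uniform bound on $|w'|$ from the ODE, and (iii) a bound on $s_2-s_1$ — is exactly the right decomposition. Two small remarks. First, the detour through $(32k)^{1/k}-1$ and a $(1+o(1))$ correction is unnecessary: you already wrote $s_2 - s_1 = (t/k)\bigl(1-(32k)^{-1/k}\bigr) = (t/k)\bigl(1 - e^{-\log(32k)/k}\bigr)$, and the elementary estimate $1-e^{-x} \leq x$ gives $s_2 - s_1 \leq \log(32k)/k$ exactly, with no asymptotic slack and no ``$k$ large enough'' caveat needed at this step. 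Second, and a bit more substantive: the ODE in this section is actually driven by $\beta^* = \beta^*_{\infty,k}$, not by $\bar\beta$ (the display in the text appears to contain a typo; the Euler sequence $x_t$ uses $\beta^*$, and so does the inverse-function computation for $\rho^*$). The correct derivative bound is therefore $|w'(s)| = (\beta^*-1) + w(s)(1-\log w(s)) \leq \beta^*$, and one then invokes $\beta^* \leq 2$ (which the paper uses elsewhere, e.g.\ in the proof of the bound on $z_t - x_t$). Your claim $|w'| \leq \bar\beta$ would be an \emph{under}estimate in the direction that matters; fortunately your final comparison against $2$ is what actually carries the argument, and that remains valid because $\beta^* \leq 2$. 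With the clean estimate $1-e^{-x}\leq x$ and the bound $|w'|\leq\beta^*\leq 2$, you obtain $w(s_1)-w(s_2) \leq 2\log(32k)/k$ with no restriction on $k$, which matches the stated proposition.
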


The next proposition controls the error of $\rho^*$ from above.

\begin{proposition}
	We have $\rho^* \leq 1 - 32 \log(32k)/k$.
\end{proposition}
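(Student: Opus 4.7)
The plan is to combine the \emph{global} identity produced by telescoping the defining recursion of the $y$-sequence with the continuous identity obtained by integrating the ODE along $[0,\rho^*]$, and to control the gap between them using the lemma and proposition immediately preceding the statement. Specifically, I would first sum the recursion
\[
\int_{y_{t+1}}^{y_t}\frac{-\log y}{1-y^{1/k}}\,\mathrm{d}y \;=\; \beta^* - 1 + y_t(1-\log y_t)
\]
for $t=0,\dots,k-1$. The boundary values $y_0=1,\,y_k=0$ collapse the left-hand sides into the single integral $\int_0^1 \tfrac{-\log y}{1-y^{1/k}}\,\mathrm{d}y$, which by the series expansion $1/(1-y^{1/k})=\sum_{j\ge 0}y^{j/k}$ evaluates to $k^2\sum_{m\ge k} m^{-2}=k+\tfrac{1}{2}+O(1/k)$. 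This yields the discrete identity
\[
k(\beta^*-1)+\sum_{t=0}^{k-1} y_t(1-\log y_t) \;=\; k+\tfrac{1}{2}+O(1/k).
\]

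Second, integrating $-w'(\rho)=\beta^*-1+w(\rho)(1-\log w(\rho))$ over $[0,\rho^*]$ and using $w(0)=1,\,w(\rho^*)=0$ gives the continuous companion $(\beta^*-1)\rho^* + \int_0^{\rho^*} w(1-\log w)\,\mathrm{d}\rho = 1$. The heart of the proof is then to compare the discrete sum $\sum_t y_t(1-\log y_t)$ to $k$ times this continuous integral. The preceding lemma $w(t/(k(32k)^{1/k}))\geq z_t$ and the preceding proposition $w(t/k)\geq w(t/(k(32k)^{1/k}))-2\log(32k)/k$ together with the companion upper bound $y_t\leq z_t\leq w(t/(k(32k)^{1/k}))\leq w(t/k)+2\log(32k)/k$ (the last inequality uses the mean value theorem and the uniform bound $|w'|\leq \beta^*\leq 2$) produce the two-sided estimate $|y_t-w(t/k)|\leq 2\log(32k)/k$. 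Because $y_{k-1}\geq 1/(32k)$ by Proposition~\ref{prop:lower_bound_y_k_ell}, the derivative $-\log u$ of $u\mapsto u(1-\log u)$ is bounded by $\log(32k)$ on the range of the $y_t$'s and the $w(t/k)$'s, so the per-term discretization error is $O(\log^2(32k)/k)$ and a standard Riemann-sum estimate promotes this into a total error of $O(\log^2(32k))$ between $\sum_t y_t(1-\log y_t)$ and $k\int_0^{\rho^*} w(1-\log w)\,\mathrm{d}\rho$.

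Finally, substituting the continuous identity into the discrete identity and rearranging yields $k(\beta^*-1)(1-\rho^*)=\tfrac12+O(\log^2(32k))$; using the routine upper bound $\beta^*-1\leq 2$ from $w'(0)=-\beta^*$ and convexity of $w$, and dividing through by $k(\beta^*-1)$, gives $1-\rho^*\geq 32\log(32k)/k$ once $k$ is sufficiently large, which is exactly the proposition. The main obstacle is the third step: the two-sided $|y_t-w(t/k)|=O(\log k/k)$ bound must be threaded through the nonlinear functional $u(1-\log u)$ with careful sign control, because this functional has derivative $-\log u$ that blows up near $0$. The lower bound $y_{k-1}\geq 1/(32k)$ is exactly what prevents the accumulated error from swamping the $\tfrac12$-term that drives the conclusion.
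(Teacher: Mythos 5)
Your argument takes a genuinely different route from the paper's. The paper works pointwise: it chains the inequality $w(t/k)\geq y_t - 2\log(32k)/k$ with the lower bound $y_{k-\ell}\geq\ell/(32k)$ from Proposition~\ref{prop:lower_bound_y_k_ell} and examines the smallest grid index $t'$ with $w(t'/k)=0$. Your plan is global: telescope the defining recursion of the $y$-sequence, integrate the ODE over $[0,\rho^*]$, and compare the two resulting identities.

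There are two gaps, and the second is fatal. First, the claimed two-sided estimate $|y_t - w(t/k)|\leq 2\log(32k)/k$ is not supported by the material you cite. The two chains you describe collapse to the \emph{same} one-sided inequality: the chain through $w(t/(k(32k)^{1/k}))\geq z_t$, $w(t/k)\geq w(t/(k(32k)^{1/k})) - 2\log(32k)/k$, and $y_t\leq z_t$ yields $w(t/k)\geq y_t - 2\log(32k)/k$; and your ``companion'' chain $y_t\leq z_t\leq w(t/(k(32k)^{1/k}))\leq w(t/k)+2\log(32k)/k$ yields $y_t\leq w(t/k)+2\log(32k)/k$. These are the same statement. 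The reverse direction $w(t/k)\leq y_t+2\log(32k)/k$ follows from none of the quoted results; $x_t\leq y_t$ together with $x_t\leq w(t/k)$ gives no upper bound on $w(t/k)-y_t$.

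Second, and decisively, the final inference does not follow. After the Riemann-sum comparison you arrive at $k(\beta^*-1)(1-\rho^*)=\tfrac12+O(\log^2(32k))$, where the $O$-term is an \emph{unsigned} bound on a magnitude. From such a statement one cannot conclude $1-\rho^*\geq 32\log(32k)/k$: you would need the signed quantity $k\int_0^{\rho^*}w(1-\log w)\,\mathrm{d}\rho - \sum_{t=0}^{k-1} y_t(1-\log y_t)$ to be bounded \emph{below} by a positive multiple of $\log(32k)$, and your estimates control only its absolute value. Indeed, if that signed error were $O(1)$ --- which nothing in your argument excludes --- the very same identity would force $1-\rho^*=O(1/k)$, the \emph{opposite} of the proposition. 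Repairing this would require a one-sided, quantitative lower bound on the gap between the Riemann sum and the integral, which the Riemann-sum heuristic does not supply; the paper's pointwise argument avoids the issue because it never needs to control the sign of any aggregated discretization error, only the per-term lower bound on $y_{k-\ell}$ and the single one-sided comparison between $w(t/k)$ and $y_t$.
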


\begin{proof}
	Using the two previous results and Proposition~\ref{prop:y_t_bounded_by_x_t_and_z_t} we obtain
\[
w(t/k) \geq y_t - 2 \log(32k)/k.
\]
Let $t'\in [k]$ be the minimum value such that $w(t'/k)=0$. Thus, $\rho^* \leq t'/k$. Using Proposition~\ref{prop:lower_bound_y_k_ell}, we have
\[
0 = w(t'/k) \geq \frac{k-t'}{32k} - 2 \frac{\log(32 k)}{k} \implies \frac{t'}{k} \geq 1 - 64 \frac{\log (32k)}{k}.
\]
Therefore, $\rho^* \leq 1-32\log(32 k)/k$.
\end{proof}

Let
\[
I(\beta) = \int_{0}^1 \frac{\mathrm{d}w}{\beta - 1 + w(1-\log w)},
\]
so $I(\overline{\beta})=1$, with $\overline{\beta}=1/\overline{\gamma}$, and $I(\beta^*) =\rho^*\leq 1 - 32\log(32 k)/k$.

\begin{lemma}
	We have $\beta^* \geq \overline{\beta} (1+ 16 \log (32k)/k)$.
\end{lemma}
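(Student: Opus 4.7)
The plan is to argue by contradiction, mirroring the structure of the preceding lemma (the upper bound $\beta^* \leq \overline{\beta}(1+2c)$), but reversing every inequality. Set $c := 32\log(32k)/k$ and $c_1 := 16 \log(32k)/k = c/2$, so the previous proposition gives $I(\beta^*) = \rho^* \leq 1-c$. Since $I$ is strictly decreasing on $(1,\infty)$ and $I(\overline{\beta}) = 1$, the lemma reduces to proving
\[
I\bigl(\overline{\beta}(1+c_1)\bigr) \geq 1 - c,
\]
for then any $\beta$ with $I(\beta) \leq 1-c$ must satisfy $\beta \geq \overline{\beta}(1+c_1)$, and in particular so does $\beta^*$.

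The key analytic step is to control the increment of $I$ at the perturbed point. I would write
\[
I(\overline{\beta}) - I\bigl(\overline{\beta}(1+c_1)\bigr) = \int_0^1 \frac{\overline{\beta}\, c_1\, dw}{a(w)\bigl(a(w)+\overline{\beta}c_1\bigr)}, \qquad a(w) := \overline{\beta}-1+w(1-\log w),
\]
and aim to bound the right-hand side by $c$. Note $a(w) \in [\overline{\beta}-1,\,\overline{\beta}]$ for $w\in[0,1]$, and $\int_0^1 dw/a(w) = I(\overline{\beta}) = 1$ by definition of $\overline{\beta}$. The natural first attempt drops $\overline{\beta}c_1$ in one factor and bounds $a(w)\geq \overline{\beta}-1$ in the other, giving
\[
I(\overline{\beta}) - I\bigl(\overline{\beta}(1+c_1)\bigr) \leq \frac{\overline{\beta}c_1}{\overline{\beta}-1}\int_0^1 \frac{dw}{a(w)} = \frac{\overline{\beta}\,c_1}{\overline{\beta}-1},
\]
which is $\leq c$ whenever $c_1 \leq c(1-1/\overline{\beta})$. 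Since $\overline{\beta} \approx 1.341$, this crude estimate yields a constant smaller than the claimed $16$; to obtain the stated $c_1 = c/2$ one must refine by keeping $w(1-\log w)$ in the denominator and appealing to $I(\overline{\beta})=1$ directly, e.g.\ bounding $\int_0^1 dw/a(w)^2$ and using the pointwise lower bound $a(w) \geq \overline{\beta}-1+w(1-\log w)$ throughout rather than only at $w=0$.

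The main obstacle is precisely this quantitative tightening: the naive estimate loses a constant factor roughly equal to $(\overline{\beta}-1)/\overline{\beta} \approx 0.254$, whereas the lemma's $c_1/c = 1/2$ requires saving essentially a factor of two. Closing the gap requires exploiting the fact that $g(w) := w(1-\log w)$ is strictly positive on $(0,1]$ with $g(1)=1$, so $a(w)$ is bounded away from $\overline{\beta}-1$ on most of $[0,1]$. I expect this to be handled either by a direct computation of $\int_0^1 dw/a(w)^2$ against the normalisation $\int_0^1 dw/a(w) = 1$ (comparing averages via Jensen or Cauchy--Schwarz), or by a split-interval argument that treats a small neighbourhood of $w=0$ separately from the bulk.
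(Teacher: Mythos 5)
Your high-level structure is exactly the paper's: argue by contradiction, invoke the strict monotonicity of $I$, and show that the perturbed integral $I(\overline{\beta}(1+c_1))$ still exceeds the bound on $\rho^*$. Your increment decomposition
\[
I(\overline{\beta}) - I(\overline{\beta}(1+c_1)) = \int_0^1 \frac{\overline{\beta}\, c_1\, \mathrm{d}w}{a(w)\bigl(a(w)+\overline{\beta}c_1\bigr)}
\]
and the crude bound $\leq \overline{\beta}c_1/(\overline{\beta}-1)$ are a finite-difference restatement of the paper's algebraic step: the paper sets $c = 8\log(32k)/k$, uses the pointwise inequality $(1+c)\overline{\beta}-1+w(1-\log w) \leq (1+4c)(\overline{\beta}-1+w(1-\log w))$ (valid precisely because $\overline{\beta}\geq 4/3$, which is exactly the $\overline{\beta}/(\overline{\beta}-1)\approx 3.93$ quantity hiding in your $c_1\leq c(1-1/\overline{\beta})$ condition), and concludes $I((1+c)\overline{\beta})\geq 1-4c$, giving a contradiction. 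That proof delivers the constant $8$, not $16$; and the display immediately following the lemma in the paper reads $v_{\infty,k}^*\leq \overline{\gamma}/(1+8\log(32k)/k)$, again using $8$. So the ``16'' in the lemma statement is almost certainly a typographical slip, and your crude estimate is not a suboptimal first attempt --- it is essentially the paper's entire argument.

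The place where your proposal has a genuine gap is that you take the stated $16$ at face value and leave the alleged refinement unfinished (``I expect this to be handled either by \ldots or by \ldots''). That is not a proof, and the refinements you name point the wrong way: Jensen or Cauchy--Schwarz applied to $\int_0^1 a(w)^{-1}\,\mathrm{d}w = 1$ give a \emph{lower} bound $\int_0^1 a(w)^{-2}\,\mathrm{d}w\geq 1$, whereas you would need a tight \emph{upper} bound on that integral (of roughly $2/\overline{\beta}\approx 1.49$) to reach $c_1=c/2$. Such an upper bound may well hold, but it would require a numerical evaluation of $\int_0^1 a(w)^{-2}\,\mathrm{d}w$ that neither you nor the paper carries out. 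The correct move here was not to refine, but to recognize that the crude estimate already matches the paper's proof and to flag the ``16'' as inconsistent with it.
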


\begin{proof}
	Let $c=8\log (32k)/k$. We assume by contradiction that $\beta^* < \overline{\beta}(1+c)$. Then,
	\[
	(1+c)\overline{\beta} - 1 + w(1-\log w) \leq (1+4c)(\overline{\beta} - 1 +  w   (1- \log w)),
	\]
	for any $w\in [0,1]$. Thus, $I((1+c)\overline{\beta}) \geq I(\overline{\beta})/(1+4c)\geq 1- 4c$. Since the function $I(\cdot)$ is strictly decreasing, we must have $I(\beta^*) > 1-4c = 1-32\log(32k)/k$, which is a contradiction.
\end{proof}

From this lemma, we obtain that
\[
v_{\infty,k}^* = \frac{1}{\beta^*} \leq  \frac{\overline{\gamma}}{(1+ 8 \log (32k)/k)} \leq \overline{\gamma}\left( 1 - 4 \frac{\log (32k)}{k} \right),
\]
which finishes the proof of the upper bound in Theorem~\ref{thm:bound_infnite_model}.

\section{Missing Proofs of Section~\ref{sec:small_thresholds}}\label{app:small_thresholds}

\subsection{Formulas for $a,b$ and $c$ for $k=2$ Thresholds}\label{app:formulas_a_b_c}

\modif{We solve the $3$ by $3$ system using Wolfram Mathematica. The values $a$, $b$ and $c$ as a function of $u=u_2$ and $\theta$ are
{\scriptsize\begin{align*}
	a & = -\frac{\theta \left(u-e^u+1\right)^2 e^{\theta u+\theta-1} \left(u \left(e^{(\theta-1) (u+1)}-2\right)+\theta (u+1)-1\right)}{\splitdfrac{\left(\theta \left(u^2+4 u+3\right)-u^2-2 u-3\right) \exp \left(\frac{-2 \theta (u+1)+e^u (\theta (u+2)+u-1)-u+1}{e^u-1}\right)}{\splitdfrac{+\left(-\theta \left(2 u^2+5 u+3\right)+2 u^2+4 u+3\right) \exp \left(\frac{\theta \left(e^u (u+2)-2 (u+1)\right)}{e^u-1}-1\right)}{\splitdfrac{+(\theta-1) (u+1)^2 \exp \left(-\frac{(2 \theta-1) (u+1)+e^u (-\theta (u+2)+u+1)}{e^u-1}\right)}{\splitdfrac{-(\theta u+\theta-1) \exp \left(\frac{-2 \theta (u+1)+e^u (\theta (u+2)+2 u-1)-2 u+1}{e^u-1}\right)}{\splitdfrac{+\theta \left(2 u^2-3\right) e^{2 (\theta u+\theta-1)}+e^{2 u} \left(\theta \left(u^2-2\right)+2 u\right)+e^{\theta u+\theta+u-1} \left(\theta \left(-4 u^2-4 u+2\right)+u^2+2 u+3\right)}{\splitdfrac{-\theta \left(u^3+u^2-3 u-3\right) e^{2 \theta (u+1)-u-2}+e^{\theta u+\theta-1} \left(\theta \left(u^3+3 u^2+u-1\right)-2 u^2-4 u-3\right)}{\splitdfrac{+(u+1)^2 e^{(\theta-1) (u+1)}-\theta (u+1)^2 e^{2 (\theta-1) (u+1)}-\theta (u-1) e^{2 \theta (u+1)+u-2}+u e^{-\frac{\theta u}{e^u-1}+\theta+u}}{\splitdfrac{-2 u e^{-\frac{\theta u}{e^u-1}+\theta+2 u}+u e^{-\frac{\theta u}{e^u-1}+\theta+3 u}-e^{3 u} (\theta (u-1)+u)+e^u (\theta u+\theta-u)}{+e^{\theta u+\theta+2 u-1} (\theta (3 u-1)-1)}}}}}}}}}
\end{align*}
\begin{align*}
	b& = -\frac{e^u \left(\splitdfrac{e^{2 (\theta-1) (u+1)} \theta (u+1)^2-2 e^{-u+2 \theta (u+1)-2} \theta (u+1)}{\splitdfrac{+e^{2 (u \theta+\theta-1)} \theta+\theta+e^{\frac{\theta \left(-u+e^u-1\right)}{-1+e^u}} u-2 e^{\frac{u \theta}{1-e^u}+\theta+u} u+e^{\frac{u \theta}{1-e^u}+\theta+2 u} u}{\splitdfrac{+\theta u-u-e^{2 u} (\theta (u-1)+u)+2 e^{\frac{-2 \theta (u+1)+e^u (\theta (u+2)-1)+1}{-1+e^u}} (u \theta+\theta-1)}{ \splitdfrac{-e^{\frac{-u-2 \theta (u+1)+e^u (u+\theta (u+2)-1)+1}{-1+e^u}} (u \theta+\theta-1)+e^{u \theta+\theta+u-1} \left((u+1) \theta^2+(u-2) \theta-1\right)}{\splitdfrac{+e^{-\frac{(2 \theta-1) (u+1)+e^u (u-\theta (u+2)+1)}{-1+e^u}} (1-\theta (u+1))+e^{(\theta-1) (u+1)} \left(\theta^2 (u+1)^2-2 \theta (u+1)^2-1\right)}{+e^u \left(2 u+\theta \left(u^2-2\right)\right)+e^{u \theta+\theta-1} \left(-\left(\left(u^2+3 u+2\right) \theta^2\right)+(3 u+4) \theta+2\right)}} }  } } \right)}{\splitdfrac{-e^{(\theta-1) (u+1)} (u+1)^2-e^{-\frac{(2 \theta-1) (u+1)+e^u (u-\theta (u+2)+1)}{-1+e^u}} (\theta-1) (u+1)^2}{ \splitdfrac{+e^{2 (\theta-1) (u+1)} \theta (u+1)^2+e^{u+2 \theta (u+1)-2} \theta (u-1)-e^{\frac{u \theta}{1-e^u}+\theta+u} u+2 e^{\frac{u \theta}{1-e^u}+\theta+2 u} u}{\splitdfrac{-e^{\frac{u \theta}{1-e^u}+\theta+3 u} u+e^{3 u} (\theta (u-1)+u)+e^{\frac{-2 u-2 \theta (u+1)+e^u (2 u+\theta (u+2)-1)+1}{-1+e^u}} (u \theta+\theta-1)-e^u (u \theta+\theta-u)}{\splitdfrac{+e^{u \theta+\theta+2 u-1} (-3 u \theta+\theta+1)+e^{2 (u \theta+\theta-1)} \theta \left(3-2 u^2\right)+e^{-u+2 \theta (u+1)-2} \theta \left(u^3+u^2-3 u-3\right)}{ \splitdfrac{-e^{2 u} \left(2 u+\theta \left(u^2-2\right)\right)+e^{\frac{-u-2 \theta (u+1)+e^u (u+\theta (u+2)-1)+1}{-1+e^u}} \left(u^2+2 u-\theta \left(u^2+4 u+3\right)+3\right)}{ \splitdfrac{+e^{\frac{-2 \theta (u+1)+e^u (\theta (u+2)-1)+1}{-1+e^u}} \left(-2 u^2-4 u+\theta \left(2 u^2+5 u+3\right)-3\right)+e^{u \theta+\theta+u-1} \left(-u^2-2 u+\theta \left(4 u^2+4 u-2\right)-3\right)}{+e^{u \theta+\theta-1} \left(2 u^2+4 u-\theta \left(u^3+3 u^2+u-1\right)+3\right) }}}}}}}
\end{align*}
} We skip the formula for $c$ as it can be deduced from the equality $1=a+b(1-e^{-u})+ce^{-u}$.}

\subsection{Prophet Inequality for $k=2$ Thresholds using $(CLP)$}\label{app:threshold_k_2}

In this subsection, we numerically compute the optimal value of $(CLP)_{n,k}$ for $k=2$ thresholds. The objective of this section is to show that the gap $\gamma_{n,2}^*$ computed in Section~\eqref{sec:small_thresholds} and $v_{n,2}^*$ is small. Specifically, we show that $v_{n,2}^*\approx 0.704$ while $\gamma_{n,2}^*\approx 0.708$. We work with the infinite model, and parameterize our solution via $\theta\in (0,1)$, such that $\tau_1=\theta n$ in the finite model. Optimizing over $\theta$ then gives the optimal threshold.

Repeating the same process as in Section~\eqref{sec:small_thresholds} for $k\geq 3$, we can compute $v_{\infty,2}^*(\alpha)$ by solving

{ $(CLP)_{\infty,2}(\theta)$\hfil \begin{tabular}{cp{0.75\linewidth}}
		$\displaystyle\sup_{\substack{\omega\geq 0}}$ &  $\displaystyle\quad \inf_{\overline{y}\in [0,1]}  \left\{\int_0^{\overline{y}} \left(\frac{ 1- y^\theta}{-\log y}\right) \omega_{1,y}  + \left(\frac{ 1- y^{1-\theta}}{-\log y}\right) \omega_{2,y}  \, \mathrm{d}y\right\}$ \\
		s.t.  & {\vspace{-1cm}  \begin{align*}
				\int_0^1 \frac{\omega_{1,y}}{y}\, \mathrm{d}y& \leq 1 &   \\
				\int_0^1 \frac{\omega_{2,y}}{y}\, \mathrm{d}y& \leq \int_0^1 y^{\theta} \frac{\omega_{1,y}}{y}\, \mathrm{d}y ,  &
			\end{align*}\vspace{-0.3cm}}
\end{tabular}}

with $\theta \geq 1/2$; recall that our guarantees work for $\tau_1 \geq \tau_2$, which in the infinite model translates into $\theta \geq 1/2$. For any $0< v\leq v_{\infty,2}^*(\theta)$, we can construct feasible solutions of the form
\[
\bar{\omega}_{1,y} = - v \frac{y \log y}{1-y^{\theta}} \mathbf{1}_{(y_1,y_0)}(y), \quad \bar{\omega}_{1,y} = - v \frac{y \log y}{1-y^{1-\theta}} \mathbf{1}_{(y_2,y_1)}(y) ,
\]
where $y_0=1$, $y_1\in [0,1]$ and $y_2=0$. The existence of $y_1\in [0,1]$ such that $(\bar{\omega}_{t,\cdot})_{t=1,2}$ is a feasible solution to $(CLP)_{\infty,2}(\theta)$ can be deduced from Proposition~\ref{prop:epsilon_decreasing_gamma} and the limit model; we skip details for brevity. Note that the value $y_1$ is implicitly defined in terms of $v$. To find $y_1$, we proceed as follows.
Let $H_\varphi(x)=\int_0^x -(1-y^\varphi)^{-1} \log y \, \mathrm{d}y$ with $\varphi\in \{ \theta,1-\theta \}$. Note that
$$H_{\theta}(y_0)-H(y_1) = \frac{1}{v} \int_0^1 \frac{\bar{\omega}_{t,1}}{y} \, \mathrm{d}y \quad \text{and}\quad H_{1-\theta}(y_1)-H_{1-\theta}(y_2) = \frac{1}{v}\int_0^1 \frac{\bar{\omega}_{t,2}}{y} \, \mathrm{d}y . $$

Then, from the constraints of $(CLP)_{\infty,2}(\theta)$, we deduce that $(\bar{\omega}_{t,\cdot})_{t=1,2}$ is feasible if there is $y_1\in [0,1]$ such that
\begin{align}
	\frac{1}{v} - H_\theta(y_0) + H_\theta(y_1) & \geq 0 \label{ineq:2_threshold_ineq_1} \\
	H_{\theta}(y_0) - H_\theta(y_1) - H_{1-\theta}(y_1) +H_{1-\theta}(y_2) +y_0(\log y_0 -1) - y_1(\log y_1-1) & \geq 0 \label{ineq:2_threshold_ineq_2}
\end{align}
holds, where $y_0=1$ and $y_2=0$. We compute $y_1$ by setting
\[
y_1 = \min \left\{ y\in [0,1] :  1/v - H_\theta(1) + H_\theta(y) \geq 0 \right\}.
\]
It is easy to verify that $y_1$ tightens Inequality~\eqref{ineq:2_threshold_ineq_1}. Moreover, Inequality~\eqref{ineq:2_threshold_ineq_2} will hold if and only if $v\leq v_{\infty,2}^*(\theta)$. When $v=v_{\infty,2}^*(\theta)$, both Inequalities~\eqref{ineq:2_threshold_ineq_1}-\eqref{ineq:2_threshold_ineq_2} must be tight; then we have the following result. 


\begin{proposition}
	Let $\theta \geq 1/2$. If $v=v_{\infty,2}^*(\theta)$ and $y_1$ is defined as above, then $y_1$ defines the optimal threshold for the optimal solution of $(CLP)_{\infty,2}(\theta)$.
\end{proposition}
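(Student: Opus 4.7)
The plan is to show that at $v = v_{\infty,2}^*(\theta)$, the feasibility region in $y_1$ for the parametrized family $(\bar{\omega}_{1,\cdot},\bar{\omega}_{2,\cdot})$ collapses to a single point, and that point is exactly the $y_1$ specified by the proposition. I will proceed by analyzing \eqref{ineq:2_threshold_ineq_1}--\eqref{ineq:2_threshold_ineq_2} as functions of the pair $(v,y_1)$ and exploiting monotonicity.

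First, rewrite \eqref{ineq:2_threshold_ineq_1} as $g(v,y_1) := 1/v - H_\theta(1) + H_\theta(y_1) \geq 0$ and \eqref{ineq:2_threshold_ineq_2} as $f(y_1) \geq 0$, noting that $f$ does not depend on $v$. Differentiating, $\partial_{y_1} g = -\log(y_1)/(1-y_1^\theta) > 0$ on $(0,1)$, so $g(v,\cdot)$ is strictly increasing in $y_1$ and strictly decreasing in $v$. Similarly, $f'(y_1) = \log(y_1)\,\bigl[(1-y_1^\theta)^{-1} + (1-y_1^{1-\theta})^{-1} - 1\bigr]$, which is strictly negative on $(0,1)$ because both reciprocals exceed $1$. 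Combined with $f(0) = H_\theta(1) > 0$ and $f(1) = 1 - H_{1-\theta}(1) < 0$ for $\theta \in [1/2,1)$, $f$ has a unique root $y_1^\sharp \in (0,1)$; for each $v$ with $1/v < H_\theta(1)$, $g(v,\cdot)$ has a unique root $y_1^\flat(v) \in (0,1)$. Consequently, the set of $y_1$ making $(\bar{\omega}_{1,\cdot},\bar{\omega}_{2,\cdot})$ feasible is the closed interval $[y_1^\flat(v), y_1^\sharp]$.

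Second, the identity $H_\theta(y_1^\flat(v)) = H_\theta(1) - 1/v$ together with the strict monotonicity of $H_\theta$ shows that $y_1^\flat(v)$ is strictly increasing in $v$. Since $y_1^\sharp$ is independent of $v$, the feasibility interval shrinks monotonically in $v$ and becomes empty precisely when $y_1^\flat(v) > y_1^\sharp$. By continuity, the supremum $v^\ast := v_{\infty,2}^*(\theta)$ is characterized by $y_1^\flat(v^\ast) = y_1^\sharp$; at this value both constraints are tight and the feasibility interval collapses to the singleton $\{y_1^\sharp\}$.

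Finally, the proposition defines $y_1 := \min\{y \in [0,1] : g(v^\ast,y) \geq 0\}$. Since $g(v^\ast,\cdot)$ is continuous and strictly increasing on $(0,1)$, and $g(v^\ast,0) = 1/v^\ast - H_\theta(1) < 0$ (which holds because $v^\ast \geq v^\ast_{\infty,1} = 6/\pi^2 > 1/H_\theta(1)$ for $\theta \in [1/2,1)$, by comparison with the single-threshold policy), this minimum equals $y_1^\flat(v^\ast) = y_1^\sharp$. Hence the construction $(\bar{\omega}_{1,\cdot},\bar{\omega}_{2,\cdot})$ with this $y_1$ is the unique feasible---and therefore optimal---element of its parametric family at the value $v^\ast$. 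The main obstacle is establishing the strict monotonicity of $f$ via the derivative computation above and verifying the boundary condition $g(v^\ast,0) < 0$; both reduce to elementary estimates, and the rest is a continuity-plus-supremum argument.
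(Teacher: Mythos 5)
Your argument is correct and fills in the monotonicity details that the paper only sketches before stating the proposition. The overall structure---treating the feasibility set in $y_1$ as the interval $[y_1^\flat(v), y_1^\sharp]$ bounded by the root of the $v$-dependent constraint and the root of the $v$-independent constraint, observing that the interval shrinks monotonically in $v$ and collapses to a point at $v^\ast$---is exactly the mechanism the paper implicitly relies on when it asserts that ``Inequality~\eqref{ineq:2_threshold_ineq_2} will hold if and only if $v\leq v_{\infty,2}^*(\theta)$'' and that at $v = v^\ast$ both constraints tighten. Your derivative computation $f'(y_1) = \log(y_1)\bigl[(1-y_1^\theta)^{-1}+(1-y_1^{1-\theta})^{-1}-1\bigr]<0$ is the substantive new calculation.

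Two small corrections. First, using $y_0=1$ and $y_2=0$, the correct boundary values are $f(0) = H_\theta(1)-1$ and $f(1) = -H_{1-\theta}(1)$, not $H_\theta(1)$ and $1-H_{1-\theta}(1)$; you have dropped the constant $-1$ arising from the term $y_0(\log y_0-1)=-1$. The signs you need still hold since $H_\varphi(1) = \int_0^1 \tfrac{-\log y}{1-y^\varphi}\,\mathrm{d}y \geq \pi^2/6 > 1$ for any $\varphi\in(0,1]$, so the conclusions are unchanged. Second, the claim $v^\ast \geq v^\ast_{\infty,1} = 6/\pi^2$ is asserted ``by comparison with the single-threshold policy'' but not justified; it does hold, and the cleanest way to see it is to note that if $(\omega_1,v)$ is feasible for $(CLP)_{\infty,1}$, then $(\omega_1,\,\omega_2 := y^\theta\omega_1,\,v)$ is feasible for $(CLP)_{\infty,2}(\theta)$ because $(1-y^\theta)+y^\theta(1-y^{1-\theta}) = 1-y$, which reproduces the $k=1$ coverage constraint. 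Alternatively, and more in line with your interval argument, you could simply observe that at $v=1/H_\theta(1)$ the feasibility interval is $[0,y_1^\sharp]$ with $y_1^\sharp>0$ (since $f(0)>0$), so the collapsing value $v^\ast$ strictly exceeds $1/H_\theta(1)$, giving $g(v^\ast,0)<0$ directly without invoking the $k=1$ bound.
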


Given a fixed $\theta \geq 1/2$ and $v\leq v_{\infty,2}^*(\theta)$, the value $y_1$ can be approximately computed by a simple bisection procedure. We discretize $[0,1]$ in intervals of length $1/\ell$ and in $\mathcal{O}(\log \ell)$ bisection operations we obtain an approximate optimal solution $y_1$. We can also run a bisection over $v$ and obtain a solution $v$ with  $ \lvert v - v_{\infty,2}^*(\theta) \rvert \leq \delta $ in $\mathcal{O}(\log 1/\delta)$. We finally discretize $[1/2,1]$ in multiples of $1/r$ and run the aforementioned subroutine for each $\theta $. Overall, computing an approximation ratio that is at worst $\delta$ units off from the best approximation that our methodology can attain takes $\mathcal{O}(r \log(1/\delta)\log(\ell))$ iterations. Each iteration requires computing $H_\varphi$, which we perform numerically.

We set $r=1000$, $\delta=10^{-8}$ and $\ell=10^{12}$. With these values, we obtain an experimental maximum at $\theta^*=0.610$ with $v_{\infty,2}^*(\theta^*) \approx 0.7048$ and $y_1\approx 0.2620$; see Figure~\ref{fig:graphk2} for other values $v_{\infty,2}^*(\theta)$ as a function of $\theta\in [1/2,1]$. Numerically, we observe that as $\theta$ goes to $1$, $v_{\infty,2}^*$ goes to $v_{\infty,1}^* = 6/\pi^2$ (see Corollary\ref{cor:value_k_1}). Also, the difference between $v_{\infty,2}^*(1/2)\approx 0.701$ and $v_{\infty,2}^*(\theta^*)\approx 0.704$ is less than $0.4\%$. Nonetheless, empirically we observe that optimizing over the length of the intervals does give a significant improvement over the approximation.

\begin{figure}
	\centering
	\includegraphics[width=0.8\linewidth]{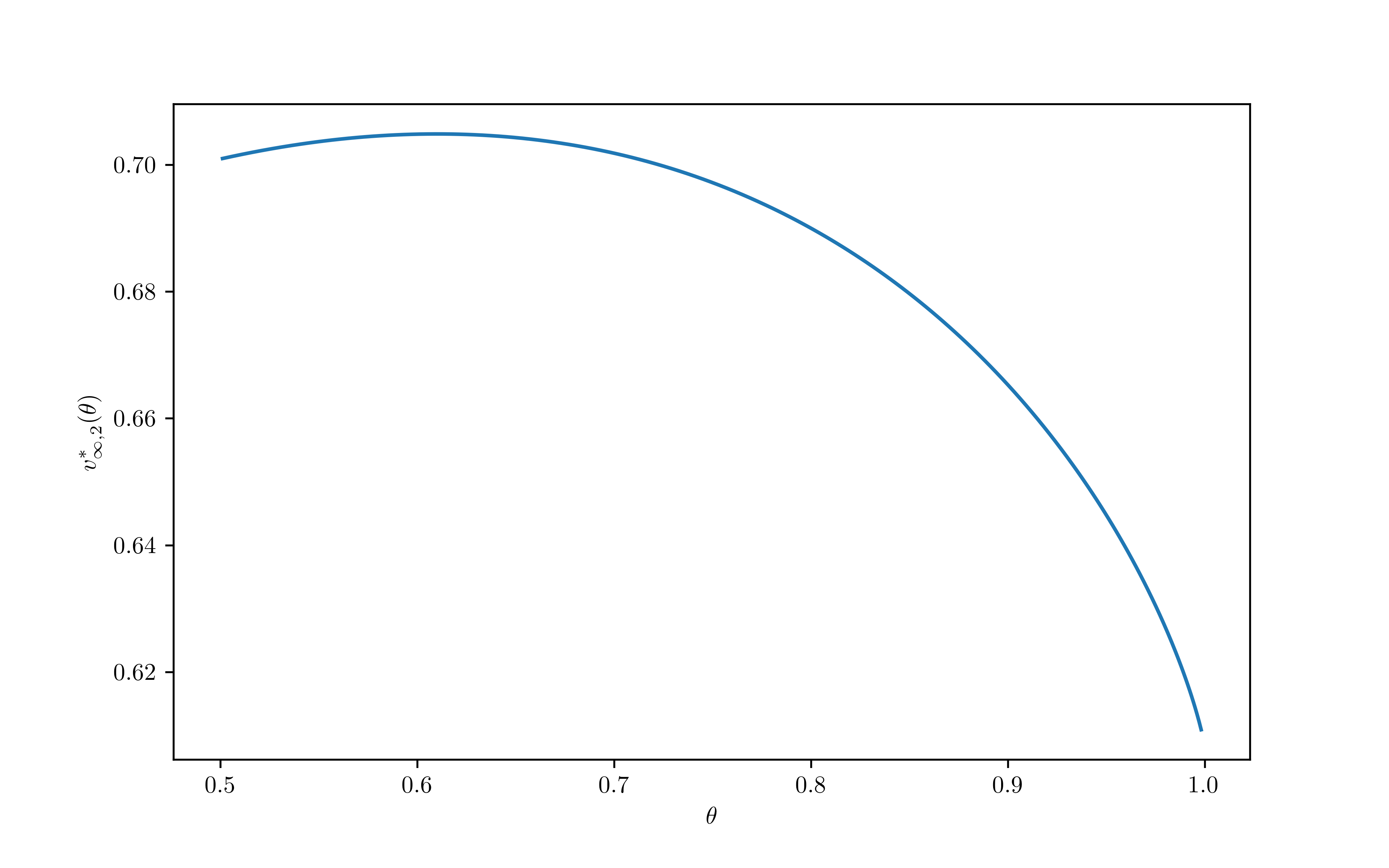}
	\caption{The line represents the numerical values of $v_{\infty,2}^*(\theta)$ computed with the procedure described in this subsection. As $\theta$ goes to $1$, we observe numerically that $v_{\infty,2}^*$ goes to $6/\pi^2$.}
	\label{fig:graphk2}
\end{figure}

\end{document}